\documentclass[11pt]{article}

\usepackage[pagebackref,colorlinks]{hyperref}
\usepackage{amsmath} 
\usepackage{amsthm} 
\usepackage{thmtools}
\usepackage{amssymb}	
\usepackage{graphicx} 
\usepackage{multicol} 
\usepackage{multirow}
\usepackage{color}
\usepackage[dvips,letterpaper,margin=1in,bottom=1in]{geometry}
\usepackage[capitalize,noabbrev]{cleveref}

\usepackage[utf8]{inputenc}
\usepackage[english]{babel}

\usepackage{mathtools}

\newtheorem{theorem}{Theorem}[section]

\newtheorem{lemma}[theorem]{Lemma}

\newtheorem{fact}[theorem]{Fact}

\newtheorem{definition}{Definition}[section]

\newtheorem{remark}{Remark}[section]

\newcommand{\braket}[2]{\left< #1 \vphantom{#2} \middle| #2 \vphantom{#1} \right>} 

\DeclarePairedDelimiter\rbra{\lparen}{\rparen}
\DeclarePairedDelimiter\sbra{\lbrack}{\rbrack}
\DeclarePairedDelimiter\cbra{\{}{\}}
\DeclarePairedDelimiter\abs{\lvert}{\rvert}
\DeclarePairedDelimiter\Abs{\lVert}{\rVert}
\DeclarePairedDelimiter\ceil{\lceil}{\rceil}
\DeclarePairedDelimiter\floor{\lfloor}{\rfloor}
\DeclarePairedDelimiter\ket{\lvert}{\rangle}
\DeclarePairedDelimiter\bra{\langle}{\rvert}
\DeclarePairedDelimiter\ave{\langle}{\rangle}

\newcommand{\set}[2] {\left\{\, #1 \colon #2 \,\right\}}

\newcommand{\tr} {\operatorname{tr}}
\newcommand{\poly} {\operatorname{poly}}
\newcommand{\diag} {\operatorname{diag}}

\newcommand{\rank} {\operatorname{rank}}

\newcommand{\Real} {\operatorname{Re}}

\usepackage{algorithm}
\usepackage{algpseudocode}
\renewcommand{\algorithmicrequire}{\textbf{Input:}} 


\usepackage{tabularx}
\usepackage{booktabs}
\usepackage{threeparttable}
\usepackage{adjustbox}

\newcommand{\footremember}[2]{%
    \footnote{#2}
    \newcounter{#1}
    \setcounter{#1}{\value{footnote}}%
}

\usepackage{tikz}
\usetikzlibrary{quantikz2}

\newcommand\tikzmark[1]{\tikz[overlay,remember picture,baseline] \coordinate (#1);}

\title{Time-Efficient Quantum Entropy Estimator via Samplizer\footnote{An extended abstract of this paper \cite{WZ24b} was presented at the 32nd Annual European Symposium on Algorithms (ESA 2024).}}
\author{
    Qisheng Wang \footremember{1}{Qisheng Wang is with the School of Informatics, University of Edinburgh, EH8 9AB Edinburgh, United Kingdom (e-mail: \href{mailto:QishengWang1994@gmail.com}{\nolinkurl{QishengWang1994@gmail.com}}). Part of the work was done when the author was with the Graduate School of Mathematics, Nagoya University, Nagoya 464-8602, Japan.}
    \and Zhicheng Zhang \footremember{2}{Zhicheng Zhang is with the Centre for Quantum Software and Information, University of Technology Sydney, Ultimo, NSW 2007, Australia (e-mail: \href{mailto:iszczhang@gmail.com}{\nolinkurl{iszczhang@gmail.com}}).}
}
\date{}

\begin{document}

\maketitle

\begin{abstract}
    Entropy is a measure of the randomness of a system. Estimating the entropy of a quantum state is a basic problem in quantum information. 
    In this paper, we introduce a time-efficient quantum approach to estimating the von Neumann entropy $S(\rho)$ and R\'enyi entropy $S_\alpha(\rho)$ of an $N$-dimensional quantum state $\rho$, given access to independent samples of $\rho$. 
    Specifically, we provide the following quantum estimators. 
    \begin{itemize}
        \item A quantum estimator for $S(\rho)$ with time complexity $\widetilde{O}(N^2)$,\footnote{$\widetilde O\rbra{\cdot}$ suppresses polylogarithmic factors.} improving the prior best time complexity $\widetilde{O}(N^6)$ by \hyperlink{cite.AISW20}{Acharya, Issa, Shende, and Wagner (2020)} and \hyperlink{cite.BMW16}{Bavarian, Mehraba, and Wright (2016)}.
        \item A quantum estimator for $S_\alpha(\rho)$ with time complexity $\widetilde{O}(N^{4/\alpha-2})$ for $0<\alpha<1$ and $\widetilde{O}(N^{4-2/\alpha})$ for $\alpha>1$, improving the prior best time complexity $\widetilde{O}(N^{6/\alpha})$ for $0<\alpha<1$ and $\widetilde{O}(N^6)$ for $\alpha>1$ by \hyperlink{cite.AISW20}{Acharya, Issa, Shende, and Wagner (2020)}, though at a cost of a slightly larger sample complexity. 
    \end{itemize}
    Moreover, these estimators are naturally extensible to the low-rank case.
    We also provide a sample lower bound $\Omega(\max\{N/\varepsilon, N^{1/\alpha-1}/\varepsilon^{1/\alpha}\})$ for estimating $S_{\alpha}\rbra{\rho}$.
    
    Technically, our method is quite different from the previous ones that are based on weak Schur sampling and Young diagrams.
    At the heart of our construction, is a novel tool called \textit{samplizer}, which can ``samplize'' a quantum query algorithm to a quantum algorithm with similar behavior using only samples of quantum states; this suggests a new framework for estimating quantum entropies. 
    Specifically, when a quantum oracle $U$ block-encodes a mixed quantum state $\rho$, any quantum query algorithm using $Q$ queries to $U$ can be samplized to a $\delta$-close (in the diamond norm) quantum algorithm using $\widetilde{\Theta}(Q^2/\delta)$ samples of $\rho$.
    Moreover, this samplization is proven to be \textit{optimal}, up to a polylogarithmic factor.
\end{abstract}

\textbf{Keywords: quantum computing, quantum algorithms, entropy estimation, sample complexity, samplizer, von Neumann entropy, R\'enyi entropy.}

\newpage

\tableofcontents
\newpage

\section{Introduction}
Entropy is a basic measure of the randomness of a quantum system in quantum information theory (cf.\ \cite{NC10,Wil13,Hay16,Wat18}), which can be understood as the quantum generalization of the entropy of a probability distribution.
Quantum entropy can be used to quantify important quantum properties, e.g., the compressibility of quantum data \cite{Sch95,JS94,Lo95} and the entanglement of quantum states \cite{HHHH09,Laf16}.
As an analog to the classical learning task of probability distributions, a natural question is: how can we learn the entropy of a quantum state from its independent samples?

Indeed, this is a real question raised in physics for measuring quantum entanglement, e.g., \cite{FIK08,HGKM10,IMP+15}.
Recently, Acharya, Issa, Shende, and Wagner \cite{AISW20} and Bavarian, Mehraba, and Wright \cite{BMW16} proposed sample-efficient quantum algorithms for estimating quantum entropy based on the Empirical Young Diagram (EYD) algorithm \cite{ARS88,KW01}. 
Their algorithms, however, have a large time complexity that is cubic in the sample complexity (i.e., the number of independent samples used in the algorithm), due to the use of weak Schur sampling.\footnote{The quantum algorithms proposed in both \cite{AISW20} and \cite{BMW16} are based on the weak Schur sampling \cite{CHW07} (cf.\ \cite{MdW16}), so (as noted by \cite{Wri22}) they have quantum time complexity $\widetilde O\rbra{n^3}$ on input $n$ independent samples of a quantum state, using the current best quantum Fourier transform over symmetric groups \cite{KS16}. \label{footnote:weak-schur-sampling}}
By stark contrast, classically estimating the entropy of a probability distribution only takes time linear in the sample complexity \cite{VV11,VV17,JVHW15,JVHW17,WY16,AOST17}. 
Regarding these, one may ask:
\[
\textit{Can we estimate quantum entropy with time complexity linear in the sample complexity?}
\]

This is not solely a theoretical question: 
a time-efficient approach to estimating quantum entropy will benefit many practical applications, e.g., preparing quantum Gibbs states \cite{WH19,CLW20,WLW21} and learning Hamiltonians \cite{AAKS21}.

\subsection{Main results}

We introduce a new quantum approach to estimating the entropy of a quantum state, which takes time \textit{linear} in the sample complexity. 
For a quantum algorithm\footnote{In this paper, we only consider \textit{uniform} quantum algorithms. That is, there is a polynomial-time classical Turing machine that, on input $1^n$, outputs the circuit description of the quantum algorithm for the problem of size $n$.} that only takes independent samples of a quantum state as input (this input model is called \textit{sample access}), the sample complexity is the number of samples used in the algorithm, and the time complexity is the sum of the number of one- and two-qubit quantum gates and the number of one-qubit measurements in its circuit description.

We will state the sample and time complexity of our von Neumann entropy estimator and R\'enyi entropy estimator in \cref{sec:von-thm-intro} and \cref{sec:renyi-thm-intro}, respectively. 
In comparison with the additive error $\varepsilon$, we are more interested in the dependence on the size of the input quantum state. 
For simplicity, we assume constant additive error $\varepsilon = \Theta\rbra{1}$ in this section, even though our quantum algorithms are polynomially scalable as $1/\varepsilon$ increases. 

In \cref{tab:entropy-estimator}, we summarize our entropy estimators and compare them with prior best approaches. 
There are also other approaches for estimating the entropy of a quantum state in the literature, which assume access to the quantum circuit that prepares the purification of $\rho$ (this input model is called \textit{purified quantum query access}), thus very different from our setting that only allows access to independent samples of $\rho$. This line of work will be reviewed in \cref{sec:related-work}.

\begin{table} [!htp]
\centering
\caption{Sample and time complexities for entropy estimation of quantum states.}
\label{tab:entropy-estimator}
\adjustbox{max width=\textwidth}{
\begin{threeparttable}
\centering
\begin{tabular}{lllll}
\toprule
& Reference & \multicolumn{1}{c}{$0 < \alpha < 1$} & \multicolumn{1}{c}{$\alpha = 1$ (von Neumann)} & \multicolumn{1}{c}{$\alpha > 1$} \\ \midrule
\multicolumn{1}{l}{\multirow{2}{*}{\begin{tabular}{l}
    \vspace{-3pt}\\
     Upper \\
     Bounds
\end{tabular}}} & \cite{AISW20} & \begin{tabular}{l}
     $O\rbra{N^{2/\alpha}}$ samples \\
     $\widetilde O\rbra{N^{6/\alpha}}$ time
\end{tabular} & \begin{tabular}{l}
     $O\rbra{N^{2}}$ samples \\
     $\widetilde O\rbra{N^{6}}$ time
\end{tabular} & \begin{tabular}{l}
     $O\rbra{N^{2}}$ samples \\
     $\widetilde O\rbra{N^{6}}$ time
\end{tabular} \\ \cmidrule{2-5} 
\multicolumn{1}{l}{}                              & This work   & \begin{tabular}{l}
     $\widetilde O\rbra{N^{4/\alpha-2}}$ samples \\
     $\widetilde O\rbra{N^{4/\alpha-2}}$ time \\
     \cref{thm:sim-renyi}
\end{tabular} & \begin{tabular}{l}
     $\widetilde O\rbra{N^{2}}$ samples \\
     $\widetilde O\rbra{N^{2}}$ time \\
     \cref{thm:simp-von}
\end{tabular} & \begin{tabular}{l}
     $\widetilde O\rbra{N^{4-2/\alpha}}$ samples \\
     $\widetilde O\rbra{N^{4-2/\alpha}}$ time \\
     \cref{thm:sim-renyi}
\end{tabular} \\ \midrule
\multicolumn{1}{l}{\multirow{2}{*}{\begin{tabular}{l}
\vspace{-3pt} \\
     Lower \\
     Bounds 
\end{tabular}}} & \cite{AISW20} & \begin{tabular}{l}
     $\Omega \rbra{N^{1+1/\alpha}}$ {samples}\\
     (EYD)
\end{tabular} & \begin{tabular}{l}
     $\Omega \rbra{N^{2}}$ {samples}\\ (EYD) 
\end{tabular} & \begin{tabular}{l}
     $\Omega \rbra{N^{2}}$ {samples}\\ (EYD) 
\end{tabular} \\ \cmidrule{2-5} 
\multicolumn{1}{l}{}                              & This work   & \begin{tabular}{l}
     $\Omega \rbra{N+N^{1/\alpha - 1}}$ {samples}\\
     \cref{thm:sample-lower-bound-intro}
\end{tabular} & \begin{tabular}{l}
     $\Omega \rbra{N}$ {samples} \\
     \cref{thm:sample-lower-bound-intro}
\end{tabular} & \begin{tabular}{l}
     $\Omega \rbra{N}$ {samples}  \\
     \cref{thm:sample-lower-bound-intro}
\end{tabular} \\ \bottomrule
\end{tabular}
\begin{tablenotes}
\footnotesize
\item (EYD) These lower bounds are for Empirical Young Diagram algorithms.
\end{tablenotes}
\end{threeparttable}
}
\end{table}

\subsubsection{Von Neumann entropy estimator} \label{sec:von-thm-intro}

Our first result is a time-efficient estimator for von Neumann entropy, defined by (cf.\ \cite{vN32})
\[
S\rbra*{\rho} = -\tr\rbra*{\rho \ln \rbra*{\rho}}.
\]

\begin{theorem} [\cref{thm:von-sample} simplified]
\label{thm:simp-von}
    There is a quantum estimator for the von Neumann entropy $S\rbra{\rho}$ of an $N$-dimensional quantum state $\rho$ with sample and time complexity $\widetilde O\rbra{N^2}$. 
\end{theorem}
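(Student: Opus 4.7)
The plan is modular and factors through the paper's main technical tool: (i) construct a quantum \emph{query} algorithm that estimates $S(\rho)$ using $Q = \widetilde O(N)$ queries to a block-encoding of $\rho$ with constant additive error, and (ii) invoke the samplizer. Since a $Q$-query algorithm samplizes into a $\delta$-close algorithm using $\widetilde O(Q^2/\delta)$ samples, choosing $\delta$ to be a small constant turns the $Q = \widetilde O(N)$ query algorithm into a sample-based estimator consuming $\widetilde O(N^2)$ samples. Because the samplizer is constructive and its output circuit has gate count proportional to the underlying query circuit up to polylogarithmic overhead, the time complexity will match the sample complexity provided the query algorithm is itself time-efficient, which is the case for the construction below.

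For step (i), write $S(\rho) = \tr(f(\rho))$ with $f(x) = -x \ln x$ and proceed via polynomial approximation plus quantum singular value transformation (QSVT). One first constructs a bounded polynomial $p$ of degree $d = \operatorname{polylog}(N)$ approximating $f$ on the regularized interval $[1/N, 1]$, and separately bounds the contribution of spectral mass below $1/N$, which contributes only $O(1)$ to $S(\rho)$ at worst. Applying QSVT to the block-encoding of $\rho$ produces a block-encoding of $p(\rho)$ using $O(d)$ queries, and the scalar $\tr(p(\rho)) = S(\rho) \pm O(1)$ is extracted via a Hadamard-style trace estimator fed with the maximally mixed state: the relevant amplitude is $\tr(p(\rho))/N = O((\log N)/N)$, so amplitude estimation with $\widetilde O(N)$ rounds of the $O(d)$-query QSVT circuit gives a constant-error estimator with $Q = \widetilde O(N)$ total queries.

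The main obstacles I anticipate are (a) the polynomial-approximation calibration: proving that a polynomial of degree $\operatorname{polylog}(N)$ can uniformly approximate the truncated $f$ to within $O(1)$ while remaining bounded by $1$ on $[-1,1]$ (so that QSVT applies), and carefully controlling the bias introduced by truncating $f$ near the singularity at zero against an adversarial spectrum of $\rho$; and (b) verifying that the polylogarithmic overhead inside the samplizer does not degrade the final additive error once composed with amplitude estimation and the trace-estimation ancilla. The remaining pieces — QSVT error bounds, amplitude estimation, and the Hadamard-style trace estimator — are off-the-shelf, so assembling them yields \cref{thm:simp-von}.
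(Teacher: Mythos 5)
Your overall skeleton---design a $Q$-query algorithm, push it through the samplizer, land at $\widetilde O(Q^2)$ samples---is exactly the paper's architecture, but your step (i) has a quantitative gap, and the obstacle you yourself flag under (a) is in fact fatal as stated.

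The claim that a degree-$\operatorname{polylog}(N)$ polynomial $p$ suffices cannot hold. You want $\tr\rbra{p\rbra{\rho}} = \sum_i p\rbra{x_i}$ to track $S\rbra{\rho} = \sum_i \rbra{-x_i \ln x_i}$ within $O(1)$ additive error simultaneously against an adversarial spectrum with up to $N$ nonzero eigenvalues. Because a QSVT-compatible polynomial is bounded by $1$ and cannot follow the logarithmic singularity below some cutoff $\delta_p$, the truncation bias summed over small eigenvalues is $\Theta\rbra{r \delta_p \ln(1/\delta_p)}$ with $r$ up to $N$, which forces $\delta_p = \widetilde O(1/N)$. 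Then \cref{lemma:poly-approx-ln} (or any uniform approximant to $\ln(1/x)$ on $[\delta_p,1]$) has degree $d = \widetilde\Theta(N)$, not $\operatorname{polylog}(N)$. Independently, your target amplitude $\tr\rbra{p\rbra{\rho}}/N = O(\ln N / N)$ genuinely requires $\widetilde O(N)$ amplitude-estimation rounds. Both factors are $\widetilde\Theta(N)$, so the query count is $Q = \widetilde\Theta(N)\cdot\widetilde\Theta(N) = \widetilde\Theta(N^2)$, and the samplizer returns $\widetilde O(Q^2/\delta) = \widetilde O(N^4)$ samples --- a factor $N^2$ too large. Trading amplitude estimation for classical Hoeffding repetitions only makes this worse.

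The paper's subroutine is specifically designed to evade this: it drops the factor of $x$ from the polynomial and lets the input register supply it. Concretely, take $p$ to approximate $\frac{\ln(1/x)}{4\ln(2/\delta_p)}$ via \cref{lemma:poly-approx-ln}, block-encode $p\rbra{\rho}$ by QSVT, and run the Hadamard test (\cref{thm:hadamard}) \emph{with $\rho$ itself as the input state} rather than the maximally mixed state. The test then estimates $\tr\rbra{p\rbra{\rho}\,\rho} = \sum_i x_i\, p\rbra{x_i} \approx S\rbra{\rho}/\rbra{4\ln(2/\delta_p)}$, whose magnitude is $\Theta(1)$ rather than $\Theta(1/N)$: the weights $x_i$ (summing to $1$) are provided by the state, and eigenvalues below $\delta_p$ contribute only $O(r\delta_p)$ because each is suppressed by its own factor $x_i$. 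No amplitude estimation is needed --- $\operatorname{polylog}(N)$ Hoeffding repetitions suffice --- so $Q = O(d) = \widetilde O(N)$ per repetition. Samplizing each repetition with $\delta_a = 1/\operatorname{polylog}(N)$ costs $\widetilde O(Q^2/\delta_a) = \widetilde O(N^2)$, and the polylog number of repetitions preserves the $\widetilde O(N^2)$ total (this is \cref{thm:von-sample}, specialized to $\varepsilon = \Theta(1)$ and $r = N$). Your argument would go through with this one substitution; as written it overcounts by $N^2$.
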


The prior best quantum estimators for the von Neumann entropy \cite{AISW20,BMW16} have sample complexity $O\rbra{N^2}$ and time complexity $\widetilde O\rbra{N^6}$.\footnote{See \cref{footnote:weak-schur-sampling}.} 
Our estimator is cubicly faster than theirs in the time complexity, while with the same sample complexity (up to a logarithmic factor).
Technically, our method is quite different from the previous ones based on weak Schur sampling and Empirical Young Diagrams.
By comparison, our algorithm builds on our new tool --- \textit{samplizer} (which will be introduced in \cref{sec:intro-samplizer}), together with the block-encoding techniques (cf.\ \cite{GSLW19}). 

Our von Neumann entropy estimator has an advantage in that it can exploit prior knowledge of a relatively low rank $r$ of the quantum state $\rho$. 
In this case, our von Neumann entropy estimator has time complexity $\widetilde O\rbra{r^2}$, which is polynomial in $r$ while only polylogarithmic in $N$.
Note that the work of \cite{AISW20} does not consider the low-rank case.
Recently, a von Neumann entropy estimator was proposed in \cite{WZW22} with sample complexity $\widetilde O\rbra{\kappa^2}$, where $\kappa$ is the reciprocal of the minimum non-zero eigenvalue of $\rho$. 
The rank-dependent version of our algorithm immediately reproduces their result by noting that $\kappa$ is always an upper bound on the rank $r$ of $\rho$.

\subsubsection{R\'enyi entropy estimator} \label{sec:renyi-thm-intro}

We also provide time-efficient estimators for $\alpha$-R\'enyi entropy, defined by (cf.\ \cite{Ren61})
\[
S_\alpha\rbra*{\rho} = \frac{1}{1-\alpha} \ln \rbra*{ \tr\rbra*{\rho^{\alpha}} },
\]
with von Neumann entropy a limiting case: $S\rbra{\rho} = S_{1}\rbra{\rho}$.

\begin{theorem} [\cref{thm:estimate-renyi-gt1} and \cref{thm:estimate-renyi-lt1} simplified]
\label{thm:sim-renyi}
    There is a quantum estimator for the $\alpha$-R\'enyi entropy $S_\alpha\rbra{\rho}$ of an $N$-dimensional quantum state $\rho$ with sample and time complexity $\widetilde O\rbra{N^{4/\alpha-2}}$ for $0 < \alpha < 1$ and $\widetilde O\rbra{N^{4-2/\alpha}}$ for $\alpha > 1$. 
\end{theorem}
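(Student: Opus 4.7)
The plan is to reduce estimation of $S_\alpha\rbra{\rho}$ to estimating the $\alpha$-power trace $T_\alpha := \tr\rbra{\rho^\alpha}$, to design a block-encoded quantum query algorithm computing $T_\alpha$, and to convert that algorithm into a sample-access algorithm using the samplizer announced in the abstract.

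\textbf{Reduction.} Because $S_\alpha\rbra{\rho} = \ln\rbra{T_\alpha}/\rbra{1-\alpha}$, an additive-$\varepsilon$ estimate of $S_\alpha$ is implied by a multiplicative $\Theta\rbra{\varepsilon\abs{1-\alpha}}$-estimate of $T_\alpha$. Since the eigenvalues of $\rho$ lie in $\sbra{0,1}$ and sum to $1$, one has $T_\alpha \in \sbra{N^{1-\alpha},1}$ for $\alpha>1$ and $T_\alpha\in\sbra{1,N^{1-\alpha}}$ for $0<\alpha<1$, which pins down the absolute precision the quantum subroutine must attain.

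\textbf{Block-encoded query algorithm for $T_\alpha$.} Given a block-encoding $U_\rho$ of $\rho$, use the quantum singular value transformation of \cite{GSLW19} to build a block-encoding of a polynomial $p\rbra{\rho}$ approximating $\rho^{\alpha/2}$ on the relevant spectral window $\sbra{1/N,1}$; eigenvalues below $1/N$ contribute only an $O\rbra{N^{1-\alpha}}$ portion of $T_\alpha$ and $p$ can be tailored to vanish on $\sbra{0,1/N}$ without harming the overall estimate. Prepare the maximally entangled state $\ket{\phi_+}=\tfrac{1}{\sqrt N}\sum_i\ket{i}\ket{i}$ and apply the block-encoding to $\ket{0}_a\ket{\phi_+}$: the probability that the ancilla returns $\ket{0}$ is exactly
\[
\frac{1}{N}\tr\rbra*{p\rbra{\rho}^\dagger p\rbra{\rho}} \;\approx\; \frac{T_\alpha}{N}.
\]
Amplitude estimation then delivers $T_\alpha/N$ with the required relative error in $\widetilde O\rbra{1/\sqrt{T_\alpha/N}}$ invocations of the block-encoding, each costing $d$ queries to $U_\rho$ where $d$ is the polynomial degree. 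A careful choice of $d$ and of the amplitude-estimation depth yields the overall query count
\[
Q = \widetilde O\rbra*{N^{2-1/\alpha}} \text{ for } \alpha>1, \qquad Q = \widetilde O\rbra*{N^{2/\alpha-1}} \text{ for } 0<\alpha<1.
\]

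\textbf{Samplization and conclusion.} Invoking the samplizer with constant diamond-norm precision converts the $Q$-query block-encoded algorithm into a sample-access algorithm using $\widetilde O\rbra{Q^2}$ copies of $\rho$ and $\widetilde O\rbra{Q^2}$ gates, which are precisely the complexities announced in \cref{thm:sim-renyi}. The resulting $T_\alpha$ is then fed into the reduction above to recover $S_\alpha\rbra{\rho}$.

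\textbf{Main obstacle.} The technical heart of the proof is the polynomial approximation of $x^{\alpha/2}$: it is non-smooth at $0$ when $\alpha<2$, so approximating it on $\sbra{1/N,1}$ needs polynomial degrees that are polynomial in $N$; for $\alpha>2$ the function is smooth but its mass concentrates near $x=1$, which in turn inflates the amplitude-estimation depth since $T_\alpha$ can be as small as $N^{1-\alpha}$. Balancing the polynomial degree, the amplitude-estimation precision, and the low-eigenvalue truncation threshold so that every error contribution stays below the target $\Theta\rbra{\varepsilon\abs{1-\alpha}}\cdot T_\alpha$ while their product lands on the exponents $2-1/\alpha$ or $2/\alpha-1$ is the bulk of the technical work, and it is also the point at which the two regimes $\alpha<1$ and $\alpha>1$ split into the two separate formulas in the statement.
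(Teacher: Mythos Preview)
Your reduction to estimating $T_\alpha=\tr\rbra{\rho^\alpha}$ and the idea of samplizing a block-encoded query algorithm are both aligned with the paper. However, the asserted query count $Q=\widetilde O\rbra{N^{2-1/\alpha}}$ (respectively $\widetilde O\rbra{N^{2/\alpha-1}}$) does not follow from the construction you describe, and this is where the argument breaks.

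With the maximally entangled input and $p\rbra{x}\approx x^{\alpha/2}$, the success probability is $T_\alpha/N$. For $\alpha>1$ the worst case is $T_\alpha=N^{1-\alpha}$, so the probability is $N^{-\alpha}$ and amplitude estimation costs $\widetilde O\rbra{N^{\alpha/2}}$ invocations; meanwhile the truncation error from eigenvalues below $\delta$ is bounded by $\delta^{\alpha-1}$, so to make it $\lesssim\varepsilon T_\alpha$ you are forced to take $\delta\lesssim N^{-1}$ and hence polynomial degree $\widetilde O\rbra{N}$. The product is $Q=\widetilde O\rbra{N^{1+\alpha/2}}$, and samplizing gives $\widetilde O\rbra{N^{2+\alpha}}$, strictly worse than $\widetilde O\rbra{N^{4-2/\alpha}}$ for every $\alpha>1$. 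A similar mismatch occurs for $0<\alpha<1$. Your sentence ``a careful choice of $d$ and of the amplitude-estimation depth yields $Q=\widetilde O\rbra{N^{2-1/\alpha}}$'' is exactly the step that does not go through, and no balancing of the two parameters rescues it because both the degree and the AE depth are monotone in the unknown $T_\alpha$.

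The paper obtains the stated exponents by two devices you omit. First, instead of $\ket{\phi_+}$ it feeds a \emph{copy of $\rho$} into the block-encoding and approximates $x^{\rbra{\alpha-1}/2}$ rather than $x^{\alpha/2}$, so the measured probability is $\tr\rbra{\rho\,p\rbra{\rho}^2}\approx \rbra{2\beta}^{1-\alpha}T_\alpha$, which after normalisation scales like $T_\alpha^{1/\alpha}$ rather than $T_\alpha/N$; this is much larger and also lets the extra factor of $\rho$ damp the small-eigenvalue contribution. Second, the normalisation $\beta$ and the threshold $\delta_p$ must be tuned to $T_\alpha$, which is unknown; the paper handles this with a recursive framework in the style of \cite{LW19}, first estimating $P_{\alpha/\lambda}\rbra{\rho}$ to constant multiplicative accuracy to obtain a bracket $P\le T_\alpha\le 10P$, and only then running the promised-version estimator. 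With both ingredients in place, the per-call samplized cost and the number of repetitions multiply out to $\widetilde O\rbra{N^{4-2/\alpha}}$ (for $\alpha>1$) and $\widetilde O\rbra{N^{4/\alpha-2}}$ (for $0<\alpha<1$). Note also that because the input state is a sample of $\rho$ rather than a fixed pure state, the paper uses classical repetition (Chebyshev plus median-of-means) rather than amplitude estimation.
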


The prior best quantum estimators for the $\alpha$-R\'enyi entropy \cite{AISW20} have sample complexity $O\rbra{N^{2/\alpha}}$ and time complexity $\widetilde O\rbra{N^{6/\alpha}}$ for $\alpha < 1$, and sample complexity $O\rbra{N^{2}}$ and time complexity $\widetilde O\rbra{N^{6}}$ for $\alpha > 1$.\footnote{See \cref{footnote:weak-schur-sampling}.} 
By comparison, our estimators for the $\alpha$-R\'enyi entropy and von Neumann entropy are faster (in time) than the approaches of \cite{AISW20} for any constant $\alpha > 0$.\footnote{For integer $\alpha > 1$, the approach of \cite{AISW20} has sample complexity $O\rbra{N^{2-2/\alpha}}$ and time complexity $\widetilde O\rbra{N^{6-6/\alpha}}$. Our algorithm is faster with the only exception that $\alpha = 2$. To address this special case, we provide a simple algorithm for estimating the $2$-R\'{e}nyi entropy $S_2\rbra{\rho}$ with sample and time complexity $\widetilde O\rbra{N^2}$ in \cref{sec:2-renyi}.}
Like our von Neumann entropy estimator, our R\'enyi entropy estimator is also extensible to the low-rank case, resulting in a time complexity polynomial in the rank $r$ of quantum state $\rho$. 

It can be seen that there is a trade-off between the sample and time complexities: our algorithms are more time-efficient, while the approaches of \cite{AISW20} are more sample-efficient. 
The estimators in \cite{AISW20} work in two steps:
\begin{enumerate}
    \item Compute the \textit{empirical} distribution, $\rbra{\hat\lambda_1, \hat\lambda_2, \dots, \hat\lambda_N}$, of the spectrum of the quantum state $\rho$. 
    \item Output the entropy of the quantum state $\sum_{j=1}^N \hat\lambda_j \ket{j}\bra{j}$ as an estimate of the entropy of $\rho$.
\end{enumerate}
This type of quantum algorithm is called the Empirical Young Diagram (EYD) algorithm, which is a quantum analog of the classical empirical/plug-in estimator (also known as the Maximum Likelihood Estimator).
It was shown in \cite{AISW20} that their EYD estimators for von Neumann and R\'enyi entropies are almost sample-optimal over all EYD estimators (but not known to be optimal over all possible estimators; see \cref{sec:lb-intro} for further discussions).
Note that the current best implementation of EYD algorithms has time complexity cubic in its sample complexity due to the use of weak Schur sampling \cite{CHW07,MdW16}.\footnote{See \cref{footnote:weak-schur-sampling}.} 
In sharp contrast, our estimators in \cref{thm:simp-von,thm:sim-renyi} are not EYD.
Consequently, we can estimate these quantum entropies with better time complexity, though at a cost of larger sample complexity. 

\subsection{Techniques}

The design of our quantum algorithms is based on a novel tool --- \textit{samplizer}.
Roughly speaking, the samplizer allows us to design a quantum algorithm with access to samples of quantum states by instead designing a quantum query algorithm (namely, a quantum algorithm with access to a quantum unitary oracle).
We first introduce the samplizer in \cref{sec:intro-samplizer} and then show how to design our quantum entropy estimators using the samplizer in \cref{sec:von-intro} (for von Neumann entropy) and in \cref{sec:renyi-intro} (for R\'enyi entropy).

\subsubsection{Samplizer} \label{sec:intro-samplizer}

Throughout this paper, we use the following concepts and notations for quantum query algorithms and quantum sample algorithms. 
A \textit{quantum query algorithm} with query access to a quantum unitary oracle $U$ is described by a quantum circuit family $C = \cbra{C\sbra{U}}_U$. 
Here, for a fixed unitary operator $U$,
the instance $C[U] \in C$ is a quantum circuit using queries to (controlled-)$U$ and (controlled-)$U^\dag$ (see \cref{def:quantum-circuit-family}). 
A \textit{quantum sample algorithm} with sample access to a mixed quantum state $\rho$ is described by a quantum channel family $\mathcal{E} = \cbra{\mathcal{E}\sbra{\rho}}_\rho$.
Here, for a fixed quantum state $\rho$, the instance $\mathcal{E}\sbra{\rho} \in \mathcal{E}$ is a quantum channel implemented by a unitary operator with the input state of the form $\rho^{\otimes k} \otimes \ket{0}\bra{0}^{\otimes \ell}$ (see \cref{def:quantum-channel-family}). We will use $\mathcal{C}\sbra{U}$ to denote the quantum channel $\mathcal{C}\sbra{U}\rbra{\varrho} = 
C\sbra{U} (\varrho) C\sbra{U}^{\dag}$ induced by $C\sbra{U}$.
In context without ambiguity,
we will simply write $C=\cbra{C\sbra{U}}$ and $\mathcal{E}=\cbra{\mathcal{E}\sbra{\rho}}$ by omitting the subscripts.
To justify the concepts defined here, we note that any quantum entropy estimator using independent samples of quantum states is indeed a quantum sample algorithm. 

Now we are able to introduce the notion of samplizer. 

\begin{definition} [Samplizer]
\label{def:samplizer}
    A samplizer $\mathsf{Samplize}_{*}\ave{*}$ is a converter from a quantum circuit family to a quantum channel family with the following property: for any $\delta > 0$, quantum circuit family $C = \cbra{C\sbra{U}}$, and quantum state $\rho$, there exists a unitary operator $U_\rho$ that is a block-encoding of $\rho/2$ such that\footnote{The scaling factor $1/2$ is due to technical reasons.}
    \begin{equation*}
    \Abs*{\mathsf{Samplize}_{\delta}\ave{C}\sbra{\rho} - \mathcal{C}\sbra{U_\rho}}_{\diamond} \leq \delta,
    \end{equation*}
    where $\Abs{\cdot}_{\diamond}$ denotes the diamond norm between quantum channels.
    Here, $U$ is a block-encoding of $A$ if the matrix $A$ is in the upper left corner in the matrix representation of $U$ (see \cref{def:block-encoding}). 
\end{definition}

\begin{remark}
    Here, we further clarify the relationships among the terminologies introduced above. 

    \begin{itemize}
        \item 
         A quantum circuit family $C = \cbra{C\sbra{U}}$ is a family of quantum circuits with a known structure but using queries to an unknown quantum unitary oracle $U$, where $U$ is considered to be a parameter of the quantum circuit family $C$.
        \item 
        
    A quantum channel family $\mathcal{E} = \cbra{\mathcal{E}\sbra{\rho}}$ is a family of quantum channels with a known structure but using copies of an unknown mixed quantum state $\rho$, where $\rho$ is considered to be a parameter of the quantum channel family $\mathcal{E}$. 

    \item 
    
    For a quantum circuit family $C = \cbra{C\sbra{U}}$ and a precision parameter $\delta > 0$, the notation $\mathsf{Samplize}_\delta\ave{C}$ denotes the quantum channel family $\mathcal{E} = \cbra{\mathcal{E}\sbra{\rho}}$ that is converted from $C$ through the samplizer $\mathsf{Samplize}_*\ave{*}$, satisfying $\Abs{\mathcal{E}\sbra{\rho} - \mathcal{C}\sbra{U_\rho}}_\diamond \leq \delta$ for every quantum state $\rho$ and its associated unitary operator $U_\rho$ (that is a block-encoding of $\rho/2$). 
    \end{itemize}

\end{remark}

The definition of samplizer is inspired by existing quantum query algorithms wherein the output depends only on the matrix block-encoded in the oracle, e.g., the quantum algorithms for Hamiltonian simulation and quantum Gibbs sampling in \cite{GSLW19} and solving systems of linear equations in \cite{CAS+22}.
For any such quantum query algorithm $C$, we can use the samplizer to construct a quantum sample algorithm $\mathsf{Samplize}_{\delta}\ave{C}$ that simulates the behavior of $C$ when the density matrix of a (mixed) quantum state is block-encoded in the oracle. 
The existence of the samplizer will allow us to design quantum sample algorithms by just designing quantum query algorithms instead. 
In the following, we provide an efficient samplizer, demonstrating its existence. 

\begin{theorem} [Optimal samplizer, Theorems \ref{lemma:block-encoding-to-sample} and \ref{thm:optimality-samplizer} informal]
\label{thm:samplizer-intro}
    There is an optimal samplizer $\mathsf{Samplize}_{*}\ave{*}$ such that for any $\delta > 0$ and quantum query algorithm $C$ with query complexity $Q$, the quantum sample algorithm $\mathsf{Samplize}_{\delta}\ave{C}$\footnote{Our samplizer is uniform. That is, there is a polynomial-time deterministic Turing machine that, on input the description of quantum circuit family $C = \cbra{C\sbra{U}}$ and the unary encodings of $Q$ and $\delta$, outputs the quantum circuit description of the implementation of the quantum channel family $\mathsf{Samplize}_{\delta}\ave{C}$.} has sample complexity $\mathsf{S} = \widetilde \Theta\rbra{Q^2/\delta}$ and incurs an extra time complexity of $O\rbra{n\mathsf{S}}$ over $C$ if the quantum oracle of $C$ acts on $n$ qubits. 
\end{theorem}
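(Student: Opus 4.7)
The plan is to establish \cref{thm:samplizer-intro} in two complementary pieces: a constructive upper bound exhibiting a samplizer that uses $\widetilde O(Q^2/\delta)$ samples, and an information-theoretic lower bound showing that no samplizer can use asymptotically fewer. The two halves are largely independent; I expect the harder and more novel part to be the upper bound.

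For the upper bound, I would build the samplizer query-by-query using a sample-to-block-encoding gadget. The core primitive is Lloyd--Mohseni--Rebentrost density matrix exponentiation: a single partial SWAP with a fresh copy of $\rho$ implements $e^{-i\rho\tau}$ to order $\tau^2$ on the system register, so $\widetilde O(1/\varepsilon)$ copies of $\rho$ suffice to realize a channel diamond-$\varepsilon$-close to $\varrho \mapsto e^{-i\rho}\varrho\,e^{i\rho}$. A constant-degree QSVT circuit (as in \cite{GSLW19}) then converts this Hamiltonian-simulation oracle into an approximate block-encoding of $\rho$, producing a channel $\Phi_\rho^{(k)}$ that is diamond-$\widetilde O(1/k)$-close to $\mathcal{C}\sbra{U_\rho}$ for a canonical block-encoding $U_\rho$ of $\rho$ depending only on $\rho$ and the fixed gadget design, not on the random samples. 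Budgeting $k=\widetilde\Theta(Q/\delta)$ samples per gadget and concatenating $Q$ gadgets with independent samples yields $\mathsf{S}=\widetilde\Theta(Q^2/\delta)$ total samples and $O(n\mathsf{S})$ extra time from the $n$-qubit SWAP gates inside each gadget.

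To bound the cumulative diamond-norm error I would run a standard hybrid argument: let $\Phi_j$ denote the channel obtained from $\mathcal{C}\sbra{U_\rho}$ by replacing its first $j$ genuine queries to $U_\rho$ with gadgets, so that $\Phi_0=\mathcal{C}\sbra{U_\rho}$ and $\Phi_Q=\mathsf{Samplize}_{\delta}\ave{C}\sbra{\rho}$. Telescoping via the chain rule for the diamond norm then gives
\[
\Abs*{\Phi_Q-\Phi_0}_\diamond \;\le\; \sum_{j=1}^Q \Abs*{\Phi_j-\Phi_{j-1}}_\diamond \;\le\; Q\cdot\widetilde O(\delta/Q) \;=\; \delta,
\]
provided each gadget is $\widetilde O(\delta/Q)$-close to the same $U_\rho$; independence of samples across gadgets ensures the composition is a well-defined quantum channel. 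For the lower bound, I would pick a concrete task solvable with $Q$ queries to a block-encoding of $\rho$ whose distributional analog has a known sample-complexity lower bound: for instance, amplitude-estimation-style discrimination between $\rho=I/N$ and a rank-one perturbation $\rho'=(1-\eta)I/N+\eta\ket{\psi}\bra{\psi}$ uses $Q=O(1/\eta)$ queries, while any sample-based tester must use $\Omega(1/\eta^2)$ samples by a Helstrom / trace-distance-on-tensor-powers argument. Running the hypothetical samplizer at error $\delta$ and tracking the induced success probabilities then yields the matching $\widetilde\Omega(Q^2/\delta)$ lower bound once the $\delta$-dependence is extracted by amplification.

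The main obstacle is ensuring that every gadget in the chain approximates the \emph{same} canonical block-encoding $U_\rho$. A priori each LMR-plus-QSVT gadget is only known to be close to $\mathcal{C}\sbra{U}$ for \emph{some} block-encoding $U$, and if that $U$ drifts from gadget to gadget the $Q$-fold composition becomes uninformative. Fixing $U_\rho$ canonically (e.g., the block-encoding induced by the symmetric purification $\sum_i\sqrt{p_i}\ket{i}\ket{i}$ implicit in the SWAP-and-trace structure) and verifying that both the LMR step and the QSVT post-processing respect this choice independently of the random samples is the delicate part. A secondary obstacle is absorbing the polylogarithmic overhead of the QSVT polynomial degree into the advertised $\widetilde\Theta(Q^2/\delta)$ scaling without inflating it, which should follow from standard Chebyshev-truncation arguments.
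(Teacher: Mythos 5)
Your upper-bound half is essentially the paper's construction. The paper packages the LMR-plus-QSVT gadget into a cited result (\cref{lemma:block-encoding-of-quantum-states}): from $O\big(\tfrac{1}{\varepsilon}\log^2\tfrac{1}{\varepsilon}\big)$ samples of $\rho$ one implements channels diamond-$\varepsilon$-close to conjugation by a \emph{fixed} $(1,5,0)$-block-encoding $U_\rho$ and by its inverse; the paper then replaces all $Q$ (controlled-) oracle calls by these gadgets with a per-gadget budget $\varepsilon=\delta/Q$ and closes via the same telescoping hybrid you describe, giving $\widetilde O(Q^2/\delta)$ samples and $O(n\mathsf{S})$ extra gates. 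Your worry about the reference block-encoding drifting from gadget to gadget is exactly what that lemma pins down: the $U_\rho$ it produces is determined by $\rho$ alone, not by the random samples, so the hybrid comparison is against a single well-defined $\mathcal{C}[U_\rho]$.

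Your lower-bound half has a genuine gap, and it is precisely where you handwave. The discrimination task you propose (distinguishing $I/N$ from $(1-\eta)I/N+\eta\ket{\psi}\bra{\psi}$) only forces $\Omega(1/\eta^2)=\Omega(Q^2)$ samples at \emph{constant} advantage: running a putative samplizer at diamond error $\delta$ perturbs a constant query-algorithm success probability by at most $\delta$, which is still bounded away from $1/2$, so the Helstrom/tensor-power argument still only yields $\Omega(Q^2)$. ``Amplification'' does not supply the missing $1/\delta$: boosting the query algorithm to success $1-\delta$ multiplies the query count by $\Theta(\log(1/\delta))$, while the tensor-power bound for achieving success $1-\delta$ also only grows by a $\Theta(\log(1/\delta))$ factor, so the ratio degrades to $\Omega(Q^2/\log(1/\delta))$, which is strictly weaker than the target $\Omega(Q^2/\delta)$; and aiming instead at a \emph{small} advantage $\Theta(\delta)$ reduces, not increases, the sample requirement. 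The paper instead reduces from Hamiltonian simulation: with $Q$ queries to a block-encoding of $\rho$ one implements $e^{-i\rho t}$ for $t=\Theta(Q)$ at a query cost only \emph{logarithmic} in the target error (\cref{thm:hamiltonian-simulation}), while $\delta$-accurate \emph{sample-based} Hamiltonian simulation for time $t$ requires $\Omega(t^2/\delta)$ samples (\cref{thm:sample-based-hamiltonian-simulation}). This asymmetry---$\log(1/\varepsilon)$ query dependence on the error versus $1/\delta$ sample dependence---is what produces the $1/\delta$ factor, and your discrimination instance has no analogous asymmetry. Without identifying a benchmark task with that property, the optimality half of the theorem does not follow from your outline.
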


We design our samplizer based on density matrix exponentiation (also known as the LMR protocol), initially proposed in \cite{LMR14} for quantum principal component analysis and subsequently refined in \cite{KLL+17}, achieving optimal sample and time complexities.\footnote{Recently in \cite{GKP+24}, they fixed an error in the proof for density matrix exponentiation in \cite{KLL+17}. Moreover, they presented a non-asymptotic analysis of the sample complexity of density matrix exponentiation.}
The idea is inspired by the recent quantum algorithms for estimating fidelity \cite{GP22} and trace distance \cite{WZ23}. These algorithms can be employed to construct a quantum circuit that (approximately) block-encodes a quantum state given its independent samples, using quantum singular value transformation \cite{GSLW19}. 
Based on this idea, a lifting theorem was discovered in \cite{WZ23b} that relates quantum sample complexity to quantum query complexity.
In this paper, we further extend this technique to general quantum query algorithms.
This is done by replacing each oracle query with a quantum channel that 
simulates the oracle that is implemented by (samples of) the quantum state block-encoded in the oracle (if applicable).
After the replacement, we obtain a quantum sample algorithm that simulates the original quantum query algorithm.

We prove the optimality of the samplizer by observing that any samplizer can samplize a quantum query algorithm for Hamiltonian simulation (e.g., \cite{GSLW19, LC19}) to a quantum sample algorithm for sample-based Hamiltonian simulation \cite{LMR14}. Then, we use the quantum sample lower bound for the latter problem \cite{KLL+17} to derive a matching lower bound for the samplizer.

\begin{remark}
    Our samplizer studies the sample complexity of simulating quantum query algorithms, which is a generalization of \cite[Theorem 1.1]{WZ23b} (for $Q$-dependence only) and \cite[Corollary 21]{GP22} (for $\delta$-dependence only). 
    Let us make a brief comparison as follows.
    In \cite{WZ23b}, they showed a tight $Q$-dependence but did not consider the dependence on the overall error $\delta$ in the sample/time complexity (they only consider the case when $\delta$ is a constant).
    The $\delta$-dependence is extremely important, 
    as the time complexity of the samplizer grows polynomially in $1/\delta$. 
    In \cite{GP22}, they did not consider the $Q$-dependence and did not show the optimality of the $\delta$-dependence. 
    In \cref{thm:samplizer-intro}, we show matching upper and lower bounds with respect to both $Q$ and $\delta$.

    For the case that $\rho$ is (the density operator of) a pure state, a possible implementation of the samplizer was implied in \cite[Lemma 42 in the full version]{ARU14} with sample complexity $O\rbra{Q^2/\delta^2}$.
    In the recent work \cite{WZ25} after the work described in this paper, the samplizer for pure states was improved to have sample complexity $\Theta\rbra{Q^2/\delta}$, which was shown to be optimal only up to a constant factor through a technique for lower bounds different from ours.
\end{remark}

\subsubsection{How to use samplizer?}
Now we explain how to apply the samplizer from a high-level perspective. 
As an illustrative example, we consider how to estimate the von Neumann entropy, $S\rbra{\rho} = -\tr\rbra{\rho\ln\rbra{\rho}}$, of a mixed quantum state $\rho$ with the help of the samplizer. 
Roughly speaking, consider the quantum circuit shown in \cref{fig:q-von}, where the unitary operator $V$ is a (scaled) block-encoding of $-\ln\rbra{\rho/2}$.\footnote{As $-\ln\rbra{x} \to \infty$ when $x \to 0$, some truncation has to be done in order to ensure the validity of the block-encoding $V$. A rigorous description will be given in \cref{sec:von-intro}.} 
This quantum circuit performs the Hadamard test \cite{AJL09} on the unitary operator $V$ and the quantum state $\rho \otimes \ket{0}\bra{0}^{\otimes a}$, which estimates the value of $\Real\sbra{\tr\rbra{V \rbra{\rho \otimes \ket{0}\bra{0}^{\otimes a}}}}$ that is approximately proportional to $-\tr\rbra{\rho\ln\rbra{\rho/2}} = S\rbra{\rho} - \ln\rbra{2}$. 
It is evident that the von Neumann entropy $S\rbra{\rho}$ can then be estimated by repeated experiments.

\begin{figure} [!htp]
\centering
\begin{quantikz}
    \lstick{$\ket{0}$} & \gate{H} & \ctrl{1} & \gate{H} & \meter{} & \setwiretype{c} \rstick{$x$} \\
    \lstick{$\rho$}  & \setwiretype{b} & \gate[2]{ V \approx \begin{bmatrix}
        -\ln\rbra{\rho/2} & * \\
        * & *
    \end{bmatrix}}   & & \\
    \lstick{$\ket{0}^{\otimes a}$} & \setwiretype{b}  & & & 
\end{quantikz}
\caption{Quantum circuit for von Neumann entropy estimation.}
\label{fig:q-von}
\end{figure}

Suppose that a unitary operator $U_\rho$  is a block-encoding of $\rho/2$ and an approximation polynomial of $-\ln\rbra{x}$ is given. 
It is known that $V$, a (scaled) block-encoding of $-\ln\rbra{\rho/2}$, can be approximately implemented using queries to $U_\rho$ by quantum singular value transformation \cite{GSLW19}. 
To make it clear, let quantum circuit family $C = \cbra{C\sbra{U}}$ denote this implementation such that $V = C\sbra{U_\rho}$. 
Finally, 
by replacing the unitary operator $V$ with the quantum channel $\mathsf{Samplize}_{\delta}\ave{C}\sbra{\rho}$ with a small enough precision parameter $\delta$, we can therefore approximately implement the quantum circuit in \cref{fig:q-von}.
This gives an approach to von Neumann entropy estimation that uses only independent samples of $\rho$. 

For readability, we will give a comprehensive overview of our estimators for von Neumann entropy and R\'enyi entropy in \cref{sec:overview}. 

\subsection{Lower bounds} \label{sec:lb-intro}

For completeness, we prove lower bounds on the sample complexity for estimating the von Neumann entropy and R\'enyi entropy. 

\begin{theorem} [\cref{thm:entropy-estimation-sample-lower-bound} restated] \label{thm:sample-lower-bound-intro}
    For every constant $\alpha > 0$, any quantum estimator for the $\alpha$-R\'enyi entropy of an $N$-dimensional quantum state within additive error $\varepsilon$ requires sample complexity $\Omega\rbra{\max\cbra{N/\varepsilon, N^{1/\alpha-1}/\varepsilon^{1/\alpha}}}$. 
    In particular, estimating the von Neumann entropy ($\alpha = 1$) requires sample complexity $\Omega\rbra{N/\varepsilon}$. 
\end{theorem}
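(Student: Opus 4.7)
The plan is to reduce the estimation task to a quantum state-discrimination problem and then bound the required number of copies via a standard quantum Le Cam--style argument. Concretely, for each of the two terms in $\Omega(\max\cbra{N/\varepsilon, N^{1/\alpha-1}/\varepsilon^{1/\alpha}})$, I will exhibit two ensembles $\mathcal{D}_0, \mathcal{D}_1$ of $N$-dimensional states such that (i) every $\rho \in \mathcal{D}_b$ has a fixed value of $S_\alpha$ equal to $s_b$, with $\abs{s_0 - s_1} \geq 2\varepsilon$, and (ii) the averaged $n$-copy states $\bar\rho_b^{(n)} := \mathbb{E}_{\rho \sim \mathcal{D}_b}\sbra{\rho^{\otimes n}}$ satisfy $\Abs{\bar\rho_0^{(n)} - \bar\rho_1^{(n)}}_1 \leq 1/3$ whenever $n$ is below the claimed threshold. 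Any $\varepsilon$-accurate $S_\alpha$-estimator would then yield a distinguisher between $\mathcal{D}_0$ and $\mathcal{D}_1$ by thresholding at $(s_0+s_1)/2$, so the Holevo--Helstrom bound forces $n = \Omega(\cdot)$.

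To obtain the $\Omega(N/\varepsilon)$ term I take $\mathcal{D}_0 = \cbra{I/N}$ and let $\mathcal{D}_1$ be the Haar-random unitary orbit of a fixed spectrum whose $S_\alpha$-value differs from $S_\alpha(I/N)$ by $\Theta(\varepsilon)$ --- for $\alpha = 1$ the natural choice is the uniform distribution on a random subspace of dimension $\lfloor N(1-c\varepsilon)\rfloor$, giving entropy $\ln N - \Theta(\varepsilon)$. The averaged state $\bar\rho_1^{(n)} = \int (U\sigma U^\dag)^{\otimes n}\,dU$ then lies in the commutant of $U^{\otimes n}$ and, via Schur--Weyl duality, splits into blocks indexed by partitions $\lambda \vdash n$; bounding the resulting probability discrepancy on $\lambda$ reduces to a computation with the Weingarten function. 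For the $\Omega(N^{1/\alpha - 1}/\varepsilon^{1/\alpha})$ term with $\alpha \in (0,1)$, I exploit that $P_\alpha(\rho) = \tr(\rho^\alpha)$ is dominated by small eigenvalues when $\alpha$ is small, which allows ensembles supported on two-level spectra (a few heavy eigenvalues plus many tiny ones) whose $S_\alpha$-gap of $\Theta(\varepsilon)$ can be created by perturbing only the light block, while the heavy part --- the dominant contribution to trace-norm discriminability --- remains identical across $\mathcal{D}_0$ and $\mathcal{D}_1$. Dressing these spectra with Haar-random unitaries and repeating the Schur--Weyl argument yields the claimed lower bound.

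The main obstacle is tracking constants in the Schur--Weyl / Weingarten calculation precisely enough to pin down the $\varepsilon$-dependence of the trace-distance bound. A technically easier (but possibly looser) alternative is to pass through a quantum $\chi^2$-divergence or a fidelity-based bound, which tensorises cleanly over copies and avoids explicit representation theory, at the possible cost of some polylogarithmic slack that can be folded into the $\Omega(\cdot)$. A secondary subtlety, particularly for the $\alpha \in (0,1)$ case, is engineering a two-level spectrum whose R\'enyi-entropy gap matches $\Theta(\varepsilon)$ while simultaneously arranging the indistinguishability scale to match the target $n$; this seems to require a careful optimisation over the two free parameters of the construction, namely the size of the heavy block and the magnitude of the light eigenvalues.
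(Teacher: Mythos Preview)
Your Le Cam/Holevo--Helstrom framework is the right shape, but the paper's route is much more lightweight and differs in two essential ways.

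For the $\Omega(N/\varepsilon)$ term, the paper does \emph{not} redo any Schur--Weyl or Weingarten computation. It black-boxes the mixedness-testing lower bound of O'Donnell--Wright, $\Omega(N/\eta^2)$ samples to distinguish $I/N$ from anything $\eta$-far in trace distance, and then shows by a short Jensen argument that any state $\eta$-far from $I/N$ has $S_\alpha$ at most $\ln N-\Theta(\eta^2)$ (using $S_\alpha\le S$ for $\alpha>1$ and a known inequality of \cite{AISW20} for $\alpha<1$). Setting $\varepsilon=\eta^2$ gives $\Omega(N/\varepsilon)$ immediately. Your proposed instance---uniform on a random subspace of dimension $N(1-c\varepsilon)$---is a different point in the tradeoff: it has entropy gap $\Theta(\varepsilon)$ at trace distance $\Theta(\varepsilon)$ (linear, not quadratic), so to recover $\Omega(N/\varepsilon)$ you would need to prove that this \emph{particular} rank-deficient instance still requires $\Omega(N/\varepsilon)$ samples to distinguish from $I/N$. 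That does not follow from the generic mixedness-testing bound (which is worst-case over $\eta$-far states, not a statement about your specific state), and rank-deficient instances are often easier, so this step is a genuine gap you would have to fill with an explicit Schur--Weyl calculation.

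For the $\Omega(N^{1/\alpha-1}/\varepsilon^{1/\alpha})$ term with $\alpha\in(0,1)$, the paper avoids Haar averaging entirely. It takes two \emph{fixed} diagonal states, $\sigma=\ket{0}\!\bra{0}$ and $\rho=\diag(1-\delta,\delta/(N-1),\dots,\delta/(N-1))$ with $\delta=(2\varepsilon/(N-1)^{1-\alpha})^{1/\alpha}$, checks that $S_\alpha(\rho)-S_\alpha(\sigma)\ge\varepsilon/2$, and bounds $\Abs{\rho^{\otimes n}-\sigma^{\otimes n}}_1$ via $F(\rho,\sigma)^n=(1-\delta)^{n/2}$, which forces $n=\Omega(1/\delta)=\Omega(N^{1/\alpha-1}/\varepsilon^{1/\alpha})$. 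This is exactly the ``fidelity-based bound'' you mention as a fallback; it is not looser, it is the whole argument, and it sidesteps all the representation theory and the two-parameter optimisation you were anticipating.
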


To the best of our knowledge, we are not aware of any general sample lower bounds for estimating the von Neumann entropy or R\'enyi entropy that are explicitly stated in the literature, except for the matching lower bound $\Omega\rbra{\max\cbra{N^{2-2/\alpha}/\varepsilon^{2/\alpha}, N^{1-1/\alpha}/\varepsilon^2}}$ for every constant integer $\alpha \geq 2$ in \cite{AISW20}.
Nevertheless, we note that the sample lower bound for the mixedness testing problem of quantum states given in \cite[Theorem 1.10]{OW21} actually implies an $\Omega\rbra{N}$ sample lower bound for entropy estimation. 
In \cref{thm:sample-lower-bound-intro}, our contribution is that we give a better sample lower bound for $0 < \alpha < 1/2$, and that we further consider the $\varepsilon$-dependence in the lower bounds. 
This is achieved by reducing the task of estimating the $\alpha$-R\'enyi entropy of quantum states to the mixed testing problem of quantum states in \cite{OW21} and to the distinguishing problem of a special probability distribution used in \cite{AOST17,AISW20}.

We note that in \cite{AISW20}, they provided sample lower bounds $\Omega\rbra{\max\cbra{N^2/\varepsilon, N^{1+1/\alpha}/\varepsilon^{1/\alpha}}}$ for any empirical Young diagram algorithms that estimate the $\alpha$-R\'enyi entropy for $\alpha > 0$ (including $\alpha = 1$ for the von Neumann entropy). 
Compared to the lower bounds given in \cref{thm:sample-lower-bound-intro}, their lower bounds do not apply to general algorithms that are not based on empirical Young diagrams (which is noted by \cite{Wri22}).
This is because their lower bounds highly rely on the EYD structure of the estimators, where an empirical distribution should be estimated in the first step (which can require a large sample complexity for a good estimation).
As general estimators do not necessarily estimate the empirical distribution, one may hope that they can perform better than existing EYD estimators in some estimation tasks. 
For example, the sample complexity of von Neumann entropy estimation is still open (see Question \ref{item:ques2} in \cref{sec:discussion}). 

In addition, we discuss the limiting cases $\alpha = 0$ and $\alpha = \infty$ of \cref{thm:sample-lower-bound-intro} as follows.
\begin{itemize}
    \item For the case of $\alpha = 0$, $S_0\rbra{\rho} = \ln\rbra{\rank\rbra{\rho}}$ is the Max (Hartley) entropy. 
    We further show that there is no estimator for the Max entropy (within constant additive error). To see this, consider the problem of distinguishing the two quantum states $\rho_0 = \ket{0}\bra{0}$ and $\rho_{\delta} = \rbra{1-\delta}\ket{0}\bra{0} + \delta \cdot \frac{I}{N}$, where $\delta > 0$ can be arbitrarily close to $0$. 
    Note that $\rank\rbra{\rho_0} = 1$ and $\rank\rbra{\rho_\delta} = N$, and thus $S_0\rbra{\rho_0} = 0$ and $S_0\rbra{\rho_\delta} = \ln\rbra{N}$. 
    On the other hand, according to the upper bound on the success probability of quantum state discrimination (see \cref{thm:HH-measurement}), distinguishing between $\rho_0$ and $\rho_\delta$ requires $\Omega\rbra{1/\delta}$ samples, which can be arbitrarily large and is independent of $N$. 
    \item For the case of $\alpha = \infty$, $S_{\infty}\rbra{\rho} = -\ln\rbra{\Abs{\rho}}$ is the Min entropy. 
    An estimator with sample complexity $O\rbra{N^2/\varepsilon^2}$ is implied by \cite[Theorem 1.18]{OW17}.\footnote{In \cite{OW17}, they proposed a quantum algorithm that finds the top-$k$ eigenvalues of an $N$-dimensional quantum state $\rho$ to $\delta$-accuracy in $\ell_2^2$ distance with sample complexity $O\rbra{k/\delta}$.
    Note that $\Abs{\rho}$ is the largest (i.e., top-$1$) eigenvalue of $\rho$ and $1/N \leq \Abs{\rho} \leq 1$.
    To obtain an estimate of $S_{\infty}\rbra{\rho}$ within additive error $\varepsilon$, an estimate of $\Abs{\rho}$ with multiplicative error $\varepsilon$ suffices. 
    This can be done by taking $k = 1$ and $\delta = \varepsilon^2/N^2$, resulting in a sample complexity of $O\rbra{N^2/\varepsilon^2}$.}
    On the other hand, the proof for $\alpha > 1$ (\cref{thm:sample-lower-bound-renyi-gt1}) also applies to $\alpha = \infty$, thus an $\Omega\rbra{N/\varepsilon}$ sample lower bound also holds for estimating $S_{\infty}\rbra{\rho}$.
\end{itemize}

\subsection{Related work}
\label{sec:related-work}

There are quantum query algorithms for estimating the entropy of a quantum state $\rho$, given purified access to $\rho$.
It was shown in \cite{GL20} that the von Neumann entropy $S\rbra{\rho}$ can be estimated with quantum query complexity $\widetilde O\rbra{N}$. 
The estimation of $S\rbra{\rho}$ was shown to be useful as a subroutine in variational quantum algorithms \cite{CLW20}, where they showed that $S\rbra{\rho}$ can be estimated with query complexity $\widetilde O\rbra{\kappa^2}$, and $\kappa$ is the reciprocal of the minimum non-zero eigenvalue of $\rho$.
A quantum query algorithm for estimating $S\rbra{\rho}$ with multiplicative error was proposed in \cite{GHS21}.
It was shown in \cite{SH21} that the $\alpha$-R\'enyi entropy $S_\alpha\rbra{\rho}$ can be estimated with quantum query complexity $\widetilde O\rbra{\kappa N^{\max\cbra{\alpha, 1}}}$, which was later shown in \cite{LWZ22} to be $\widetilde O\rbra{N^{1/2+1/2\alpha}}$ for $0 < \alpha < 1$ and $\widetilde O\rbra{N^{3/2-1/2\alpha}}$ for $\alpha > 1$.
When $\rho$ is low-rank, it was shown in \cite{WGL+22} that the quantum query complexity of estimating $S\rbra{\rho}$ and $S_\alpha\rbra{\rho}$ is $\poly\rbra{r}$. 
Other than upper bounds, it was shown in \cite{GH20} that estimating the entropy of shallow circuit outputs is hard. 
In addition to quantum approaches, a classical approach for estimating the von Neumann entropy was proposed in \cite{KDS+20}.
For probability distributions, quantum algorithms for estimating their entropy were investigated in \cite{LW19}.

\subsection{Discussion} \label{sec:discussion}

In this paper, we provide time-efficient quantum estimators for the von Neumann entropy and R\'enyi entropy of quantum states using their independent samples. 
They are designed under the unified framework of our novel tool --- samplizer. 
Very different from the prior approaches \cite{AISW20,BMW16} that are based on weak Schur sampling and Young diagrams, our quantum entropy estimators build on the samplizer and quantum singular value transformation, demonstrating that block-encoding techniques \cite{GSLW19} are also useful to obtain efficient quantum estimators that take only independent samples of quantum states as input.

We conclude by mentioning several open questions related to our work. 
\begin{enumerate}
    \item Can we improve the logarithmic factors in the sample complexity of the samplizer given in \cref{thm:samplizer-intro}? 
    The current upper and lower bounds on the sample complexity of the samplizer are only tight up to polylogarithmic factors. \label{item:ques1}
    \item All of the existing estimators for the von Neumann entropy, including the estimators based on the EYD (empirical Young diagram) by \cite{AISW20,BMW16} and ours (\cref{thm:simp-von}), have sample complexity $\widetilde O\rbra{N^2}$. 
    It was also shown in \cite{AISW20} that any quantum EYD estimator for the von Neumann entropy has sample complexity $\Omega\rbra{N^2}$.
    We conjecture that the same sample lower bound also holds for any von Neumann entropy estimator (that is not necessarily based on the EYD), though we can only prove a lower bound $\Omega\rbra{N}$ in \cref{thm:sample-lower-bound-von-neumann}. \label{item:ques2}
    \item Although our R\'enyi entropy estimator (\cref{thm:sim-renyi}) is more time-efficient than the estimator proposed in \cite{AISW20}, its sample complexity is worse. 
    Can we improve the sample-time trade-off or prove any sample-time lower bound for R\'enyi entropy estimators?
    \item The sample/time complexities for estimating the $\alpha$-R\'enyi entropy considered in this paper are only for constant $\alpha$. Can we find efficient estimators for the case of non-constant $\alpha$?
    \item We believe that the samplizer can be useful to design quantum algorithms for quantum property testing, especially for those concerning quantum states. 
    For example, we think that it could be used to simplify the fidelity estimator in \cite{GP22} and the trace distance estimator in \cite{WZ23}. Except for these direct applications, can we find new quantum sample algorithms for other computational tasks of interest through the samplizer? \label{item:ques4}
\end{enumerate}

\subsection{Recent developments}

After the work described in this paper, Hayashi \cite{Hay24} proposed a quantum estimator for von Neumann relative entropy $D\rbra{\rho \| \sigma} \coloneqq -\tr\rbra{\rho \log \sigma} - S\rbra{\rho}$ when $\sigma$ is known, using $O\rbra{N^2}$ samples of $\rho$ and with time complexity $\widetilde O\rbra{N^6}$ based on Schur transforms.\footnote{In \cite[Remarks 7 and 8]{Hay24}, the author discussed the time complexity for the Schur transform \cite{BCH06}, which was known to be $\poly\rbra{n, d, \log\rbra{1/\epsilon}}$ in \cite{BCH07}, where $n$ is the number of identical samples of the input quantum state, $d$ is the dimension of each sample, and $\epsilon$ is the implementation precision.
Later improvements also include \cite{KS18,Kro19}.
For the purpose of von Neumann (relative) entropy estimation, the best choice is the quantum Schur transform with time complexity $\widetilde O\rbra{nd^4}$ due to \cite{Ngu23,GBO23} and the weak Schur sampling with time complexity $\widetilde O\rbra{n^3}$ (see \cref{footnote:weak-schur-sampling}) with $n \coloneqq N^2$ and $d \coloneqq N$, both resulting in the same time complexity $\widetilde O\rbra{N^6}$ up to polylogarithmic factors.}

The samplizer defined in this paper has been applied in many other quantum estimation tasks, thereby addressing Question \ref{item:ques4} raised in \cref{sec:discussion}. 
Liu, Wang, Wilde, and Zhang \cite{LWWZ24} proposed an estimator for the fidelity of well-conditioned quantum states with sample complexity $\widetilde O\rbra{1/\varepsilon^3}$, where $\varepsilon$ is the additive error. 
Liu and Wang \cite{LW25} proposed an estimator for the quantum Tsallis entropy $S_{\alpha}^{T}\rbra{\rho} = \frac{1}{1-\alpha}\rbra{\tr\rbra{\rho^{\alpha}} - 1}$ with sample complexity $\widetilde O\rbra{1/\varepsilon^{3+\frac{2}{\alpha-1}}} = \poly\rbra{1/\varepsilon}$ for any constant $\alpha > 1$, exponentially improving the previous quantum Tsallis entropy estimators given or implied in \cite{AISW20,WGL+22,LWZ22} and this paper.\footnote{Any estimator for the quantum R\'enyi entropy $S_{\alpha}\rbra{\rho}$ is an estimator for the quantum Tsallis entropy $S_{\alpha}^T\rbra{\rho}$ to the same additive error for $\alpha > 1$, but not vice versa.} 
Liu and Wang \cite{LW25b} proposed an estimator for the quantum $\ell_\alpha$ distance $d_{\alpha}\rbra{\rho, \sigma} = \frac{1}{2}\Abs{\rho-\sigma}_{\alpha}$ with sample complexity $\widetilde O\rbra{1/\varepsilon^{3\alpha+2+\frac{2}{\alpha-1}}} = \poly\rbra{1/\varepsilon}$ for any constant $\alpha > 1$, exponentially improving the previous estimator in \cite{WGL+22}, where $\Abs{\cdot}_{\alpha}$ is the Schatten $\alpha$-norm.
Niwa, Rossi, Taranto and Murao \cite{NRTM25} proposed a singular value transformation scheme for (the block-encodings of the Hermitized Liouville representations of) quantum channels.

The notion of samplizer was specialized for pure states in \cite{WZ25} with optimal sample complexity $\Theta\rbra{Q^2/\delta}$, which removes the polylogarithmic factors compared to \cref{thm:samplizer-intro} and thus is a partial answer to Question \ref{item:ques1} raised in \cref{sec:discussion}. 
Using the samplizer for pure states, they showed an estimator for trace distance and square root fidelity of pure states with optimal sample complexity $\Theta\rbra{1/\varepsilon^2}$. 
For comparison, the query complexity of these tasks was recently shown to be $\Theta\rbra{1/\varepsilon}$ in \cite{Wan24} (and later generalized to estimating the fidelity of a mixed state to a pure state in \cite{FW25}). 

In addition, our sample complexity lower bounds for entropy estimation in \cref{thm:sample-lower-bound-intro} was used in \cite{CWZ24} to establish a sample complexity lower bound for estimating the entanglement entropy of bipartite pure quantum states.
Based on this, they further showed a quantum query complexity lower bound for the entanglement entropy problem (which was initiated in \cite{SY23}), improving the prior lower bounds due to \cite{SY23,WZ23b,Weg24}. 

\subsection{Organization}

The organization of the remaining of this paper is as follows.
We will give an overview of our estimators in \cref{sec:overview}, introducing the idea on how to use the samplizer in a convenient way.
We will provide an efficient implementation of the samplizer and show its optimality in \cref{sec:samplizer}.
Quantum estimators for the von Neumann entropy and the $\alpha$-R\'enyi entropy are presented in \cref{sec:von-neumann} and \cref{sec:renyi}, respectively.
Finally, lower bounds on the sample complexity for estimating the von Neumann entropy and R\'enyi entropy are given in \cref{sec:sample-lower-bound-entropy}.

\section{Overview} \label{sec:overview}
In this section, we present an overview of our estimators by providing formal algorithms with the help of the samplizer, which also shows the usefulness of the samplizer. 

\subsection{Preliminaries}

We first introduce the basic notations used in this paper and then the tools necessary to construct our estimators.

\subsubsection{Basic notations}
Throughout this paper, we use $[n]$ to denote the set $\{1,2,\ldots, n\}$ for $n\in \mathbb{N}$.
Let $\mathcal{H}$ be a finite-dimensional Hilbert space. 
A quantum state in $\mathcal{H}$ is represented by a complex-valued vector $\ket{\psi}$. 
The inner product of two quantum states $\ket\psi$ and $\ket\varphi$ is denoted by $\braket{\psi}{\varphi}$. 
The norm of a quantum state is defined by $\Abs{\ket{\psi}} = \sqrt{\braket{\psi}{\psi}}$. 
An $a$-qubit quantum state $\ket{\psi}$ is usually denoted as $\ket{\psi}_a$, with the subscript $a$ indicating the label (and also the number of qubits) of the quantum system; also, we write $\bra{\psi}_a$ for the conjugate of $\ket{\psi}_a$.
For example, $\ket{0}^{\otimes a}$ can be denoted as $\ket{0}_a$ and $\ket{0}^{\otimes a} \otimes \ket{0}^{\otimes b}$ can be denoted as $\ket{0}_a \ket{0}_b$.

A mixed quantum state in $\mathcal{H}$ can be represented by a density operator $\rho$ on $\mathcal{H}$ with $\tr\rbra{\rho} = 1$ and $\rho \sqsupseteq 0$, where $\sqsupseteq$ is the L\"owner order, i.e., $A \sqsupseteq B$ if and only if $A-B$ is positive semidefinite. 
Let $\mathcal{D}\rbra{\mathcal{H}}$ denote the set of all density operators on $\mathcal{H}$. 
The trace distance between two mixed quantum states $\rho$ and $\sigma$ is defined by
\[
\frac 1 2 \Abs{\rho - \sigma}_1,
\]
where $\Abs{A}_1 = \tr\rbra{\sqrt{A^\dag A}}$ and $A^\dag$ is the Hermitian conjugate of $A$. 
A quantum gate can be represented by a unitary operator $U$ satisfying $U^\dag U = UU^\dag = I$, where $I$ is the identity operator. 
The operator norm of an operator $A$ is defined by
\[
\Abs{A} = \sup_{\Abs{\ket{\psi}} = 1} \Abs{A \ket{\psi}}.
\]
A quantum measurement is a collection of operators $M = \cbra{M_m}$ with $\sum_m M_m M_m^\dag = I$. 
If we measure a quantum state $\rho$ using the quantum measurement $M$, then the outcome $m$ will be obtained with probability $p_m = \tr\rbra{M_m \rho M_m^\dag}$ and the quantum state will become $\rho' = M_m \rho M_m^\dag / p_m$ after the measurement. 
A positive operator-valued measure (POVM) is a collection of operators $E = \cbra{E_m}$ with $\sum_m E_m = I$. 
After measuring a quantum state $\rho$ by the POVM $E$, the outcome will be $m$ with probability $\tr\rbra{E_m \rho}$.

For two quantum channels $\mathcal{E}$ and $\mathcal{F}$ that act on $\mathcal{D}\rbra{\mathcal{H}}$, the trace norm distance between them is defined by
\[
\Abs{\mathcal{E} - \mathcal{F}}_{\tr} = \sup_{\varrho \in \mathcal{D}\rbra{\mathcal{H}}} \Abs{ \mathcal{E}\rbra{\varrho} - \mathcal{F}\rbra{\varrho} }_1,
\]
and the diamond norm distance between them is defined by
\[
\Abs{\mathcal{E} - \mathcal{F}}_{\diamond} = \sup_{\varrho \in \mathcal{D}\rbra{\mathcal{H} \otimes \mathcal{H}'}} \Abs{ \rbra*{\mathcal{E} \otimes \mathcal{I}_{\mathcal{H}'}}\rbra{\varrho} - \rbra*{\mathcal{F} \otimes \mathcal{I}_{\mathcal{H}'}}\rbra{\varrho} }_1,
\]
where the supremum is taken over all finite-dimensional Hilbert spaces $\mathcal{H}'$. 
The relationship between the trace norm distance and the diamond norm distance is given as follows. 
\begin{equation}
\label{eq:tr-vs-diamond}
    \frac{1}{\dim\rbra{\mathcal{H}}} \Abs{\mathcal{\mathcal{E} - \mathcal{F}}}_{\diamond} \leq \Abs{\mathcal{\mathcal{E} - \mathcal{F}}}_{\tr} \leq \Abs{\mathcal{\mathcal{E} - \mathcal{F}}}_{\diamond}.
\end{equation}

\subsubsection{Block-encoding}

Many quantum algorithms are described in the language of block-encoding \cite{GSLW19}, which is also adopted in this paper. 

\begin{definition} [Block-encoding]
\label{def:block-encoding}
    Suppose that $A$ is an $n$-qubit operator, $\alpha, \varepsilon \geq 0$ and $a \in \mathbb{N}$. 
    An $\rbra{n+a}$-qubit operator $B$ is an $\rbra{\alpha, a, \varepsilon}$-block-encoding of $A$, if 
    \[
    \Abs*{\alpha \bra{0}_a B \ket{0}_a - A} \leq \varepsilon.
    \]
    Especially when $\alpha = 1$ and $\varepsilon = 0$, we may simply call $B$ a block-encoding of $A$ (if the parameter $a$ is clear or unimportant in context).
\end{definition}

\subsubsection{Quantum Hadamard test}

The Hadamard test \cite{AJL09} is used to estimate the value of $\bra{\psi} U \ket{\psi}$ for given unitary operator $U$ and quantum state $\ket{\psi}$, which can also be generalized to estimate the value of $\tr\rbra{A\rho}$ for block-encoded operator $A$ and mixed quantum state $\rho$.
A version of Hadamard test in \cite{GP22} is presented as follows.

\begin{theorem} [Hadamard test, {\cite[Lemma 9]{GP22}}]
\label{thm:hadamard}
    Suppose $U$ is an $\rbra{n+a}$-qubit unitary operator that is a $\rbra{1, a, 0}$-block-encoding of $A$. 
    Then, we can implement a quantum circuit using $1$ query to $U$ and $O\rbra{1}$ one- and two-qubit quantum gates such that it outputs $1$ with probability $\frac{1 + \operatorname{Re}\rbra{\tr\rbra{A\rho}}}{2}$ (resp. $\frac{1 + \operatorname{Im}\rbra{\tr\rbra{A\rho}}}{2}$) on input $n$-qubit mixed quantum state $\rho$.
\end{theorem}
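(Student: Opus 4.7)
My plan is to construct the claimed circuit explicitly, then analyze the output distribution by a direct amplitude computation. The circuit is the standard Hadamard test adapted to block-encodings: take one fresh ancilla qubit $b$ initialized in $\ket{0}$, the $a$-qubit block-encoding register initialized in $\ket{0}_a$, and the $n$-qubit register carrying $\rho$. Apply $H$ to $b$; then apply $U$ controlled on $b = 1$ to the $(a+n)$-qubit register; then apply $H$ to $b$ again; and finally measure $b$ in the computational basis, outputting $1$ when the outcome is $0$. For the imaginary part, insert an $S^\dagger$ gate on $b$ between the first Hadamard and the controlled-$U$. This uses exactly one query to $U$ plus $O(1)$ one- and two-qubit gates, matching the stated resource count.

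Next I will verify the output probability. Purify $\rho$ as $\ket{\psi_\rho}$ on the $n$-qubit system plus an auxiliary $r$-qubit reference, so that $\tr_r(\ket{\psi_\rho}\bra{\psi_\rho}) = \rho$; the reference is untouched throughout. After the first Hadamard and controlled-$U$, the joint state of $b$ together with the block register and the purified system is
\[
\frac{1}{\sqrt{2}}\bigl(\ket{0}_b\,\ket{0}_a\ket{\psi_\rho} + \ket{1}_b\,U\ket{0}_a\ket{\psi_\rho}\bigr).
\]
Applying $H$ to $b$ and collecting the $\ket{0}_b$ component yields amplitude $\tfrac{1}{2}(I + U)\ket{0}_a\ket{\psi_\rho}$. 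Hence
\[
\Pr[\text{outcome }0\text{ on }b] = \tfrac{1}{4}\bra{\psi_\rho}\bra{0}_a (I+U^\dagger)(I+U)\ket{0}_a\ket{\psi_\rho}.
\]
Expanding and using the block-encoding identity $\bra{0}_a U\ket{0}_a = A$ (the case $\alpha=1$, $\varepsilon=0$ of \cref{def:block-encoding}), this equals
\[
\tfrac{1}{4}\bigl(2 + \bra{\psi_\rho} A\otimes I_r \ket{\psi_\rho} + \bra{\psi_\rho} A^\dagger\otimes I_r \ket{\psi_\rho}\bigr) = \tfrac{1}{2}\bigl(1 + \Real\tr(A\rho)\bigr),
\]
where the final equality uses $\bra{\psi_\rho} (M\otimes I_r)\ket{\psi_\rho} = \tr(M\rho)$ for any $n$-qubit operator $M$. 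Relabeling the reported bit gives the claimed success probability $\tfrac{1+\Real\tr(A\rho)}{2}$.

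For the imaginary-part variant, the inserted $S^\dagger$ turns the amplitude of $\ket{1}_b$ into $-i/\sqrt{2}$, so the $\ket{0}_b$ amplitude after the second Hadamard becomes $\tfrac{1}{2}(I - iU)\ket{0}_a\ket{\psi_\rho}$, and the same calculation yields $\tfrac{1}{4}(2 + i\tr(A^\dagger\rho) - i\tr(A\rho)) = \tfrac{1}{2}(1 + \Imag\tr(A\rho))$. There is no real obstacle here since the block-encoding is exact ($\varepsilon = 0$), so the only thing that needs care is bookkeeping the two ancilla roles (the Hadamard-test qubit $b$ versus the block-encoding register of size $a$) and confirming that the reference register in the purification can be traced out cleanly because it never interacts with $U$. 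Everything else is a one-line amplitude expansion.
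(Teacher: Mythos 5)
Your proof is correct, and since the paper simply cites this as~\cite[Lemma~9]{GP22} rather than reproving it, there is no in-paper proof to compare against; your argument is the standard Hadamard-test derivation that the cited lemma gives. The amplitude computation $\Pr[\text{measure }0] = \tfrac14\bra{\psi_\rho}\bra{0}_a(I+U^\dagger)(I+U)\ket{0}_a\ket{\psi_\rho} = \tfrac{1+\Real\tr(A\rho)}{2}$ is right, the use of the purification with an untouched reference register to turn $\bra{\psi_\rho}(M\otimes I_r)\ket{\psi_\rho}$ into $\tr(M\rho)$ is clean, the $S^\dagger$ variant for the imaginary part checks out ($-i\tr(A\rho)+i\overline{\tr(A\rho)} = 2\Imag\tr(A\rho)$), and the explicit relabeling of the output bit correctly reconciles the ``outputs~$1$'' convention in the theorem statement with the usual ``measure~$0$'' outcome.
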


\subsubsection{Quantum eigenvalue transformation}

We will use the technique of polynomial eigenvalue transformation as a key tool in our algorithms.

\begin{theorem} [Polynomial eigenvalue transformation, {\cite[Theorem 31]{GSLW19}}]
\label{thm:qsvt}
    Suppose that unitary operator $U$ is an $\rbra{\alpha, a, \varepsilon}$-block-encoding of an Hermitian operator $A$. If $\delta > 0$ and $p\rbra{x} \in \mathbb{R}\sbra{x}$ is a polynomial of degree $d$ such that
    \[
    \forall x \in \sbra{-1, 1}, \quad \abs*{p\rbra{x}} \leq \frac 1 2,
    \]
    then there is a unitary operator $\widetilde U$ that is a $\rbra{1, a+2, 4d\sqrt{\varepsilon/\alpha} + \delta}$-block-encoding of $p\rbra{A/\alpha}$, using $O\rbra{d}$ queries to $U$, and $O\rbra{\rbra{a+1}d}$ one- and two-qubit quantum gates. 
    Moreover, the description of $\widetilde U$ can be computed classically in $\poly\rbra{d, \log\rbra{1/\delta}}$ time. 
\end{theorem}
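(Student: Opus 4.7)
The plan is to reduce the eigenvalue transformation task to Quantum Signal Processing (QSP) applied block by block, and then to lift from the exact case ($\varepsilon=0$) to the robust case. First I would normalize by working with $A/\alpha$, which has operator norm at most $1$ (this is the object actually block-encoded inside $U$ after rescaling). Since $A$ is Hermitian, it has a spectral decomposition $A/\alpha = \sum_i \lambda_i \ket{v_i}\bra{v_i}$ with $\lambda_i \in \sbra{-1,1}$, so the desired output $p(A/\alpha) = \sum_i p(\lambda_i) \ket{v_i}\bra{v_i}$ is diagonal in the same basis.

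Next I would construct a ``walk operator'' $W$ from $U$ using one reflection $(2\ket{0}\bra{0}_a - I)\otimes I$ on the block-encoding ancilla, with the property that each eigenvector $\ket{v_i}$ of $A$ spans (together with an auxiliary vector that $U$ creates from $\ket{0}_a\ket{v_i}$) a two-dimensional $W$-invariant subspace on which $W$ acts as a rotation by an angle whose cosine is $\lambda_i$. This is the standard qubitization reduction. Within each such 2D block, QSP tells us that interleaving $W$ with $d$ single-qubit $z$-rotations at precomputed angles $\phi_0,\ldots,\phi_d$ implements, in the upper-left corner, a degree-$d$ polynomial of $\lambda_i$. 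The key QSP characterization (Low--Chuang, repackaged in {\cite{GSLW19}}) says that any real polynomial $p$ of fixed parity with $\abs{p(x)}\leq 1$ on $\sbra{-1,1}$ is exactly representable. So I would decompose $p = p_{\textup{even}} + p_{\textup{odd}}$, implement each piece via QSP separately (each is bounded by $1/2$ on $\sbra{-1,1}$, matching the hypothesis), and combine them by a one-qubit LCU (linear combination of unitaries) controlled on a fresh ancilla; this ancilla plus the QSP parity ancilla account for the ``$+2$'' in the block-encoding parameter.

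The remaining pieces are the phase-angle computation and the error analysis. For angles: I would invoke the algorithmic angle-finding procedure that, given the Chebyshev coefficients of $p$ to precision $\poly(1/\delta)$, outputs a valid phase sequence; this can be done with $\poly(d,\log(1/\delta))$ classical arithmetic, which supplies the classical runtime claim and explains the extra additive $\delta$ in the block-encoding error (it is the finite-precision slack in the QSP angles). The truly robust part, and what I expect to be the main technical obstacle, is the $4d\sqrt{\varepsilon/\alpha}$ term: when $U$ is only an $\varepsilon$-approximate block-encoding, the 2D invariant subspaces of $W$ are only approximately invariant. I would handle this by comparing $W$ to an ideal walk operator $W^\star$ built from an exact block-encoding, bounding $\Abs{W-W^\star}$ by $O(\sqrt{\varepsilon/\alpha})$ (the square root arises because the perturbation enters through a projector whose complement contributes amplitude $\Theta(\sqrt{\varepsilon/\alpha})$), and then using the fact that composing $d$ unitaries accumulates error linearly in $d$, for a total of $O(d\sqrt{\varepsilon/\alpha})$. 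Putting the QSP construction, LCU combination, phase-precision $\delta$, and composition error together yields the claimed $(1,a+2,4d\sqrt{\varepsilon/\alpha}+\delta)$-block-encoding with $O(d)$ queries and $O((a+1)d)$ extra gates.
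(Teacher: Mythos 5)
The paper does not prove this theorem; it is imported verbatim as Theorem~31 of Gilyén, Su, Low, and Wiebe \cite{GSLW19}, so there is no internal proof to compare against. That said, your sketch does reconstruct the architecture of the GSLW19 argument faithfully: qubitize the block-encoding into a walk operator, use a length-$d$ quantum-signal-processing phase sequence to realize a degree-$d$ fixed-parity real polynomial, split $p$ into even and odd parts and recombine them with a single-qubit LCU, and compute the phase angles classically to precision $\delta$. Your accounting for the two extra ancillas and for the additive $\delta$ in the error budget matches the source.

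The part of your reconstruction that is not airtight is the $4d\sqrt{\varepsilon/\alpha}$ term. You propose to find a unitary $U^\star$ that exactly block-encodes $A/\alpha$ with $\Abs{U - U^\star} = O\rbra{\sqrt{\varepsilon/\alpha}}$ and then propagate error linearly across the $d$ calls. The existence of such a $U^\star$ at that distance from the \emph{given} $U$ is asserted rather than proved, and it is not an elementary fact: the unitary dilation of a contraction is highly non-unique, and it is not obvious that some dilation of $A/\alpha$ must lie within $O\rbra{\sqrt{\varepsilon/\alpha}}$ of your particular $U$ (with a constant small enough to give the stated $4$). GSLW19 sidestep this entirely: their robustness lemma is a direct matrix-perturbation bound for singular-value transformation, roughly that if $\Abs{A - \tilde{A}} \leq \epsilon$ with $\Abs{A}, \Abs{\tilde{A}} \leq 1$ and $\abs{p\rbra{x}} \leq 1$ on $\sbra{-1,1}$ then $\Abs{p^{(\mathrm{SV})}\rbra{A} - p^{(\mathrm{SV})}\rbra{\tilde{A}}} \leq 4d\sqrt{\epsilon}$. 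There the $\sqrt{\epsilon}$ comes from polynomial analysis on $\sbra{-1,1}$ (a Bernstein-type derivative bound for polynomials that are uniformly bounded, which degrades near the endpoints), not from an off-block amplitude heuristic. If you want to keep your ``nearby exact block-encoding'' route, you would need to separately construct such a $U^\star$ and bound its distance from $U$; as written, that step is a genuine gap.
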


\subsubsection{Polynomial approximations}

To apply polynomial eigenvalue transformation (\cref{thm:qsvt}), we introduce some useful polynomial approximations. 
The following is a polynomial approximation of rectangle functions. 

\begin{lemma} [Polynomial approximation of rectangle function, {\cite[Corollary 16]{GSLW19}}]
\label{lemma:poly-approx-rec}
    Suppose that $\delta, \varepsilon \in \rbra{0, 1/2}$ and $t \in \sbra{-1, 1}$. Then, there is an efficiently computable even polynomial $p\rbra{x} \in \mathbb{R}\sbra{x}$ of degree $O\rbra*{\frac{1}{\delta}\log\rbra*{\frac{1}{\varepsilon}}}$ such that
    \begin{align*}
        & \forall x \in \sbra*{-1, 1}, \quad \abs*{p\rbra*{x}} \leq 1, \\
        & \forall x \in \sbra*{-t+\delta, t-\delta}, \quad p\rbra*{x} \in \sbra*{1-\varepsilon, 1}, \\
        & \forall x \in \sbra*{-1,-t-\delta} \cup \sbra*{t+\delta, 1}, \quad p\rbra*{x} \in \sbra*{0, \varepsilon}.
    \end{align*}
\end{lemma}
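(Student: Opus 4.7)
The plan is to realize the rectangle as a difference of two shifted approximations to the sign function. First I would produce an odd polynomial $Q\rbra{y}$ on $\sbra{-2,2}$ of degree $O\rbra*{\frac{1}{\delta}\log\rbra*{\frac{1}{\varepsilon}}}$ that satisfies $\abs{Q\rbra{y}} \leq 1$ throughout and $\abs{Q\rbra{y} - \operatorname{sign}\rbra{y}} \leq \varepsilon/4$ whenever $\abs{y} \geq \delta$. This is the sharp polynomial approximation to sign/erf underlying quantum signal processing: one expands $\operatorname{erf}\rbra{ky}$ (with $k = \Theta\rbra*{\frac{1}{\delta}\log\rbra*{\frac{1}{\varepsilon}}}$) in Chebyshev polynomials, truncates after $O\rbra{k}$ terms using the rapid decay of the Chebyshev coefficients of a Gaussian, and scales by $1-\varepsilon$ to enforce the uniform bound.

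Next I would set
\[
p\rbra*{x} = \frac{1}{2}\rbra*{Q\rbra*{x+t} - Q\rbra*{x-t}}.
\]
This polynomial in $x$ has the same degree as $Q$, and it is even because $Q$ is odd and the two shifts $\pm t$ are mirror images: $p\rbra{-x} = \frac{1}{2}\rbra{Q\rbra{-x+t} - Q\rbra{-x-t}} = \frac{1}{2}\rbra{-Q\rbra{x-t}+Q\rbra{x+t}} = p\rbra{x}$. Efficient computability follows from the efficient computability of the Chebyshev coefficients of $Q$ composed with the explicit linear substitutions $x \mapsto x \pm t$.

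Verification then splits into three cases. For $x \in \sbra{-t+\delta, t-\delta}$, the shifted arguments satisfy $x+t \geq \delta$ and $x-t \leq -\delta$, so $Q\rbra{x+t} \in \sbra{1-\varepsilon/4,1}$ and $Q\rbra{x-t} \in \sbra{-1,-1+\varepsilon/4}$, yielding $p\rbra{x} \in \sbra{1-\varepsilon/4, 1} \subseteq \sbra{1-\varepsilon, 1}$. For $x \in \sbra{t+\delta, 1}$ (the case $x \in \sbra{-1,-t-\delta}$ is symmetric), both $x+t$ and $x-t$ are at least $\delta$, so each $Q$-value lies in $\sbra{1-\varepsilon/4, 1}$ and their difference lies in $\sbra{-\varepsilon/4, \varepsilon/4}$; halving gives $p\rbra{x} \in \sbra{0,\varepsilon}$ after absorbing the sign into an initial truncation to the nonnegative part. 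The global bound $\abs{p\rbra{x}} \leq 1$ is immediate from $\abs{Q} \leq 1$ and the triangle inequality.

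The main obstacle is the first step: a naive Taylor truncation of $\operatorname{erf}\rbra{kx}$ would give degree $\Theta\rbra{k^2}$ rather than the claimed $\widetilde O\rbra{k}$. The tight degree requires the Chebyshev-coefficient decay argument from \cite{GSLW19} (a variant of the construction in \cite{LC17}), which is the real content being imported. Once the sharp scalar sign approximation is in hand, the parity check and the three-case analysis above are routine, and a minor adjustment to the outer case (applying the even reflection of $Q$, or alternatively taking the nonnegative part so the outer values sit in $\sbra{0,\varepsilon}$ rather than $\sbra{-\varepsilon,\varepsilon}$) fits the statement as given.
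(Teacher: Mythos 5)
The paper does not prove this lemma; it imports it directly as a citation to GSLW19, so there is no in-paper argument to compare against. Your reconstruction --- a degree-$O\rbra{\frac{1}{\delta}\log\frac{1}{\varepsilon}}$ odd approximation $Q$ to $\operatorname{sign}$ built from an $\operatorname{erf}$-type kernel with Chebyshev truncation, followed by the difference $p\rbra{x} = \frac{1}{2}\rbra{Q\rbra{x+t} - Q\rbra{x-t}}$ --- is in fact the construction GSLW19 uses for their Corollary~16, and your parity check, case split, and degree count are right. The small rescaling needed to make $Q$ live on $\sbra{-2,2}$ rather than $\sbra{-1,1}$ only changes $\delta$ by a constant factor, so the degree bound is unaffected.

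The one step you should not leave at the level of ``a minor adjustment'' is forcing the outer values into $\sbra{0,\varepsilon}$. Your raw $p$ lands in roughly $\sbra{-\varepsilon/8, \varepsilon/8}$ there, and neither of the repairs you float survives scrutiny: ``taking the nonnegative part'' is $\max\rbra{p,0}$, which is not a polynomial, and ``applying the even reflection of $Q$'' is not a well-defined polynomial operation on an odd $Q$. The correct fix is an affine correction that preserves parity and the global bound: set $c = \varepsilon/8$ and replace $p$ by $\tilde p\rbra{x} = \frac{p\rbra{x} + c}{1+c}$. Then on the outer region $\tilde p\rbra{x} \in \sbra{0, \frac{\varepsilon/4}{1+\varepsilon/8}} \subseteq \sbra{0,\varepsilon}$, on the inner region $\tilde p\rbra{x} \geq \frac{1-\varepsilon/8}{1+\varepsilon/8} \geq 1-\varepsilon$ with $\tilde p\rbra{x} \leq 1$, and globally $\abs{\tilde p\rbra{x}} \leq 1$ since $\abs{p} \leq 1$. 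With that substitution your proof is complete and matches the source.
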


The following is a polynomial approximation of negative power functions.

\begin{lemma} [Polynomial approximation of negative power functions, Corollary 67 of the full version of \cite{GSLW19}]
\label{lemma:poly-approx-negative-power}
    Suppose that $\delta, \varepsilon \in \rbra{0, 1/2}$ and $c > 0$. 
    Then, there is an efficiently computable even polynomial $p\rbra{x} \in \mathbb{R}\sbra{x}$ of degree $O\rbra*{\frac{c+1}{\delta}\log\rbra*{\frac{1}{\varepsilon}}}$ such that
    \begin{align*}
        & \forall x \in \sbra*{-1, 1}, \quad \abs*{p\rbra*{x}} \leq 1, \\
        & \forall x \in \sbra*{\delta, 1}, \quad \abs*{p\rbra*{x} - \frac 1 2 \rbra*{\frac{x}{\delta}}^{-c}} \leq \varepsilon.
    \end{align*}
\end{lemma}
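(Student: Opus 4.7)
The plan is to construct $p$ as a product $p = q \cdot r$, where $q$ approximates $\tfrac{1}{2}(x/\delta)^{-c}$ on the window $\sbra{\delta, 1}$ and $r$ is a smooth cutoff polynomial that is close to $1$ on $\sbra{\delta, 1}$ and close to $0$ outside. Evenness of $p$ can be arranged at the end by a symmetrization step, since the target approximation is only required on the positive half-interval.

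For the first factor, the natural starting point is the integral representation
\begin{equation*}
x^{-c} = \frac{1}{\Gamma(c)}\int_0^\infty t^{c-1} e^{-tx}\, dt,
\end{equation*}
valid for $x > 0$. Truncating the integral to $\sbra{0, T}$ with $T = \Theta(\log(1/\varepsilon)/\delta)$ loses at most $\varepsilon/8$ uniformly on $\sbra{\delta, 1}$, since the tail integrand decays like $e^{-t\delta}$. Inside the truncated range, each $e^{-tx}$ is replaced by its Chebyshev/Jacobi--Anger truncation of degree $O(t + \log(1/\varepsilon))$. Integrating against $t^{c-1}/\Gamma(c)$ and tracking the $c$-dependence through the weighted moments yields a polynomial $q$ of degree $d_q = O\rbra*{\tfrac{c+1}{\delta}\log\rbra*{\tfrac{1}{\varepsilon}}}$ with $\abs{q(x) - \tfrac{1}{2}(x/\delta)^{-c}} \leq \varepsilon/4$ on $\sbra{\delta, 1}$.

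For the second factor, I would invoke \cref{lemma:poly-approx-rec} (after an affine scaling so that $\sbra{\delta, 1}$ lies within the ``flat'' region of the rectangle) to obtain an even polynomial $r$ of degree $d_r = O\rbra*{\tfrac{1}{\delta}\log\rbra*{\tfrac{B}{\varepsilon}}}$ with $r(x) \in \sbra{1-\varepsilon/4, 1}$ for $\abs{x} \in \sbra{\delta, 1}$ and $\abs{r(x)} \leq \varepsilon/(4B)$ elsewhere on $\sbra{-1, 1}$; here $B$ is an a priori bound on $\abs{q}$ over $\sbra{-1, 1}$, obtainable from a Markov-brothers-type inequality as some fixed power of $d_q/\delta$. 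Setting $p := q \cdot r$ (rescaled to enforce $\abs{p} \leq 1/2$ if needed) gives, by the triangle inequality, approximation error $\leq \varepsilon$ on $\sbra{\delta, 1}$ and $\abs{p(x)} \leq B \cdot \varepsilon/(4B) \leq \varepsilon$ off that interval, so in particular $\abs{p} \leq 1$ throughout $\sbra{-1,1}$.

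The main obstacle will be preventing the composite degree from exceeding the stated $O((c+1)\log(1/\varepsilon)/\delta)$ bound while still suppressing $q$ off $\sbra{\delta, 1}$, because $\log B$ carries an extra $\log(1/\delta)$ term if one uses only the black-box Markov bound on $\abs{q}$. Overcoming this requires working directly with the Chebyshev coefficients of the truncated integral to show that $\abs{q}$ is controlled by a constant-power-of-$1/\delta$ factor, not an exponential one, so that $\log B$ folds into $\log(1/\varepsilon)$ in the regime of interest; alternatively one can enforce the stronger global bound on $q$ by construction inside the first step, replacing the Jacobi--Anger truncation with one whose Chebyshev tail has been pre-damped.
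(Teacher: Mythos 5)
This lemma is not proved in the paper at all; it is imported verbatim as Corollary~67 of the full version of \cite{GSLW19}, so there is no ``paper proof'' to compare your argument against. What follows is therefore an assessment of your sketch on its own terms.

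Your overall multiplicative architecture (a ``core'' polynomial $q$ approximating $\tfrac12 (x/\delta)^{-c}$ on $[\delta,1]$ times a rectangle cutoff $r$ from \cref{lemma:poly-approx-rec}) is indeed the right shape and is close to what GSLW actually do. However, the way you build $q$ contains a gap that is much more severe than the $\log(1/\delta)$ overhead you anticipate. If you take the truncated integral $q(x) = \frac{\delta^c}{2\Gamma(c)}\int_0^T t^{c-1}P_t(x)\,dt$ with $P_t$ a degree-$O(t+\log(1/\varepsilon))$ uniform Chebyshev (Jacobi--Anger) approximation of $e^{-tx}$ on $[-1,1]$, then for each $t$ the polynomial $P_t$ tracks $e^{-tx}$ \emph{everywhere} on $[-1,1]$, including near $x=-1$ where $e^{-tx}\approx e^{t}$. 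With $T=\Theta((c+\log(1/\varepsilon))/\delta)$, this forces $\lVert q\rVert_{[-1,1]}$ to be of order $e^{\Theta(T)} = e^{\Theta(1/\delta)}$, not $\operatorname{poly}(1/\delta)$. A Markov-brothers argument cannot rescue this because the issue is not the derivative of a bounded polynomial but the fact that $q$ is genuinely huge on part of $[-1,0]$ by construction. Consequently, to suppress $q$ by the rectangle polynomial you would need $\log B=\Theta(1/\delta)$, so $\deg r = O\bigl(\tfrac{1}{\delta}\log(B/\varepsilon)\bigr)=O(1/\delta^2)$, blowing the target degree $O\bigl(\tfrac{c+1}{\delta}\log(1/\varepsilon)\bigr)$ by an entire factor of $1/\delta$.

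The fix is not a post hoc ``pre-damping'' of Chebyshev tails; you need an approximant $q$ whose global sup-norm on $[-1,1]$ is $O(1)$ by construction. That is precisely the extra work done in \cite{GSLW19}: their building block (their Corollary~66 for $1/x$, and the general bounded-approximation machinery around it) produces polynomials that simultaneously approximate the target on $[\delta,1]$ and stay uniformly bounded on $[-1,1]$, so that the rectangle-polynomial suppression only has to kill a constant rather than an $e^{1/\delta}$ overshoot. Equivalently, one should approximate a \emph{shifted/damped} kernel such as $e^{-t(1+x)/2}$ (which is uniformly $\le 1$ on $[-1,1]$) rather than $e^{-tx}$, adjusting the integral representation accordingly, or truncate the Chebyshev expansion of the composite function $x\mapsto \tfrac12(x/\delta)^{-c}\cdot\mathbf{1}_{[\delta,1]}$ after smoothing rather than the raw exponential kernels. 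As written, your proposal does not establish the stated degree bound, and the acknowledged ``obstacle'' is in fact fatal without a substantially different construction of $q$.
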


Using the above lemmas, now we derive a polynomial approximation of positive power functions for our purpose. This kind of polynomial approximation was ever used in the property testing of quantum states, e.g., \cite[Theorem 6]{WZC+23}, \cite[Lemma 2.13]{WGL+22}, \cite[Corollary 18]{GP22}, and \cite[Lemma 6]{LWZ22}.

\begin{lemma} [Polynomial approximation of positive power functions]
\label{lemma:poly-approx-power}
Suppose that $c > 0$, $0 < \delta < \beta \leq 1/2$ and $0 < \epsilon < 1/2$. Let 
\[
f\rbra*{x} = \frac 1 4 \rbra*{\frac{x}{2\beta}}^{c}.
\]
Then, there is an efficiently computable polynomial $p\rbra*{x} \in \mathbb{R}\sbra*{x}$ of degree
$O\rbra*{\frac{c+1}{\delta} \log\rbra*{\frac{1}{\delta \epsilon}}}$ such that
\begin{align*}
    & \forall x \in \sbra*{0, \delta}, \quad \abs*{p\rbra*{x}} \leq 2f\rbra*{\delta}; \\
    & \forall x \in \sbra*{\delta, \beta}, \quad \abs*{p\rbra*{x} - f\rbra*{x}} \leq \epsilon; \\
    & \forall x \in \sbra*{-1, 1}, \quad \abs*{p\rbra*{x}} \leq \frac 1 2.
\end{align*}
\end{lemma}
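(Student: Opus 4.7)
The plan is to construct $p$ as a product $p(x) \coloneqq C\,x^k\,q(x)\,r(x)$, with $k$ a positive integer just above $c$, $q$ the negative-power polynomial approximation from \cref{lemma:poly-approx-negative-power}, $r$ a rectangle approximation from \cref{lemma:poly-approx-rec}, and $C$ a scalar normalization. The identity $x^c = x^k \cdot x^{c-k}$ with $c-k<0$ is what reduces the positive-power problem to a negative-power one; the rectangle $r$ is what enables the global bound $\abs{p} \le \tfrac{1}{2}$ on $\sbra{-1,1}$ despite the fact that the natural extension of $f$ far exceeds $1/2$ near $x=1$ whenever $\beta \ll 1/2$.

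Concretely, set $k \coloneqq \lfloor c \rfloor + 1$, so that $k-c \in (0,1]$ and $k = O(c+1)$. Invoke \cref{lemma:poly-approx-negative-power} with exponent $k-c$, cutoff $\delta$, and precision $\epsilon'$ (fixed below) to obtain an even polynomial $q$ of degree $O\rbra*{\tfrac{1}{\delta}\log\rbra*{\tfrac{1}{\epsilon'}}}$ with $\abs*{q(x)} \le 1$ on $\sbra*{-1,1}$ and $\abs*{q(x) - \tfrac{1}{2}(x/\delta)^{c-k}} \le \epsilon'$ on $\sbra*{\delta,1}$. Invoke \cref{lemma:poly-approx-rec} with $t=3\beta/2$, width $\beta/2$, and precision $\epsilon_r$ to obtain an even polynomial $r$ of degree $O\rbra*{\tfrac{1}{\beta}\log\rbra*{\tfrac{1}{\epsilon_r}}}$ with $\abs*{r(x)} \le 1$ on $\sbra*{-1,1}$, $r(x) \in \sbra*{1-\epsilon_r,1}$ on $\sbra*{-\beta,\beta}$, and $r(x) \in \sbra*{0,\epsilon_r}$ on $\sbra*{-1,-2\beta}\cup \sbra*{2\beta,1}$. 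Finally, set $C \coloneqq \frac{1}{2(2\beta)^c \delta^{k-c}}$.

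The verification of the three bounds splits into cases: on $\sbra{0,\delta}$ the crude estimates $\abs{q},\abs{r} \le 1$ give $\abs{p(x)} \le C\delta^k = \tfrac{1}{2}(\delta/(2\beta))^c = 2f(\delta)$, which is attained at $x=\delta$. On $\sbra{\delta,\beta}$, writing $q = \tfrac{1}{2}(x/\delta)^{c-k} + e_q$ and $r = 1+e_r$ and expanding yields $p(x) = f(x) + f(x)e_r(x) + Cx^k e_q(x)(1+e_r(x))$, so $\abs{p(x)-f(x)} \le f(\beta)\epsilon_r + 2C\epsilon'$, which is $\le \epsilon$ once $\epsilon_r = \Theta(\epsilon)$ and $\epsilon' = \Theta(\epsilon/C) = \Theta(\epsilon(2\beta)^c \delta^{k-c})$. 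For the global bound on $\sbra{-1,1}$, by symmetry it suffices to argue on $x \ge 0$: the region $\sbra{0,\delta}$ is handled above; on $\sbra{\delta,2\beta}$ the same splitting gives $\abs{p(x)} \le f(x) + Cx^k\epsilon' \le \tfrac{1}{4} + C(2\beta)^k \epsilon'$; on $\sbra{2\beta,1}$ the smallness of $r$ yields $\abs{p(x)} \le C\epsilon_r$. The latter two become $\le 1/2$ once $\epsilon' \lesssim (\delta/(2\beta))^{k-c}$ and $\epsilon_r \lesssim (2\beta)^c \delta^{k-c}$ respectively.

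Taking $\epsilon'$ and $\epsilon_r$ to be the minima of all the above constraints keeps $\log(1/\epsilon')$ and $\log(1/\epsilon_r)$ of order $(c+1)\log(1/(\delta\epsilon))$ (using $\beta \ge \delta$ and $k-c \le 1$), so the total degree $k+\deg(q)+\deg(r)$ is $O\rbra*{\tfrac{c+1}{\delta}\log\rbra*{\tfrac{1}{\delta\epsilon}}}$, as required. The main technical obstacle is the three-way balancing of $\epsilon'$ and $\epsilon_r$: the negative-power error $\epsilon'$ must be small enough both to give the $\epsilon$-approximation on $\sbra{\delta,\beta}$ and to keep $p$ below $1/2$ across the transition zone $\sbra{\beta,2\beta}$ where $f$ already reaches $1/4$; and $\epsilon_r$ must suppress $p$ on $\sbra{2\beta,1}$ despite the potentially very large constant $C$. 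Verifying that all of these tight constraints remain compatible with the stated degree budget is where the careful bookkeeping goes.
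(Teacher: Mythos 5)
Your construction is essentially identical to the paper's: both form $p(x) = C\,x^{k}\,q(x)\,r(x)$ with $k$ the smallest integer above $c$, $q$ from \cref{lemma:poly-approx-negative-power} approximating $(x/\delta)^{c-k}$, $r$ the rectangle bump from \cref{lemma:poly-approx-rec} that cuts off at $2\beta$, and $C = \tfrac{1}{2}(2\beta)^{-c}\delta^{c-k}$, followed by the same five-interval case analysis. The only cosmetic differences are that the paper uses $k = \lceil c\rceil$ while you use $\lfloor c\rfloor + 1$ (which avoids a zero exponent when $c$ is integral), and the paper collapses your two precision parameters $\epsilon'$ and $\epsilon_r$ into a single $\varepsilon = \epsilon(2\beta)^c\delta^{\lceil c\rceil-c}$.
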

\begin{proof}
    Let $\varepsilon = \epsilon \rbra{2\beta}^c \delta^{\ceil{c}-c} \in \rbra{0, 1/2}$.
    By \cref{lemma:poly-approx-negative-power}, there is an even polynomial $q\rbra{x}$ of degree $O\rbra*{\frac{1}{\delta} \log\rbra*{\frac{1}{\varepsilon}}}$ such that
    \begin{align*}
        & \forall x \in \sbra*{-1, 1}, \quad \abs*{q\rbra*{x}} \leq 1, \\
        & \forall x \in \sbra*{\delta, 1}, \quad \abs*{q\rbra*{x} - \frac 1 2 \rbra*{\frac{x}{\delta}}^{c-\ceil*{c}}} \leq \varepsilon.
    \end{align*}
    By \cref{lemma:poly-approx-rec}, there is an even polynomial $r\rbra{x}$ of degree $O\rbra*{\frac{1}{\beta} \log\rbra*{\frac{1}{\varepsilon}}}$ such that
    \begin{align*}
        & \forall x \in \sbra*{-1, 1}, \quad \abs*{r\rbra*{x}} \leq 1, \\
        & \forall x \in \sbra*{-\beta, \beta}, \quad r\rbra*{x} \in \sbra*{1-\varepsilon, 1}, \\
        & \forall x \in \sbra*{-1,-2\beta} \cup \sbra*{2\beta, 1}, \quad r\rbra*{x} \in \sbra*{0, \varepsilon}.
    \end{align*}
    Let
    \[
    p\rbra*{x} = \frac 1 2 \rbra*{2\beta}^{-c} \delta^{c-\ceil*{c}} q\rbra*{x} r\rbra*{x} x^{\ceil*{c}}.
    \]
    We consider the following cases. 
    \begin{itemize}
        \item $x = 0$. We have 
        \[
        p\rbra{0} = f\rbra*{0} = 0 \leq \frac 1 2.
        \]
        \item $x \in (0, \delta]$. Using $\abs*{q\rbra*{x}} \leq 1$ and $\abs*{r\rbra*{x}} \leq 1$, we have
        \[
        \abs*{p\rbra*{x}} \leq \abs*{\frac 1 2 \rbra*{2\beta}^{-c} \delta^{c-\ceil*{c}} x^{\ceil*{c}}} \leq \frac 1 2 \rbra*{2\beta}^{-c} \delta^c = 2f\rbra*{\delta} \leq \frac 1 2.
        \]
        \item $x \in (\delta, \beta]$. Note that
        \[
            \abs*{q\rbra*{x}r\rbra*{x} - \frac 1 2 \rbra*{\frac{x}{\delta}}^{c-\ceil*{c}}} 
            \leq \abs*{q\rbra*{x}r\rbra*{x} - q\rbra*{x}} + \abs*{q\rbra*{x} - \frac 1 2 \rbra*{\frac{x}{\delta}}^{c-\ceil*{c}}} 
            \leq \varepsilon + \varepsilon = 2\varepsilon.
        \]
        Then, we have
        \begin{align*}
            \abs*{ p\rbra*{x} - f\rbra*{x} }
            & = \frac 1 2 \rbra*{2\beta}^{-c} \delta^{c-\ceil*{c}} x^{\ceil*{c}} \abs*{q\rbra*{x}r\rbra*{x} - \frac 1 2 \rbra*{\frac{x}{\delta}}^{c-\ceil*{c}}} \leq \frac 1 2 \rbra*{2\beta}^{-c} \delta^{c-\ceil*{c}} \varepsilon = \frac 1 2 \epsilon \leq \frac 1 2.
        \end{align*}
        \item $x \in (\beta, 2\beta]$. Using $\abs{r\rbra{x}} \leq 1$ and $\abs{q\rbra{x}} \leq \frac 1 2 \rbra*{\frac{x}{\delta}}^{c-\ceil*{c}} + \epsilon$, we have
        \[
        \abs*{p\rbra*{x}} 
        \leq \frac 1 2 \rbra*{2\beta}^{-c} \delta^{c-\ceil*{c}} \rbra*{\frac 1 2 \rbra*{\frac{x}{\delta}}^{c-\ceil*{c}} + \varepsilon} x^{\ceil*{c}} 
        \leq \frac 1 4 + \frac 1 2 \rbra*{2\beta}^{-c} \delta^{c - \ceil*{c}} \varepsilon = \frac 1 4 + \frac 1 2 \epsilon \leq \frac 1 2.
        \]
        \item $x \in (2\beta, 1]$. Using $\abs{q\rbra{x}} \leq 1$ and $\abs{r\rbra{x}} \leq \varepsilon$, we have
        \[
        \abs*{p\rbra*{x}} 
        \leq \abs*{ \frac 1 2 \rbra*{2\beta}^{-c} \delta^{c-\ceil*{c}} r\rbra*{x} x^{\ceil*{c}} }
        \leq \frac 1 2 \rbra*{2\beta}^{-c} \delta^{c-\ceil*{c}} \varepsilon = \frac 1 2 \epsilon \leq \frac 1 2.
        \]
    \end{itemize}
    From the above, we have shown that 
    \begin{align*}
        & \forall x \in \sbra*{0, \delta}, \quad \abs*{p\rbra*{x}} \leq 2f\rbra*{\delta}; \\
        & \forall x \in \sbra*{\delta, \beta}, \quad \abs*{p\rbra*{x} - f\rbra*{x}} \leq \epsilon; \\
        & \forall x \in \sbra*{0, 1}, \quad \abs*{p\rbra*{x}} \leq \frac 1 2.
    \end{align*}
    To complete the proof, we only have to note that $p\rbra{x}$ is either even or odd, and thus $\abs{p\rbra{x}} = \abs{p\rbra{-x}} \leq 1/2$ for $x \in [-1, 0)$. The degree of $p\rbra{x}$ is 
    \[
    \deg \rbra*{p\rbra*{x}} = \ceil*{c} + \deg \rbra*{q\rbra*{x}} + \deg \rbra*{r\rbra*{x}} = O\rbra*{\frac{c+1}{\delta} \log\rbra*{\frac{1}{\delta\epsilon}}}.
    \]
\end{proof}

We also need the following polynomial approximation of logarithms.

\begin{lemma} [Polynomial approximation of logarithms, {\cite[Lemma 11]{GL20}}]
\label{lemma:poly-approx-ln}
    Suppose that $\delta \in (0, 1]$ and $\varepsilon \in (0, 1/2]$. Then, there is an efficiently computable even polynomial $p\rbra{x} \in \mathbb{R}\sbra{x}$ of degree $O\rbra*{\frac{1}{\delta}\log\rbra*{\frac{1}{\varepsilon}}}$ such that
    \begin{align*}
        & \forall x \in \sbra*{-1, 1}, \quad \abs*{p\rbra*{x}} \leq 1, \\
        & \forall x \in \sbra*{\delta, 1}, \quad \abs*{p\rbra{x} - \frac{\ln\rbra{1/x}}{2\ln\rbra{2/\delta}}} \leq \varepsilon.
    \end{align*}
\end{lemma}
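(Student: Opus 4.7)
The plan is to reduce to the polynomial approximation of negative powers in \cref{lemma:poly-approx-negative-power}, via the classical identity
\[
\ln(1/x) = \lim_{\beta \to 0^+} \frac{x^{-\beta}-1}{\beta}.
\]
Taylor-expanding $x^{-\beta} = e^{-\beta\ln x}$ to second order gives the quantitative estimate $\abs*{\tfrac{x^{-\beta}-1}{\beta} - \ln(1/x)} \le \tfrac{\beta}{2}\,\delta^{-\beta}\ln^2(1/\delta)$ uniformly on $[\delta,1]$. Choosing $\beta = \Theta(\varepsilon/\log(1/\delta))$, so that $\beta\ln(1/\delta) \le 1$ and hence $\delta^{-\beta} \le e$, makes this error at most $\varepsilon\ln(2/\delta)$, which is precisely the slack available after dividing by the normalization $2\ln(2/\delta)$.

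Next, I would apply \cref{lemma:poly-approx-negative-power} with exponent $c = \beta \le 1$ and precision $\varepsilon' = \Theta(\varepsilon\beta\ln(2/\delta)) = \Theta(\varepsilon^2)$, obtaining an even polynomial $q(x)$ of degree $O\rbra*{\tfrac{c+1}{\delta}\log(1/\varepsilon')} = O\rbra*{\tfrac{1}{\delta}\log(1/\varepsilon)}$ that is bounded by $1$ on $[-1,1]$ and $\varepsilon'$-close to $\tfrac{1}{2}(x/\delta)^{-\beta}$ on $[\delta,1]$. Setting
\[
\tilde p(x) = \frac{2\delta^{-\beta}\,q(x)-1}{2\beta\ln(2/\delta)},
\]
I obtain an even polynomial that, on $[\delta,1]$, approximates $\ln(1/x)/(2\ln(2/\delta))$ within $\varepsilon$ by simply combining the two estimates above.

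The main obstacle is the global sup-norm bound $|p(x)| \le 1$ on all of $[-1,1]$: the scaling factor $1/(2\beta\ln(2/\delta))$ is of order $1/\varepsilon$, and in regions where \cref{lemma:poly-approx-negative-power} offers no approximation guarantee, namely $|x| < \delta$, the values of $\tilde p$ may considerably exceed $1$. I would cure this by multiplying $\tilde p$ by an even smoothed rectangle polynomial $r(x)$ from \cref{lemma:poly-approx-rec} (with $t$ slightly less than $1$ and transition width $\Theta(\delta)$) that is $\Theta(\varepsilon)$-close to $1$ on $[\delta,1]\cup[-1,-\delta]$ and $\Theta(\varepsilon)$-close to $0$ on $[-\delta/2,\delta/2]$. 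Setting $p := \tilde p \cdot r$ preserves the $\varepsilon$-approximation on $[\delta,1]$ and forces $|p|$ to be negligible on $|x|<\delta/2$. The delicate sub-step is the narrow transition annulus $\delta/2 \le |x| \le \delta$, where one must bound $|\tilde p|$ directly using the pointwise constraint $|q|\le 1$ and combine it with the smallness of $r$; after tuning the constants in $\beta$ and $\varepsilon'$, this yields $|p| \le 1$ globally, and the additional degree contributed by $r$ is $O\rbra*{\tfrac{1}{\delta}\log(1/\varepsilon)}$, matching the stated bound.
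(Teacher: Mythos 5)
The paper does not supply a proof of this lemma (it cites \cite{GL20}), so your argument is an independent construction rather than a reproduction. The reduction via the identity $\ln(1/x) = \lim_{\beta\to 0^+}(x^{-\beta}-1)/\beta$ is a genuine alternative route to \cref{lemma:poly-approx-negative-power}, and the quantitative pieces you set up are sound: the Taylor remainder bound $\abs*{(x^{-\beta}-1)/\beta - \ln(1/x)} \le \tfrac{\beta}{2}\delta^{-\beta}\ln^2(1/\delta)$ on $[\delta,1]$, the choice $\beta = \Theta(\varepsilon/\log(2/\delta))$ (making $\delta^{-\beta}\le e$), the precision $\varepsilon' = \Theta(\varepsilon^2)$, and the resulting degree $O(\delta^{-1}\log(1/\varepsilon))$ all check out.

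The step enforcing the global bound $\abs{p(x)}\le 1$, however, has real gaps. First, the rectangle from \cref{lemma:poly-approx-rec} with $t$ slightly less than $1$ gives a polynomial that is $\approx 1$ on a wide interval around $0$ and $\approx 0$ near $\pm 1$ --- the opposite of the high-pass shape ($\approx 1$ away from $0$, $\approx 0$ near $0$) that your damping step requires. To obtain that shape you would need $t = \Theta(\delta)$ and then the complement $1 - s$, but \cref{lemma:poly-approx-rec} only guarantees $\abs{s}\le 1$, not $s\ge 0$, so $1-s$ can reach $2$ in the transition band; one would have to pass to something like $1-s^2 \in [0,1]$, which is not in your sketch. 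Second, and more seriously, on the transition annulus the rectangle (or its complement) is only known to lie in $[-1,1]$ (or $[0,2]$), not to be small; and there $\abs{\tilde p}$ can be as large as $(2e+1)/\rbra*{2\beta\ln(2/\delta)} = \Theta(1/\varepsilon)$, since the Taylor-error budget forces $\beta\ln(2/\delta)=\Theta(\varepsilon)$. Thus ``tuning the constants in $\beta$ and $\varepsilon'$'' cannot bring the product below $1$ on the annulus; $\varepsilon'$ only controls the approximation quality where the guarantee holds, and decreasing $\beta$ does not help because the offending factor $1/(\beta\ln(2/\delta))$ only grows. The natural repair is to invoke \cref{lemma:poly-approx-negative-power} with a cutoff a constant factor below $\delta$ (e.g.\ $\delta/4$), so that $\tilde p$ is already $O(1)$ on the annulus where the rectangle transitions --- this keeps the degree $O(\delta^{-1}\log(1/\varepsilon))$ --- but that requires verifying $\ln(4/\delta)\le 2\ln(2/\delta)$ uniformly for $\delta\in(0,1]$ and recombining the error budget, none of which appears in your write-up. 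As it stands the sup-norm step does not go through.
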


\subsection{Von Neumann entropy estimator}
\label{sec:von-intro}

Now we present a quantum estimator for the von Neumann entropy in \cref{algo:von-main-intro} through the samplizer provided in \cref{thm:samplizer-intro}. 
As demonstrated, the samplizer is convenient and useful in designing quantum sample algorithms in a modular fashion.

\begin{algorithm}[!htp]
    \caption{$\texttt{estimate\_von\_Neumann\_main}\rbra{\varepsilon, \delta}$ --- \textit{quantum sample algorithm}}
    \label{algo:von-main-intro}
    \begin{algorithmic}[1]
    \renewcommand{\algorithmicrequire}{\textbf{Resources:}}
    \Require Access to independent samples of $N$-dimensional quantum state $\rho$ of rank $r$.
    \renewcommand{\algorithmicrequire}{\textbf{Input:}}
    \Require $\varepsilon \in \rbra{0, 1}$ and $\delta \in \rbra{0, 1}$. 
    \Ensure $\widetilde S$ such that $\abs{\widetilde S - S\rbra{\rho}} \leq \varepsilon$ with probability $\geq 1 - \delta$. 
    \vspace{1pt}
    \tikzmark{st}\Function {\texttt{von\_Neumann\_subroutine}}{$\delta_p, \varepsilon_p, \delta_Q$} --- \textit{quantum query algorithm}

    \renewcommand{\algorithmicrequire}{\qquad \textbf{Resources:}}
    \Require Unitary oracle $U_A$ that is a block-encoding of $A$.
    \renewcommand{\algorithmicrequire}{\textbf{Input:}}

    \State Let $p\rbra{x}$ be a polynomial of degree $d_p = O\rbra*{\frac{1}{\delta_p}\log\rbra*{\frac{1}{\varepsilon_p}}}$ such that $\abs{p\rbra{x}} \leq \frac 1 2$ for $x \in \sbra{-1, 1}$ and $\abs*{p\rbra{x} - \frac{\ln\rbra{1/x}}{4\ln\rbra{2/\delta_p}}} \leq \varepsilon_p$ for $x \in \sbra{\delta_p, 1}$ (by \cref{lemma:poly-approx-ln}). \label{step:def-polynomial}
    
    \State Construct unitary operator $U_{p\rbra{A}}$ that is a $\rbra{1, a, \delta_Q}$-block-encoding of $p\rbra{A}$, using $O\rbra{d_p}$ queries to $U_A$ (by \cref{thm:qsvt}).

    \State \Return $U_{p\rbra{A}}$.

    \EndFunction\tikzmark{ed}
    
    \vspace{2pt}
    \State \tikzmark{left}$\delta_p \gets \frac{\varepsilon}{128r\ln\rbra*{{32r}/{\varepsilon}}}$, $\varepsilon_p \gets \frac{\varepsilon}{32\ln\rbra{2/\delta_p}}$, $\delta_Q \gets \frac{\varepsilon}{32r\ln\rbra{2/\delta_p}}$, $\delta_a \gets \frac{\varepsilon}{64\ln\rbra{2/\delta_p}}$, $\varepsilon_H \gets \delta_a$, $k \gets \ceil*{\frac{1}{2\varepsilon_H^2} \ln\rbra*{\frac{2}{\delta}}}$.
    \begin{tikzpicture}[remember picture,overlay]
    \draw[draw=black,rounded corners=4pt,fill=black,opacity=0.08]
    ( $(left|-st)+(4pt-18pt,-4pt)$ ) rectangle ( $(left|-ed)+(16.5cm+2pt-18pt,-2pt)$ );
    \end{tikzpicture}
    \For{$i = 1 \dots k$}
    \State Perform the Hadamard test on $\mathsf{Samplize}_{\delta_a}\ave{\tikzmark{st}\texttt{von\_Neumann\_subroutine}\rbra{\delta_p, \varepsilon_p, \delta_Q}\tikzmark{ed}}\sbra{\rho}$ and $\rho$ (by \cref{thm:hadamard}). Let $X_i \in \cbra{0, 1}$ be the outcome. \label{step:hadamard}
    \begin{tikzpicture}[remember picture,overlay]
    \draw[draw=black,rounded corners=2pt,fill=black,opacity=0.08]
    ( $(st)+(0pt,10pt)$ ) rectangle ( $(ed)+(0pt,-4pt)$ );
    \end{tikzpicture}
    
    \EndFor

    \State $\widetilde S \gets 4 \rbra{ 2\sum_{i\in\sbra{k}}X_i/k - 1} \ln\rbra{2/\delta_p} - \ln\rbra{2}$.
    
    \State \Return $\widetilde S$.
    
    \end{algorithmic}
\end{algorithm}

The framework of our quantum estimator for the von Neumann entropy is inspired by the quantum query algorithm in \cite{WGL+22} for estimating the von Neumann entropy. 
In \cref{algo:von-main-intro}, we first design a quantum query algorithm 
\[
\texttt{von\_Neumann\_subroutine}\rbra{\delta_p, \varepsilon_p, \delta_Q}\sbra{U_A} = U_{p\rbra{A}},
\]
which implements a block-encoding $U_{p\rbra{A}}$ of $p\rbra{A}$, using queries to a block-encoding $U_A$ of $A$, where $p\rbra{\cdot}$ is a polynomial defined in Line \ref{step:def-polynomial} of \cref{algo:von-main-intro} that approximates the logarithm (up to some constant factor) in certain regime. 
If $\texttt{von\_Neumann\_subroutine}\rbra{\delta_p, \varepsilon_p, \delta_Q}\sbra{U_\rho}$ can be implemented as desired for every quantum state $\rho$, then we can estimate the von Neumann entropy through the Hadamard test. 
To see this, we provide the following lemma. 

\begin{lemma} [\cref{lemma:von-neumann-block-encoded} informal] \label{lemma:von-intro}
    Suppose that $U_\rho$ is a block-encoding of $\rho/2$ where $\rho$ is a quantum state of rank $r$. 
    Let random variable $X \in \cbra{0, 1}$ be the output of the Hadamard test (as in Line \ref{step:hadamard} of \cref{algo:von-main-intro}) on the unitary operator $\textup{\texttt{von\_Neumann\_subroutine}}\rbra{\delta_p, \varepsilon_p, \delta_Q}\sbra{U_\rho}$ and the quantum state $\rho$. 
    Then,
    \[
    \abs*{ \rbra*{4\rbra*{2\mathbb{E}\sbra{X}-1}\ln\rbra*{\frac{2}{\delta_p}} - \ln\rbra{2}} - S\rbra{\rho}} \leq 4 \rbra*{ 2r\delta_p + \varepsilon_p + r\delta_Q } \ln \rbra*{\frac{2}{\delta_p}}.
    \]
\end{lemma}

Using the samplizer provided in \cref{thm:samplizer-intro}, we are able to construct its ``samplized'' version 
\[
\mathsf{Samplize}_{\delta_a}\ave{\texttt{von\_Neumann\_subroutine}\rbra{\delta_p, \varepsilon_p, \delta_Q}}\sbra{\rho},
\]
which only uses independent samples of the input quantum state $\rho$.
Let $X' \in \cbra{0, 1}$ be the output of the Hadamard test on $\mathsf{Samplize}_{\delta_a}\ave{\texttt{von\_Neumann\_subroutine}\rbra{\delta_p, \varepsilon_p, \delta_Q}}\sbra{\rho}$ and the quantum state $\rho$, as analogous to \cref{lemma:von-intro}. 
It can be shown that $\abs{\mathbb{E}\sbra{X'} - \mathbb{E}\sbra{X}} \leq \delta_a$, which implies that
\[
\abs*{ \rbra*{4\rbra*{2\mathbb{E}\sbra{X'}-1}\ln\rbra*{\frac{2}{\delta_p}} - \ln\rbra{2}} - S\rbra{\rho}} \leq 4 \rbra*{ 2r\delta_p + \varepsilon_p + r\delta_Q + 2\delta_a } \ln \rbra*{\frac{2}{\delta_p}}.
\]
Therefore, once an estimate $p$ of $\mathbb{E}\sbra{X'}$ is obtained, we can use $4\rbra{2p-1}\ln\rbra{2/\delta_p} -\ln\rbra{2}$ as an estimate of $S\rbra{\rho}$.
By choosing appropriate values for the parameters such as $\delta_p, \varepsilon_p, \delta_Q, \delta_a, \varepsilon_H, k$ as in \cref{algo:von-main-intro}, we can obtain an $\varepsilon$-estimate of the von Neumann entropy $S\rbra{\rho}$ with sample and time complexity $\widetilde O\rbra{r^2/\varepsilon^5}$ (see \cref{thm:von-sample}). 

\subsection{R\'enyi entropy estimator} \label{sec:renyi-intro}

We also provide quantum estimators for the $\alpha$-R\'enyi entropy for every $\alpha \in \rbra{0, 1} \cup \rbra{1, +\infty}$ through the samplizer provided in \cref{thm:samplizer-intro}. 
As an illustrative example, we mainly introduce the estimator for $\alpha > 1$ in \cref{algo:renyi-main-intro}.
The idea for $0 < \alpha < 1$ is similar, which is presented in \cref{algo:renyi-lt1-main-intro}. 
The framework of our quantum estimators for the R\'enyi entropy of quantum states is recursive, which is inspired by the quantum query algorithm in \cite{LW19} for estimating the R\'enyi entropy of discrete probability distributions. 
We denote $P_\alpha\rbra{\rho} = \tr\rbra{\rho^\alpha}$. 

\begin{algorithm}[!htp]
    \caption{$\texttt{estimate\_R\'enyi\_gt1\_main}\rbra{\alpha, \varepsilon, \delta}$ --- \textit{quantum sample algorithm}}
    \label{algo:renyi-main-intro}
    \begin{algorithmic}[1]
    \renewcommand{\algorithmicrequire}{\textbf{Resources:}}
    \Require Access to independent samples of $N$-dimensional quantum state $\rho$ of rank $r$.
    \renewcommand{\algorithmicrequire}{\textbf{Input:}}
    \Require $\alpha > 1$, $\varepsilon \in \rbra{0, 1}$, and $\delta \in \rbra{0, 1}$.
    \Ensure $\widetilde S$ such that $\abs{\widetilde S - S_\alpha\rbra{\rho}} \leq \varepsilon$ with probability $\geq 1 - \delta$. 
    \vspace{1pt}
    \tikzmark{st}
    \Function{R\'enyi\_gt1\_subroutine}{$\alpha, P, \delta_p, \varepsilon_p, \delta_Q$} --- \textit{quantum query algorithm}

    \renewcommand{\algorithmicrequire}{\qquad \textbf{Resources:}}
    \Require Unitary oracle $U_A$ that is a block-encoding of $A$.
    \renewcommand{\algorithmicrequire}{\textbf{Input:}}

    \State $\beta \gets \min\cbra{\rbra{10P}^{1/\alpha}, 1/2}$, $c \gets \rbra{\alpha-1}/2$.

    \State Let $p\rbra{x}$ be a polynomial of degree $d_p = O\rbra*{\frac{1}{\delta_p}\log\rbra*{\frac{1}{\delta_p\varepsilon_p}}}$ such that $\abs*{p\rbra{x}} \leq \frac 1 2 \rbra*{\frac{\delta_p}{2\beta}}^{c}$ for $x \in \sbra{0, \delta_p}$, $\abs*{p\rbra{x}-\frac 1 4 \rbra*{\frac{x}{2\beta}}^{c}} \leq \varepsilon_p$ for $x \in \sbra{\delta_p, \beta}$, and $\abs*{p\rbra{x}} \leq \frac 1 2$ for $x \in \sbra{-1, 1}$ (by \cref{lemma:poly-approx-power}). \label{step:def-poly-pos-pow}

    \State Construct unitary operator $U_{p\rbra{A}}$ that is a $\rbra{1, a, \delta_Q}$-block-encoding of $p\rbra{A}$, using $O\rbra{d_p}$ queries to $U_A$ (by \cref{thm:qsvt}). 

    \State \Return $U_{p\rbra{A}}$.

    \EndFunction\tikzmark{ed}
    
    \begin{tikzpicture}[remember picture,overlay]
    \draw[draw=black,rounded corners=4pt,fill=black,opacity=0.08]
    ( $(left|-st)+(4pt-18pt,-4pt)$ ) rectangle ( $(left|-ed)+(16.5cm+2pt-18pt,-3pt)$ );
    \end{tikzpicture}
    \vspace{-12pt}

    \Function{estimate\_R\'enyi\_gt1\_promise}{$\alpha, P, \varepsilon, \delta$}

    \State $\beta \gets \min\cbra{\rbra{10P}^{1/\alpha}, 1/2}$, $m \gets \ceil{8\ln\rbra{1/\delta}}$, $\delta_p \gets \frac{1}{2}\rbra*{\frac{P\varepsilon}{40r}}^{1/\alpha}$.

    \State $\varepsilon_p \gets \frac{\rbra{4\beta}^{1-\alpha}P\varepsilon}{256}$, $\delta_Q \gets \frac{\rbra{4\beta}^{1-\alpha}P\varepsilon}{128r}$, $\delta_a \gets \frac{\rbra{4\beta}^{1-\alpha}P\varepsilon}{128}$, and $k \gets \ceil*{\frac{65536}{\rbra{4\beta}^{1-\alpha}P\varepsilon^2}}$.

    \For {$j = 1 \dots m$}
        \For {$i = 1 \dots k$}
            \State Let $\sigma = \mathsf{Samplize}_{\delta_a}\ave{\tikzmark{st}\texttt{R\'enyi\_gt1\_subroutine}\rbra{\alpha, P, \delta_p, \varepsilon_p, \delta_Q}\tikzmark{ed}}\sbra{\rho}\rbra{\rho \otimes \ket{0}\bra{0}^{\otimes a}}$.
            \State Measure $\sigma$ in the computational basis. 
            \State Let $X_i$ be $1$ if the outcome is $\ket{0}^{\otimes a}$, and $0$ otherwise. \label{step:def-X}
            \begin{tikzpicture}[remember picture,overlay]
            \draw[draw=black,rounded corners=4pt,fill=black,opacity=0.08]
            ( $(st)+(0pt,10pt)$ ) rectangle ( $(ed)+(0pt,-4pt)$ );
            \end{tikzpicture}
        \EndFor
        \State $\hat P_j \gets 16\rbra{4\beta}^{\alpha - 1} \sum_{i \in \sbra{k}} X_i / k$.
    \EndFor

    \State $\widetilde P \gets $ the median of $\hat P_j$ for $j \in \sbra{m}$. 

    \State \Return $\widetilde P$.

    \EndFunction

    \Function{estimate\_R\'enyi\_gt1}{$\alpha, \varepsilon, \delta$}

    \State $\lambda \gets 1 + 1 / \ln \rbra{r}$.
    \If {$\alpha \leq \lambda$}
        \State $P \gets e^{-1}$.
    \Else
        \State $P' \gets \texttt{estimate\_R\'enyi\_gt1}\rbra{\alpha/\lambda, 1/4, \delta/2}$. \label{step:recursive-call}
        \State $P \gets \rbra{4P'/5}^\lambda e^{-1}$.
    \EndIf
    
    \State \Return $\texttt{estimate\_R\'enyi\_gt1\_promise}\rbra{\alpha, P, \varepsilon, \delta/2}$. \label{step:call-promise}

    \EndFunction

    \State $\widetilde P \gets \texttt{estimate\_R\'enyi\_gt1}\rbra{\alpha, \rbra{\alpha-1}\varepsilon/2, \delta}$. \label{step:call-estimate-renyi-gt1}

    \State $\widetilde S \gets \frac{1}{1-\alpha}\ln\rbra{\widetilde P}$.
    
    \State \Return $\widetilde S$.
    
    \end{algorithmic}
\end{algorithm}

\subsubsection{The case of \texorpdfstring{$\alpha > 1$}{α > 1}}
In \cref{algo:renyi-main-intro}, two main functions are explained as follows. 
\begin{itemize}
    \item $\texttt{estimate\_R\'enyi\_gt1}\rbra{\alpha, \varepsilon, \delta}$: return an estimate $\widetilde P$ such that $\rbra{1-\varepsilon} \widetilde P \leq P_\alpha\rbra{\rho} \leq \rbra{1+\varepsilon} \widetilde P$ with probability $\geq 1 - \delta$.
    \item $\texttt{estimate\_R\'enyi\_gt1\_promise}\rbra{\alpha, P, \varepsilon, \delta}$: return an estimate $\widetilde P$ such that $\rbra{1-\varepsilon} \widetilde P \leq P_\alpha\rbra{\rho} \leq \rbra{1+\varepsilon} \widetilde P$ with probability $\geq 1 - \delta$, given a promise that $P \leq P_{\alpha}\rbra{\rho} \leq 10P$.
\end{itemize}
It can be seen that by letting $\widetilde P \gets \texttt{estimate\_R\'enyi\_gt1}\rbra{\alpha, \rbra{\alpha-1}\varepsilon/2, \delta}$ as in Line \ref{step:call-estimate-renyi-gt1} of \cref{algo:renyi-main-intro}, $\widetilde S \gets \frac{1}{1-\alpha}\ln\rbra{\widetilde P}$ is then an $\varepsilon$-estimate of $S_{\alpha}\rbra{\rho}$.

The main observation is that $\texttt{estimate\_R\'enyi\_gt1}\rbra{\alpha, \varepsilon, \delta}$ can be computed recursively. 
This is done by two steps:
\begin{enumerate}
    \item With probability $\geq 1 - \delta/2$, obtain an estimate $P$ such that $P \leq P_{\alpha}\rbra{\rho} \leq 10P$. This is done by reducing to another entropy estimation task with smaller $\alpha$ as in Line \ref{step:recursive-call} of \cref{algo:renyi-main-intro}. 
    \item With probability $\geq 1 - \delta/2$, obtain an estimate $\widetilde P$ such that $\rbra{1-\varepsilon} \widetilde P \leq P_\alpha\rbra{\rho} \leq \rbra{1+\varepsilon} \widetilde P$ by calling $\texttt{estimate\_R\'enyi\_gt1\_promise}\rbra{\alpha, P, \varepsilon, \delta/2}$ as in Line \ref{step:call-promise} of \cref{algo:renyi-main-intro}.
\end{enumerate}
To implement the function $\texttt{estimate\_R\'enyi\_gt1\_promise}\rbra{\alpha, P, \varepsilon, \delta}$, we first design a quantum query algorithm 
\[
\texttt{R\'enyi\_gt1\_subroutine}\rbra{\alpha, P, \delta_p, \varepsilon_p, \delta_Q}\sbra{U_A} = U_{p\rbra{A}},
\]
which implements a block-encoding $U_{p\rbra{A}}$ of $p\rbra{A}$, using queries to a block-encoding $U_A$ of $A$, where $p\rbra{\cdot}$ is a polynomial defined in Line \ref{step:def-poly-pos-pow} of \cref{algo:renyi-main-intro} that approximates the positive power function (up to some constant factor). 
If $\texttt{R\'enyi\_gt1\_subroutine}\rbra{\alpha, P, \delta_p, \varepsilon_p, \delta_Q}\sbra{U_\rho}$ can be implemented as desired for every quantum state $\rho$, then we can estimate $P_{\alpha}\rbra{\rho}$ by applying it on $\rho \otimes \ket{0}\bra{0}^{\otimes a}$. 
To see this, we provide the following lemma. 

\begin{lemma} [\cref{lemma:renyi-large-subroutine} informal] \label{lemma:renyi-intro}
    Suppose that $U_\rho$ is a block-encoding of $\rho/2$ where $\rho$ is a quantum state of rank $r$. 
    Let $U_{p\rbra{\rho/2}} = \textup{\texttt{R\'enyi\_gt1\_subroutine}}\rbra{\alpha, P, \delta_p, \varepsilon_p, \delta_Q}\sbra{U_\rho}$.
    Let random variable $X = 1$ if the measurement outcome of $U_{p\rbra{\rho/2}} \rbra{\rho \otimes \ket{0}\bra{0}^{\otimes a}} U_{p\rbra{\rho/2}}^\dag$
    in the computational basis (on the last $a$ qubits) is $\ket{0}^{\otimes a}$, and $0$ otherwise (as in Line \ref{step:def-X} of \cref{algo:renyi-main-intro}).
    Then,
    \[
    \abs*{ 16 \rbra{4\beta}^{\alpha - 1} \mathbb{E}\sbra{X} - P_{\alpha}\rbra*{\rho} } \leq 5 r \rbra{2\delta_p}^\alpha + 32 \rbra{4\beta}^{\alpha - 1} \rbra*{ \varepsilon_p + r\delta_Q }.
    \]
\end{lemma}

Using the samplizer provided in \cref{thm:samplizer-intro}, we are able to construct its ``samplized'' version 
\[
\mathsf{Samplize}_{\delta_a}\ave{\texttt{R\'enyi\_gt1\_subroutine}\rbra{\alpha, P, \delta_p, \varepsilon_p, \delta_Q}}\sbra{\rho},
\]
which only uses independent samples of the input quantum state $\rho$.
Let random variable $X' = 1$ if the measurement outcome of $\mathsf{Samplize}_{\delta_a}\ave{\texttt{R\'enyi\_gt1\_subroutine}\rbra{\alpha, P, \delta_p, \varepsilon_p, \delta_Q}}\sbra{\rho} \rbra{\rho \otimes \ket{0}\bra{0}^{\otimes a}}$ in the computational basis (on the last $a$ qubits) is $\ket{0}^{\otimes a}$, and $X' = 0$ otherwise, as analogous to \cref{lemma:von-intro}. 
It can be shown that $\abs{\mathbb{E}\sbra{X'} - \mathbb{E}\sbra{X}} \leq \delta_a$, which implies that
\[
\abs*{ 16 \rbra{4\beta}^{\alpha-1} \mathbb{E}\sbra{X'} - P_{\alpha}\rbra*{\rho} } \leq 5 r \rbra{2\delta_p}^\alpha + 16 \rbra{4\beta}^{\alpha-1} \rbra*{ 2\varepsilon_p + 2r\delta_Q + \delta_a }.
\]
Therefore, once an estimate $p$ of $\mathbb{E}\sbra{X'}$ is obtained, we can use $16\rbra{4\beta}^{\alpha-1} p$ as an estimate of $P_{\alpha}\rbra{\rho}$. By choosing appropriate values for the parameters such as $\delta_p, \varepsilon_p, \delta_Q, \delta_a, k$ as in \cref{algo:renyi-main-intro}, we can obtain an $\varepsilon$-estimate of the R\'enyi entropy $S_{\alpha}\rbra{\rho}$ with sample and time complexity $\widetilde O\rbra{r^{4-2/\alpha}/\varepsilon^{3+2/\alpha}}$ (see \cref{thm:estimate-renyi-gt1}). 

\subsubsection{The case of \texorpdfstring{$0 < \alpha < 1$}{0 < α < 1}}

\begin{algorithm}[!htp]
    \caption{$\texttt{estimate\_R\'enyi\_lt1\_main}\rbra{\alpha, \varepsilon, \delta}$ --- \textit{quantum sample algorithm}}
    \label{algo:renyi-lt1-main-intro}
    \begin{algorithmic}[1]
    \renewcommand{\algorithmicrequire}{\textbf{Resources:}}
    \Require Access to independent samples of $N$-dimensional quantum state $\rho$ of rank $r$.
    \renewcommand{\algorithmicrequire}{\textbf{Input:}}
    \Require $0 < \alpha < 1$, $\varepsilon \in \rbra{0, 1}$, and $\delta \in \rbra{0, 1}$.
    \Ensure $\widetilde S$ such that $\abs{\widetilde S - S_\alpha\rbra{\rho}} \leq \varepsilon$ with probability $\geq 1 - \delta$. 
    \vspace{1pt}
    \tikzmark{st}
    \Function{R\'enyi\_lt1\_subroutine}{$\alpha, P, \delta_p, \varepsilon_p, \delta_Q$} --- \textit{quantum query algorithm}

    \renewcommand{\algorithmicrequire}{\qquad \textbf{Resources:}}
    \Require Unitary oracle $U_A$ that is a block-encoding of $A$.
    \renewcommand{\algorithmicrequire}{\textbf{Input:}}

    \State Let $p\rbra{x}$ be a polynomial of degree $d_p = O\rbra*{\frac{1}{\delta_p}\log\rbra*{\frac{1}{\varepsilon_p}}}$ such that $\abs*{p\rbra{x} - \frac 1 4 \rbra*{\frac{x}{\delta_p}}^{\frac{\alpha - 1}{2}}} \leq \varepsilon_p$ for $x \in \sbra{\delta_p, 1}$, and $\abs*{p\rbra{x}} \leq \frac 1 2$ for $x \in \sbra{-1, 1}$ (by \cref{lemma:poly-approx-negative-power}). \label{step:def-poly-neg-pow}

    \State Construct unitary operator $U_{p\rbra{A}}$ that is a $\rbra{1, a, \delta_Q}$-block-encoding of $p\rbra{\rho}$, where $a = m + 2$, using $O\rbra{d_p}$ queries to $U_A$ (by \cref{thm:qsvt}). 

    \State \Return $U_{p\rbra{A}}$.

    \EndFunction\tikzmark{ed}
    
    \begin{tikzpicture}[remember picture,overlay]
    \draw[draw=black,rounded corners=4pt,fill=black,opacity=0.08] ( $(left|-st)+(4pt-18pt,-4pt)$ ) rectangle ( $(left|-ed)+(16.5cm+2pt-18pt,-3pt)$ );
    \end{tikzpicture}
    \vspace{-12pt}

    \Function{estimate\_R\'enyi\_lt1\_promise}{$\alpha, P, \varepsilon, \delta$}

    \State $m \gets \ceil{8\ln\rbra{1/\delta}}$, $\delta_p \gets \frac{1}{2}\rbra*{\frac{P\varepsilon}{40r}}^{1/\alpha}$.

    \State $\varepsilon_p \gets \frac{\rbra{2\delta_p}^{1-\alpha}P\varepsilon}{256}$, $\delta_Q \gets \frac{\rbra{2\delta_p}^{1-\alpha}P\varepsilon}{128r}$, $\delta_a \gets \frac{\rbra{2\delta_p}^{1-\alpha}P\varepsilon}{128}$, and $k \gets \ceil*{\frac{65536}{\rbra{2\delta_p}^{1-\alpha}P\varepsilon^2}}$.

    \For {$j = 1 \dots m$}
        \For {$i = 1 \dots k$}
            \State Let $\sigma = \mathsf{Samplize}_{\delta_a}\ave{\tikzmark{st}\texttt{R\'enyi\_lt1\_subroutine}\rbra{\alpha, P, \delta_p, \varepsilon_p, \delta_Q}\tikzmark{ed}}\sbra{\rho}\rbra{\rho \otimes \ket{0}\bra{0}^{\otimes a}}$.
            \State Let $X_i$ be $1$ if the outcome is $\ket{0}^{\otimes a}$, and $0$ otherwise. \label{step:def-X-lt1}
            \begin{tikzpicture}[remember picture,overlay]
            \draw[draw=black,rounded corners=4pt,fill=black,opacity=0.08] ( $(st)+(0pt,10pt)$ ) rectangle ( $(ed)+(0pt,-4pt)$ );
            \end{tikzpicture}
        \EndFor
        \State $\hat P_j \gets 16 \rbra{2\delta_p}^{\alpha - 1} \sum_{i \in \sbra{k}} X_i/k$.
    \EndFor

    \State $\widetilde P \gets $ the median of $\hat P_j$ for $j \in \sbra{m}$. 

    \State \Return $\widetilde P$.

    \EndFunction

    \Function{estimate\_R\'enyi\_lt1}{$\alpha, \varepsilon, \delta$}

    \State $\lambda \gets 1 - 1 / \ln \rbra{r}$.
    \If {$\alpha \geq \lambda$}
        \State $P \gets 1$.
    \Else
        \State $P' \gets \texttt{estimate\_R\'enyi\_lt1}\rbra{\alpha/\lambda, 1/4, \delta/2}$.
        \State $P \gets \rbra{4P'/5}^\lambda$.
    \EndIf
    
    \State \Return $\texttt{estimate\_R\'enyi\_lt1\_promise}\rbra{\alpha, P, \varepsilon, \delta/2}$.

    \EndFunction

    \State $\widetilde P \gets \texttt{estimate\_R\'enyi\_lt1}\rbra{\alpha, \rbra{1-\alpha}\varepsilon/2, \delta}$.

    \State $\widetilde S \gets \frac{1}{1-\alpha}\ln\rbra{\widetilde P}$.
    
    \State \Return $\widetilde S$.
    
    \end{algorithmic}
\end{algorithm}

Although the structure and the analysis of \cref{algo:renyi-lt1-main-intro} are similar to those of \cref{algo:renyi-main-intro}, we introduce them here for completeness and for noting the differences in detail. 
The two main functions are explained as follows. 
\begin{itemize}
    \item $\texttt{estimate\_R\'enyi\_lt1}\rbra{\alpha, \varepsilon, \delta}$: return an estimate $\widetilde P$ such that $\rbra{1-\varepsilon}P_\alpha\rbra{\rho} \leq \widetilde P \leq \rbra{1+\varepsilon}P_\alpha\rbra{\rho}$ with probability $\geq 1 - \delta$.
    \item $\texttt{estimate\_R\'enyi\_lt1\_promise}\rbra{\alpha, P, \varepsilon, \delta}$: return an estimate $\widetilde P$ such that $\rbra{1-\varepsilon}P_\alpha\rbra{\rho} \leq \widetilde P \leq \rbra{1+\varepsilon}P_\alpha\rbra{\rho}$ with probability $\geq 1 - \delta$, given a promise that $P \leq P_\alpha\rbra{\rho} \leq 10P$.
\end{itemize}
The key part is the implementation of the function $\texttt{estimate\_R\'enyi\_lt1\_promise}\rbra{\alpha, P, \varepsilon, \delta}$. 
To this end, we first design a quantum query algorithm 
\[
\texttt{R\'enyi\_lt1\_subroutine}\rbra{\alpha, P, \delta_p, \varepsilon_p, \delta_Q}\sbra{U_A} = U_{p\rbra{A}},
\]
which implements a block-encoding $U_{p\rbra{A}}$ of $p\rbra{A}$, using queries to a block-encoding $U_A$ of $A$, where $p\rbra{\cdot}$ is a polynomial defined in Line \ref{step:def-poly-neg-pow} of \cref{algo:renyi-lt1-main-intro} that approximates the negative power function (up to some constant factor). 
Similar to the analysis for $\alpha > 1$, if one can implement $\texttt{R\'enyi\_lt1\_subroutine}\rbra{\alpha, P, \delta_p, \varepsilon_p, \delta_Q}\sbra{U_\rho}$ for every quantum state $\rho$, then we can estimate $P_{\alpha}\rbra{\rho}$ by applying it on $\rho \otimes \ket{0}\bra{0}^{\otimes a}$. 
To see this, we provide the following lemma. 

\begin{lemma} [\cref{lemma:renyi-lt1-block-encoded} informal] \label{lemma:renyi-lt1-intro}
    Suppose that $U_\rho$ is a block-encoding of $\rho/2$ where $\rho$ is a quantum state of rank $r$. 
    Let $U_{p\rbra{\rho/2}} = \textup{\texttt{R\'enyi\_lt1\_subroutine}}\rbra{\alpha, P, \delta_p, \varepsilon_p, \delta_Q}\sbra{U_\rho}$.
    Let random variable $X = 1$ if the measurement outcome of $U_{p\rbra{\rho/2}} \rbra{\rho \otimes \ket{0}\bra{0}^{\otimes a}} U_{p\rbra{\rho/2}}^\dag$
    in the computational basis (on the last $a$ qubits) is $\ket{0}^{\otimes a}$, and $0$ otherwise (as in Line \ref{step:def-X-lt1} of \cref{algo:renyi-lt1-main-intro}).
    Then,
    \[
    \abs*{ 16 \rbra{2\delta_p}^{\alpha - 1} \mathbb{E}\sbra{X} - P_{\alpha}\rbra*{\rho} } \leq 5r\rbra{2\delta_p}^{\alpha} + 32\rbra{2\delta_p}^{\alpha-1}\rbra*{\varepsilon_p + r\delta_Q }.
    \]
\end{lemma}

Using the samplizer provided in \cref{thm:samplizer-intro}, we are able to construct its ``samplized'' version 
\[
\mathsf{Samplize}_{\delta_a}\ave{\texttt{R\'enyi\_lt1\_subroutine}\rbra{\alpha, P, \delta_p, \varepsilon_p, \delta_Q}}\sbra{\rho},
\]
which only uses independent samples of the input quantum state $\rho$.
Let random variable $X' = 1$ if the measurement outcome of $\mathsf{Samplize}_{\delta_a}\ave{\texttt{R\'enyi\_lt1\_subroutine}\rbra{\alpha, P, \delta_p, \varepsilon_p, \delta_Q}}\sbra{\rho} \rbra{\rho \otimes \ket{0}\bra{0}^{\otimes a}}$ in the computational basis (on the last $a$ qubits) is $\ket{0}^{\otimes a}$, and $X' = 0$ otherwise, as analogous to \cref{lemma:renyi-lt1-intro}. 
It can be shown that $\abs{\mathbb{E}\sbra{X'} - \mathbb{E}\sbra{X}} \leq \delta_a$, which implies that
\[
\abs*{ 16 \rbra{2\delta_p}^{\alpha-1} \mathbb{E}\sbra{X'} - P_{\alpha}\rbra*{\rho} } \leq 5r\rbra{2\delta_p}^{\alpha} + 16 \delta_p^{\alpha-1} \rbra*{ 2\varepsilon_p + 2r\delta_Q + \delta_a }.
\]
Therefore, once an estimate $p$ of $\mathbb{E}\sbra{X'}$ is obtained, we can use $16\rbra{2\delta_p}^{\alpha-1} p$ as an estimate of $P_{\alpha}\rbra{\rho}$. By choosing appropriate values for the parameters such as $\delta_p, \varepsilon_p, \delta_Q, \delta_a, k$ as in \cref{algo:renyi-lt1-main-intro}, we can obtain an $\varepsilon$-estimate of the R\'enyi entropy $S_{\alpha}\rbra{\rho}$ with sample and time complexity $\widetilde O\rbra{r^{4/\alpha-2}/\varepsilon^{1+4/\alpha}}$ (see \cref{thm:estimate-renyi-lt1}). 

\section{Samplizer}
\label{sec:samplizer}

In this section, we will develop a ``samplizer'' through which every quantum circuit family with block-encoded access to a quantum state $\rho$ can be implemented by a quantum channel family with sample access to $\rho$.
We will first introduce quantum query algorithms in \cref{sec:quantum-query-algorithm} and quantum sample algorithms in \cref{sec:quantum-sample-algorithm}.
A time-efficient samplizer will be given in \cref{sec:efficient-samplizer} and the optimality will be shown in \cref{sec:samplizer-optimality}.

\subsection{Quantum query algorithms} \label{sec:quantum-query-algorithm}

Quantum query is a way to model the resource of an unknown or partially known unitary operator.
As the concept of block-encoding was shown to have many applications in quantum computing (cf.\ \cite{GSLW19}),
in this paper, we consider the quantum query model and especially its special case called \textit{block-encoded access}, where matrices are block-encoded in quantum unitary oracles.

\subsubsection{Quantum circuit family}

In the following, we define the quantum query model that is commonly used to describe quantum query algorithms in terms of quantum circuit families.

\begin{definition} [Quantum circuit family]
\label{def:quantum-circuit-family}
    A quantum circuit family $C = \cbra{C\sbra{U}}$ with query access to a quantum unitary oracle $U$ with query complexity $Q$ consists of quantum circuits of the form
    \[
    C\sbra{U} = G_Q U_{Q} \dots G_{2} U_{2} G_1 U_{1} G_0,
    \]
    where 
    \begin{enumerate}
        \item $U_i$ is either (controlled-)$U$ or (controlled-)$U^\dag$;
        \item $G_i$ consists of one- and two-qubit quantum gates that do not depend on $U$.
    \end{enumerate}
    The (additional) time complexity of $C$ is the number of one- and two-qubit quantum gates in the circuit description of $C\sbra{U}$. 
\end{definition}

Intuitively, a quantum circuit with query access to a quantum unitary oracle is visualized in \cref{fig:def-block-encoded-sample-access}.
\cref{def:quantum-circuit-family} defines the standard quantum query input model. See \cref{sec:query-ex1,sec:query-ex2,sec:query-ex3} for examples.
The time complexity of quantum algorithms in this model is usually defined to be the sum of the number of queries to the oracle and the number of additional one- and two-qubit gates.
In \cref{def:quantum-circuit-family}, we consider the query complexity separately to simplify the statement of \cref{lemma:block-encoding-to-sample}.

\begin{figure} [!htp]
\centering
\begin{quantikz}
    \setwiretype{b}
    & \gate{G_0} \qw & \gate{U_1} & \gate{G_1} & \gate{U_2} & \gate{G_2} & \qw \midstick[2,brackets=none]{\(\cdots\)} & \gate{U_Q} & \gate{G_Q} & \qw \\
\end{quantikz}
\caption{Quantum circuit family with query access to a quantum unitary oracle.}
\label{fig:def-block-encoded-sample-access}
\end{figure}

\subsubsection{Block-encoded access} \label{sec:query-ex1}

Block-encoded access is a special type of quantum query model where the quantum oracle is a block-encoding of a matrix of interest. 

\begin{definition} [Quantum circuit family with block-encoded access]
\label{def:block-encoded-sample-access}
    Suppose $A$ is an $n$-qubit linear quantum operator with $\Abs{A}\leq 1$, and $m \geq 1$.
    A quantum circuit family $C = \cbra{C\sbra{U}}$ with $m$-ancilla block-encoded access to $A$ is a quantum circuit family with quantum query oracle $U$, where $U$ ranges over all $\rbra{n+m}$-qubit unitary operators that are $\rbra{1,m,0}$-block-encodings of $A$. 
\end{definition}

In the following, we will discuss the relationship between block-encoded access and some other quantum input models. 

\subsubsection{Query access to classical data} \label{sec:query-ex2}
A quantum oracle with access to a matrix $A$ is usually defined as 
\[
\mathcal{O}_{A} \colon \ket{i} \ket{j} \ket{0} \mapsto \ket{i} \ket{j} \ket{A_{i,j}}.
\]
This input model is widely considered in quantum algorithms, e.g., solving systems of linear equations \cite{HHL09} and Hamiltonian simulation \cite{LC19}. 
It was pointed out in \cite{GSLW19} that such query access to classical data can be efficiently converted to the block-encoded access, especially when $A$ is sparse.

\subsubsection{Purified access to quantum states} \label{sec:query-ex3}
Purified access assumes
a quantum oracle (circuit) 
\[
\mathcal{O}_\rho \colon \ket{0} \mapsto \ket{\rho}
\]
that prepares a purification of a mixed quantum state $\rho$ of interest in quantum complexity theory \cite{Wat02} and quantum property testing \cite{GL20}.
In many quantum algorithms, e.g., \cite{GL20,BKL+19,GHS21,WZC+23}, they used a unitary operator that is a block-encoding of the prepared quantum state, by the technique of purified density matrix \cite{LC19} (see also {\cite[Lemma 25]{GSLW19}}).

\subsection{Quantum sample algorithms} \label{sec:quantum-sample-algorithm}

In the following, we introduce the notion of quantum channel families with sample access to describe quantum sample algorithms.  

\begin{definition} [Quantum channel family with sample access]
\label{def:quantum-channel-family}
    A quantum channel family $\mathcal{E} = \cbra{\mathcal{E}\sbra{\rho}}$ with sample access to mixed quantum state $\rho$ with sample complexity $k$ is implemented by a unitary operator $W$ such that
    \[
    \mathcal{E}\sbra{\rho}\rbra{\varrho} = \tr_S \rbra*{ W \rbra*{ \underbrace{\rho^{\otimes k} \otimes \ket{0}\bra{0}^{\otimes \ell}}_{S} \otimes \varrho } W^\dag }.
    \]
    The time complexity of (the implementation of) $\mathcal{E}$ is the number of one- and two-qubit gates in the circuit description of $W$.
\end{definition}

The implementation of a quantum channel family with sample access to a mixed quantum state is visualized in \cref{fig:def-sample-access}. 
It can be seen that any quantum learning algorithm that takes independent samples of quantum states as input can be described by a quantum channel family with sample access defined by \cref{def:quantum-channel-family}.

\begin{figure} [!htp]
\centering
\begin{quantikz}
    \lstick[2]{$S$} \setwiretype{b} \rho^{\otimes k}\phantom{xx}          &[1em] \gate[3]{W}\qw   & \meterD{} \\
    \setwiretype{b} \ket{0}^{\otimes \ell}\phantom{x}      &[1em]  \qw             & \meterD{} \\
    \setwiretype{b} \varrho \phantom{xxxx}                   & [1em]   \qw        &\rstick{$\mathcal{E}\sbra{\rho}(\varrho)$}
\end{quantikz}
\caption{Quantum circuit with sample access to a mixed quantum state.}
\label{fig:def-sample-access}
\end{figure}

\subsection{An efficient samplizer} \label{sec:efficient-samplizer}

We provide an efficient construction of the samplizer.

\begin{theorem} [Samplizer] \label{lemma:block-encoding-to-sample}
    Suppose $C = \cbra{C\sbra{U}}$ is a quantum circuit family with $m$-ancilla block-encoded access to $\rho/2$ (see \cref{def:block-encoded-sample-access}) with query complexity $Q$, where $\rho$ ranges over $n$-qubit mixed quantum states. If $m \geq 4$, then, for every $\delta > 0$, there is a quantum channel family $\mathsf{Samplize}_\delta\ave{C}$ with sample access to $\rho$ (see \cref{def:quantum-channel-family}) with sample complexity $O\rbra*{\frac{Q^2}{\delta}\log^2\rbra*{\frac{Q}{\delta}}}$ 
    satisfying:
    for every $\rho$, there is a specific unitary operator $U_\rho$ that is a $\rbra{2, m, 0}$-block-encoding of $\rho$ such that 
    \[
    \Abs*{ \mathsf{Samplize}_{\delta}\ave*{C}\sbra{\rho} - \mathcal{C}\sbra*{U_\rho} }_{\diamond} \leq \delta.
    \]
    Moreover, if the time complexity of $C$ is $T$, then the time complexity of 
    $\mathsf{Samplize}_{\delta}\ave*{C}$ is 
    \[
    O\rbra*{T+\frac{Q^2}{\delta}n\log^2\rbra*{\frac{Q}{\delta}}}.
    \]
\end{theorem}

To this end, we need the quantum algorithmic tool called
density matrix exponentiation. 
Given only independent samples of a quantum state $\rho$, we can implement the unitary operator $e^{-i\rho t}$ by the method given in \cite{LMR14,KLL+17}, which is called density operator exponentiation or sample-based Hamiltonian simulation. 
Here, we use the version in \cite{GKP+24}. 

\begin{theorem} [Density matrix exponentiation, {\cite[Corollary 3]{GKP+24}}]
\label{thm:sample-based-hamiltonian-simulation}
    Suppose that $\rho$ is a mixed quantum state.
    For every $0 < \delta < 1$ and $t \geq 0$, it is sufficient to use $O\rbra{t^2/\delta}$ samples of $\rho$ to implement a quantum channel $\mathcal{E}$ such that
    \[
    \Abs*{\mathcal{E} - e^{-i\rho t}}_{\diamond} \leq \delta.
    \]
\end{theorem}

\begin{remark}
    For the case that $\rho = \ket{\psi}\bra{\psi}$ is (the density operator of) a pure state $\ket{\psi}$, \cref{thm:sample-based-hamiltonian-simulation} implies an approach for approximately implementing the reflection operator $R_{\psi} = I - 2\ket{\psi}\bra{\psi} = e^{i\ket{\psi}\bra{\psi}\pi}$ by using $\Theta\rbra{1/\delta}$ samples of the pure state $\ket{\psi}$, which was shown to be optimal in \cite[Theorem 4]{KLL+17}. 
    A different approach for implementing reflection operators from samples (i.e., the LMR protocol for pure states) was given in \cite{HL11}.
    Other approaches for implementing reflection operators were rediscovered in the literature, e.g., \cite[Lemma 42 in the full version]{ARU14} and \cite[Lemma 2]{Qia24}. 
    Recently, the optimal error (in the diamond norm distance) for approximately implementing the reflection operator $R_{\psi}$ was shown in \cite{SGSV24} in terms of the number of samples of the pure state $\ket{\psi}$. 
\end{remark}

Using density operator exponentiation, it is noted in \cite[Corollary 21]{GP22} how to implement a unitary operator that is a block-encoding of a given mixed quantum state. 
Here, we use a refined version given in \cite{WZ23b}, which will be used as a key tool for constructing the samplizer in \cref{sec:samplizer}. 

\begin{lemma} [{\cite[Lemma 2.21]{WZ23b}}] \label{lemma:block-encoding-of-quantum-states}
    For every $\delta \in \rbra{0, 1}$ and given access to independent samples of quantum states $\rho$, we can implement two quantum channels $\mathcal{E}$ and $\mathcal{E}^{\textup{inv}}$ using $O\rbra*{\frac{1}{\delta}\log^2\rbra*{\frac{1}{\delta}}}$ samples of $\rho$ such that $\Abs{\mathcal{E} - \mathcal{U}}_{\diamond} \leq \delta$ and $\Abs{\mathcal{E}^{\textup{inv}} - \mathcal{U}^{\textup{inv}}}_{\diamond} \leq \delta$, where $\mathcal{U} \colon \varrho \mapsto U\varrho U^\dag$, $\mathcal{U}^{\textup{inv}} \colon \varrho \mapsto U^\dag \varrho U$, and $U$ is a $\rbra{2, 4, 0}$-block-encoding of $\rho$. 
    Moreover, if $\rho$ is an $n$-qubit quantum state, then $\mathcal{E}$ and $\mathcal{E}^{\textup{inv}}$ use $O\rbra*{\frac{n}{\delta}\log^2\rbra*{\frac{1}{\delta}}}$ one- and two-qubit quantum states. 
\end{lemma}

Now we are ready to prove \cref{lemma:block-encoding-to-sample}.

\begin{proof}[Proof of \cref{lemma:block-encoding-to-sample}]
    The construction of the samplizer generalizes the proof of the quantum sample-to-query lifting theorem in \cite{WZ23b}.
    Suppose the quantum circuit $C\sbra{U}$ is described by 
    \[
    C\sbra{U} = G_Q \cdot U_{Q} \cdot \dots \cdot G_{2} \cdot U_{2} \cdot G_1 \cdot U_{1} \cdot G_0,
    \]
    where each $G_i$ consists of one- and two-qubit quantum gates and each $U_i$ is either (controlled-)$U$ or (controlled-)$U^\dag$. 

    Let $\varepsilon = \delta/Q$. 
    By \cref{lemma:block-encoding-of-quantum-states}, for every $n$-qubit quantum state $\rho$, there is an $\rbra{n+4}$-qubit unitary operator $U_\rho$ and we can implement two quantum channels $\mathcal{E}_\rho$ and $\mathcal{E}_\rho^{\textup{inv}}$ such that $\Abs{\mathcal{E}_\rho - \mathcal{U}_\rho}_{\diamond} \leq \varepsilon$ and $\Abs{\mathcal{E}_\rho^{\textup{inv}} - \mathcal{U}_\rho^{\textup{inv}}}_{\diamond} \leq \varepsilon$, where $\mathcal{U}_\rho \colon \varrho \mapsto U_\rho\varrho U_\rho^\dag$, $\mathcal{U}_\rho^{\textup{inv}} \colon \varrho \mapsto U_\rho^\dag \varrho U_\rho$, and $U_\rho$ is a $\rbra{2, 4, 0}$-block-encoding of $\rho$. 
    Moreover, each of $\mathcal{E}$ and $\mathcal{E}^{\textup{inv}}$ uses $O\rbra*{\frac{1}{\varepsilon}\log^2\rbra*{\frac{1}{\varepsilon}}}$ samples of $\rho$ and $O\rbra*{\frac{n}{\varepsilon}\log^2\rbra*{\frac{1}{\varepsilon}}}$ one- and two-qubit quantum gates. 

    Without loss of generality, we assume that $Q \geq 1$, $m \geq 4$ and $\delta \in \rbra{0, 1}$. 
    Now we construct a modified quantum circuit $\mathcal{C}'\sbra{\rho}$ (strictly speaking, $\mathcal{C}'$ is a quantum channel family) by replacing all the uses of (controlled-)$U_\rho$ and (controlled-)$U_\rho^\dag$ in the quantum circuit $C\sbra{U_\rho \otimes I^{\otimes \rbra{m-4}}}$ by the implementations of (controlled-)$\mathcal{E}_{\rho}$ and (controlled-)$\mathcal{E}_{\rho}^{\text{inv}}$, respectively.
    Then, we obtain a new quantum circuit (see \cref{fig:samplized-block-encoded-sample-access})
    \[
    \mathcal{C}'\sbra{\rho} = G_Q \circ \mathcal{E}_Q \circ \dots \circ G_2 \circ \mathcal{E}_2 \circ G_1 \circ \mathcal{E}_1 \circ G_0,
    \]
    where $\mathcal{E}_i$ is (controlled-)$\mathcal{E}_{\rho}$ if $U_i$ is (controlled-)$U$, and $\mathcal{E}_i$ is (controlled-)$\mathcal{E}_{\rho}^{\text{inv}}$ if $U_i$ is (controlled-)$U^\dag$.
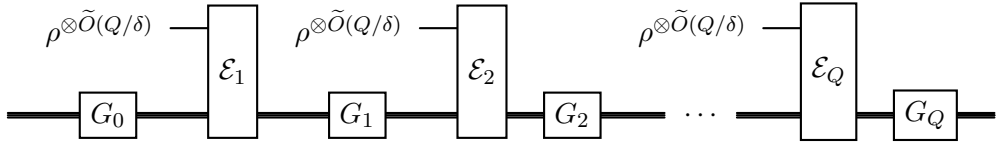
\begin{figure} [!htp]
\centering
\begin{quantikz}
    &  \wireoverride{n} \rho^{\otimes \widetilde O\rbra{Q/\delta}}\phantom{x} & \gate[2]{\mathcal{E}_1}  & \wireoverride{n}\rho^{\otimes \widetilde O\rbra{Q/\delta}}\phantom{x} & \gate[2]{\mathcal{E}_2} &\wireoverride{n} & \wireoverride{n}\rho^{\otimes \widetilde O\rbra{Q/\delta}}\phantom{x} & \gate[2]{\mathcal{E}_Q} & \wireoverride{n} \\
    \setwiretype{b} & \gate{G_0} &  &   \gate{G_1} &  &  \gate{G_2} & \midstick[2,brackets=none]{\(\cdots\)} &  &  \gate{G_Q} & \qw \\
\end{quantikz}
\caption{``Samplized'' quantum circuit for block-encoded access.}
\label{fig:samplized-block-encoded-sample-access}
\end{figure}

    It can be shown that $\mathcal{C}'\sbra{\rho}$ is a valid implementation of $\mathsf{Samplize}_{\delta}\ave*{C}\sbra{\rho}$. 
    To see this, we first note that
    \[
    \Abs*{\mathcal{C}'\sbra{\rho} - \mathcal{C}\sbra{U_\rho \otimes I^{\otimes \rbra{m-4}}}}_{\diamond} \leq Q\varepsilon = \delta.
    \]
    We also note that $U_\rho \otimes I^{\otimes \rbra{m-4}}$ is a $\rbra{2, m, 0}$-block-encoding of $\rho$. 

    Now we are going to analyze the complexity. There are $Q$ queries to $\mathcal{E}_\rho$ in $\mathcal{C}'\sbra{\rho}$, where each $\mathcal{E}_\rho$ uses $s = O\rbra*{\frac{1}{\varepsilon}\log^2\rbra*{\frac{1}{\varepsilon}}}$ samples of $\rho$. Therefore, the total number of used samples of $\rho$ is
    \[
    Q s = O\rbra*{\frac{Q^2}{\delta}\log^2\rbra*{\frac{Q}{\delta}}}.
    \]
    For the time complexity, we note that each $\mathcal{E}_\rho$ uses $t = O\rbra*{\frac{n}{\varepsilon}\log^2\rbra*{\frac{1}{\varepsilon}}}$ one- and two-qubit quantum gates. 
    Therefore, the total number of one- and two-qubit quantum gates is
    \[
    T + Qt = O\rbra*{T+\frac{Q^2}{\delta}n\log^2\rbra*{\frac{Q}{\delta}}}.
    \]
\end{proof}

\begin{remark}
    The requirement $m \geq 4$ in \cref{lemma:block-encoding-to-sample} is due to technical reasons.
    Nevertheless, in most cases this is not a problem, 
    given a $(2,m,0)$-block-encoding $U$ of $\rho$,
    we can always introduce redundant ancilla qubits and use $U \otimes I_4$ as a $\rbra{2, m+4, 0}$-block-encoding of $\rho$ in the design of quantum algorithms in order to fit the requirement.
\end{remark}

\begin{remark}
    Compared to the quantum sample-to-query lifting theorem in \cite{WZ23b} that focuses on the lower bound for quantum property testing, \cref{lemma:block-encoding-to-sample} extends their idea to a ``samplizer'' with the time efficiency considered, enabling us to use it as an algorithmic tool to obtain upper bounds on quantum sample complexity as well as quantum time complexity. 
    Notably, \cref{lemma:block-encoding-to-sample} reveals the dependence on the diamond norm distance $\delta$, which is an important parameter when designing time-efficient quantum algorithms. 
    In addition, we show that the $\delta$-dependence is optimal up to polylogarithmic factor in \cref{sec:samplizer-optimality}.
\end{remark}

\subsection{Optimality} \label{sec:samplizer-optimality}

Now we are going to show that our implementation of the samplizer is optimal up to a logarithmic factor. 
The strategy is to give a lower bound on the sample complexity of any samplizer by reducing it to sample-based Hamiltonian simulation \cite{LMR14,KLL+17}.
We first state the optimality of the samplizer in the following theorem.

\begin{theorem} [Optimality of samplizer]
\label{thm:optimality-samplizer}
    Suppose $C = \cbra{C\sbra{U}}$ is a quantum circuit family with $m$-ancilla block-encoded access to $\rho/2$ with query complexity $Q$ where $\rho$ is an $n$-qubit mixed quantum state. For every $\delta > 0$, it is necessary and sufficient to have sample complexity $\widetilde \Theta\rbra{Q^2/\delta}$ to implement a quantum channel family $\mathcal{E} = \cbra{\mathcal{E}\sbra{\rho}}$ satisfying: for every $\rho$, there is a unitary operator $U_\rho$ that is a $\rbra{2, m, 0}$-block-encoding of $\rho$ such that
    \[
    \Abs*{ \mathcal{E}\sbra{\rho} - \mathcal{C}\sbra{U_\rho} }_{\tr} \leq \delta.
    \]
\end{theorem}

We note that in \cref{thm:optimality-samplizer}, the distance measure between quantum channels is the trace norm distance $\Abs{\cdot}_{\tr}$ but not the diamond norm distance $\Abs{\cdot}_{\diamond}$ used in \cref{lemma:block-encoding-to-sample}.
From \cref{eq:tr-vs-diamond}, it can be seen that the condition required in \cref{thm:optimality-samplizer} is stricter than that in \cref{lemma:block-encoding-to-sample}. 
Specifically, let $\mathsf{S}_{\diamond}$ and $\mathsf{S}_{\tr}$ be the sample complexities for the samplizers with respect to the diamond norm distance and trace norm distance, respectively.
It is straightforward to see that $\mathsf{S}_{\tr} \leq \mathsf{S}_{\diamond}$ because every samplizer with respect to the diamond norm distance is also a samplizer with respect to the trace norm distance.
Then, \cref{thm:optimality-samplizer} implies that
\[
\underbrace{\Omega\rbra{Q^2/\delta} \leq \mathsf{S}_{\tr}}_{\text{\cref{thm:optimality-samplizer}}} \leq \underbrace{\mathsf{S}_{\diamond} \leq \widetilde O\rbra{Q^2/\delta}}_{\text{\cref{lemma:block-encoding-to-sample}}}.
\]
This means that our samplizer is optimal (up to a logarithmic factor) with respect to both the diamond norm distance and trace norm distance.

The upper bound in \cref{thm:optimality-samplizer} follows from the samplizer given in \cref{lemma:block-encoding-to-sample}. 
We only have to prove the lower bound, which is given in \cref{lemma:samplizer-lower-bound} as follows. 

\begin{lemma}
\label{lemma:samplizer-lower-bound}
    For every $0 < \delta \leq 1/36$ and $Q \geq \max\cbra{144c\pi\delta, 2c\ln\rbra{1/\delta}}$ for some constant $c > 0$, there is a quantum circuit family $C = \cbra{C\sbra{U}}$ with $m$-ancilla block-encoded access to $\rho/2$ with query complexity $Q$ where $\rho$ is an $n$-qubit mixed quantum state such that any implementation of a quantum channel family $\mathcal{E} = \cbra{\mathcal{E}\sbra{\rho}}$ requires $\Omega\rbra{Q^2/\delta}$ samples of $\rho$ to satisfy the following properties: for every $\rho$, there is a unitary operator $U_\rho$ that is a $\rbra{2, m, 0}$-block-encoding of $\rho$ such that
    \[
    \Abs*{ \mathcal{E}\sbra{\rho} - \mathcal{C}\sbra{U_\rho} }_{\tr} \leq \delta.
    \]
\end{lemma}

To prove lower bounds, we need the following result in \cite{KLL+17}, which involves the trace norm distance (a distance between quantum channels weaker than the diamond norm distance). 

\begin{theorem} [Lower bounds for density matrix exponentiation, {\cite[Theorem 2]{KLL+17}}] \label{thm:sample-based-hamiltonian-simulation-lb}
    Suppose that $\rho$ is a mixed quantum state.
    For every $0 < \delta \leq 1/6$ and $t \geq 6\pi \delta$, it is necessary to use $\Omega\rbra{t^2/\delta}$ samples of $\rho$ to implement a quantum channel $\mathcal{E}$ such that
    \[
    \Abs*{\mathcal{E} - e^{-i\rho t}}_{\tr} \leq \delta.
    \]
\end{theorem}

We choose Hamiltonian simulation as the hard instance for proving lower bounds.
To this end, we need the following quantum query algorithm for Hamiltonian simulation proposed in \cite{GSLW19}. 

\begin{theorem} [Hamiltonian simulation, {\cite[Corollary 32]{GSLW19}}]
\label{thm:hamiltonian-simulation}
    Suppose that $U$ is a unitary operator that is a $\rbra{1, a, 0}$-block-encoding of Hamiltonian $H$.
    For every $\varepsilon \in \rbra{0, 1/2}$ and $t \in \mathbb{R}$, it is necessary and sufficient to use 
    \[
    \Theta\rbra*{\abs*{t} + \frac{\log\rbra{1/\varepsilon}}{\log\rbra{e + \log\rbra{1/\varepsilon}/\abs*{t}}}}
    \]
    queries to $U$ to implement a unitary operator $V$ that is a $\rbra{1, a+2, \varepsilon}$-block-encoding of $e^{-iHt}$. 
\end{theorem}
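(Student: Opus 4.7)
The plan is to prove the theorem in two parts: sufficiency (the upper bound) via an explicit construction combining polynomial approximation with \cref{thm:qsvt}, and necessity (the lower bound) via known query lower bounds for Hamiltonian simulation. I would handle the two directions independently, and the sufficiency part is where the constructive content of the statement lies.

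For sufficiency, the strategy is to approximate $e^{-iHt} = \cos\rbra{Ht} - i\sin\rbra{Ht}$ by low-degree polynomials in $H$ and then implement each via polynomial eigenvalue transformation. Invoking the Jacobi-Anger expansion,
\[
\cos\rbra{tx} = J_{0}\rbra{t} + 2\sum_{k=1}^{\infty} \rbra{-1}^{k} J_{2k}\rbra{t} T_{2k}\rbra{x}, \qquad \sin\rbra{tx} = 2\sum_{k=0}^{\infty} \rbra{-1}^{k} J_{2k+1}\rbra{t} T_{2k+1}\rbra{x},
\]
yields infinite Chebyshev series whose Bessel-function coefficients decay faster than exponentially once $k \gg \abs{t}$. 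Using the standard tail bound $\abs{J_{k}\rbra{t}} = O\rbra{\rbra{e\abs{t}/\rbra{2k}}^{k}}$, truncation at degree
\[
d = \Theta\rbra*{\abs{t} + \frac{\log\rbra{1/\varepsilon}}{\log\rbra{e + \log\rbra{1/\varepsilon}/\abs{t}}}}
\]
gives real polynomials $P_{c}, P_{s}$ of degree at most $d$ that approximate $\cos\rbra{tx}/2$ and $\sin\rbra{tx}/2$ uniformly on $\sbra{-1,1}$ to within $\varepsilon/8$, and are bounded by $1/2$ there. Then \cref{thm:qsvt}, applied twice with a small enough internal error, turns $U$ into $\rbra{1, a+2, \varepsilon/4}$-block-encodings of $P_{c}\rbra{H}$ and $P_{s}\rbra{H}$ using $O\rbra{d}$ queries each. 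Combining them through a standard one-ancilla linear-combination-of-unitaries gadget (Hadamard, controlled-$P_{c}$, controlled-($-iP_{s}$), Hadamard) produces a block-encoding of $\rbra{P_{c}\rbra{H} - iP_{s}\rbra{H}}/2 \approx e^{-iHt}/2$; since the target is unitary, exact robust oblivious amplitude amplification removes the factor of $1/2$ without extra queries, giving the claimed $\rbra{1, a+2, \varepsilon}$-block-encoding.

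For necessity, I would invoke the quantum query lower bounds for Hamiltonian simulation established by Berry--Ahokas--Cleve--Sanders and Berry--Childs--Cleve--Kothari--Somma. The linear-in-$\abs{t}$ term follows from a no-fast-forwarding argument: encoding the parity of an $n$-bit string into the time-$t$ evolution of a sparse Hamiltonian and reading it off would otherwise beat the BBBV $\Omega\rbra{n}$ lower bound for parity. The additive $\log\rbra{1/\varepsilon}/\log\rbra{e + \log\rbra{1/\varepsilon}/\abs{t}}$ term requires a polynomial-method distinguishing argument: any algorithm producing $e^{-iHt}$ to error $\varepsilon$ induces an approximating polynomial of the query oracle entries whose degree must suffice to separate two explicit Hamiltonians whose time-$t$ evolutions differ by $\Theta\rbra{\varepsilon}$ in operator norm. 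The main obstacle is pinning down the logarithmic denominator $\log\rbra{e + \log\rbra{1/\varepsilon}/\abs{t}}$ exactly, so that the constructive Jacobi-Anger bound and the destructive polynomial lower bound coincide up to constants; this requires carefully chosen test Hamiltonians extremal with respect to Chebyshev-type moment problems, and matching the two sides at this level of precision is the delicate step that the existing literature treats with some care.
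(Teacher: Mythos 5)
The paper does not prove this statement; it is imported verbatim as Corollary~32 of \cite{GSLW19}, so there is no ``paper's own proof'' to compare against and your task is really to reconstruct the GSLW argument.

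Your sufficiency direction is the right one and matches the cited source in all essentials: Jacobi--Anger Chebyshev expansions of $\cos(tx)$ and $\sin(tx)$, a Bessel-tail truncation at degree $\Theta\bigl(|t|+\log(1/\varepsilon)/\log(e+\log(1/\varepsilon)/|t|)\bigr)$, QSVT (\cref{thm:qsvt}) applied to each real polynomial, a one-ancilla LCU to form $P_c(H)-iP_s(H)$, and robust oblivious amplitude amplification to remove the subnormalization since the target is unitary. This is precisely GSLW's construction, and the error bookkeeping you sketch ($\varepsilon/8$ per piece, $\varepsilon/4$ per QSVT call) is plausible at the level of a sketch.

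The necessity direction has a genuine gap. You propose to transplant the Berry--Ahokas--Cleve--Sanders no-fast-forwarding $\Omega(|t|)$ bound and a Berry et al.\ polynomial-method bound, but those are stated in the sparse/LCU-oracle model and, more importantly, neither of those arguments as written yields the interpolating denominator $\log\bigl(e+\log(1/\varepsilon)/|t|\bigr)$. You acknowledge this (``the delicate step''), but naming it without supplying the mechanism is exactly the missing content, because that denominator is the whole point of the $\Theta$ claim. The route GSLW actually take is cleaner and worth spelling out: restrict attention to $1\times 1$ Hamiltonians $H=[x]$ with $x\in[-1,1]$; any $Q$-query algorithm that $\varepsilon$-approximates $e^{-iHt}$ in the block-encoding model induces, by the standard QSVT/polynomial-method correspondence, a real polynomial of degree $O(Q)$ that approximates $\cos(tx)$ (or $e^{-itx}$) to additive error $O(\varepsilon)$ uniformly on $[-1,1]$; and the minimum degree of such an approximant is $\Omega\bigl(|t|+\log(1/\varepsilon)/\log(e+\log(1/\varepsilon)/|t|)\bigr)$ by the tightness of the Jacobi--Anger truncation (a Bernstein/Markov-type or Bessel-asymptotics lower bound). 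That single polynomial-degree lower bound gives both the linear-in-$|t|$ term and the $\varepsilon$-dependent term simultaneously, with no need for separate parity or distinguishing gadgets. Without this reduction to a one-dimensional degree bound, your lower-bound sketch does not close.
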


Now we are ready to prove \cref{lemma:samplizer-lower-bound}. 

\begin{proof}[Proof of \cref{lemma:samplizer-lower-bound}]
    We choose $C = \cbra{C\sbra{U}}$ to be the quantum algorithm (circuit family) for Hamiltonian simulation given in \cref{thm:hamiltonian-simulation} such that, for every unitary operator $U$ that is a $\rbra{1, a, 0}$-block-encoding of $H$, $C\sbra{U}$ is a $\rbra{1, a+2, \varepsilon}$-block-encoding of $e^{-i2Ht}$ using 
    \[
    Q \leq c\rbra*{\abs*{2t} + \frac{\ln\rbra{1/\varepsilon}}{\ln\rbra{e + \ln\rbra{1/\varepsilon}/\abs*{2t}}}}
    \]
    queries to $U$ for some constant $c > 0$. 
    Here, we choose $\varepsilon = \delta$ and
    \[
    t = \frac{1}{2} \rbra*{ \frac{Q}{c} - \frac{\ln\rbra{1/\delta}}{\ln\rbra{e+c\ln\rbra{1/\delta}/Q}} }.
    \]
    It can be shown that $t \geq Q/\rbra{4c}$ under the given constraints. To see this, we only have to show that
    \[
    \frac{c\ln\rbra{1/\delta}}{Q} \leq \frac{1}{2} \ln\rbra*{ e + \frac{c\ln\rbra{1/\delta}}{Q} },
    \]
    which holds by noting that $x \leq \frac 1 2 \ln\rbra{e+x}$ for $0 < x \leq 1/2$ and $0 < c\ln\rbra{1/\delta}/Q \leq 1/2$.
    
    For every quantum state $\rho$, we are going to implement a quantum channel that is close to $e^{-i\rho t}$ through the quantum algorithm $C$. 
    Suppose we can implement a quantum channel family $\mathcal{E} = \cbra{\mathcal{E}\sbra{\rho}}$ using $\mathsf{S}$ samples of $\rho$ such that there is a unitary operator $U_\rho$ that is a $\rbra{1, a, 0}$-block-encoding of $\rho/2$ that satisfies
    \[
    \Abs{\mathcal{E}\sbra{\rho} - \mathcal{C}\sbra{U_{\rho}}}_{\tr} \leq \delta.
    \]
    It can be seen from \cref{thm:hamiltonian-simulation} that $C\sbra{U_{\rho}}$ uses $Q$ queries to $U_\rho$, and $C\sbra{U_{\rho}}$ is a $\rbra{1, a+2, \varepsilon}$-block-encoding of $e^{-i\rho t}$. 
    By one application of $\mathcal{E}\sbra{\rho}$, we can implement a quantum channel family $\mathcal{F} = \cbra{\mathcal{F}\sbra{\rho}}$ on the $n$-qubit subsystem such that
    \[
    \mathcal{F} \sbra{\rho} \rbra{\varrho} = \tr_{a+2} \rbra*{ \mathcal{E}\sbra{\rho} \rbra*{\varrho \otimes \ket{0}\bra{0}^{\otimes \rbra*{a+2}}} }.
    \]
    It can be shown (see \cref{lemma:F-eirhot}) that $\Abs{\mathcal{F}\sbra{\rho} - e^{-i\rho t}}_{\tr} \leq \delta + 5\varepsilon = 6\delta =: \delta'$. 
    Therefore, we have implemented a quantum channel $\mathcal{F}\sbra{\rho}$ that is close to $e^{-i\rho t}$. 
    By \cref{thm:sample-based-hamiltonian-simulation-lb}, noting that $\delta' = 6\delta \leq 1 / 6$ and $t \geq Q/\rbra{4c} \geq 6\pi \delta'$, we conclude that $\mathsf{S} = \Omega\rbra{t^2/\delta'} = \Omega\rbra{Q^2/\delta}$.
\end{proof}

To complete the proof of \cref{lemma:samplizer-lower-bound}, it remains to show the following technical lemma.

\begin{lemma}
\label{lemma:F-eirhot}
    In the proof of \cref{lemma:samplizer-lower-bound}, $\Abs{\mathcal{F}\sbra{\rho} - e^{-i\rho t}}_{\tr} \leq \delta + 5\varepsilon$.
\end{lemma}

To prove \cref{lemma:F-eirhot}, we need the following inequality.

\begin{lemma}
\label{lemma:eq-a-adag}
    Suppose $A$ is an operator with $\Abs{A} \leq 1$ and $U$ is a unitary operator. If $\Abs{A-U} \leq \varepsilon$ for some $\varepsilon \in \rbra{0, 1}$, then $\Abs{I - A^\dag A} \leq 3\varepsilon$. 
\end{lemma}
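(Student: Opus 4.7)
The plan is to replace the identity $I$ on the left-hand side by $U^\dagger U$ (using unitarity of $U$), then insert a telescoping cross term so that each resulting summand contains a factor of $A-U$ or $A^\dagger - U^\dagger$, which we know to be small in operator norm.

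Concretely, I would write
\[
I - A^\dagger A \;=\; U^\dagger U - A^\dagger A \;=\; U^\dagger (U-A) + (U^\dagger - A^\dagger) A,
\]
apply the triangle inequality, and then use submultiplicativity of the operator norm together with $\|U^\dagger\| = \|U\| = 1$, $\|A\| \leq 1$, and $\|U-A\| = \|U^\dagger - A^\dagger\| \leq \varepsilon$. This immediately yields
\[
\Abs{I - A^\dagger A} \;\leq\; \Abs{U^\dagger}\Abs{U-A} + \Abs{U^\dagger - A^\dagger}\Abs{A} \;\leq\; \varepsilon + \varepsilon \;=\; 2\varepsilon,
\]
which is in fact stronger than the claimed bound of $3\varepsilon$, so the lemma follows a fortiori.

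There is no real obstacle here: the only ingredients are the identity $U^\dagger U = I$, one telescoping step, and the standard operator-norm inequalities. The slack between the $2\varepsilon$ one actually obtains and the $3\varepsilon$ stated in the lemma presumably reflects the fact that the authors only need a clean constant for later use; nothing in the argument requires tightening.
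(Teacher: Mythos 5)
Your proof is correct and uses essentially the same telescoping decomposition as the paper's proof. The only difference is that you bound $\Abs{A} \leq 1$ directly from the hypothesis, whereas the paper bounds $\Abs{A} \leq \Abs{U} + \Abs{A-U} \leq 1 + \varepsilon$ (never actually invoking $\Abs{A}\le 1$), which is why they obtain the looser constant $3\varepsilon$ while you obtain $2\varepsilon$.
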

\begin{proof}
This is straightforward by the triangle inequality. 
\begin{align*}
    \Abs*{I - A^\dag A} 
    & = \Abs*{U^\dag U - A^\dag A} \\
    & \leq \Abs*{U^\dag U - U^\dag A} + \Abs*{U^\dag A - A^\dag A} \\
    & \leq \Abs*{U^\dag} \Abs*{U - A} + \Abs*{A} \Abs*{U^\dag - A^\dag} \\
    & \leq \Abs*{U - A} + \rbra*{\Abs*{U} + \Abs*{A - U}} \Abs*{U - A} \\
    & \leq 3\varepsilon.
\end{align*}
\end{proof}

Now we are ready to prove \cref{lemma:F-eirhot}.

\begin{proof} [Proof of \cref{lemma:F-eirhot}]
Let $V_j = \bra{j}_{a+2} C\sbra{U_{\rho}} \ket{0}_{a+2}$ for every $\ket{j}_{a+2}$, and
\[
\mathcal{F}'\rbra{\varrho} = \tr_{a+2}\rbra*{C\sbra{U_{\rho}} \rbra*{\varrho \otimes \ket{0}\bra{0}^{\otimes \rbra*{a+2}}} C\sbra{U_{\rho}}^\dag } = \sum_{j} V_j \varrho V_j^\dag.
\]
Note that $C\sbra{U_{\rho}}$ is a $\rbra{1, a+2, \varepsilon}$-block-encoding of $e^{-i\rho t}$, then
\[
\Abs*{V_0 - e^{-i\rho t}} = \Abs*{\bra{0}_{a+2} C\sbra*{U_\rho} \ket{0}_{a+2} - e^{-i\rho t}} \leq \varepsilon. 
\]
By \cref{lemma:eq-a-adag}, we have 
\begin{equation}
\label{eq:v0dag-v0}
    \Abs{I - V_0^\dag V_0} \leq 3\varepsilon.
\end{equation} 
We first split $\Abs{\mathcal{F}\sbra{\rho} - e^{-i\rho t}}_{\tr}$ into three terms by the triangle inequality, and then deal with them separately. 
\begin{align*}
    \Abs*{\mathcal{F}\sbra{\rho} - e^{-i \rho t}}_{\tr}
    & = \max_{\varrho} \Abs*{ \mathcal{F}\sbra{\rho}\rbra{\varrho} - e^{-i \rho t} \varrho e^{i \rho t} }_1 \\
    & \leq \max_{\varrho} \rbra*{ \Abs*{ \mathcal{F}\sbra{\rho}\rbra{\varrho} - \mathcal{F}'\rbra{\varrho}}_1 + \Abs*{\mathcal{F}'\rbra{\varrho} - V_0 \varrho V_0^\dag}_1 + \Abs*{ V_0 \varrho V_0^\dag - e^{-i \rho t} \varrho e^{i \rho t} }_1 }.
\end{align*}
For the first term $\Abs{ \mathcal{F}\sbra{\rho}\rbra{\varrho} - \mathcal{F}'\rbra{\varrho}}_1$, we note that 
\begin{align*}
\Abs*{\mathcal{F}\sbra{\rho}\rbra{\varrho} - \mathcal{F}'\rbra{\varrho}}_1
& = 
\Abs*{\tr_{a+2}\rbra*{\mathcal{E}\sbra{\rho}\rbra*{\varrho \otimes \ket{0}\bra{0}^{\otimes \rbra*{a+2}}}} - \tr_{a+2}\rbra*{\mathcal{C}\sbra*{U_\rho} \rbra*{\varrho \otimes \ket{0}\bra{0}^{\otimes \rbra*{a+2}}} }}_1 \\
& \leq \Abs*{\mathcal{E}\sbra{\rho}\rbra*{\varrho \otimes \ket{0}\bra{0}^{\otimes \rbra*{a+2}}} - \mathcal{C}\sbra*{U_\rho} \rbra*{\varrho \otimes \ket{0}\bra{0}^{\otimes \rbra*{a+2}}} }_1 \\
& \leq \Abs*{\mathcal{E}\sbra{\rho} - \mathcal{C}\sbra{U_\rho}}_{\tr} \leq \delta.
\end{align*}
For the second term $\Abs*{\mathcal{F}'\rbra{\varrho} - V_0 \varrho V_0^\dag}_1$, we note that
\begin{align*}
    \Abs*{\mathcal{F}'\rbra{\varrho} - V_0 \varrho V_0^\dag}_1
    = \Abs*{\sum_{j \neq 0} V_j \varrho V_j^\dag}_1 
    = \tr\rbra*{ \sum_{j \neq 0} V_j^\dag V_j \varrho } 
    = \tr\rbra*{ \rbra*{I - V_0^\dag V_0} \varrho } 
    \leq \Abs*{I - V_0^\dag V_0} 
    \leq 3\varepsilon,
\end{align*}
where the last inequality is from \cref{eq:v0dag-v0}.
For the third term $\Abs*{ V_0 \varrho V_0^\dag - e^{-i \rho t} \varrho e^{i \rho t} }_1$, we have
\begin{align}
    \Abs*{ V_0 \varrho V_0^\dag - e^{-i \rho t} \varrho e^{i \rho t} }_1
    & \leq \Abs*{ V_0 \varrho V_0^\dag - V_0 \varrho e^{i \rho t} }_1 + \Abs*{ V_0 \varrho e^{i \rho t} - e^{-i \rho t} \varrho e^{i \rho t} }_1 \nonumber \\
    & = \Abs*{ V_0 \varrho \rbra*{V_0^\dag - e^{i\rho t}} }_1 + \Abs*{ \rbra*{V_0 - e^{-i\rho t}} \varrho e^{i\rho t} }_1 \nonumber \\
    & \leq \Abs*{V_0} \Abs*{\varrho}_1 \Abs*{V_0^\dag - e^{i\rho t}} + \Abs*{V_0 - e^{-i\rho t}} \Abs*{\varrho}_1 \Abs*{e^{i\rho t}} \label{eq:holder} \\
    & \leq 2 \Abs*{V_0 - e^{-i\rho t}} \leq 2 \varepsilon, \nonumber
\end{align}
where \cref{eq:holder} uses the fact that $\Abs{ABC}_1 \leq \Abs{AB}_1 \Abs{C} \leq \Abs{A} \Abs{B}_1 \Abs{C}$ with H\"older's inequalities $\Abs{AB}_1 \leq \Abs{A}\Abs{B}_1$ and $\Abs{AB}_1 \leq \Abs{A}_1\Abs{B}$. 
Combining the three terms, we have
\[
\Abs*{\mathcal{F}\sbra{\rho} - e^{-i \rho t}}_{\tr} \leq \delta + 5\varepsilon.
\]
\end{proof}

\section{Von Neumann Entropy Estimator}
\label{sec:von-neumann}
In this section, we will analyze the quantum algorithm for estimating the von Neumann entropy given in \cref{algo:von-main-intro}.

\subsection{Subroutines with block-encoded access}

The function $\texttt{von\_Neumann\_subroutine}\rbra{\delta_p, \varepsilon_p, \delta_Q}\sbra{U_A}$ implemented in \cref{algo:von-main-intro} is a quantum query algorithm with block-encoded access. 
We restate it in \cref{algo:sub-von-block-encoded} with detailed constraints. 

\begin{algorithm}[!htp]
    \caption{$\texttt{von\_Neumann\_subroutine}\rbra{\delta_p, \varepsilon_p, \delta_Q}$ --- \textit{quantum query algorithm}}
    \label{algo:sub-von-block-encoded}
    \begin{algorithmic}[1]
    \Require $\delta_p \in (0, 1/3]$, $\varepsilon_p \in (0, 1/2]$, $\delta_Q \in \rbra{0, 1}$, and query access to unitary operator $U_A$ that is a $\rbra{1, m, 0}$-block-encoding of $N$-dimensional Hermitian operator $A$, where $m = O\rbra{\log\rbra{N}}$. 
    \Ensure The quantum circuit description of $U_{p\rbra{A}}$ with query access to $U_A$. 

    \State Let $p\rbra{x}$ be a polynomial of degree $d_p = O\rbra*{\frac{1}{\delta_p}\log\rbra*{\frac{1}{\varepsilon_p}}}$ (by \cref{lemma:poly-approx-ln}) such that
    \begin{align*}
        & \forall x \in \sbra*{-1, 1}, \quad \abs*{p\rbra*{x}} \leq \frac 1 2, \\
        & \forall x \in \sbra*{\delta_p, 1}, \quad \abs*{p\rbra{x} - \frac{\ln\rbra{1/x}}{4\ln\rbra{2/\delta_p}}} \leq \varepsilon_p.
    \end{align*}

    \State Construct unitary operator $U_{p\rbra{A}}$ (by \cref{thm:qsvt}) that is a $\rbra{1, a, \delta_Q}$-block-encoding of $p\rbra{A}$, where $a = m + 2$, using $O\rbra{d_p}$ queries to $U_A$. 

    \State \Return $U_{p\rbra{A}}$. 
    \end{algorithmic}
\end{algorithm}

\begin{lemma}
\label{lemma:von-neumann-block-encoded}
    For every $\delta_p \in (0, 1/3]$, $\varepsilon_p \in (0, 1/2]$, and $\delta_Q \in (0, 1)$, 
    \cref{algo:sub-von-block-encoded} will output
    the quantum circuit description of $U_{p\rbra{A}}$ in classical time $\poly\rbra{1/\delta_p, \log\rbra{1/\varepsilon_p}, \log\rbra{1/\delta_Q}}$, and $U_{p\rbra{A}}$ makes $Q$ queries to $U_A$ and $O\rbra{Q\log\rbra{N}}$ one- and two-qubit gates, where $Q = O\rbra*{\frac{1}{\delta_p}\log\rbra*{\frac{1}{\varepsilon_p}}}$.
    Moreover, if $A = \rho/2$ for an $N$-dimensional quantum state $\rho$ of rank $r$, then the Hadamard test (by \cref{thm:hadamard}) on $U_{p\rbra{\rho/2}}$ and $\rho$ will output $1$ with probability $p_a$ such that
    \[
        \abs*{\rbra*{4\rbra*{2p_a - 1}\ln\rbra*{\frac{2}{\delta_p}} - \ln\rbra{2}} - S\rbra*{\rho}} \leq 4 \rbra*{ 2r\delta_p + \varepsilon_p + r\delta_Q } \ln \rbra*{\frac{2}{\delta_p}}.
    \]
\end{lemma}

\begin{proof}
    Let
    \[
    f\rbra{x} = \frac{\ln\rbra{1/x}}{4\ln\rbra{2/\delta_p}}.
    \]
    By \cref{lemma:poly-approx-ln}, we can choose a polynomial $p\rbra{x}$ of degree $d_p = O\rbra*{\frac{1}{\delta_p}\log\rbra*{\frac{1}{\varepsilon_p}}}$ such that
    \begin{align*}
        & \forall x \in \sbra*{-1, 1}, \quad \abs*{p\rbra*{x}} \leq \frac 1 2, \\
        & \forall x \in \sbra*{\delta_p, 1}, \quad \abs*{p\rbra{x} - f\rbra*{x}} \leq \varepsilon_p.
    \end{align*}

    Suppose $U_A$ is a unitary operator and is a $\rbra{1, m, 0}$-block-encoding of $A$ with $m = O\rbra{\log\rbra{N}}$. 
    By \cref{thm:qsvt}, we can construct a unitary operator $U_{p\rbra{A}}$, that is a $\rbra{1, m+2, \delta_Q}$-block-encoding of $p\rbra{A}$, using $O\rbra{d_p}$ queries to $U_A$, where $\delta_Q \in \rbra{0, 1}$ is a parameter to be determined. 
    It should be noted that the description of $U_{p\rbra{A}}$ can be computed in classical time $\poly\rbra{d_p, \log\rbra{1/\delta_Q}}$.
    To make it clearer, let $a = m+2$, then 
    \begin{equation}
    \label{eq:error-delta-Q}
    \Abs*{\bra{0}_a U_{p\rbra{A}} \ket{0}_a - p\rbra{A}} \leq \delta_Q. 
    \end{equation}

    Now we consider the special case that $A = \rho/2$. 
    By the Hadamard test (\cref{thm:hadamard}), we can construct a quantum circuit $U_H$, using $1$ query to $U_{p\rbra{\rho/2}}$ and $1$ sample of $\rho$, that outputs $1$ with probability 
    \[
    p_a = \frac{1 + \operatorname{Re}\rbra{\tr\rbra{\bra{0}_a U_{p\rbra{\rho/2}} \ket{0}_a \rho}}}{2},
    \]
    and $0$ otherwise. 

    Now we are going to show the relationship between $p_a$ and $S\rbra{\rho}$.
    We first note that 
    \begin{equation}
    \label{eq:tr-f-S-rho}
    \tr\rbra*{f\rbra*{\frac{\rho}{2}}\rho} = \frac{1}{4\ln\rbra*{2/\delta_p}} \rbra*{S\rbra*{\rho}+\ln\rbra{2}}.
    \end{equation}
    It can be shown that (see \cref{lemma:von-neumann-error-analysis})
    \begin{equation}
    \label{eq:von-neumann-poly-error}
    \abs*{\tr\rbra*{p\rbra*{\frac{\rho}{2}}\rho} - \tr\rbra*{f\rbra*{\frac{\rho}{2}}\rho}} \leq 2 r \delta_p + \varepsilon_p.
    \end{equation}
    From \cref{eq:error-delta-Q} we know that
    \[
    \abs*{ \tr\rbra*{\bra{0}_a U_{p\rbra{\rho/2}} \ket{0}_a\rho} - \tr\rbra*{p\rbra*{\frac{\rho}{2}}\rho} } \leq r\delta_Q.
    \]
    Note that $\tr\rbra*{p\rbra{\rho/2}\rho}$ is a real number, then we only need to focus on the real part and the above inequality can be reduced to
    \begin{equation}
    \label{eq:von-neumann-reduce-real}
    \abs*{ \rbra*{2p_a-1} - \tr\rbra*{p\rbra*{\frac{\rho}{2}}\rho} } = \abs*{ \operatorname{Re}\rbra*{\tr\rbra*{\bra{0}_a U_{p\rbra{\rho/2}} \ket{0}_a\rho} } - \tr\rbra*{p\rbra*{\frac{\rho}{2}}\rho} } \leq r\delta_Q.
    \end{equation}
    By \cref{eq:tr-f-S-rho}, \cref{eq:von-neumann-poly-error} and \cref{eq:von-neumann-reduce-real}, we have
    \[
    \abs*{ \rbra*{2p_a-1} - \frac{1}{4\ln\rbra*{2/\delta_p}} \rbra*{S\rbra*{\rho} + \ln\rbra{2}} } \leq 2 r\delta_p + \varepsilon_p + r\delta_Q,
    \]
    which yields the proof.
\end{proof}

To complete the proof of \cref{lemma:von-neumann-block-encoded}, it remains to show the following technical lemma.

\begin{lemma}
\label{lemma:von-neumann-error-analysis}
    In the proof of \cref{lemma:von-neumann-block-encoded}, we have
    \[
    \abs*{\tr\rbra*{p\rbra*{\frac{\rho}{2}}\rho} - \tr\rbra*{f\rbra*{\frac{\rho}{2}}\rho}} \leq {2} r \delta_p + \varepsilon_p.
    \]
\end{lemma}
\begin{proof}
    Suppose $\rho = \sum_{i \in \sbra{N}} x_i \ket{\psi_i} \bra{\psi_i}$, where $x_i \geq 0$, $\sum_{i \in \sbra{N}} x_i = 1$, and $\cbra{\ket{\psi_i}}$ is an orthonormal basis. 
    Because $\rho$ is of rank $r$, we can assume that $x_i = 0$ for all $i \in \sbra{N} \setminus \sbra{r}$ without loss of generality. 
    Then, 
    \begin{align*}
    \abs*{\tr\rbra*{p\rbra*{\frac{\rho}{2}}\rho} - \tr\rbra*{f\rbra*{\frac{\rho}{2}}\rho}} 
    & \leq \sum_{i \in \sbra*{r}} \abs*{ x_i p\rbra*{{\frac{x_i}{2}}} - x_i f\rbra*{{\frac{x_i}{2}}} } \\
    & = \sum_{i \in \sbra*{r} \colon x_i \leq {2}\delta_p} \abs*{ x_i p\rbra*{{\frac{x_i}{2}}} - x_i f\rbra*{{\frac{x_i}{2}}} } + \sum_{i \in \sbra*{r} \colon x_i > {2}\delta_p} \abs*{ x_i p\rbra*{{\frac{x_i}{2}}} - x_i f\rbra*{{\frac{x_i}{2}}} }.
    \end{align*}
    For the first term, we note that $p\rbra{x} \leq 1/2$ for every $x \in \sbra{-1, 1}$ and $x \ln\rbra{{2}/x} \leq \delta_p \ln\rbra{{2}/\delta_p}$ for every $0 \leq x \leq \delta_p \leq 1/3$. Thus, we have
    \begin{align*}
        \sum_{i \in \sbra*{r} \colon x_i \leq {2}\delta_p} \abs*{ x_i p\rbra*{{\frac{x_i}{2}}} - x_i f\rbra*{{\frac{x_i}{2}}} }
        & \leq \sum_{i \in \sbra*{r} \colon x_i \leq {2}\delta_p} \rbra*{ \abs*{ x_i p\rbra*{{\frac{x_i}{2}}} } + \abs*{ x_i f\rbra*{{\frac{x_i}{2}}} } } \\
        & \leq \sum_{i \in \sbra*{r} \colon x_i \leq {2}\delta_p} \rbra*{ \frac 1 2 \cdot {2} \delta_p + \frac{\delta_p \ln\rbra{{2}/\delta_p}}{4\ln\rbra{2/\delta_p}} } \\
        & \leq {2} r \delta_p.
    \end{align*}
    For the last term, we note that $\abs{p\rbra{x} - f\rbra{x}} \leq \varepsilon_p$ for $x \in \sbra{\delta_p, 1}$. Thus, we have
    \begin{align*}
        \sum_{i \in \sbra*{r} \colon x_i > {2}\delta_p} \abs*{ x_i p\rbra*{{\frac{x_i}{2}}} - x_i f\rbra*{{\frac{x_i}{2}}} }
        \leq \sum_{i \in \sbra*{r} \colon x_i > {2}\delta_p} x_i \varepsilon_p 
        \leq \sum_{i \in \sbra*{r}} x_i \varepsilon_p = \varepsilon_p.
    \end{align*}
    Combining the above, we have
    \[
    \abs*{\tr\rbra*{p\rbra*{{\frac{x_i}{2}}}\rho} - \tr\rbra*{f\rbra*{{\frac{x_i}{2}}}\rho}} \leq {2} r \delta_p + \varepsilon_p.
    \]
\end{proof}

\subsection{Sample access}

We state the main function $\texttt{estimate\_von\_Neumann\_main}\rbra{\varepsilon, \delta}$ in \cref{algo:von-main-intro} again in \cref{algo:von-sample} with detailed constraints. 

\begin{algorithm}[!htp]
    \caption{$\texttt{estimate\_von\_Neumann\_main}\rbra{\varepsilon, \delta}$ --- \textit{quantum sample algorithm}}
    \label{algo:von-sample}
    \begin{algorithmic}[1]
    \Require Additive precision $\varepsilon \in \rbra{0, 1}$, error probability $\delta \in \rbra{0, 1}$, access to independent samples of $\rho$ of rank $r$, and subroutine $\texttt{von\_Neumann\_subroutine}\rbra{\delta_p, \delta_p, \delta_Q}$ defined by \cref{algo:sub-von-block-encoded}.
    \Ensure An estimate $\widetilde S$ of $S\rbra{\rho}$. 

    \State $\delta_p \gets \frac{\varepsilon}{128r\ln\rbra*{{32r}/{\varepsilon}}}$, $\varepsilon_p \gets \frac{\varepsilon}{32\ln\rbra{2/\delta_p}}$, $\delta_Q \gets \frac{\varepsilon}{32r\ln\rbra{2/\delta_p}}$, $\delta_a \gets \frac{\varepsilon}{64\ln\rbra{2/\delta_p}}$, $\varepsilon_H \gets \delta_a$, $k \gets \ceil*{\frac{1}{2\varepsilon_H^2} \ln\rbra*{\frac{2}{\delta}}}$.
    
    \For{$i = 1 \dots k$}
    \State Perform the Hadamard test on $\mathsf{Samplize}_{\delta_a}\ave{\texttt{von\_Neumann\_subroutine}\rbra{\delta_p, \varepsilon_p, \delta_Q}}\sbra{\rho}$ and $\rho$ (by \cref{thm:hadamard}). Let $X_i \in \cbra{0, 1}$ be the outcome.
    \EndFor

    \State $\widetilde S \gets 4 \rbra{ 2\sum_{i\in\sbra{k}}X_i/k - 1} \ln\rbra{2/\delta_p} -\ln\rbra{2}$.

    \State \Return $\widetilde S$.
    \end{algorithmic}
\end{algorithm}

\begin{theorem}
\label{thm:von-sample}
    \cref{algo:von-sample} with sample access to $N$-dimensional quantum state $\rho$ of rank $r$ that, with probability $\geq 1 - \delta$, estimates the von Neumann entropy $S\rbra{\rho}$ within additive error $\varepsilon$ with sample complexity $M$ and time complexity $O\rbra{M \log\rbra{N}}$, where
    \[
    M = O\rbra*{\frac{r^2}{\varepsilon^5} \log^7\rbra*{\frac{r}{\varepsilon}} \log^2\rbra*{\frac{\log\rbra*{r}}{\varepsilon}} \log\rbra*{\frac{1}{\delta}}}.
    \]
\end{theorem}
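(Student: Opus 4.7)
The plan is to combine the three sources of error in the algorithm: the polynomial approximation of $\ln$, the samplization of the block-encoded subroutine, and the Hoeffding averaging of the Hadamard test outcomes. Fix the parameters as in \cref{algo:von-sample}. First, by \cref{lemma:von-neumann-block-encoded}, if we run the Hadamard test on the ideal block-encoded output $U_{p(\rho)} = \texttt{von\_Neumann\_subroutine}(\delta_p,\varepsilon_p,\delta_Q)[U_\rho]$ for the block-encoding $U_\rho$ of $\rho$, and call the outcome probability $p_a$, then
$$\abs*{4(2p_a-1)\ln\rbra*{2/\delta_p} - S(\rho)} \leq 4(r\delta_p + \varepsilon_p + r\delta_Q)\ln\rbra*{2/\delta_p}.$$

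Next, I invoke the samplizer (\cref{lemma:block-encoding-to-sample}) with diamond-norm error $\delta_a$ to convert the subroutine into a quantum sample algorithm; the guarantee produces some block-encoding $U_\rho$ of $\rho$ for which the samplized channel is within $\delta_a$ in the diamond norm of $\mathcal{C}[U_\rho]$. Since the Hadamard test post-processes the channel output by a fixed two-outcome measurement, the probability $p_a'$ that the samplized test returns $1$ satisfies $\abs{p_a' - p_a} \leq \delta_a$. Substituting yields
$$\abs*{4(2p_a'-1)\ln\rbra*{2/\delta_p} - S(\rho)} \leq 4(r\delta_p + \varepsilon_p + r\delta_Q + 2\delta_a)\ln\rbra*{2/\delta_p}.$$
The $k$ repetitions in the outer loop produce i.i.d.\ Bernoulli$(p_a')$ variables $X_1,\dots,X_k$, so Hoeffding's inequality (\cref{thm:hoeffding}) with $k = \lceil \ln(2/\delta)/(2\varepsilon_H^2) \rceil$ gives $|\bar X - p_a'| \leq \varepsilon_H$ with probability at least $1-\delta$. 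The triangle inequality then produces the final error bound $|\widetilde S - S(\rho)| \leq 4(r\delta_p + \varepsilon_p + r\delta_Q + 2\delta_a + \varepsilon_H)\ln(2/\delta_p)$, and a direct check with the chosen parameters shows that each of the five contributions is at most $\varepsilon/5$, so the total is at most $\varepsilon$. Here $4r\delta_p\ln(2/\delta_p) \leq \varepsilon/5$ requires the iterated-log structure of $\delta_p = \varepsilon/(64r\ln(16r/\varepsilon))$, and this is the most delicate parameter choice.

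For complexity, the subroutine makes $Q = O(d_p) = O((1/\delta_p)\log(1/\varepsilon_p)) = \widetilde O(r/\varepsilon)$ queries to $U_\rho$, and by \cref{lemma:block-encoding-to-sample} each samplized run consumes $\widetilde O(Q^2/\delta_a)$ samples of $\rho$. With $\varepsilon_H = \delta_a = \Theta(\varepsilon/\ln(2/\delta_p))$, the total sample count is
$$M = k \cdot \widetilde O\rbra*{Q^2/\delta_a} = \widetilde O\rbra*{\frac{\ln(1/\delta)}{\varepsilon_H^2} \cdot \frac{r^2/\varepsilon^2}{\varepsilon/\ln(2/\delta_p)}} = \widetilde O\rbra*{\frac{r^2}{\varepsilon^5}\ln(1/\delta)},$$
matching the theorem up to polylogarithmic factors. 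The extra $O(\log N)$ in the time complexity is the per-query gate overhead from the samplizer acting on $n = O(\log N)$ qubits.

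The main obstacle is the bookkeeping of polylogarithmic factors: the overall error scales as (polynomial error) $\times \ln(2/\delta_p)$, and in turn $\delta_p$ depends iteratively on $\ln(r/\varepsilon)$, so $\ln(2/\delta_p) = \Theta(\ln(r/\varepsilon) + \ln\ln(r/\varepsilon))$. When substituted into $\varepsilon_H^{-2}$, into $1/\delta_a$, and into the $\log^2(Q/\delta_a)$ overhead of the samplizer, these accumulate to exactly the $\log^7(r/\varepsilon)\log^2(\log(r)/\varepsilon)$ factor in the stated bound. Other than this careful logarithmic accounting, the proof is a direct composition of \cref{lemma:von-neumann-block-encoded}, \cref{lemma:block-encoding-to-sample}, and \cref{thm:hoeffding}.
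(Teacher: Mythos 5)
Your proof follows essentially the same route as the paper's: decompose the error into the polynomial-approximation term (Lemma~\ref{lemma:von-neumann-block-encoded}), the samplization error $\delta_a$ (Theorem~\ref{lemma:block-encoding-to-sample}), and the Hoeffding averaging error $\varepsilon_H$, then tune the parameters so each contribution is small, and multiply the per-test sample count $\widetilde O(Q^2/\delta_a)$ by $k$. The only minor slip is that the last error term should carry $2\varepsilon_H$, not $\varepsilon_H$ (since $\widetilde S = 4(2X-1)\ln(2/\delta_p)$ magnifies $|X-\widetilde p_a|$ by a factor of $2$); the paper's Lemma~\ref{lemma:von-neumann-ln-bound} then bounds the five terms by $\varepsilon/4 + 4\cdot\varepsilon/8 < \varepsilon$ rather than by $\varepsilon/5$ each, but this is pure bookkeeping and does not affect the argument.
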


To analyze the correctness of the algorithms, we need Hoeffding's inequality as follows. 

\begin{theorem} [Hoeffding's inequality, {\cite[Theorem 2]{Hoe63}}]
\label{thm:hoeffding}
    Suppose $X_1, X_2, \dots, X_n$ are independent random variables with $a_i \leq X_i \leq b_i$. Let
    \[
    X = \sum_{i \in \sbra{n}} X_i.
    \]
    For every $t > 0$, we have
    \[
    \Pr \sbra*{ X - \mathbb{E}\sbra{X} \geq t } \leq \exp\rbra*{-\frac{2t^2}{\sum\limits_{i \in \sbra{n}} \rbra*{b_i-a_i}^2}}.
    \]
\end{theorem}

Now we are ready to prove \cref{thm:von-sample}. 

\begin{proof}[Proof of \cref{thm:von-sample}]
    We take $\delta_p \in (0, 1/3]$, $\varepsilon_p \in (0, 1/2]$, and $\delta_Q \in (0, 1)$ to be determined. 
    By \cref{lemma:von-neumann-block-encoded}, the Hadamard test on $\texttt{von\_Neumann\_subroutine}\rbra{\delta_p, \varepsilon_p, \delta_Q}\sbra{U_\rho}$ and $\rho$ using $1$ sample of $\rho$ and $Q$ queries to $U_{\rho}$ that outputs $1$ with probability $p_a$ such that 
    \begin{equation}
    \label{eq:von-pa-S}
        \abs*{ \rbra*{4\rbra*{2p_a - 1}\ln\rbra*{\frac{2}{\delta_p}} -\ln\rbra{2}} - S\rbra*{\rho}} \leq 4 \rbra*{ 2r\delta_p + \varepsilon_p + r\delta_Q } \ln \rbra*{\frac{2}{\delta_p}},
    \end{equation}
    with time complexity $O\rbra*{Q \log\rbra*{N}}$, where $Q = O\rbra*{\frac{1}{\delta_p}\log\rbra*{\frac{1}{\varepsilon_p}}}$.

    Let $\delta_a > 0$ to be determined. 
    By \cref{lemma:block-encoding-to-sample}, there is a unitary operator $U_\rho$ that is a block-encoding of $\rho$, 
    \[
    \begin{aligned}
    \big\|\mathsf{Samplize}_{\delta_a}\ave{&\texttt{von\_Neumann\_subroutine}\rbra{\delta_p, \varepsilon_p, \delta_Q}}\sbra{\rho} - \\
    &\texttt{von\_Neumann\_subroutine}\rbra{\delta_p, \varepsilon_p, \delta_Q}\sbra{U_\rho}\big\|_{\diamond} \leq \delta_a.
    \end{aligned}
    \]
    This means that the Hadamard test on $\mathsf{Samplize}_{\delta_a}\ave{\texttt{von\_Neumann\_subroutine}\rbra{\delta_p, \varepsilon_p, \delta_Q}}\sbra{\rho}$ and $\rho$ will output $1$ with probability $\widetilde p_a$, where
    \begin{equation}
    \label{eq:von-pa-tildepa}
        \abs*{p_a - \widetilde p_a} \leq \delta_a,
    \end{equation}
    with sample complexity 
    \[
    1 + O\rbra*{\frac{Q^2}{\delta_a} \log^2\rbra*{\frac{Q}{\delta_a}}} = O\rbra*{\frac{1}{\delta_a \delta_p^2} \log^2\rbra*{\frac{1}{\varepsilon_p}} \log^2\rbra*{\frac{1}{\delta_a\delta_p}\log\rbra*{\frac{1}{\varepsilon_p}}}}.
    \]
    We call by algorithm $\mathcal{A}$ the above procedure that outputs $1$ with probability $\widetilde p_a$.

    We repeat algorithm $\mathcal{A}$ for $k$ times and let $X_i \in \cbra{0, 1}$ be the output of the $i$-th repetition.  
    Let
    \begin{equation}
    \label{eq:von-tildeS-X}
    \widetilde S = 4\rbra{2X-1}\ln\rbra*{\frac{2}{\delta_p}} -\ln\rbra{2},
    \end{equation}
    where
    \[
    X = \frac 1 k \sum_{i \in \sbra{k}} X_i,
    \]
    We note that $\mathbb{E}\sbra{X} = \mathbb{E}\sbra{X_i} = \widetilde p_a$.
    By Hoeffding's inequality (\cref{thm:hoeffding}), for every $\varepsilon_H > 0$, 
    \begin{equation}
    \label{eq:von-X-tildepa}
    \Pr \sbra*{ \abs*{X - \widetilde p_a} \leq \varepsilon_H } \geq 1 - 2\exp\rbra*{-2k\varepsilon_H^2}. 
    \end{equation}
    From \cref{eq:von-pa-S}, \cref{eq:von-pa-tildepa}, \cref{eq:von-tildeS-X}, and \cref{eq:von-X-tildepa} we know that with probability $\geq 1 - 2\exp\rbra{-2k\varepsilon_H^2}$,
    \begin{equation}
    \label{eq:von-error-all}
    \abs*{\widetilde S - S\rbra*{\rho}} \leq 4 \rbra*{ 2r\delta_p + \varepsilon_p + r\delta_Q + 2\delta_a + 2\varepsilon_H } \ln \rbra*{\frac{2}{\delta_p}}.
    \end{equation}
    By taking $\delta_p = \frac{\varepsilon}{128r\ln\rbra*{{32r}/{\varepsilon}}}$, $\varepsilon_p = \frac{\varepsilon}{32\ln\rbra{2/\delta_p}}$, $\delta_Q = \frac{\varepsilon}{32r\ln\rbra{2/\delta_p}}$, and $\delta_a = \varepsilon_H = \frac{\varepsilon}{64\ln\rbra{2/\delta_p}}$, we have $\abs*{\widetilde S - S\rbra*{\rho}} \leq \varepsilon$ (see \cref{lemma:von-neumann-ln-bound}). 
    
    Now we are going to analyze the complexity. 
    We take $k = \ceil*{\frac{1}{2\varepsilon_H^2} \ln\rbra*{\frac{2}{\delta}}}$ to make \cref{eq:von-error-all} hold with probability $\geq 1 - \delta$.
    Our overall algorithm repeats algorithm $\mathcal{A}$ for $k$ times, thus the total sample complexity is
    \[
    k \cdot O\rbra*{\frac{1}{\delta_a \delta_p^2} \log^2\rbra*{\frac{1}{\varepsilon_p}} \log^2\rbra*{\frac{1}{\delta_a\delta_p}\log\rbra*{\frac{1}{\varepsilon_p}}}} = O\rbra*{\frac{r^2}{\varepsilon^5} \log^7\rbra*{\frac{r}{\varepsilon}} \log^2\rbra*{\frac{\log\rbra*{r}}{\varepsilon}} \log\rbra*{\frac{1}{\delta}}},
    \]
    and the time complexity is only a $\log\rbra{N}$ factor over the sample complexity, which is
    \[
    O\rbra*{\frac{r^2}{\varepsilon^5} \log^7\rbra*{\frac{r}{\varepsilon}} \log^2\rbra*{\frac{\log\rbra*{r}}{\varepsilon}} \log\rbra*{\frac{1}{\delta}} \log\rbra*{N}}.
    \]
\end{proof}

To complete the proof of \cref{thm:von-sample}, it remains to show the following technical lemma.

\begin{lemma}
\label{lemma:von-neumann-ln-bound}
    In the proof of \cref{thm:von-sample}, if we take $\delta_p = \frac{\varepsilon}{{128}r\ln\rbra*{{{32}r}/{\varepsilon}}}$, $\varepsilon_p = \frac{\varepsilon}{32\ln\rbra{2/\delta_p}}$, $\delta_Q = \frac{\varepsilon}{32r\ln\rbra{2/\delta_p}}$, and $\delta_a = \varepsilon_H = \frac{\varepsilon}{64\ln\rbra{2/\delta_p}}$, then \cref{eq:von-error-all} will become $\abs*{\widetilde S - S\rbra*{\rho}} \leq \varepsilon$.
\end{lemma}
\begin{proof}
    The idea is to show that each term in the right hand side of \cref{eq:von-error-all} is $\leq O\rbra{\varepsilon}$. Strictly speaking, we will show that
    \[
    \abs*{\widetilde S - S\rbra*{\rho}} \leq \frac{\varepsilon}{4} + \frac{\varepsilon}{8} + \frac{\varepsilon}{8} + \frac{\varepsilon}{8} + \frac{\varepsilon}{8} < \varepsilon.
    \]
    The most complicated part is to show that
    \[
    {8} r \delta_p \ln\rbra*{\frac{2}{\delta_p}} \leq \frac \varepsilon 4,
    \]
    which is
    \[
    \frac{\varepsilon}{{128}r \ln\rbra{{32}r/\varepsilon}} \ln\rbra*{\frac{{256}r\ln\rbra{{32}r/\varepsilon}}{\varepsilon}} \leq \frac{\varepsilon}{{32}r}.
    \]
    Let $x = \varepsilon/32r \in \rbra{0, 1}$. The above inequality becomes
    \[
    \frac{x}{4\ln\rbra{1/x}} \ln\rbra*{\frac{8\ln\rbra{1/x}}{x}} \leq x,
    \]
    which can be simplified to 
    \[
    x^3 \ln\rbra*{\frac{1}{x}} \leq \frac 1 8.
    \]
    The proof is completed by noting that $g\rbra{x} = x^3 \ln\rbra{1/x}$ takes the maximum value at $x = e^{-1/3}$, and thus
    \[
        g\rbra{x} \leq g\rbra*{e^{-1/3}} = \frac{1}{3e} \leq \frac 1 8.
    \]
\end{proof}

\section{R\'{e}nyi Entropy Estimator}
\label{sec:renyi}
In this section, we will present and analyze our quantum sample algorithms for estimating $\alpha$-R\'enyi entropy for $\alpha > 1$ and $0 < \alpha < 1$ separately. 
Both algorithms have a similar recursive structure inspired by the quantum algorithm for estimating R\'enyi entropy of probability distributions in \cite[Algorithm 4]{LW19}.
We will consider the case of $\alpha > 1$ in \cref{sec:alpha-gt1} and the case of $0 < \alpha < 1$ in \cref{sec:alpha-lt1}.

For convenience, we write $P_\alpha\rbra{\rho} = \tr\rbra{\rho^{\alpha}}$. 
Before the analysis, we recall the basic property of $P_\alpha\rbra{\rho}$ that will be often used as follows.

\begin{fact}
\label{fact:renyi}
    Suppose $\rho$ is a quantum state of rank $r$. Then, 
    \begin{equation*}
        \begin{matrix*}[l]
            \forall 0 < \alpha < 1, & 1 \leq P_{\alpha}\rbra{\rho} \leq r^{1-\alpha}, \\
            \forall \alpha > 1, & r^{1-\alpha} \leq P_{\alpha}\rbra{\rho} \leq 1.
        \end{matrix*}
    \end{equation*}
\end{fact}

\subsection{The case of \texorpdfstring{$\alpha > 1$}{α > 1}} \label{sec:alpha-gt1}

We formalize our quantum sample algorithm for estimating $\alpha$-R\'enyi entropy for $\alpha > 1$ in \cref{algo:renyi-main-intro} through the samplizer.

\subsubsection{Recursive framework}

Our algorithm has a recursive structure: the estimation of R\'{e}nyi entropy in the general case can be reduced to a special case with a promise that $P_{\alpha}\rbra{\rho}$ is upper and lower bounded.
The abstract algorithm is given in \cref{algo:esimate-large} (which restates the function $\texttt{estimate\_R\'enyi\_gt1}\rbra{\alpha, \varepsilon, \delta}$ in \cref{algo:renyi-main-intro}), where $\texttt{estimate\_R\'enyi\_gt1\_promise}\rbra{\alpha, P, \varepsilon, \delta}$ indicates an algorithm that, with probability $\geq 1 - \delta$, outputs an estimate $\widetilde P$ of $P_\alpha\rbra{\rho}$ such that $\rbra{1-\varepsilon}P_\alpha\rbra{\rho} \leq \widetilde P \leq \rbra{1+\varepsilon} P_\alpha\rbra{\rho}$, given a promise that $P \leq P_\alpha\rbra{\rho} \leq 10P$, where $P$ is the prior knowledge of $P_\alpha\rbra{\rho}$ given as input.

\begin{algorithm}[!htp]
    \caption{$\texttt{estimate\_R\'enyi\_gt1}\rbra{\alpha, \varepsilon, \delta}$ --- \textit{quantum sample algorithm}}
    \label{algo:esimate-large}
    \begin{algorithmic}[1]
    \Require $\alpha > 1$, $\varepsilon \in \rbra{0, 1}$, $\delta \in \rbra{0, 1}$, and access to $N$-dimensional quantum state $\rho$ of rank $r$.
    \Ensure $\widetilde P$ such that $\rbra{1 - \varepsilon} P_{\alpha}\rbra{\rho} \leq \widetilde P \leq \rbra{1 + \varepsilon} P_{\alpha}\rbra{\rho}$ with probability $\geq 1 - \delta$. 

    \State $\lambda \gets 1 + 1 / \ln \rbra{r}$.
    \If {$\alpha \leq \lambda$}
        \State $P \gets e^{-1}$.
    \Else
        \State $P' \gets \texttt{estimate\_R\'enyi\_gt1}\rbra{\alpha/\lambda, 1/4, \delta/2}$. \label{line:call-gt1}
        \State $P \gets \rbra{4P'/5}^\lambda e^{-1}$.
    \EndIf
    
    \State \Return $\texttt{estimate\_R\'enyi\_gt1\_promise}\rbra{\alpha, P, \varepsilon, \delta/2}$.
    \end{algorithmic}
\end{algorithm}

\begin{lemma}
\label{lemma:annealing}
    Suppose $\alpha > 1$, and $\texttt{\textup{estimate\_R\'enyi\_gt1\_promise}}\rbra{\alpha, P, \varepsilon, \delta}$ is a quantum algorithm with time complexity $T\rbra{\alpha, P, \varepsilon, \delta}$ that, with probability $\geq 1 - \delta$, outputs an estimate of $P_\alpha\rbra{\rho}$ such that $\rbra{1-\varepsilon}P_\alpha\rbra{\rho} \leq \widetilde P \leq \rbra{1+\varepsilon} P_\alpha\rbra{\rho}$, given a promise that $P \leq P_\alpha\rbra{\rho} \leq 10P$. Then, with probability $\geq 1 - \delta$, \cref{algo:esimate-large} outputs an estimate $\widetilde P$ of $P_\alpha\rbra{\rho}$ such that $\rbra{1-\varepsilon}P_\alpha\rbra{\rho} \leq \widetilde P \leq \rbra{1+\varepsilon} P_\alpha\rbra{\rho}$ with time complexity $Q\rbra{\alpha, \varepsilon, \delta}$, where
    \[
    Q\rbra*{\alpha, \varepsilon, \delta} = \begin{cases}
        T\rbra*{\alpha, e^{-1}, \varepsilon, \delta/2}, & 1 < \alpha \leq \lambda, \\
        Q\rbra*{\alpha/\lambda, 1/4, \delta/2} + \sup\limits_{P \in \sbra*{r^{1-\alpha}/10, 1}} T\rbra{\alpha, P, \varepsilon, \delta/2}, & \alpha > \lambda,
    \end{cases}
    \]
    in which $\lambda = 1 + 1/\ln\rbra{r}$ and $r$ is the rank of $\rho$. 
\end{lemma}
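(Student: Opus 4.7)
The plan is to analyze each branch of \cref{algo:esimate-large} separately, verify that the value $P$ passed to $\texttt{estimate\_R\'enyi\_gt1\_promise}$ satisfies the required promise $P \leq P_\alpha(\rho) \leq 10P$, then combine failure probabilities via a union bound and read off the time-complexity recurrence directly from the pseudocode.

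For the base case $1 < \alpha \leq \lambda$, the algorithm sets $P = e^{-1}$. By \cref{fact:renyi}, $r^{1-\alpha} \leq P_\alpha(\rho) \leq 1$, and since $\alpha - 1 \leq 1/\ln r$, we have $r^{1-\alpha} \geq r^{-1/\ln r} = e^{-1}$. Hence $e^{-1} \leq P_\alpha(\rho) \leq 1 \leq 10 e^{-1}$, which is exactly the required promise.

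For the recursive case $\alpha > \lambda$, the call in Line \ref{line:call-gt1} produces, with probability at least $1 - \delta/2$, an estimate $P'$ with $(3/4) P_{\alpha/\lambda}(\rho) \leq P' \leq (5/4) P_{\alpha/\lambda}(\rho)$. Applying \cref{lemma:p-alpha-beta} with exponents $0 < \alpha/\lambda < \alpha$ to the (at most $r$) nonzero eigenvalues of $\rho$ yields $P_\alpha(\rho)^{1/\lambda} \leq P_{\alpha/\lambda}(\rho) \leq r^{1-1/\lambda}\, P_\alpha(\rho)^{1/\lambda}$. Raising to the $\lambda$-th power and using $r^{\lambda - 1} = r^{1/\ln r} = e$ gives
\[
e^{-1} P_{\alpha/\lambda}(\rho)^\lambda \leq P_\alpha(\rho) \leq P_{\alpha/\lambda}(\rho)^\lambda.
\]
The bounds on $P'$ rescale to $(3/5) P_{\alpha/\lambda}(\rho) \leq 4P'/5 \leq P_{\alpha/\lambda}(\rho)$, so $P = (4P'/5)^\lambda e^{-1}$ obeys $(3/5)^\lambda e^{-1} P_{\alpha/\lambda}(\rho)^\lambda \leq P \leq e^{-1} P_{\alpha/\lambda}(\rho)^\lambda$. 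Combining with the previous display, the upper estimate collapses to $P \leq P_\alpha(\rho)$, while the lower one gives $P_\alpha(\rho) \leq (5/3)^\lambda e \cdot P$. Since $\lambda \leq 1 + 1/\ln 2$ for $r \geq 2$, a direct numerical check yields $(5/3)^\lambda e < 10$, establishing the promise on this branch as well.

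With the promise in hand, a union bound over the (at most) two failure events of probability $\delta/2$ each delivers overall success probability at least $1 - \delta$. The stated recurrence for $Q(\alpha, \varepsilon, \delta)$ is then immediate from the pseudocode: the base case pays only $T(\alpha, e^{-1}, \varepsilon, \delta/2)$, while the recursive case pays $Q(\alpha/\lambda, 1/4, \delta/2)$ plus the cost of the final promise call, bounded uniformly by the supremum of $T(\alpha, P, \varepsilon, \delta/2)$ over the admissible range $P \in [r^{1-\alpha}/10, 1]$ induced by $P \leq P_\alpha(\rho) \leq 10P$ together with \cref{fact:renyi}. The only delicate point is the constant check $(5/3)^\lambda e < 10$ in the recursive branch; the slack factor $10$ in the promise has been chosen precisely so that this inequality holds across the full relevant range of $\lambda$.
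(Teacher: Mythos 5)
Your proof is correct and follows essentially the same route as the paper: it verifies the promise $P \leq P_\alpha(\rho) \leq 10P$ in each branch using \cref{fact:renyi} and \cref{lemma:p-alpha-beta}, then combines the two failure probabilities and reads off the recurrence for $Q(\alpha,\varepsilon,\delta)$ directly from the pseudocode. The only place you go beyond the paper's writeup is making the constant check $(5/3)^\lambda e < 10$ explicit across $r \geq 2$ (the paper asserts $(4P'/3)^\lambda \leq 10P$ without spelling out the arithmetic), which is a welcome extra detail.
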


To prove \cref{lemma:annealing}, we need the following inequality.

\begin{lemma} [{\cite[Lemma 5.3]{LW19}}] \label{lemma:p-alpha-beta}
    Suppose $p_1, p_2, \dots, p_n$ is a probability distribution, i.e., $p_i \geq 0$ and $\sum_{i \in \sbra{n}} p_i = 1$. Then, for every $0 < \alpha < \beta$, we have 
    \[
    \rbra*{ \sum_{i \in \sbra*{n}} p_i^{\beta} }^{\frac{\alpha}{\beta}} \leq \sum_{i \in \sbra*{n}} p_i^{\alpha} \leq n^{1-\frac{\alpha}{\beta}} \rbra*{ \sum_{i \in \sbra*{n}} p_i^{\beta} }^{\frac{\alpha}{\beta}}.
    \]
\end{lemma}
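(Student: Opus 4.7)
The plan is to prove the two inequalities independently, as both are standard facts about $\ell_p$-type sums of non-negative reals. Note that the hypothesis $\sum_i p_i = 1$ is not actually used anywhere; only $p_i \geq 0$ matters, so I will treat the $p_i$ as arbitrary non-negative reals throughout.

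For the left-hand inequality $\bigl(\sum_i p_i^\beta\bigr)^{\alpha/\beta} \leq \sum_i p_i^\alpha$, I would introduce $s = \alpha/\beta \in (0,1)$ and $q_i = p_i^\beta \geq 0$. The inequality then reduces to
\[
\Bigl(\sum_{i \in [n]} q_i\Bigr)^{s} \leq \sum_{i \in [n]} q_i^{s},
\]
which is the familiar subadditivity of $x \mapsto x^s$ on $[0,\infty)$ for $s \in (0,1]$, i.e., $(a+b)^s \leq a^s + b^s$. This one-variable inequality is immediate from concavity of $x^s$ together with $0^s = 0$ (or, equivalently, from monotonicity of $f(t) = (a+t)^s - a^s - t^s$ on $[0,\infty)$, which has $f(0)=0$ and $f'(t) \leq 0$). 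A straightforward induction on $n$ then yields the multi-term version.

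For the right-hand inequality $\sum_i p_i^\alpha \leq n^{1-\alpha/\beta}\bigl(\sum_i p_i^\beta\bigr)^{\alpha/\beta}$, I would apply Hölder's inequality with conjugate exponents $r = \beta/\alpha > 1$ and $r' = \beta/(\beta-\alpha)$, setting $a_i = p_i^\alpha$ and $b_i = 1$. Hölder gives
\[
\sum_{i \in [n]} p_i^\alpha \cdot 1 \leq \Bigl(\sum_{i \in [n]} (p_i^\alpha)^{\beta/\alpha}\Bigr)^{\alpha/\beta} \Bigl(\sum_{i \in [n]} 1\Bigr)^{1-\alpha/\beta} = n^{1-\alpha/\beta} \Bigl(\sum_{i \in [n]} p_i^\beta\Bigr)^{\alpha/\beta},
\]
which is exactly the claim. (Equivalently, one could invoke the monotonicity of power means in the exponent, which rearranges to the same bound.)

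There is no substantive obstacle; the two halves are textbook $\ell_p$-norm comparisons. The only care needed is to pick the correct Hölder exponents ($\beta/\alpha$ paired with $\beta/(\beta-\alpha)$) so that the factor $n^{1-\alpha/\beta}$ emerges on the nose, and to verify the extremal cases as a sanity check: the lower bound is tight when a single $p_i$ is non-zero, and the upper bound is tight when the non-zero $p_i$ are equal, both of which match the claimed exponents.
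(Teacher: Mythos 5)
Your proof is correct. The paper itself does not supply a proof of this lemma --- it is stated as a citation of Lemma~5.3 of \cite{LW19} --- so there is no internal argument to compare against. What you give is the standard, self-contained derivation: the left-hand inequality follows from subadditivity of $x \mapsto x^{s}$ on $[0,\infty)$ for $s = \alpha/\beta \in (0,1)$ after the substitution $q_i = p_i^{\beta}$ (and your one-variable argument via concavity, or via the derivative of $f(t) = (a+t)^s - a^s - t^s$, followed by induction on $n$, is sound); the right-hand inequality is H\"older with conjugate exponents $\beta/\alpha$ and $\beta/(\beta-\alpha)$, which indeed produces the factor $n^{1-\alpha/\beta}$ exactly. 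Your observation that the normalization $\sum_i p_i = 1$ plays no role is also correct --- the statement holds for arbitrary non-negative reals --- and the tightness checks (single non-zero $p_i$ for the lower bound, equal non-zero $p_i$ for the upper bound) confirm both constants are sharp.
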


Now we are ready to prove \cref{lemma:annealing}.

\begin{proof}[Proof of \cref{lemma:annealing}]
    Suppose $\rho = \sum_{i \in \sbra{N}} x_i \ket{\psi_i} \bra{\psi_i}$, where $x_i \geq 0 $, $\sum_{i \in \sbra{N}} x_i = 1$, and $\cbra{\ket{\psi_i}}$ is an orthonormal basis.
    \begin{itemize}
        \item For the basis case that $1 < \alpha \leq \lambda$, 
        \[
        P_{\alpha}\rbra*{\rho} \geq r^{1-\alpha} \geq r^{-1/\ln\rbra*{r}} = e^{-1}.
        \]
        On the other hand, $P_\alpha\rbra{\rho} \leq 1$. These together yield that $P = e^{-1} \leq P_{\alpha}\rbra{\rho} \leq 1 \leq 10 P$, which satisfies the required condition of $\texttt{\textup{estimate\_R\'enyi\_gt1\_promise}}\rbra{\alpha, P, \varepsilon, \delta/2}$. Therefore, \cref{algo:esimate-large} will output an estimate of $P_\alpha\rbra{\rho}$ within multiplicative error $\varepsilon$ with probability $\geq 1 - \delta/2 \geq 1 - \delta$, with time complexity $Q\rbra{\alpha, \varepsilon, \delta} = T\rbra{\alpha, P, \varepsilon, \delta/2}$. 
        \item Now we consider the case that $\alpha > \lambda$. Let $\alpha' = \alpha/\lambda < \alpha$. By \cref{lemma:p-alpha-beta}, we have
        \[
        \rbra*{P_{\alpha}\rbra*{\rho}}^{1/\lambda} = \rbra*{P_{\alpha}\rbra*{\rho}}^{\frac{\alpha'}{\alpha}} \leq P_{\alpha'}\rbra*{\rho} \leq r^{1-\frac{\alpha'}{\alpha}} \rbra*{P_{\alpha}\rbra*{\rho}}^{\frac{\alpha'}{\alpha}} = r^{1-1/\lambda} \rbra*{P_{\alpha}\rbra*{\rho}}^{1/\lambda}.
        \]
        By induction, after calling $\texttt{estimate\_R\'enyi\_gt1}\rbra{\alpha', \varepsilon', \delta'}$ in Line \ref{line:call-gt1} of \cref{algo:esimate-large}, where $\varepsilon' = 1/4$ and $\delta' = \delta/2$, we will obtain $P'$ such that $\rbra{1-\varepsilon'}P_{\alpha'}\rbra{\rho} \leq P' \leq \rbra{1+\varepsilon'}P_{\alpha'}\rbra{\rho}$ with probability $\geq 1 - \delta/2$. Thus, we have
        \[
        P_{\alpha}\rbra*{\rho} \geq \rbra*{r^{1/\lambda-1} P_{\alpha'}\rbra*{\rho}}^{\lambda} \geq r^{1-\lambda} \rbra*{\frac{P'}{1+\varepsilon'}}^{\lambda} = e^{-1} \rbra*{\frac{4P'}{5}}^{\lambda} = P.
        \]
        On the other hand, 
        \[
        P_{\alpha}\rbra*{\rho} \leq \rbra*{P_{\alpha'}\rbra*{\rho}}^{\lambda} \leq \rbra*{\frac{P'}{1-\varepsilon'}}^{\lambda} = \rbra*{\frac{4P'}{3}}^{\lambda} \leq 10 P.
        \]
        Therefore, $P \leq P_{\alpha}\rbra{\rho} \leq 10P$ with probability $\geq 1 - \delta/2$, i.e., the required condition of 
        \[
        \texttt{\textup{estimate\_R\'enyi\_gt1\_promise}}\rbra{\alpha, P, \varepsilon, \delta/2}
        \]
        holds with probability $\geq 1 - \delta/2$; then, \cref{algo:esimate-large} will output an estimate of $P_{\alpha}$ within multiplicative error $\varepsilon$ with probability $\geq \rbra{1-\delta/2}^2 \geq 1 - \delta$, with time complexity $Q\rbra{\alpha, \varepsilon, \delta} = Q\rbra{\alpha', \varepsilon', \delta'} + T\rbra{\alpha, P, \varepsilon, \delta/2}$. Without loss of generality, we may assume that it always holds that $r^{1-\alpha}/10 \leq P \leq 1$ during the execution of \cref{algo:esimate-large}. Thus, 
        \[
        Q\rbra*{\alpha, \varepsilon, \delta} = Q\rbra{\alpha', \varepsilon', \delta'} + \sup_{P \in \sbra*{r^{1-\alpha}/10, 1}} T\rbra{\alpha, P, \varepsilon, \delta/2}.
        \]
    \end{itemize}
    From the above, we have
    \[
    Q\rbra*{\alpha, \varepsilon, \delta} = \begin{cases}
        T\rbra*{\alpha, \varepsilon, e^{-1}, \delta/2}, & 1 < \alpha \leq \lambda \\
        Q\rbra*{\alpha/\lambda, 1/4, \delta/2} + \sup\limits_{P \in \sbra*{r^{1-\alpha}/10, 1}} T\rbra{\alpha, P, \varepsilon, \delta/2}, & \alpha > \lambda.
    \end{cases}
    \]
\end{proof}

On the basis of \cref{algo:esimate-large}, we only have to implement $\texttt{estimate\_R\'enyi\_gt1\_promise}\rbra{\alpha, P, \varepsilon, \delta}$ with sample access. For readability, we will first analysis the subroutine with block-encoded access that is used to implement it.

\subsubsection{Subroutines with block-encoded access}

The function $\texttt{R\'enyi\_gt1\_subroutine}\rbra{\alpha, P, \delta_p, \varepsilon_p, \delta_Q}\sbra{U_A}$ implemented in \cref{algo:renyi-main-intro} is with block-encoded access. 
We restate it in \cref{algo:sub-renyi-gt1} with detailed constraints. 

\begin{algorithm}[!htp]
    \caption{$\texttt{R\'enyi\_gt1\_subroutine}\rbra{\alpha, P, \delta_p, \varepsilon_p, \delta_Q}$ --- \textit{quantum query algorithm}}
    \label{algo:sub-renyi-gt1}
    \begin{algorithmic}[1]
    \Require $\alpha > 1$, $\delta_p \in (0, \beta]$, $\beta = \min\cbra{\rbra{10P}^{1/\alpha}, 1/2}$, $\varepsilon_p \in (0, 1/2]$, $\delta_Q \in \rbra{0, 1}$, and query access to unitary operator $U_A$ that is a $\rbra{1, m, 0}$-block-encoding of $N$-dimensional Hermitian operator $A$, where $m = O\rbra{\log\rbra{N}}$. 
    \Ensure The quantum circuit description of $U_{p\rbra{A}}$ with query access to $U_A$. 

    \State Let $p\rbra{x}$ be a polynomial of degree $d_p = O\rbra*{\frac{1}{\delta_p}\log\rbra*{\frac{1}{\delta_p\varepsilon_p}}}$ (by \cref{lemma:poly-approx-power}) such that
    \begin{align*}
        & \forall x \in \sbra{0, \delta_p}, \quad \abs*{p\rbra{x}} \leq \frac 1 2 \rbra*{\frac{\delta_p}{2\beta}}^{\frac{\alpha-1}{2}}, \\
        & \forall x \in \sbra{\delta_p, \beta}, \quad \abs*{p\rbra{x}-\frac 1 4 \rbra*{\frac{x}{2\beta}}^{\frac{\alpha-1}{2}}} \leq \varepsilon_p, \\
        & \forall x \in \sbra{-1, 1}, \quad \abs*{p\rbra{x}} \leq \frac 1 2.
    \end{align*}

    \State Construct unitary operator $U_{p\rbra{A}}$ (by \cref{thm:qsvt}) that is a $\rbra{1, a, \delta_Q}$-block-encoding of $p\rbra{A}$, where $a = m + 2$, using $O\rbra{d_p}$ queries to $U_A$. 

    \State \Return $U_{p\rbra{A}}$.
    \end{algorithmic}
\end{algorithm}

\begin{lemma}
\label{lemma:renyi-large-subroutine}
Suppose $\alpha > 1$ is a constant. 
For every $0 < \delta_p \leq \beta \leq 1/2$, $\varepsilon_p \in (0, 1/2]$, and $\delta_Q \in \rbra{0, 1}$,
\cref{algo:sub-renyi-gt1} will output the quantum circuit description of $
U_{p\rbra{A}}$ in classical time $\poly\rbra{1/\delta_p, \log\rbra{1/\varepsilon_p}, \log\rbra{1/\delta_Q}}$, and $U_{p\rbra{A}}$ makes $Q$ queries to $U_A$ and $O\rbra{Q\log\rbra{N}}$ one- and two-qubit gates, where $Q = O\rbra*{\frac{1}{\delta_p}\log\rbra*{\frac{1}{\delta_p\varepsilon_p}}}$.

Moreover, if $A = {\rho/2}$ for an $N$-dimensional quantum state $\rho$ of rank $r$ with a promise that $P \leq P_\alpha\rbra{\rho} \leq 10P$ for some $P > 0$ and $\beta = \min\cbra{\rbra*{10 P}^{1/\alpha}, 1/2}$, then measuring the quantum state $\sigma = U_{p\rbra{{\rho/2}}}\rbra*{\rho \otimes \ket{0}\bra{0}^{\otimes a}}U_{p\rbra{{\rho/2}}}^\dag$ in the computational basis will obtain the outcome $\ket{0}^{\otimes a}$ with probability $p_a$ such that
\[
    \abs*{ p_a - \frac 1 {16} \rbra*{{4}\beta}^{1-\alpha} P_{\alpha}\rbra*{\rho} } \leq \frac 5 {{8}} \rbra*{2\beta}^{1-\alpha} r \delta_p^\alpha + 2\varepsilon_p + 2r\delta_Q.
\]
\end{lemma}

\begin{proof}
    Suppose $\rho = \sum_{i \in \sbra{N}} x_i \ket{\psi_i} \bra{\psi_i}$ is of rank $r$, where $x_i \geq 0$, $\sum_{i \in \sbra{N}} x_i = 1$, and $\cbra{\ket{\psi_i}}$ is an orthonormal basis.
    Without loss of generality, we assume that $\sum_{i \in \sbra{r}} x_i = 1$.
    With the promise that $P \leq P_{\alpha}\rbra{\rho} \leq 10P$, we have 
    \[
    x_i \leq \rbra*{\sum_{i \in \sbra{N}} x_i^{\alpha}}^{\frac{1}{\alpha}} =  \rbra*{P_{\alpha}\rbra{\rho}}^{\frac 1 \alpha} \leq \rbra*{10 P}^{\frac 1 \alpha}.
    \]
    Let $\beta = \min\cbra{\rbra*{10 P}^{1/\alpha}, 1/2}$, and define
    \[
    f\rbra{x} = \frac 1 4 \rbra*{\frac{x}{2\beta}}^{\frac{\alpha-1}{2}}.
    \]
    By \cref{lemma:poly-approx-power}, we can choose a polynomial $p\rbra{x} \in \mathbb{R}\sbra{x}$ of degree $d_p = O\rbra*{\frac{1}{\delta_p}\log\rbra*{\frac{1}{\delta_p\varepsilon_p}}}$ such that 
    \begin{align*}
        & \forall x \in \sbra{0, \delta_p}, \quad \abs*{p\rbra{x}} \leq 2f\rbra{\delta_p}, \\
        & \forall x \in \sbra{\delta_p, \beta}, \quad \abs*{p\rbra{x}-f\rbra{x}} \leq \varepsilon_p, \\
        & \forall x \in \sbra{-1, 1}, \quad \abs*{p\rbra{x}} \leq \frac 1 2.
    \end{align*}
    Suppose $U_A$ is a unitary operator and is a $\rbra{1, m, 0}$-block-encoding of $A$ with $m = O\rbra{\log\rbra{N}}$. 
    By \cref{thm:qsvt}, we can construct a unitary operator $U_{p\rbra{A}}$, that is a $\rbra{1, m+2, \delta_Q}$-block-encoding of $p\rbra{A}$, using $O\rbra{d_p}$ queries to $U_{A}$, where $\delta_Q \in \rbra{0, 1}$ is a parameter to be determined. 
    It should be noted that the description of $U_{p\rbra{A}}$ can be computed classically in $\poly\rbra{d_p, \log\rbra{1/\delta_Q}}$ time.
    To make it clearer, let $a = m+2$, then 
    \begin{equation}
    \label{eq:Uprho-deltaQ}
    \Abs*{\bra{0}_a U_{p\rbra{A}} \ket{0}_a - p\rbra{A}} \leq \delta_Q. 
    \end{equation}

    Now we consider the special case that $A = {\rho/2}$. Applying $U_{p\rbra{\rho}}$ on quantum state $\rho \otimes \ket{0}_a \bra{0}_a$, we obtain a quantum state
    \[
    \sigma = U_{p\rbra{{\rho/2}}} \rbra*{\rho \otimes \ket{0}_a \bra{0}_a} U_{p\rbra{{\rho/2}}}^\dag.
    \]
    Note that the quantum state $\sigma$ is a $\rbra{1, a, 0}$-block-encoding of $\rbra{\bra{0}_a U_{p\rbra{{\rho/2}}} \ket{0}_a} \rho \rbra{\bra{0}_a U_{p\rbra{{\rho/2}}} \ket{0}_a}^\dag$.
    If we measure $\sigma$ in the computational basis, then the probability of measurement outcome $\ket{0}_a$ is 
    \begin{equation}
    \label{eq:large-def-pa}
    p_a = \tr\rbra*{\ket{0}_a \bra{0}_a \sigma} = \tr\rbra*{\rbra{\bra{0}_a U_{p\rbra{{\rho/2}}} \ket{0}_a} \rho \rbra{\bra{0}_a U_{p\rbra{{\rho/2}}} \ket{0}_a}^\dag}.
    \end{equation}
    We define our algorithm to output $1$ if the measurement outcome is $\ket{0}_a$, and $0$ otherwise.

    Now we are going to show the relationship between $p_a$ and $P_\alpha\rbra{\rho}$. We first note that
    \begin{equation}
    \label{eq:rhofrho}
    \tr\rbra*{\rho f\rbra*{{\frac{\rho}{2}}}^2} = \frac 1 {16} \rbra*{{4}\beta}^{1-\alpha} P_\alpha\rbra*{\rho}.
    \end{equation}
    From \cref{eq:Uprho-deltaQ}, we have
    \begin{equation}
    \label{eq:pa-rhoprho2-norm}
    \Abs*{\rbra{\bra{0}_a U_{p\rbra{{\rho/2}}} \ket{0}_a} \rho \rbra{\bra{0}_a U_{p\rbra{{\rho/2}}} \ket{0}_a}^\dag - \rho p\rbra*{{\frac{\rho}{2}}}^2} \leq 2\delta_Q,
    \end{equation}
    which means that $\sigma$ is a $\rbra{1, a, 2\delta_Q}$-block-encoding of $\rho p\rbra{{\rho/2}}^2$.
    From \cref{eq:large-def-pa} and \cref{eq:pa-rhoprho2-norm}, we have
    \begin{equation}
    \label{eq:pa-rhoprho2-abs}
        \abs*{ p_a - \tr\rbra*{\rho p\rbra*{{\frac{\rho}{2}}}^2} } \leq 2r\delta_Q.
    \end{equation}
    On the other hand, we have (see \cref{lemma:technical-1})
    \begin{equation}
    \label{eq:p-f-error}
    \abs*{ \tr\rbra*{\rho p\rbra*{{\frac{\rho}{2}}}^2} - \tr\rbra*{\rho f\rbra*{{\frac{\rho}{2}}}^2} } \leq \frac 5 {{8}} \rbra*{2\beta}^{1-\alpha} r \delta_p^\alpha + 2\varepsilon_p.
    \end{equation}
    From \cref{eq:rhofrho}, \cref{eq:pa-rhoprho2-abs}, and \cref{eq:p-f-error}, we have
    \[
        \abs*{ p_a - \frac 1 {16} \rbra*{{4}\beta}^{1-\alpha} P_{\alpha}\rbra*{\rho} } \leq \frac 5 {{8}} \rbra*{2\beta}^{1-\alpha} r \delta_p^\alpha + 2\varepsilon_p + 2r\delta_Q.
    \]
\end{proof}

To complete the proof of \cref{lemma:renyi-large-subroutine}, it remains to show the following technical lemma.

\begin{lemma}
\label{lemma:technical-1}
    In the proof of \cref{lemma:renyi-large-subroutine}, we have
    \[
    \abs*{ \tr\rbra*{\rho p\rbra*{{\frac{\rho}{2}}}^2} - \tr\rbra*{\rho f\rbra*{{\frac{\rho}{2}}}^2} } \leq \frac 5 {{8}} \rbra*{2\beta}^{1-\alpha} r \delta_p^\alpha + 2\varepsilon_p.
    \]
\end{lemma}

\begin{proof}
    To bound the error, we split it into two terms. 
\begin{align*}
    & \abs*{ \tr\rbra*{\rho p\rbra*{{\frac{\rho}{2}}}^2} - \tr\rbra*{\rho f\rbra*{{\frac{\rho}{2}}}^2} }
    = \abs*{ \sum_{i \in \sbra*{r}} \rbra*{ x_i p\rbra*{{\frac{x_i}{2}}}^2 - x_i f\rbra*{{\frac{x_i}{2}}}^2 } } \\
    & \qquad \leq \sum_{i \in \sbra*{r} \colon x_i \leq {2}\delta_p} \abs*{ x_i p\rbra*{{\frac{x_i}{2}}}^2 - x_i f\rbra*{{\frac{x_i}{2}}}^2 } + \sum_{i \in \sbra*{r} \colon x_i > {2}\delta_p} \abs*{ x_i p\rbra*{{\frac{x_i}{2}}}^2 - x_i f\rbra*{{\frac{x_i}{2}}}^2 }.
\end{align*}
For the first term, we only consider $x_i$'s such that $x_i \leq {2}\delta_p$. In this case, $\abs{p\rbra{{x_i/2}}} \leq 2f\rbra{\delta_p}$. Thus, we have
\begin{align*}
    \sum_{i \in \sbra*{r} \colon x_i \leq {2}\delta_p} \abs*{ x_i p\rbra*{{\frac{x_i}{2}}}^2 - x_i f\rbra*{{\frac{x_i}{2}}}^2 }
    & \leq \sum_{i \in \sbra*{r} \colon x_i \leq {2}\delta_p} \abs{x_i} \rbra*{ p\rbra*{{\frac{x_i}{2}}}^2 + f\rbra*{{\frac{x_i}{2}}}^2 } \\
    & \leq \sum_{i \in \sbra*{r} \colon x_i \leq {2}\delta_p} {2} \delta_p \rbra*{ 4 f\rbra*{\delta_p}^2 + f\rbra*{\delta_p}^2 } \\
    & \leq {10} r \delta_p f\rbra{\delta_p}^2 \\
    & = \frac 5 {{8}} \rbra*{2\beta}^{1-\alpha} r \delta_p^\alpha.
\end{align*}
For the last term, we only consider $x_i$'s such that $x_i > {2}\delta_p$. In this case, 
\begin{align*}
    \sum_{i \in \sbra*{r} \colon x_i > {2}\delta_p} \abs*{ x_i p\rbra*{{\frac{x_i}{2}}}^2 - x_i f\rbra*{{\frac{x_i}{2}}}^2 }
    & = \sum_{i \in \sbra*{r} \colon x_i > {2}\delta_p} \abs*{x_i} \abs*{p\rbra*{{\frac{x_i}{2}}} + f\rbra*{{\frac{x_i}{2}}}} \abs*{p\rbra*{{\frac{x_i}{2}}} - f\rbra*{{\frac{x_i}{2}}}} \\
    & \leq \sum_{i \in \sbra*{r} \colon x_i > {2}\delta_p} 2\abs*{x_i} \varepsilon_p \\
    & \leq \sum_{i \in \sbra*{r}} 2 x_i \varepsilon_p \\
    & = 2\varepsilon_p.
\end{align*}
Combining both cases, we have
\[
\abs*{ \tr\rbra*{\rho p\rbra*{\frac{\rho}{2}}^2} - \tr\rbra*{\rho f\rbra*{\frac{\rho}{2}}^2} } \leq \frac 5 {{8}} \rbra*{2\beta}^{1-\alpha} r \delta_p^\alpha + 2\varepsilon_p.
\]
\end{proof}

\subsubsection{Sample access}

We state the function $\texttt{estimate\_R\'enyi\_gt1\_promise}\rbra{\alpha, P, \varepsilon, \delta}$ in \cref{algo:renyi-main-intro} again in \cref{algo:esimate-large-promise} with detailed constraints. 

\begin{algorithm}[!htp]
    \caption{$\texttt{estimate\_R\'enyi\_gt1\_promise}\rbra{\alpha, P, \varepsilon, \delta}$ --- \textit{quantum sample algorithm}}
    \label{algo:esimate-large-promise}
    \begin{algorithmic}[1]
    \Require $\alpha > 1$, $\varepsilon \in \rbra{0, 1}$, $\delta \in \rbra{0, 1}$, access to $N$-dimensional quantum state $\rho$ of rank $r$, promise that $P \leq P_{\alpha}\rbra{\rho} \leq 10P$, and $\texttt{R\'enyi\_gt1\_subroutine}\rbra{\alpha, P, \delta_p, \varepsilon_p, \delta_Q}$ defined by \cref{algo:sub-renyi-gt1}.
    \Ensure $\widetilde P$ such that $\rbra{1 - \varepsilon} P_{\alpha}\rbra{\rho} \leq \widetilde P \leq \rbra{1 + \varepsilon} P_{\alpha}\rbra{\rho}$ with probability $\geq 1 - \delta$. 

    \State $\beta \gets \min\cbra{\rbra{10P}^{1/\alpha}, 1/2}$, $m \gets \ceil{8\ln\rbra{1/\delta}}$, $\delta_p \gets {\frac{1}{2}} \rbra*{\frac{P\varepsilon}{40r}}^{1/\alpha}$.

    \State $\varepsilon_p \gets \frac{\rbra{{4}\beta}^{1-\alpha}P\varepsilon}{256}$, $\delta_Q \gets \frac{\rbra{{4}\beta}^{1-\alpha}P\varepsilon}{128r}$, $\delta_a \gets \frac{\rbra{{4}\beta}^{1-\alpha}P\varepsilon}{128}$, and $k \gets \ceil*{\frac{65536}{\rbra{{4}\beta}^{1-\alpha}P\varepsilon^2}}$.

    \For {$j = 1 \dots m$}
        \For {$i = 1 \dots k$}
            \State Prepare $\sigma = \mathsf{Samplize}_{\delta_a}\ave{\texttt{R\'enyi\_gt1\_subroutine}\rbra{\alpha, P, \delta_p, \varepsilon_p, \delta_Q}}\sbra{\rho}\rbra{\rho \otimes \ket{0}\bra{0}^{\otimes a}}$.
            \State Measure $\sigma$ in the computational basis. 
            \State Let $X_i$ be $1$ if the outcome is $\ket{0}^{\otimes a}$, and $0$ otherwise. 
        \EndFor
        \State $\hat P_j \gets 16\rbra{{4}\beta}^{\alpha - 1} \sum_{i \in \sbra{k}} X_i$.
    \EndFor

    \State $\widetilde P \gets $ the median of $\hat P_j$ for $j \in \sbra{m}$. 

    \State \Return $\widetilde P$.
    
    \end{algorithmic}
\end{algorithm}

\begin{lemma}
\label{lemma:renyi-gt1-promise}
    Suppose $\alpha > 1$ is a constant, and $\rho$ is an $N$-dimensional quantum state of rank $r$ with a promise that $P \leq P_\alpha\rbra{\rho} \leq 10P$ for some $P > 0$.
    Then, \cref{algo:esimate-large-promise} with sample access to $\rho$, with probability $\geq 1 - \delta$, outputs an estimate $\widetilde P$ of $P_\alpha\rbra{\rho}$ such that $\rbra{1-\varepsilon}P_\alpha\rbra{\rho} \leq \widetilde P \leq \rbra{1+\varepsilon} P_\alpha\rbra{\rho}$, with sample complexity $M$ and time complexity $T\rbra{\alpha, P, \varepsilon, \delta} = O\rbra{M\log\rbra{N}}$, where
    \[
    M = O\rbra*{\frac{r^{\frac{2}{\alpha}}}{P^{\frac{4}{\alpha}}\varepsilon^{3+\frac{2}{\alpha}}} \log^4\rbra*{\frac{r}{P\varepsilon}} \log\rbra*{\frac 1 \delta}}.
    \]
\end{lemma}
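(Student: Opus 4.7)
The plan is to chain together three ingredients: (i) the ideal block-encoding analysis of \cref{lemma:renyi-large-subroutine}, (ii) the samplizer guarantee of \cref{lemma:block-encoding-to-sample}, and (iii) standard amplification by averaging followed by the median-of-means trick. By \cref{lemma:block-encoding-to-sample} applied to $\texttt{R\'enyi\_gt1\_subroutine}(\alpha,P,\delta_p,\varepsilon_p,\delta_Q)$ with precision $\delta_a$, there exists a block-encoding $U_\rho$ of $\rho$ such that the indicator $X_i$ produced in one inner-loop iteration satisfies $\abs*{\mathbb{E}[X_i] - p_a} \leq \delta_a$, where $p_a$ is the ideal probability from \cref{lemma:renyi-large-subroutine}. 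Feeding in the scripted choices $\delta_p = (P\varepsilon/40r)^{1/\alpha}$, $\varepsilon_p = (2\beta)^{1-\alpha}P\varepsilon/256$, $\delta_Q = (2\beta)^{1-\alpha}P\varepsilon/(128r)$, and $\delta_a = (2\beta)^{1-\alpha}P\varepsilon/128$ makes each of the four bias contributions in the bound from \cref{lemma:renyi-large-subroutine} at most $(2\beta)^{1-\alpha}P\varepsilon/64$, hence
\[
\abs*{16(2\beta)^{\alpha-1}\mathbb{E}[X_i] - P_\alpha(\rho)} \leq P\varepsilon \leq \varepsilon\, P_\alpha(\rho).
\]

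Next I promote the empirical average $\hat P_j$ into a constant-probability multiplicative estimator. Because $X_i \in \{0,1\}$, its variance is at most $\mathbb{E}[X_i] = O((2\beta)^{1-\alpha}P)$ under the promise $P_\alpha(\rho) \leq 10P$, so Chebyshev's inequality (\cref{thm:chebyshev}) gives that $\hat P_j$ lies within additive error $\varepsilon P/4$ of $16(2\beta)^{\alpha-1}\mathbb{E}[X_i]$ with probability at least $3/4$ as soon as $k = \Theta((2\beta)^{\alpha-1}/(P\varepsilon^2))$, matching the algorithm's choice. Combining with the bias, each $\hat P_j$ is a $(1\pm\varepsilon)$-multiplicative estimate of $P_\alpha(\rho)$ with probability $\geq 3/4$. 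Taking the median of $m = \lceil 8\ln(1/\delta)\rceil$ independent copies and applying Hoeffding's inequality (\cref{thm:hoeffding}) to the indicator of trial failure then bounds the probability that more than half of the $\hat P_j$ fail by $\exp(-m/8) \leq \delta$, so $\widetilde P$ satisfies the promised guarantee.

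Finally, the sample complexity is $mk$ times the per-invocation cost of the samplizer, which is $\widetilde O(Q^2/\delta_a) = \widetilde O(\delta_p^{-2}\delta_a^{-1})$ with $Q = \widetilde O(1/\delta_p)$ by \cref{lemma:block-encoding-to-sample}. Substituting all parameters and using $(2\beta)^{2(\alpha-1)} = O(P^{2-2/\alpha})$ (which follows from $\beta \leq (10P)^{1/\alpha}$ and constant $\alpha$) collapses the dependence on $\beta$ and yields the claimed bound $M$; the time complexity only picks up an additional $\log N$ factor from the gate count of the samplizer applied to states on $O(\log N)$ qubits.

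I expect the main obstacle to be arithmetic bookkeeping: verifying that the scripted parameter choices simultaneously (a) respect the precondition $\delta_p \leq \beta$ of \cref{lemma:renyi-large-subroutine}, (b) keep each of the four bias contributions below a common fraction of $(2\beta)^{1-\alpha}P\varepsilon$, and (c) produce a value of $k$ that matches the Chebyshev-derived sample count. Conceptually no new idea is required beyond plugging the samplizer into \cref{lemma:renyi-large-subroutine} and combining with median-of-means boosting.
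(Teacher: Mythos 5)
Your proof is correct and follows essentially the same chain as the paper's own argument: apply \cref{lemma:renyi-large-subroutine} to bound the ideal acceptance probability, use the samplizer (\cref{lemma:block-encoding-to-sample}) to transfer that bound to $\mathbb{E}[X_i]$ at the cost of an additional $\delta_a$, apply Chebyshev (\cref{thm:chebyshev}) to the empirical mean, boost the $3/4$-confidence estimate to $1-\delta$ via a median of $m=\lceil 8\ln(1/\delta)\rceil$ independent trials and Hoeffding (\cref{thm:hoeffding}), and then substitute parameters and the bound $(2\beta)^{2(\alpha-1)} = O(P^{2-2/\alpha})$ to obtain $M$. One small imprecision: the bound from \cref{lemma:renyi-large-subroutine} itself has three error terms, not four --- the fourth ($\delta_a$) is introduced by the samplizer, not by that lemma --- but this does not affect the arithmetic or the conclusion.
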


To analyze the correctness of the algorithms, we need Chebyshev's inequality as follows. 

\begin{theorem} [Chebyshev's inequality, {\cite[Page 233]{Fel68}}]
\label{thm:chebyshev}
    Suppose $X$ is a real random variable with expectation $\mathbb{E}\sbra{X}$ and variance $\operatorname{Var}\sbra{X}$. Then, for every $\varepsilon > 0$, 
    \[
    \Pr \sbra*{ \abs*{X - \mathbb{E}\sbra*{X}} \geq \varepsilon } \leq \frac{\operatorname{Var}\sbra*{X}}{\varepsilon^2}.
    \]
\end{theorem}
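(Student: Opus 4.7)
The plan is to reduce Chebyshev's inequality to Markov's inequality applied to the squared deviation. First I would introduce the non-negative random variable $Y = \rbra*{X - \mathbb{E}\sbra*{X}}^2$, noting that $\mathbb{E}\sbra*{Y} = \operatorname{Var}\sbra*{X}$ by definition, and that the events $\cbra*{\abs*{X - \mathbb{E}\sbra*{X}} \geq \varepsilon}$ and $\cbra*{Y \geq \varepsilon^2}$ coincide (since squaring is monotone on $[0,\infty)$ and $\varepsilon > 0$). It therefore suffices to establish the Markov-style bound $\Pr\sbra*{Y \geq \varepsilon^2} \leq \mathbb{E}\sbra*{Y}/\varepsilon^2$.

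For this bound I would give a one-line proof of Markov's inequality itself: for any non-negative random variable $Y$ and any $a > 0$, the elementary chain
\[
\mathbb{E}\sbra*{Y} \;\geq\; \mathbb{E}\sbra*{Y \cdot \mathbf{1}_{Y \geq a}} \;\geq\; a \cdot \Pr\sbra*{Y \geq a}
\]
holds, where the first inequality uses $Y \geq 0$ and the second uses the pointwise bound $Y \cdot \mathbf{1}_{Y \geq a} \geq a \cdot \mathbf{1}_{Y \geq a}$. Setting $a = \varepsilon^2$ and combining with the event equivalence above yields
\[
\Pr\sbra*{\abs*{X - \mathbb{E}\sbra*{X}} \geq \varepsilon} \;=\; \Pr\sbra*{Y \geq \varepsilon^2} \;\leq\; \frac{\mathbb{E}\sbra*{Y}}{\varepsilon^2} \;=\; \frac{\operatorname{Var}\sbra*{X}}{\varepsilon^2},
\]
which is exactly the claimed inequality.

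There is no genuine obstacle here: Chebyshev's inequality is a classical textbook result (which is precisely why the theorem statement simply cites \cite{Fel68}). The only mild subtlety is making sure the weak inequality $\geq \varepsilon$ in the statement is propagated consistently through the event equivalence and the Markov chain; both steps above use weak inequalities, so this is automatic. Alternatively, I could bypass naming Markov's inequality altogether with the self-contained integral split
\[
\operatorname{Var}\sbra*{X} \;=\; \int \rbra*{x - \mathbb{E}\sbra*{X}}^2 \, dF\rbra*{x} \;\geq\; \int_{\abs*{x - \mathbb{E}\sbra*{X}} \geq \varepsilon} \rbra*{x - \mathbb{E}\sbra*{X}}^2 \, dF\rbra*{x} \;\geq\; \varepsilon^2 \Pr\sbra*{\abs*{X - \mathbb{E}\sbra*{X}} \geq \varepsilon},
\]
followed by rearrangement; this version handles the discrete and continuous cases uniformly via the distribution function $F$ of $X$.
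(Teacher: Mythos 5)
Your proof is correct and is the standard textbook argument: reduce to Markov's inequality applied to the non-negative random variable $\rbra*{X - \mathbb{E}\sbra*{X}}^2$. The paper itself does not prove this theorem — it simply states it and cites Feller's textbook, as is customary for a classical result — so there is no paper proof to compare against; your argument fills that gap in the expected way, and the alternative integral-split version you mention is also fine.
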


Now we are ready to prove \cref{lemma:renyi-gt1-promise}. 

\begin{proof}[Proof of \cref{lemma:renyi-gt1-promise}]
    Let $\beta = \min\cbra{\rbra*{10 P}^{1/\alpha}, 1/2}$. 
    We take $\delta_p \in (0, \beta]$, $\varepsilon_p \in (0, 1/2]$, and $\delta_Q \in \rbra{0, 1}$ to be determined. By \cref{lemma:renyi-large-subroutine}, measuring the quantum state 
    \[
    \sigma = \texttt{R\'enyi\_gt1\_subroutine}\rbra{\alpha, P, \delta_p, \varepsilon_p, \delta_Q}\sbra{U_\rho}\rbra{\rho \otimes \ket{0}_a\bra{0}_a}
    \]
    in the computational basis will obtain the outcome $\ket{0}_a$ with probability $p_a$ such that
    \begin{equation}
    \label{eq:pa-Pa}
        \abs*{ p_a - \frac 1 {16} \rbra*{{4}\beta}^{1-\alpha} P_{\alpha}\rbra*{\rho} } \leq \frac 5 {{8}} \rbra*{2\beta}^{1-\alpha} r \delta_p^\alpha + 2\varepsilon_p + 2r\delta_Q,
    \end{equation}
    with query complexity $Q$ and time complexity $O\rbra{Q\log\rbra{N}}$, where $Q = O\rbra*{\frac{1}{\delta_p}\log\rbra*{\frac{1}{\delta_p\varepsilon_p}}}$.

    Let $\delta_a > 0$ to be determined. 
    By \cref{lemma:block-encoding-to-sample}, there is a unitary operator $U_\rho$ that is a block-encoding of $\rho$ such that
    \[
    \begin{aligned}
    \big\|\mathsf{Samplize}_{\delta_a}\ave{&\texttt{R\'enyi\_gt1\_subroutine}\rbra{\alpha, P, \delta_p, \varepsilon_p, \delta_Q}}\sbra{\rho} - \\
    &\texttt{R\'enyi\_gt1\_subroutine}\rbra{\alpha, P, \delta_p, \varepsilon_p, \delta_Q}\sbra{U_\rho}\big\|_{\diamond} \leq \delta_a.
    \end{aligned}
    \]
    This means that measuring 
    \[
    \widetilde\sigma = \mathsf{Samplize}_{\delta_a}\ave{\texttt{R\'enyi\_gt1\_subroutine}\rbra{\alpha, P, \delta_p, \varepsilon_p, \delta_Q}}\sbra{\rho}\rbra{\rho \otimes \ket{0}_a\bra{0}_a}
    \]
    in the computational basis will yield the outcome $\ket{0}_a$ with probability $\widetilde p_a$, where
    \begin{equation}
    \label{eq:pa-tildepa}
        \abs*{p_a - \widetilde p_a} \leq \delta_a,
    \end{equation}
    with sample complexity 
    \[
    1 + O\rbra*{\frac{Q^2}{\delta_a} \log^2\rbra*{\frac{Q}{\delta_a}}} = O\rbra*{\frac{1}{\delta_a \delta_p^2} \log^2\rbra*{\frac{1}{\delta_p\varepsilon_p}} \log^2\rbra*{\frac{1}{\delta_a\delta_p}\log\rbra*{\frac{1}{\delta_p\varepsilon_p}}}}.
    \]
    We call by algorithm $\mathcal{A}$ the above procedure that outputs $1$ with probability $\widetilde p_a$.

    We repeat algorithm $\mathcal{A}$ for $k$ times and let $X_i \in \cbra{0, 1}$ be the output of the $i$-th repetition. Let
    \begin{equation}
    \label{eq:tildeP-X}
    \hat P = 16\rbra*{{4}\beta}^{\alpha-1} X,
    \end{equation}
    where
    \[
    X = \frac 1 k \sum_{i \in \sbra{k}} X_i.
    \]
    We note that $\mathbb{E}\sbra{X} = \mathbb{E}\sbra{X_i} = \widetilde p_a$, and $\operatorname{Var}\sbra{X} = \operatorname{Var}\sbra{X_i}/k = \rbra{\widetilde p_a - \widetilde p_a^2}/k$.
    By Chebyshev's inequality (\cref{thm:chebyshev}), for every $\varepsilon_C > 0$, 
    \begin{equation}
    \label{eq:X-tildepa}
        \Pr \rbra*{ \abs*{X - \widetilde p_a} \geq \varepsilon_C } \leq \frac{\widetilde p_a - \widetilde p_a^2}{k \varepsilon_C^2}.
    \end{equation}
    From \cref{eq:pa-Pa}, \cref{eq:pa-tildepa}, \cref{eq:tildeP-X}, and \cref{eq:X-tildepa} we know that with probability $\geq 1 - \rbra{\widetilde p_a - \widetilde p_a^2}/{k \varepsilon_C^2}$,
    \[
    \abs*{\hat P - P_{\alpha}\rbra*{\rho}} \leq 5r\rbra{{2}\delta_p}^{\alpha} + 16\rbra*{{4}\beta}^{\alpha-1}\rbra*{2\varepsilon_p + 2r\delta_Q + \delta_a + \varepsilon_C}.
    \]
    By taking $\delta_p = {\frac{1}{2}} \rbra*{\frac{P\varepsilon}{40r}}^{1/\alpha}$, $\varepsilon_p = \frac{\rbra{{4}\beta}^{1-\alpha}P\varepsilon}{256}$, $\delta_Q = \frac{\rbra{{4}\beta}^{1-\alpha}P\varepsilon}{128r}$, $\delta_a = \varepsilon_C = \frac{\rbra{{4}\beta}^{1-\alpha}P\varepsilon}{128}$, and $k = \frac{65536}{\rbra{{4}\beta}^{1-\alpha}P\varepsilon^2}$, we have that, with probability $\geq 3/4$, it holds that $\abs*{\hat P - P_\alpha\rbra{\rho}} \leq P\varepsilon$, which implies
    \[
        \rbra*{1 - \varepsilon} P_{\alpha}\rbra{\rho} \leq \hat P \leq \rbra*{1 + \varepsilon} P_{\alpha}\rbra{\rho}
    \]
    by noting that $P \leq P_{\alpha}\rbra{\rho} \leq 10P$. In other words, by choosing the values of $\delta_p, \varepsilon_p, \delta_Q, \delta_a, \varepsilon_C, k$ as above, we have
    \[
    \Pr\sbra*{ \rbra*{1 - \varepsilon} P_{\alpha}\rbra{\rho} \leq \hat P \leq \rbra*{1 + \varepsilon} P_{\alpha}\rbra{\rho} } \geq \frac{3}{4}.
    \]
    We denote the algorithm described above  as $\mathcal{B}$. 

    Finally, we need to amplify the success probability from $3/4$ to $1-\delta$. To achieve this, we repeat algorithm $\mathcal{B}$ for $m = \ceil{8\ln\rbra{1/\delta}}$ times, and write $\hat P_i$ to be the $i$-th estimate for $i \in \sbra{m}$. 
    Let $\widetilde P$ be the median of all $\hat P_i$ for $i \in \sbra{m}$, then it holds that
    \[
    \Pr\sbra*{ \rbra*{1 - \varepsilon} P_{\alpha}\rbra{\rho} \leq \widetilde P \leq \rbra*{1 + \varepsilon} P_{\alpha}\rbra{\rho} } \geq 1 - \delta.
    \]
    To see this, we define $Z_i$ to be $0$ if $\rbra{1-\varepsilon} P_{\alpha}\rbra{\rho} \leq \hat P_i \leq \rbra{1+\varepsilon} P_{\alpha}\rbra{\rho}$ and $1$ otherwise. 
    Then, $\mathbb{E}\sbra{Z_i} \leq 1/4$. Let $Z = \sum_{i \in \sbra{m}} Z_i$, and note that $\mathbb{E}\sbra{Z} \leq m/4$. 
    The median $\widetilde P$ of $\hat P_i$ satisfies $\rbra{1-\varepsilon} P_{\alpha}\rbra{\rho} \leq \widetilde P \leq \rbra{1+\varepsilon} P_{\alpha}\rbra{\rho}$ with probability (by \cref{thm:hoeffding})
    \[
    \geq 1 - \Pr\sbra*{ Z \geq \frac m 2 } \geq 1 - \Pr \sbra*{ Z - \mathbb{E}\sbra*{Z} \geq \frac m 4 } \geq 1 - \exp\rbra*{\frac{m}{8}} \geq 1 - \delta.
    \]

    Now we are going to analyze the complexity. We repeat algorithm $\mathcal{B}$ for $m$ times, and each repetition of $\mathcal{B}$ repeats algorithm $\mathcal{A}$ for $k$ times.
    Thus, the total sample complexity is
    \[
    mk \cdot O\rbra*{\frac{1}{\delta_a \delta_p^2} \log^2\rbra*{\frac{1}{\delta_p\varepsilon_p}} \log^2\rbra*{\frac{1}{\delta_a\delta_p}\log\rbra*{\frac{1}{\delta_p\varepsilon_p}}}} = O\rbra*{\frac{r^{\frac{2}{\alpha}}}{P^{\frac{4}{\alpha}}\varepsilon^{3+\frac{2}{\alpha}}} \log^4\rbra*{\frac{r}{P\varepsilon}} \log\rbra*{\frac 1 \delta}},
    \]
    and the time complexity is only a $\log\rbra{N}$ factor over the sample complexity, which is
    \[
    O\rbra*{\frac{r^{\frac{2}{\alpha}}}{P^{\frac{4}{\alpha}}\varepsilon^{3+\frac{2}{\alpha}}} \log^4\rbra*{\frac{r}{P\varepsilon}} \log\rbra*{\frac 1 \delta} \log\rbra{N}}.
    \]
\end{proof}

Finally, we will use the estimate of $\widetilde P$ of $P_\alpha\rbra{\rho}$ with multiplicative error to estimate $S_\alpha\rbra{\rho}$ with additive error. 
This process is simple and stated in \cref{algo:esimate-gt1}, and we include it here for completeness. 

\begin{algorithm}[!htp]
    \caption{$\texttt{estimate\_R\'enyi\_gt1\_main}\rbra{\alpha, \varepsilon, \delta}$ --- \textit{quantum sample algorithm}}
    \label{algo:esimate-gt1}
    \begin{algorithmic}[1]
    \Require $\alpha > 1$, $\varepsilon \in \rbra{0, 1}$, $\delta \in \rbra{0, 1}$, sample access to quantum state $\rho$ of rank $r$, and $\texttt{estimate\_R\'enyi\_gt1}\rbra{\alpha, \varepsilon, \delta}$ defined by \cref{algo:esimate-large}.
    \Ensure $\widetilde S$ such that $\abs{\widetilde S - S_{\alpha}\rbra{\rho}} \leq \varepsilon$ with probability $\geq 1 - \delta$. 

    \State $\widetilde P \gets \texttt{estimate\_R\'enyi\_gt1}\rbra{\alpha, \rbra{\alpha-1}\varepsilon/2, \delta}$.

    \State $\widetilde S \gets \frac{1}{1-\alpha}\ln\rbra{\widetilde P}$.
    
    \State \Return $\widetilde S$.
    \end{algorithmic}
\end{algorithm}

\begin{theorem}
\label{thm:estimate-renyi-gt1}
    Suppose $\alpha > 1$ is a constant, and $N$-dimensional quantum state $\rho$ is of rank $r$.
    Then, \cref{algo:esimate-gt1}, with probability $\geq 1 - \delta$, outputs an estimate $\widetilde S$ of $S_\alpha\rbra{\rho}$ within additive error $\varepsilon$, using $M$ samples of $\rho$ and $O\rbra{M \log\rbra{N}}$ one- and two-qubit quantum gates, where
    \[
    M = O\rbra*{r^{4-\frac{2}{\alpha}} \rbra*{\log^6\rbra*{r} + \frac{1}{\varepsilon^{3+\frac{2}{\alpha}} } \log^4\rbra*{\frac{r}{\varepsilon}}} \log\rbra*{\frac 1 \delta}}.
    \]
\end{theorem}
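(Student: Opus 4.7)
The plan is to assemble the theorem from three ingredients already established in the paper: the annealing reduction \cref{lemma:annealing} that converts the general estimation of $P_\alpha\rbra{\rho}$ into the bounded-promise problem, the promise-based estimator of \cref{lemma:renyi-gt1-promise}, and the elementary translation of a multiplicative estimate of $P_\alpha\rbra{\rho}$ into an additive estimate of $S_\alpha\rbra{\rho} = \frac{1}{1-\alpha}\ln\rbra{P_\alpha\rbra{\rho}}$.

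For correctness, I first set $\varepsilon' = \rbra{\alpha-1}\varepsilon/2$ and invoke $\texttt{estimate\_R\'enyi\_gt1}\rbra{\alpha, \varepsilon', \delta}$ as in \cref{algo:esimate-gt1}. By \cref{lemma:annealing}, the returned $\widetilde P$ satisfies $\rbra{1-\varepsilon'}P_\alpha\rbra{\rho} \leq \widetilde P \leq \rbra{1+\varepsilon'}P_\alpha\rbra{\rho}$ with probability at least $1-\delta$. Using $\abs{\ln\rbra{1+t}} \leq 2\abs{t}$ for $\abs{t} \leq 1/2$ (which we may ensure by a trivial reduction when $\varepsilon'$ is too large, e.g. truncating $\varepsilon$ to some $O\rbra{1/\rbra{\alpha-1}}$ threshold since $\alpha$ is constant), the output $\widetilde S = \frac{1}{1-\alpha}\ln\rbra{\widetilde P}$ satisfies $\abs{\widetilde S - S_\alpha\rbra{\rho}} \leq \frac{2\varepsilon'}{\alpha-1} = \varepsilon$, as required.

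For complexity, I unfold the recursion in \cref{lemma:annealing} and substitute the bound from \cref{lemma:renyi-gt1-promise}. At depth $k$ we have $\alpha_k = \alpha/\lambda^k$ with $\lambda = 1+1/\ln r$, and the promise $P \geq r^{1-\alpha_k}/10$ gives $P^{-4/\alpha_k} = O\rbra{r^{4-4/\alpha_k}}$, so each call costs $O\rbra{r^{4-2/\alpha_k}\varepsilon_k^{-\rbra{3+2/\alpha_k}}\log^4\rbra{r/\varepsilon_k}\log\rbra{1/\delta_k}\log N}$. The top level ($k=0$, $\varepsilon_0=\varepsilon'$, $\delta_0=\delta/2$) contributes the $\varepsilon^{-\rbra{3+2/\alpha}}\log^4\rbra{r/\varepsilon}$ part of the target bound. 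Each intermediate level ($k\geq 1$, $\varepsilon_k = 1/4$, $\delta_k = \delta/2^{k+1}$) contributes $O\rbra{r^{4-2/\alpha_k}\log^4\rbra{r}\rbra{k+\log\rbra{1/\delta}}\log N}$. The recursion terminates at depth $K = O\rbra{\ln\alpha \cdot \ln r} = O\rbra{\ln r}$ for constant $\alpha$, and the factors $r^{4-2/\alpha_k}$ are geometrically decreasing in $k$ (first ratio $r^{-2\rbra{\lambda-1}/\alpha} = e^{-2/\alpha}$, bounded away from $1$ since $\alpha$ is constant). Crudely bounding $\log\rbra{1/\delta_k} = O\rbra{K+\log\rbra{1/\delta}} = O\rbra{\log r + \log\rbra{1/\delta}}$ and using the depth $K = O\rbra{\log r}$ yields the intermediate total $O\rbra{r^{4-2/\alpha}\log^6\rbra{r}\log\rbra{1/\delta}\log N}$. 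Combined with the top-level cost this gives the claimed $M$, and the extra $\log N$ in time is inherited from \cref{lemma:renyi-gt1-promise}.

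The main bookkeeping challenge is the logarithm accounting across the $O\rbra{\log r}$ recursive levels: the $\log\rbra{1/\delta_k}$ amplification at each level, combined with the depth $K = O\rbra{\log r}$, accounts for the extra $\log^2 r$ that inflates the $\log^4 r$ from \cref{lemma:renyi-gt1-promise} up to $\log^6 r$. One has to verify that the geometric decay of $r^{4-2/\alpha_k}$ along the recursion is fast enough (ratio bounded away from $1$) so that the intermediate sum does not exceed its first term by more than a constant factor; this is where the constancy of $\alpha$ is used. Apart from this careful accounting, the proof is a direct composition of the two cited lemmas with the elementary $\ln$-to-linear conversion.
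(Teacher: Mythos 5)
Your proposal is correct and follows the paper's proof almost verbatim: assemble \cref{lemma:annealing} with \cref{lemma:renyi-gt1-promise}, unfold the recursion, and translate the multiplicative estimate of $P_\alpha(\rho)$ into an additive one for $S_\alpha(\rho)$ via $\widetilde S = \frac{1}{1-\alpha}\ln\widetilde P$ with $\varepsilon' = (\alpha-1)\varepsilon/2$. Your explicit caveat that $\varepsilon'$ may exceed $1$ for large constant $\alpha$ (so one should truncate $\varepsilon$) is a small gap the paper glosses over, and your fix is fine.

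One point in your bookkeeping is misstated though: the $\log^6(r)$ factor does \emph{not} come from, nor require, geometric decay of $r^{4-2/\alpha_k}$. The paper's summation (and your actual computation) simply replaces each $r^{4-2/\alpha_k}$ by the crude upper bound $r^{4-2/\alpha}$ (valid since $\alpha_k \le \alpha$), and then multiplies the $\log^4(r)$ per level by the depth $K = \lfloor\log_\lambda\alpha\rfloor = O(\ln\alpha\cdot\ln r) = O(\log r)$ and by $\max_k\log(1/\delta_k) = O(\log r + \log(1/\delta))$. The constancy of $\alpha$ is what bounds $K = O(\log r)$, not what ensures the geometric ratio. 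If one \emph{did} exploit the geometric decay $r^{4-2/\alpha_{k+1}}/r^{4-2/\alpha_k} = e^{-2/\alpha_k} \le e^{-2/\alpha} < 1$, the sum would be dominated by its first term and you would obtain the strictly tighter $O(r^{4-2/\alpha}\log^4(r)\log(1/\delta)\log N)$, not $\log^6(r)$ --- so the sentence ``the intermediate sum does not exceed its first term by more than a constant factor; this is where the constancy of $\alpha$ is used'' is internally inconsistent with the $\log^6(r)$ you then claim. None of this affects the correctness of your final bound, since the crude argument you actually execute is exactly the paper's.
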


\begin{proof}
    By \cref{lemma:renyi-gt1-promise}, we can implement the procedure $\texttt{estimate\_R\'enyi\_gt1\_promise}\rbra{\alpha, \varepsilon, P, \delta}$ required in \cref{algo:esimate-large}.
    Then, by \cref{lemma:annealing}, \cref{algo:esimate-large} returns an estimate $\widetilde P$ of $P_{\alpha}\rbra{\rho}$ such that
    \[
    \rbra{1 - \epsilon} P_\alpha\rbra{\rho} \leq \widetilde P \leq \rbra{1 + \epsilon} P_\alpha\rbra{\rho}
    \]
    with probability $\geq 1 - \delta$.
    By taking $\widetilde S = \frac{1}{1-\alpha} \ln \rbra{\widetilde P}$, we have that
    \[
    S_\alpha\rbra{\rho} - \frac{\ln\rbra{1 + \epsilon}}{\alpha - 1} \leq \widetilde S \leq S_\alpha\rbra{\rho} + \frac{\ln\rbra{1 - \epsilon}}{1 - \alpha}.
    \]
    By letting $\epsilon = \rbra{\alpha - 1}\varepsilon/2$, we have $S_\alpha\rbra{\rho} - \varepsilon \leq \widetilde S \leq S_\alpha\rbra{\rho} + \varepsilon$, as required.

    Now we are going to analyze the complexity. 
    Let $\lambda = 1 + 1/\ln\rbra{r}$. 
    Combining \cref{lemma:annealing} and \cref{lemma:renyi-gt1-promise}, we have the recurrence relation:
    if $1 < \alpha \leq \lambda$, then 
    \[
    Q\rbra*{\alpha, \varepsilon, \delta} = O \rbra*{ \frac{ r^{\frac{2}{\alpha}} } { \varepsilon^{3+\frac{2}{\alpha}} } \log^4\rbra*{\frac{r}{\varepsilon}} \log\rbra*{\frac 1 \delta} \log\rbra*{N} };
    \]
    and if $\alpha > \lambda$, then
    \begin{align*}
        Q\rbra*{\alpha, \varepsilon, \delta} 
        = Q\rbra*{\frac{\alpha}{\lambda}, \frac{1}{4}, \frac{\delta}{2}} + O \rbra*{ \frac{ r^{4-\frac{2}{\alpha}} } { \varepsilon^{3+\frac{2}{\alpha}} } \log^4\rbra*{\frac{r}{\varepsilon}} \log\rbra*{\frac 1 \delta} \log\rbra*{N} }.
    \end{align*}
    For the special case of $\varepsilon = 1/4$, we have: if $1 < \alpha \leq \lambda$, then
    \[
    Q\rbra*{\alpha, \frac{1}{4}, \delta} = O \rbra*{ r^{\frac{2}{\alpha}} \log^4\rbra*{r} \log\rbra*{\frac 1 \delta} \log\rbra*{N} };
    \]
    and if $\alpha > \lambda$, then
    \begin{align*}
        Q\rbra*{\alpha, \frac{1}{4}, \delta} = Q\rbra*{\frac{\alpha}{\lambda}, \frac{1}{4}, \frac{\delta}{2}} + O \rbra*{ r^{4 - \frac{2}{\alpha}} \log^4\rbra*{r} \log\rbra*{\frac 1 \delta} \log\rbra*{N} },
    \end{align*}
    where both cases satisfy that
    \begin{align*}
        Q\rbra*{\alpha, \frac{1}{4}, \delta} 
        & \leq \sum_{k=0}^{\floor{\log_\lambda\rbra{\alpha}}} O\rbra*{  r^{4-\frac{2}{\alpha}} \log^4\rbra*{r} \log\rbra*{\frac {2^k} \delta} \log\rbra*{N} } \\
        & = O\rbra*{ r^{4-\frac{2}{\alpha}} \log^6\rbra*{r} \log\rbra*{\frac 1 \delta} \log\rbra*{N} }.
    \end{align*}
    
    Finally, we have
    \[
    Q\rbra{\alpha, \varepsilon, \delta} = O\rbra*{ r^{4-\frac{2}{\alpha}} \rbra*{\log^6\rbra*{r} + \frac{1}{\varepsilon^{3+\frac{2}{\alpha}} } \log^4\rbra*{\frac{r}{\varepsilon}}} \log\rbra*{\frac 1 \delta} \log\rbra*{N} }.
    \]
\end{proof}

\subsection{The case of \texorpdfstring{$0 < \alpha < 1$}{0 < α < 1}} \label{sec:alpha-lt1}

We first provide an overview of our quantum algorithm (see \cref{algo:renyi-lt1-main-intro}) for estimating $\alpha$-R\'enyi entropy for $0 < \alpha < 1$ as that for $\alpha > 1$ in \cref{algo:renyi-main-intro}.
Then, we will analyze it in details.

We first explain two functions (similar to the case of $\alpha > 1$):

\begin{itemize}
    \item $\texttt{estimate\_R\'enyi\_lt1}\rbra{\alpha, \varepsilon, \delta}$: return an estimate $\widetilde P$ such that $\rbra{1-\varepsilon}P_\alpha\rbra{\rho} \leq \widetilde P \leq \rbra{1+\varepsilon}P_\alpha\rbra{\rho}$ with probability $\geq 1 - \delta$.
    \item $\texttt{estimate\_R\'enyi\_lt1\_promise}\rbra{\alpha, P, \varepsilon, \delta}$: return an estimate $\widetilde P$ such that $\rbra{1-\varepsilon}P_\alpha\rbra{\rho} \leq \widetilde P \leq \rbra{1+\varepsilon}P_\alpha\rbra{\rho}$ with probability $\geq 1 - \delta$, with the promise that $P \leq P_\alpha\rbra{\rho} \leq 10P$.
\end{itemize}

The function $\texttt{estimate\_R\'enyi\_lt1}\rbra{\alpha, \varepsilon, \delta}$ recursively estimates $P_\alpha\rbra{\rho}$ by reducing it to a special case of estimating $P_\alpha\rbra{\rho}$ with a promise that $P \leq P_\alpha\rbra{\rho} \leq 10P$ for some given $P > 0$, where the promised problem is solved by $\texttt{estimate\_R\'enyi\_lt1\_promise}\rbra{\alpha, P, \varepsilon, \delta}$ (see \cref{sec:recursive-lt1} for the correctness of the reduction).

\subsubsection{Recursive framework} \label{sec:recursive-lt1}

For the case of $0 < \alpha < 1$, we first provide an abstract algorithm (\cref{algo:esimate-small}) with a structure similar to the one (\cref{algo:esimate-large}) for $\alpha > 1$.
Here, $\texttt{\textup{estimate\_R\'enyi\_lt1\_promise}}\rbra{\alpha, P, \varepsilon, \delta}$ is a quantum algorithm that, with probability $\geq 1 - \delta$, outputs an estimate of $P_\alpha\rbra{\rho}$ such that $\rbra{1-\varepsilon}P_\alpha\rbra{\rho} \leq \widetilde P \leq \rbra{1+\varepsilon} P_\alpha\rbra{\rho}$, given a promise that $P \leq P_\alpha\rbra{\rho} \leq 10P$, where $P$ is the prior knowledge of $P_\alpha\rbra{\rho}$ given as input.

\begin{algorithm}[!htp]
    \caption{$\texttt{estimate\_R\'enyi\_lt1}\rbra{\alpha, \varepsilon, \delta}$ --- \textit{quantum sample access}}
    \label{algo:esimate-small}
    \begin{algorithmic}[1]
    \Require $0 < \alpha < 1$, $\varepsilon \in \rbra{0, 1}$, $\delta \in \rbra{0, 1}$, and access to $N$-dimensional quantum state $\rho$ of rank $r$.
    \Ensure $\widetilde P$ such that $\rbra{1 - \varepsilon} P_{\alpha}\rbra{\rho} \leq \widetilde P \leq \rbra{1 + \varepsilon} P_{\alpha}\rbra{\rho}$ with probability $\geq 1 - \delta$. 

    \State $\lambda \gets 1 - 1 / \ln \rbra{r}$.
    \If {$\alpha \geq \lambda$}
        \State $P \gets 1$.
    \Else
        \State $P' \gets \texttt{estimate\_R\'enyi\_lt1}\rbra{\alpha/\lambda, 1/4, \delta/2}$. \label{line:call-small}
        \State $P \gets \rbra{4P'/5}^\lambda$.
    \EndIf
    
    \State \Return $\texttt{estimate\_R\'enyi\_lt1\_promise}\rbra{\alpha, P, \varepsilon, \delta/2}$.
    \end{algorithmic}
\end{algorithm}

\begin{lemma}
\label{lemma:annealing-small}
    Suppose $0 < \alpha < 1$, and $\texttt{\textup{estimate\_R\'enyi\_lt1\_promise}}\rbra{\alpha, P, \varepsilon, \delta}$ is a quantum algorithm with time complexity $T\rbra{\alpha, P, \varepsilon, \delta}$ that, with probability $\geq 1 - \delta$, outputs an estimate of $P_\alpha\rbra{\rho}$ such that $\rbra{1-\varepsilon}P_\alpha\rbra{\rho} \leq \widetilde P \leq \rbra{1+\varepsilon} P_\alpha\rbra{\rho}$, given a promise that $P \leq P_\alpha\rbra{\rho} \leq 10P$. Then, with probability $\geq 1 - \delta$, \cref{algo:esimate-small} outputs an estimate $\widetilde P$ of $P_\alpha\rbra{\rho}$ such that $\rbra{1-\varepsilon}P_\alpha\rbra{\rho} \leq \widetilde P \leq \rbra{1+\varepsilon} P_\alpha\rbra{\rho}$ with time complexity $Q\rbra{\alpha, \varepsilon, \delta}$, where
    \[
    Q\rbra*{\alpha, \varepsilon, \delta} = \begin{cases}
        T\rbra*{\alpha, \varepsilon, 1, \delta/2}, & \lambda \leq \alpha < 1, \\
        Q\rbra*{\alpha/\lambda, 1/4, \delta/2} + \sup\limits_{P \in \sbra*{1/10, r^{1-\alpha}}} T\rbra{\alpha, \varepsilon, P, \delta/2}, & 0 < \alpha < \lambda,
    \end{cases}
    \]
    in which $\lambda = 1 - 1/\ln\rbra{r}$ and $r$ is the rank of $\rho$. 
\end{lemma}
\begin{proof}
    Suppose $\rho = \sum_{i \in \sbra{N}} x_i \ket{\psi_i} \bra{\psi_i}$, where $x_i \geq 0 $, $\sum_{i \in \sbra{N}} x_i = 1$, and $\cbra{\ket{\psi_i}}$ is an orthonormal basis.
    \begin{itemize}
        \item For the basis case that $\lambda \leq \alpha < 1$, 
        \[
        P_{\alpha}\rbra*{\rho} \leq r^{1-\alpha} \leq r^{1/\ln\rbra*{r}} = e.
        \]
        On the other hand, $P_\alpha\rbra{\rho} \geq 1$. These together yield that $P = 1 \leq P_{\alpha}\rbra{\rho} \leq e \leq 10 P$, which satisfies the required condition of $\texttt{estimate\_R\'enyi\_lt1\_promise}\rbra{\alpha, \varepsilon, P, \delta/2}$. Therefore, \cref{algo:esimate-small} will output an estimate of $P_\alpha\rbra{\rho}$ within multiplicative error $\varepsilon$ with probability $\geq 1 - \delta/2 \geq 1 - \delta$, with time complexity $Q\rbra{\alpha, \varepsilon, \delta} = T\rbra{\alpha, \varepsilon, P, \delta/2}$. 
        \item Now we consider the case that $\alpha < \lambda$. Let $\alpha' = \alpha/\lambda > \alpha$. By \cref{lemma:p-alpha-beta}, we have
        \[
        \rbra*{P_{\alpha'}\rbra{\rho}}^{\lambda} = \rbra*{P_{\alpha'}\rbra{\rho}}^{\frac{\alpha}{\alpha'}} \leq P_{\alpha}\rbra{\rho} \leq r^{1-\frac{\alpha}{\alpha'}} \rbra*{P_{\alpha'}\rbra{\rho}}^{\frac{\alpha}{\alpha'}} = r^{1-\lambda} \rbra*{P_{\alpha'}\rbra{\rho}}^{\lambda}.
        \]
        By induction, after calling $\texttt{estimate\_R\'enyi\_lt1}\rbra{\alpha', \varepsilon', \delta'}$ in Line \ref{line:call-small} of \cref{algo:esimate-small}, where $\varepsilon' = 1/4$ and $\delta' = \delta/2$, we will obtain $P'$ such that $\rbra{1-\varepsilon'}P_{\alpha'}\rbra{\rho} \leq P' \leq \rbra{1+\varepsilon'}P_{\alpha'}\rbra{\rho}$ with probability $\geq 1 - \delta/2$. Thus, we have
        \[
        P_{\alpha}\rbra*{\rho} \geq \rbra*{P_{\alpha'}\rbra*{\rho}}^{\lambda} \geq \rbra*{\frac{P'}{1+\varepsilon'}}^{\lambda} = \rbra*{\frac{4P'}{5}}^{\lambda} = P.
        \]
        On the other hand, 
        \[
        P_{\alpha}\rbra*{\rho} \leq r^{1-\lambda} \rbra*{P_{\alpha'}\rbra*{\rho}}^{\lambda} \leq e\rbra*{\frac{P'}{1-\varepsilon'}}^{\lambda} = e\rbra*{\frac{4P'}{3}}^{\lambda} \leq 10 P.
        \]
        Therefore, $P \leq P_{\alpha}\rbra{\rho} \leq 10P$ with probability $\geq 1 - \delta/2$, i.e., the required condition of 
        \[
        \texttt{estimate\_R\'enyi\_lt1\_promise}\rbra{\alpha, \varepsilon, P, \delta/2}
        \]
        holds with probability $\geq 1 - \delta/2$; then, \cref{algo:esimate-small} will output an estimate of $P_{\alpha}$ within multiplicative error $\varepsilon$ with probability $\geq \rbra{1-\delta/2}^2 \geq 1 - \delta$, with time complexity $Q\rbra{\alpha, \varepsilon, \delta} = Q\rbra{\alpha', \varepsilon', \delta'} + T\rbra{\alpha, \varepsilon, P, \delta/2}$. 
        Without loss of generality, we may assume that it always holds that $1/10 \leq P \leq r^{1-\alpha}$ during the execution of \cref{algo:esimate-large}. Thus, 
        \[
        Q\rbra*{\alpha, \varepsilon, \delta} = Q\rbra{\alpha', \varepsilon', \delta'} + \sup_{P \in \sbra*{1/10, r^{1-\alpha}}} T\rbra{\alpha, \varepsilon, P, \delta/2}.
        \]
    \end{itemize}
    From the above, we have
    \[
    Q\rbra*{\alpha, \varepsilon, \delta} = \begin{cases}
        T\rbra*{\alpha, \varepsilon, 1, \delta/2}, & \lambda \leq \alpha < 1, \\
        Q\rbra*{\alpha/\lambda, 1/4, \delta/2} + \sup\limits_{P \in \sbra*{1/10, r^{1-\alpha}}} T\rbra{\alpha, \varepsilon, P, \delta/2}, & 0 < \alpha < \lambda.
    \end{cases}
    \]
\end{proof}

\subsubsection{Subroutines with block-encoded access}

The function $\texttt{R\'enyi\_lt1\_subroutine}\rbra{\alpha, P, \delta_p, \varepsilon_p, \delta_Q}$ implemented in \cref{algo:renyi-lt1-main-intro} is with block-encoded access. 
We restate it in \cref{algo:sub-renyi-lt1} with detailed constraints. 

\begin{algorithm}[!htp]
    \caption{$\texttt{R\'enyi\_lt1\_subroutine}\rbra{\alpha, P, \delta_p, \varepsilon_p, \delta_Q}$ --- \textit{quantum query algorithm}}
    \label{algo:sub-renyi-lt1}
    \begin{algorithmic}[1]
    \Require $0 < \alpha < 1$, $\delta_p \in (0, 1/2)$, $\varepsilon_p \in (0, 1/2)$, $\delta_Q \in \rbra{0, 1}$, and query access to unitary operator $U_A$ that is a $\rbra{1, m, 0}$-block-encoding of $N$-dimensional Hermitian operator $A$, where $m = O\rbra{\log\rbra{N}}$. 
    \Ensure The quantum circuit description of $U_{p\rbra{A}}$ with query access to $U_A$. 

    \State Let $p\rbra{x}$ be a polynomial of degree $d_p = O\rbra*{\frac{1}{\delta_p}\log\rbra*{\frac{1}{\varepsilon_p}}}$ (by \cref{lemma:poly-approx-negative-power}) such that
    \begin{align*}
        & \forall x \in \sbra*{-1, 1}, \quad \abs*{p\rbra*{x}} \leq \frac 1 2, \\
        & \forall x \in \sbra*{\delta_p, 1}, \quad \abs*{p\rbra*{x} - \frac 1 4 \rbra*{\frac{x}{\delta_p}}^{\frac{\alpha-1}{2}}} \leq \varepsilon_p.
    \end{align*}

    \State Construct unitary operator $U_{p\rbra{A}}$ (by \cref{thm:qsvt}) that is a $\rbra{1, a, \delta_Q}$-block-encoding of $p\rbra{A}$, where $a = m + 2$, using $O\rbra{d_p}$ queries to $U_A$. 

    \State \Return $U_{p\rbra{A}}$.
    \end{algorithmic}
\end{algorithm}

\begin{lemma}
\label{lemma:renyi-lt1-block-encoded}
Suppose $0 < \alpha < 1$ is a constant. 
For every $\delta_p \in \rbra{0, 1/2}$, $\varepsilon_p \in \rbra{0, 1/2}$, and $\delta_Q \in \rbra{0, 1}$,
\cref{algo:sub-renyi-lt1} will output the quantum circuit description of $
U_{p\rbra{A}}$ in classical time $\poly\rbra{1/\delta_p, \log\rbra{1/\varepsilon_p}, \log\rbra{1/\delta_Q}}$, and $U_{p\rbra{A}}$ makes $Q$ queries to $U_A$ and $O\rbra{Q\log\rbra{N}}$ one- and two-qubit gates, where $Q = O\rbra*{\frac{1}{\delta_p}\log\rbra*{\frac{1}{\varepsilon_p}}}$.

Moreover, if $A = {\rho/2}$ for an $N$-dimensional quantum state $\rho$ of rank $r$ with a promise that $P \leq P_\alpha\rbra{\rho} \leq 10P$ for some $P > 0$, then measuring the quantum state $\sigma = U_{p\rbra{{\rho/2}}}\rbra*{\rho \otimes \ket{0}\bra{0}^{\otimes a}}U_{p\rbra{{\rho/2}}}^\dag$ in the computational basis will obtain the outcome $\ket{0}^{\otimes a}$ with probability $p_a$ such that
\[
\abs*{ p_a - \frac{1}{16} \rbra{{2}\delta_p}^{1-\alpha} P_{\alpha}\rbra*{\rho} } \leq \frac{5}{{8}} r \delta_p + 2 \varepsilon_p + 2r\delta_Q.
\]
\end{lemma}
\begin{proof}
    Let 
    \[
    f\rbra{x} = \frac 1 4 \rbra*{\frac{x}{\delta_p}}^{\frac{\alpha-1}{2}}. 
    \]
    By \cref{lemma:poly-approx-negative-power}, there is a polynomial $p\rbra{x}$ of degree $d_p = O\rbra*{\frac{1}{\delta_p}\log\rbra*{\frac{1}{\varepsilon_p}}}$ such that
    \begin{align*}
        & \forall x \in \sbra*{-1, 1}, \quad \abs*{p\rbra*{x}} \leq \frac 1 2, \\
        & \forall x \in \sbra*{\delta_p, 1}, \quad \abs*{p\rbra*{x} - f\rbra{x}} \leq \varepsilon_p.
    \end{align*}
    In the following, the construction is similar to that of \cref{algo:sub-renyi-gt1}. 
    Now suppose $U_A$ is a unitary operator and is a $\rbra{1, m, 0}$-block-encoding of $A$ with $m = O\rbra{\log\rbra{N}}$. 
    By \cref{thm:qsvt}, we can construct a unitary operator $U_{p\rbra{A}}$, that is a $\rbra{1, m+2, \delta_Q}$-block-encoding of $p\rbra{A}$, using $O\rbra{d_p}$ queries to $U$, where $\delta_Q \in \rbra{0, 1}$ is a parameter to be determined. 
    To make it clearer, let $a = m+2$, then $\Abs*{\bra{0}_a U_{p\rbra{A}} \ket{0}_a - p\rbra{A}} \leq \delta_Q$. 

    Now we consider the special case that $A = {\rho/2}$. Applying $U_{p\rbra{{\rho/2}}}$ on quantum state $\rho \otimes \ket{0}_a \bra{0}_a$, we obtain a quantum state $\sigma = U_{p\rbra{{\rho/2}}} \rbra*{\rho \otimes \ket{0}_a \bra{0}_a} U_{p\rbra{{\rho/2}}}^\dag$.
    Note that the quantum state $\sigma$ is a $\rbra{1, a, 0}$-block-encoding of $\rbra{\bra{0}_a U_{p\rbra{{\rho/2}}} \ket{0}_a} \rho \rbra{\bra{0}_a U_{p\rbra{{\rho/2}}} \ket{0}_a}^\dag$.
    If we measure $\sigma$ in the computational basis, then the probability of measurement outcome $\ket{0}_a$ is $p_a = \tr\rbra*{\ket{0}_a \bra{0}_a \sigma}$.

    Now we are going to show the relationship between $p_a$ and $P_\alpha\rbra{\rho}$. 
    We first note that
    \begin{equation}
    \label{eq:renyi-lt1-f-P-alpha}
    \tr\rbra*{\rho f\rbra*{{\frac{\rho}{2}}}^2} = \frac 1 {16} \rbra{{2}\delta_p}^{1-\alpha} P_\alpha\rbra*{\rho}.
    \end{equation}
    Similar to the proof of \cref{lemma:renyi-large-subroutine}, we have
    \begin{equation}
    \label{eq:renyi-lt1-pa-prho}
        \abs*{ p_a - \tr\rbra*{\rho p\rbra*{{\frac{\rho}{2}}}^2} } \leq 2r\delta_Q.
    \end{equation}
    On the other hand, we have (see \cref{lemma:renyi-tech-lt1})
    \begin{equation}
    \label{eq:renyi-lt1-p-f-error}
    \abs*{ \tr\rbra*{\rho p\rbra*{{\frac{\rho}{2}}}^2} - \tr\rbra*{\rho f\rbra*{{\frac{\rho}{2}}}^2} } \leq \frac 5 {{8}} r \delta_p + 2\varepsilon_p.
    \end{equation}
    From \cref{eq:renyi-lt1-f-P-alpha}, \cref{eq:renyi-lt1-pa-prho}, and \cref{eq:renyi-lt1-p-f-error}, we have
    \[
    \abs*{ p_a - \frac{1}{16} \rbra{{2}\delta_p}^{1-\alpha} P_{\alpha}\rbra*{\rho} } \leq \frac{5}{{8}} r \delta_p + 2 \varepsilon_p + 2r\delta_Q.
    \]
\end{proof}

To complete the proof of \cref{lemma:renyi-lt1-block-encoded}, it remains to show the following technical lemma.

\begin{lemma}
\label{lemma:renyi-tech-lt1}
    In the proof of \cref{lemma:renyi-lt1-block-encoded}, we have
    \[
    \abs*{ \tr\rbra*{\rho p\rbra*{{\frac{\rho}{2}}}^2} - \tr\rbra*{\rho f\rbra*{{\frac{\rho}{2}}}^2} } \leq \frac 5 {{8}} r \delta_p + 2\varepsilon_p.
    \]
\end{lemma}

\begin{proof}
    To bound the error, we split it into two terms. 
\begin{align*}
    & \abs*{ \tr\rbra*{\rho p\rbra*{{\frac{\rho}{2}}}^2} - \tr\rbra*{\rho f\rbra*{{\frac{\rho}{2}}}^2} }
    = \abs*{ \sum_{i \in \sbra*{r}} \rbra*{ x_i p\rbra*{{\frac{x_i}{2}}}^2 - x_i f\rbra*{{\frac{x_i}{2}}}^2 } } \\
    & \qquad \leq \sum_{i \in \sbra*{r} \colon x_i \leq {2}\delta_p} \abs*{ x_i p\rbra*{{\frac{x_i}{2}}}^2 - x_i f\rbra*{{\frac{x_i}{2}}}^2 } + \sum_{i \in \sbra*{r} \colon x_i > {2}\delta_p} \abs*{ x_i p\rbra*{{\frac{x_i}{2}}}^2 - x_i f\rbra*{{\frac{x_i}{2}}}^2 }.
\end{align*}
For the first term, we only consider $x_i$'s such that $x_i \leq {2}\delta_p$. In this case, $\abs{p\rbra{x_i}} \leq 1/2$. Thus, we have
\begin{align*}
    \sum_{i \in \sbra*{r} \colon x_i \leq {2}\delta_p} \abs*{ x_i p\rbra*{{\frac{x_i}{2}}}^2 - x_i f\rbra*{{\frac{x_i}{2}}}^2 }
    & \leq \sum_{i \in \sbra*{r} \colon x_i \leq {2}\delta_p} \rbra*{ x_i p\rbra*{{\frac{x_i}{2}}}^2 + x_i f\rbra*{{\frac{x_i}{2}}}^2 } \\
    & \leq \sum_{i \in \sbra*{r} \colon x_i \leq {2}\delta_p} \rbra*{ \frac 1 4 x_i + \frac 1 {16} \rbra{{2}\delta_p}^{1-\alpha} x_i^\alpha } \\
    & \leq \frac 5 {{8}} r \delta_p.
\end{align*}
For the last term, we only consider $x_i$'s such that $x_i > {2}\delta_p$. In this case, 
\begin{align*}
    \sum_{i \in \sbra*{r} \colon x_i > {2}\delta_p} \abs*{ x_i p\rbra*{{\frac{x_i}{2}}}^2 - x_i f\rbra*{{\frac{x_i}{2}}}^2 }
    & = \sum_{i \in \sbra*{r} \colon x_i > {2}\delta_p} \abs*{x_i} \abs*{p\rbra*{{\frac{x_i}{2}}} + f\rbra*{{\frac{x_i}{2}}}} \abs*{p\rbra*{{\frac{x_i}{2}}} - f\rbra*{{\frac{x_i}{2}}}} \\
    & \leq \sum_{i \in \sbra*{r} \colon x_i > {2}\delta_p} 2\abs*{x_i} \varepsilon_p \\
    & \leq \sum_{i \in \sbra*{r}} 2 x_i \varepsilon_p \\
    & = 2\varepsilon_p.
\end{align*}
Combining both cases, we have
\[
\abs*{ \tr\rbra*{\rho p\rbra*{{\frac{\rho}{2}}}^2} - \tr\rbra*{\rho f\rbra*{{\frac{\rho}{2}}}^2} } \leq \frac 5 {{8}} r \delta_p + 2\varepsilon_p.
\]
\end{proof}

\subsubsection{Sample access}

We state the function $\texttt{estimate\_R\'enyi\_lt1\_promise}\rbra{\alpha, P, \varepsilon, \delta}$ in \cref{algo:renyi-lt1-main-intro} again in \cref{algo:esimate-small-promise} with detailed constraints. 

\begin{algorithm}[!htp]
    \caption{$\texttt{estimate\_R\'enyi\_lt1\_promise}\rbra{\alpha, P, \varepsilon, \delta}$ --- \textit{quantum sample algorithm}}
    \label{algo:esimate-small-promise}
    \begin{algorithmic}[1]
    \Require $0 < \alpha < 1$, $\varepsilon \in \rbra{0, 1}$, $\delta \in \rbra{0, 1}$, access to $N$-dimensional quantum state $\rho$ of rank $r$, promise that $P \leq P_{\alpha}\rbra{\rho} \leq 10P$, and $\texttt{R\'enyi\_lt1\_subroutine}\rbra{\alpha, P, \delta_p, \varepsilon_p, \delta_Q}$ defined by \cref{algo:sub-renyi-lt1}.
    \Ensure $\widetilde P$ such that $\rbra{1 - \varepsilon} P_{\alpha}\rbra{\rho} \leq \widetilde P \leq \rbra{1 + \varepsilon} P_{\alpha}\rbra{\rho}$ with probability $\geq 1 - \delta$. 

    \State $m \gets \ceil{8\ln\rbra{1/\delta}}$, $\delta_p \gets {\frac{1}{2}}\rbra*{\frac{P\varepsilon}{40r}}^{1/\alpha}$.

    \State $\varepsilon_p \gets \frac{\rbra{{2}\delta_p}^{1-\alpha}P\varepsilon}{256}$, $\delta_Q \gets \frac{\rbra{{2}\delta_p}^{1-\alpha}P\varepsilon}{128r}$, $\delta_a \gets \frac{\rbra{{2}\delta_p}^{1-\alpha}P\varepsilon}{128}$, and $k \gets \ceil*{\frac{65536}{\rbra{{2}\delta_p}^{1-\alpha}P\varepsilon^2}}$.

    \For {$j = 1 \dots m$}
        \For {$i = 1 \dots k$}
            \State Prepare $\sigma = \mathsf{Samplize}_{\delta_a}\ave{\texttt{R\'enyi\_lt1\_subroutine}\rbra{\alpha, P, \delta_p, \varepsilon_p, \delta_Q}}\sbra{\rho}\rbra{\rho \otimes \ket{0}\bra{0}^{\otimes a}}$.
            \State Measure $\sigma$ in the computational basis. 
            \State Let $X_i$ be $1$ if the outcome is $\ket{0}^{\otimes a}$, and $0$ otherwise. 
        \EndFor
        \State $\hat P_j \gets 16 \rbra{{2}\delta_p}^{\alpha - 1} \sum_{i \in \sbra{k}} X_i / k$.
    \EndFor

    \State $\widetilde P \gets $ the median of $\hat P_j$ for $j \in \sbra{m}$. 

    \State \Return $\widetilde P$.
    
    \end{algorithmic}
\end{algorithm}

\begin{lemma}
\label{lemma:renyi-lt1-promise}
    Suppose $0 < \alpha < 1$ is a constant, and $\rho$ is an $N$-dimensional quantum state of rank $r$ with a promise that $P \leq P_\alpha\rbra{\rho} \leq 10P$ for some $P > 0$.
    Then, \cref{algo:esimate-small-promise} with sample access to $\rho$, with probability $\geq 1 - \delta$, outputs an estimate $\widetilde P$ of $P_\alpha\rbra{\rho}$ such that $\rbra{1-\varepsilon}P_\alpha\rbra{\rho} \leq \widetilde P \leq \rbra{1+\varepsilon} P_\alpha\rbra{\rho}$, with sample complexity $M$ and time complexity $T\rbra{\alpha, \varepsilon, P, \delta} = O\rbra{M\log\rbra{N}}$, where
    \[
    M = O\rbra*{\frac{r^{\frac{4}{\alpha}-2}}{P^{\frac{4}{\alpha}}\varepsilon^{1+\frac{4}{\alpha}}} \log^4\rbra*{\frac{r}{P\varepsilon}} \log\rbra*{\frac 1 \delta}}.
    \]
\end{lemma}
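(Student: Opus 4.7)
The plan is to mirror the proof of Lemma \ref{lemma:renyi-gt1-promise} (the analogous result for $\alpha > 1$), adapting the parameter choices that now involve $\delta_p^{1-\alpha}$ in place of $(2\beta)^{1-\alpha}$. First, I would take $\delta_p \in (0, 1/2)$, $\varepsilon_p \in (0, 1/2)$, $\delta_Q \in (0,1)$ as parameters to be fixed later, and apply Lemma \ref{lemma:renyi-lt1-block-encoded} to obtain a quantum query algorithm $\texttt{R\'enyi\_lt1\_subroutine}(\alpha, P, \delta_p, \varepsilon_p, \delta_Q)$ which, when supplied with a block-encoding $U_\rho$ of $\rho$ and applied to $\rho \otimes \ket{0}\bra{0}^{\otimes a}$ followed by a computational-basis measurement, outputs $\ket{0}^{\otimes a}$ with some probability $p_a$ satisfying
\[
\abs*{p_a - \tfrac{1}{16}\delta_p^{1-\alpha}P_\alpha(\rho)} \leq \tfrac{5}{16}r\delta_p + 2\varepsilon_p + 2r\delta_Q,
\]
using $Q = O(\tfrac{1}{\delta_p}\log\tfrac{1}{\varepsilon_p})$ queries to $U_\rho$.

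Next, I would invoke the samplizer (\cref{lemma:block-encoding-to-sample}) to replace $U_\rho$ with sample access: there exists a block-encoding $U_\rho$ of $\rho$ such that the diamond-norm error between the samplized algorithm and the query algorithm is at most $\delta_a$. This means the samplized measurement yields outcome $\ket{0}^{\otimes a}$ with probability $\widetilde p_a$ satisfying $|\widetilde p_a - p_a| \leq \delta_a$, at a sample cost of $O(Q^2/\delta_a \cdot \log^2(Q/\delta_a))$ per execution. Setting $X_i \in \{0,1\}$ to indicate this outcome across $k$ independent trials and letting $\hat P = 16\delta_p^{\alpha-1}\cdot\tfrac{1}{k}\sum_i X_i$, Chebyshev's inequality (\cref{thm:chebyshev}) applied with $\mathrm{Var}[\tfrac{1}{k}\sum_i X_i] \leq \widetilde p_a/k \leq 1/k$ gives that $|\hat P - P_\alpha(\rho)| \leq 5r\delta_p + 16\delta_p^{\alpha-1}(2\varepsilon_p + 2r\delta_Q + \delta_a + \varepsilon_C)$ with probability at least $3/4$ for a suitable $\varepsilon_C$.

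The central parameter calibration is then to verify that with $\delta_p = (P\varepsilon/(40r))^{1/\alpha}$, $\varepsilon_p = \delta_p^{1-\alpha}P\varepsilon/256$, $\delta_Q = \delta_p^{1-\alpha}P\varepsilon/(128r)$, $\delta_a = \varepsilon_C = \delta_p^{1-\alpha}P\varepsilon/128$, and $k = \lceil 65536/(\delta_p^{1-\alpha}P\varepsilon^2)\rceil$, each term in the above error bound is at most $P\varepsilon/5$, so the total is at most $P\varepsilon$. Combined with the promise $P \leq P_\alpha(\rho) \leq 10P$, this ensures $(1-\varepsilon)P_\alpha(\rho) \leq \hat P \leq (1+\varepsilon)P_\alpha(\rho)$ with probability $\geq 3/4$. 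The standard median-of-means amplification using $m = \lceil 8\ln(1/\delta)\rceil$ independent repetitions (analyzed via Hoeffding's inequality exactly as in \cref{lemma:renyi-gt1-promise}) then boosts the success probability to $1-\delta$.

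The main obstacle, though mostly bookkeeping, is the complexity calculation: multiplying $mk$ by the per-call sample cost $O(Q^2/\delta_a \cdot \log^2(Q/\delta_a))$ and simplifying. Substituting $1/\delta_p^2 = (r/(P\varepsilon))^{2/\alpha}$, $1/\delta_a = O(\delta_p^{\alpha-1}/(P\varepsilon))$, and $k = O(\delta_p^{\alpha-1}/(P\varepsilon^2))$, the dominant factor becomes $O(\delta_p^{2(\alpha-1)-2}/(P^2\varepsilon^3)) = O((r/(P\varepsilon))^{(4-2\alpha)/\alpha}/(P^2\varepsilon^3))$, which simplifies to $O(r^{4/\alpha-2}/(P^{4/\alpha}\varepsilon^{1+4/\alpha}))$. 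Multiplying by the logarithmic factors from $\log^2(Q/\delta_a)$ and $\log(1/\delta)$ from the $m$ repetitions yields the stated $M$. The time complexity is a $\log(N)$ overhead since each implementation of $\mathcal{E}_\rho$ in the samplizer uses $O(n\cdot s)$ gates where $n = O(\log N)$.
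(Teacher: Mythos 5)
Your proof follows the paper's own argument for this lemma essentially step for step: invoke \cref{lemma:renyi-lt1-block-encoded}, samplize via \cref{lemma:block-encoding-to-sample}, estimate $\widetilde p_a$ by $k$ repetitions with Chebyshev, amplify by a median over $m = \lceil 8\ln(1/\delta)\rceil$ blocks, and verify the parameter choices and complexity algebra. One small slip: after multiplying the bound from \cref{lemma:renyi-lt1-block-encoded} by $16\delta_p^{\alpha-1}$, the first term should be $5r\delta_p^{\alpha}$ (not $5r\delta_p$); with $\delta_p = (P\varepsilon/40r)^{1/\alpha}$ this gives exactly $P\varepsilon/8$, so the conclusion that the total error is at most $P\varepsilon$ (and hence the multiplicative $\varepsilon$ guarantee) still goes through, but the intermediate expression as written is not the one the parameter choice is calibrated against. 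Also, the claim that each term is at most $P\varepsilon/5$ is slightly off — the $\delta_Q$ contribution is $P\varepsilon/4$ — but the total remains below $P\varepsilon$, so this is cosmetic.
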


\begin{proof}
    We take $\delta_p \in (0, 1/2)$, $\varepsilon_p \in (0, 1/2)$, and $\delta_Q \in (0, 1)$ to be determined. 
    By \cref{lemma:renyi-lt1-block-encoded}, measuring the quantum state 
    \[
    \sigma = \texttt{R\'enyi\_lt1\_subroutine}\rbra{\alpha, P, \delta_p, \varepsilon_p, \delta_Q}\sbra{U_\rho}\rbra{\rho \otimes \ket{0}_a\bra{0}_a}
    \]
    in the computational basis will obtain the outcome $\ket{0}_a$ with probability $p_a$ such that 
    \begin{equation}
    \label{eq:renyi-lt1-sub}
    \abs*{ p_a - \frac{1}{16} \rbra{{2}\delta_p}^{1-\alpha} P_{\alpha}\rbra*{\rho} } \leq \frac{5}{{8}} r \delta_p + 2 \varepsilon_p + 2r\delta_Q,
    \end{equation}
    with query complexity $Q$ and time complexity $O\rbra*{Q \log\rbra*{N}}$, where $Q = O\rbra*{\frac{1}{\delta_p}\log\rbra*{\frac{1}{\varepsilon_p}}}$.

    Let $\delta_a > 0$ to be determined. 
    By \cref{lemma:block-encoding-to-sample}, there is a unitary operator $U_\rho$ that is a block-encoding of $\rho$ such that
    \[
    \begin{aligned}
    \big\|\mathsf{Samplize}_{\delta_a}\ave{&\texttt{R\'enyi\_lt1\_subroutine}\rbra{\alpha, P, \delta_p, \varepsilon_p, \delta_Q}}\sbra{\rho} - \\
    &\texttt{R\'enyi\_lt1\_subroutine}\rbra{\alpha, P, \delta_p, \varepsilon_p, \delta_Q}\sbra{U_\rho}\big\|_{\diamond} \leq \delta_a.
    \end{aligned}
    \]
    This means that measuring 
    \[
    \widetilde\sigma = \mathsf{Samplize}_{\delta_a}\ave{\texttt{R\'enyi\_lt1\_subroutine}\rbra{\alpha, P, \delta_p, \varepsilon_p, \delta_Q}}\sbra{\rho}\rbra{\rho \otimes \ket{0}_a\bra{0}_a}
    \]
    in the computational basis will obtain the outcome $\ket{0}_a$ with probability $\widetilde p_a$, where
    \begin{equation}
    \label{eq:pa-tildepa-lt1}
        \abs*{p_a - \widetilde p_a} \leq \delta_a,
    \end{equation}
    with sample complexity 
    \[
    1 + O\rbra*{\frac{Q^2}{\delta_a} \log^2\rbra*{\frac{Q}{\delta_a}}} = O\rbra*{\frac{1}{\delta_a \delta_p^2} \log^2\rbra*{\frac{1}{\varepsilon_p}} \log^2\rbra*{\frac{1}{\delta_a\delta_p}\log\rbra*{\frac{1}{\varepsilon_p}}}}.
    \]
    We call by algorithm $\mathcal{A}$ the above procedure that outputs $1$ with probability $\widetilde p_a$.

    We repeat algorithm $\mathcal{A}$ for $k$ times and let $X_i \in \cbra{0, 1}$ be the output of the $i$-th repetition. Let
    \begin{equation}
    \label{eq:renyi-lt1-tilde-P}
    \widetilde P = 16\rbra{{2}\delta_p}^{\alpha-1} X,
    \end{equation}
    where
    \[
    X = \frac 1 k \sum_{i \in \sbra{k}} X_i.
    \]
    We note that $\mathbb{E}\sbra{X} = \mathbb{E}\sbra{X_i} = \widetilde p_a$, and $\operatorname{Var}\sbra{X} = \operatorname{Var}\sbra{X_i}/k = \rbra{\widetilde p_a - \widetilde p_a^2}/k$.
    By Chebyshev's inequality (\cref{thm:chebyshev}), for every $\varepsilon_C > 0$, 
    \begin{equation}
    \label{eq:lt1-X-tildepa}
        \Pr \rbra*{ \abs*{X - \widetilde p_a} \geq \varepsilon_C } \leq \frac{\widetilde p_a - \widetilde p_a^2}{k \varepsilon_C^2}.
    \end{equation}
    From \cref{eq:renyi-lt1-sub}, \cref{eq:pa-tildepa-lt1}, \cref{eq:renyi-lt1-tilde-P}, and \cref{eq:lt1-X-tildepa} we know that with probability $\geq 1 - \rbra{\widetilde p_a - \widetilde p_a^2}/{k \varepsilon_C^2}$,
    \[
    \abs*{\hat P - P_{\alpha}\rbra*{\rho}} \leq 5r\rbra{{2}\delta_p}^{\alpha} + 16\rbra{{2}\delta_p}^{\alpha-1}\rbra*{2\varepsilon_p + 2r\delta_Q + \delta_a + \varepsilon_C}.
    \]
    By taking $\delta_p = {\frac{1}{2}}\rbra*{\frac{P\varepsilon}{40r}}^{1/\alpha}$, $\varepsilon_p = \frac{\rbra{{2}\delta_p}^{1-\alpha}P\varepsilon}{256}$, $\delta_Q = \frac{\rbra{{2}\delta_p}^{1-\alpha}P\varepsilon}{128r}$, $\delta_a = \varepsilon_C = \frac{\rbra{{2}\delta_p}^{1-\alpha}P\varepsilon}{128}$, and $k = \frac{65536}{\rbra{{2}\delta_p}^{1-\alpha}P\varepsilon^2}$, we have that, with probability $\geq 3/4$, it holds that $\abs*{\hat P - P_\alpha\rbra{\rho}} \leq P\varepsilon$, which implies
    \[
        \rbra*{1 - \varepsilon} P_{\alpha}\rbra{\rho} \leq \hat P \leq \rbra*{1 + \varepsilon} P_{\alpha}\rbra{\rho}
    \]
    by noting that $P \leq P_{\alpha}\rbra{\rho} \leq 10P$. In other words, by choosing the values of $\delta_p, \varepsilon_p, \delta_Q, \delta_a, \varepsilon_C, k$ as above, we have
    \[
    \Pr\sbra*{ \rbra*{1 - \varepsilon} P_{\alpha}\rbra{\rho} \leq \hat P \leq \rbra*{1 + \varepsilon} P_{\alpha}\rbra{\rho} } \geq \frac{3}{4}.
    \]
    We denote the algorithm described above  as $\mathcal{B}$. 

    Finally, we need to amplify the success probability from $3/4$ to $1-\delta$. To achieve this, we repeat algorithm $\mathcal{B}$ for $m = \ceil{8\ln\rbra{1/\delta}}$ times, and write $\hat P_i$ to be the $i$-th estimate for $i \in \sbra{m}$. 
    Let $\widetilde P$ be the median of all $\hat P_i$ for $i \in \sbra{m}$, then it holds that
    \[
    \Pr\sbra*{ \rbra*{1 - \varepsilon} P_{\alpha}\rbra{\rho} \leq \widetilde P \leq \rbra*{1 + \varepsilon} P_{\alpha}\rbra{\rho} } \geq 1 - \delta.
    \]
    The proof of this part is the same as that of \cref{lemma:renyi-gt1-promise}.

    Now we are going to analyze the complexity. We repeat algorithm $\mathcal{B}$ for $m$ times, and each repetition of $\mathcal{B}$ repeats algorithm $\mathcal{A}$ for $k$ times.
    Thus, the total sample complexity is
    \[
    mk \cdot O\rbra*{\frac{1}{\delta_a \delta_p^2} \log^2\rbra*{\frac{1}{\varepsilon_p}} \log^2\rbra*{\frac{1}{\delta_a\delta_p}\log\rbra*{\frac{1}{\varepsilon_p}}}} = O\rbra*{\frac{r^{\frac{4}{\alpha}-2}}{P^{\frac{4}{\alpha}}\varepsilon^{1+\frac{4}{\alpha}}} \log^4\rbra*{\frac{r}{P\varepsilon}} \log\rbra*{\frac 1 \delta}},
    \]
    and the time complexity is only a $\log\rbra{N}$ factor over the sample complexity, which is
    \[
    O\rbra*{\frac{r^{\frac{4}{\alpha}-2}}{P^{\frac{4}{\alpha}}\varepsilon^{1+\frac{4}{\alpha}}} \log^4\rbra*{\frac{r}{P\varepsilon}} \log\rbra*{\frac 1 \delta} \log\rbra{N}}.
    \]
\end{proof}

Finally, we will use the estimate of $\widetilde P$ of $P_\alpha\rbra{\rho}$ with multiplicative error to estimate $S_\alpha\rbra{\rho}$ with additive error. 
This process is simple and stated in \cref{algo:esimate-lt1}, and we include it here for completeness. 

\begin{algorithm}[!htp]
    \caption{$\texttt{estimate\_R\'enyi\_lt1\_main}\rbra{\alpha, \varepsilon, \delta}$ --- \textit{quantum sample algorithm}}
    \label{algo:esimate-lt1}
    \begin{algorithmic}[1]
    \Require $0 < \alpha < 1$, $\varepsilon \in \rbra{0, 1}$, $\delta \in \rbra{0, 1}$, sample access to quantum state $\rho$ of rank $r$, and $\texttt{estimate\_R\'enyi\_lt1}\rbra{\alpha, \varepsilon, \delta}$ defined by \cref{algo:esimate-small}.
    \Ensure $\widetilde S$ such that $\abs{\widetilde S - S_{\alpha}\rbra{\rho}} \leq \varepsilon$ with probability $\geq 1 - \delta$. 

    \State $\widetilde P \gets \texttt{estimate\_R\'enyi\_lt1}\rbra{\alpha, \rbra{1-\alpha}\varepsilon/2, \delta}$.

    \State $\widetilde S \gets \frac{1}{1-\alpha}\ln\rbra{\widetilde P}$.
    
    \State \Return $\widetilde S$.
    \end{algorithmic}
\end{algorithm}

\begin{theorem}
\label{thm:estimate-renyi-lt1}
    Suppose $0 < \alpha < 1$ is a constant, and $N$-dimensional quantum state $\rho$ is of rank $r$.
    Then, \cref{algo:esimate-lt1}, with probability $\geq 1 - \delta$, outputs an estimate $\widetilde S$ of $S_\alpha\rbra{\rho}$ within additive error $\varepsilon$, using $M$ samples of $\rho$ and $O\rbra{M \log\rbra{N}}$ one- and two-qubit quantum gates, where
    \[
    M = O\rbra*{ r^{\frac{4}{\alpha}-2} \rbra*{\log^6\rbra*{r} + \frac{1}{\varepsilon^{1+\frac{4}{\alpha}} } \log^4\rbra*{\frac{r}{\varepsilon}}} \log\rbra*{\frac 1 \delta} }.
    \]
\end{theorem}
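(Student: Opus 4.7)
The plan is to mirror the proof of Theorem \ref{thm:estimate-renyi-gt1} in the range $0 < \alpha < 1$, using the ``lt1'' variants of the building blocks. First I would invoke Lemma \ref{lemma:renyi-lt1-promise}, which implements \texttt{estimate\_R\'enyi\_lt1\_promise}$(\alpha, P, \varepsilon, \delta)$ with sample complexity $\widetilde O(r^{4/\alpha-2} P^{-4/\alpha} \varepsilon^{-1-4/\alpha} \log(1/\delta))$, and time complexity larger by a factor of $O(\log N)$, whenever $P \leq P_\alpha(\rho) \leq 10P$. Plugging this into the annealing recursion of Lemma \ref{lemma:annealing-small} gives a quantum sample algorithm \texttt{estimate\_R\'enyi\_lt1}$(\alpha, \epsilon, \delta)$ that with probability $\geq 1-\delta$ returns $\widetilde P$ with $(1-\epsilon)P_\alpha(\rho) \leq \widetilde P \leq (1+\epsilon)P_\alpha(\rho)$.

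Next I would convert this multiplicative estimate of $P_\alpha(\rho)$ into an additive estimate of $S_\alpha(\rho)$ via $\widetilde S = \ln(\widetilde P)/(1-\alpha)$. Using $|\ln(1\pm \epsilon)| \leq 2\epsilon$ for $\epsilon \in (0, 1/2)$,
\[
\abs*{\widetilde S - S_\alpha(\rho)} = \frac{1}{1-\alpha} \abs*{\ln\rbra*{\widetilde P / P_\alpha(\rho)}} \leq \frac{2\epsilon}{1-\alpha},
\]
so the choice $\epsilon = (1-\alpha)\varepsilon/2$ made in \cref{algo:esimate-lt1} yields $\abs{\widetilde S - S_\alpha(\rho)} \leq \varepsilon$, as required.

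For the complexity, I would solve the recurrence from Lemma \ref{lemma:annealing-small}, which has the form $Q(\alpha,\varepsilon,\delta) = Q(\alpha/\lambda, 1/4, \delta/2) + \sup_{P \in [1/10, r^{1-\alpha}]} T(\alpha, P, \varepsilon, \delta/2)$, where $\lambda = 1 - 1/\ln r$. The recursion terminates once $\alpha/\lambda^k \geq \lambda$, which happens after $k = O(\ln(1/\alpha)\ln r) = O(\ln r)$ steps for constant $\alpha$. Because the recursive arguments $\alpha_i = \alpha/\lambda^i$ only grow, the exponent $4/\alpha_i - 2$ weakly decreases, so the factor $r^{4/\alpha_i - 2}$ is maximized at the outermost level $\alpha_i = \alpha$. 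Moreover $P^{-4/\alpha}$ is uniformly $O(1)$ on $[1/10, r^{1-\alpha}]$, so the supremum over $P$ costs only a constant. Summing the $O(\ln r)$ internal levels with the halved $\delta$'s contributes $\sum_{k=0}^{O(\log r)} O(r^{4/\alpha-2}\log^4(r)(k+\log(1/\delta))\log N) = O(r^{4/\alpha-2}\log^6(r)\log(1/\delta)\log N)$, and the outermost \texttt{estimate\_R\'enyi\_lt1\_promise} call at the target precision $\varepsilon$ contributes $O(r^{4/\alpha-2}\varepsilon^{-1-4/\alpha}\log^4(r/\varepsilon)\log(1/\delta)\log N)$. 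Adding these yields precisely the claimed $M$, with an extra $\log N$ for the time complexity from the per-query gate count in \cref{algo:sub-renyi-lt1} together with the samplizer.

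The only delicate point is the recurrence bookkeeping, namely tracking the $\log(2^k/\delta)$ factors across $O(\ln r)$ levels so that they combine into the $\log^6(r)\log(1/\delta)$ factor of the base-case term, and checking that $P$ stays bounded away from zero throughout so that $P^{-4/\alpha}$ remains $O(1)$. Both are immediate from Lemma \ref{lemma:annealing-small} exactly as in the $\alpha > 1$ case, so beyond substituting \cref{algo:sub-renyi-lt1} (built on the negative-power approximation from \cref{lemma:poly-approx-negative-power}) for its positive-power analogue, the argument is a direct translation of the proof of Theorem \ref{thm:estimate-renyi-gt1}.
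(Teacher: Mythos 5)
Your proposal is correct and follows essentially the same route as the paper's proof: invoke \cref{lemma:renyi-lt1-promise} to implement the promise-based subroutine, feed it through the annealing recursion of \cref{lemma:annealing-small}, convert the multiplicative estimate of $P_\alpha(\rho)$ to an additive estimate of $S_\alpha(\rho)$ via $\widetilde S = \frac{1}{1-\alpha}\ln(\widetilde P)$ with $\epsilon = (1-\alpha)\varepsilon/2$, and unroll the recurrence over $O(\log r)$ levels. Your presentation of the error conversion (using the symmetric bound $|\ln(1\pm\epsilon)| \leq 2\epsilon$) and your explicit accounting of the $\log(2^k/\delta)$ factors and the $\sup_P P^{-4/\alpha}$ term are minor stylistic variants of what the paper does and are each correct.
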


\begin{proof}
    By \cref{lemma:renyi-lt1-promise}, we can implement the procedure $\texttt{estimate\_R\'enyi\_lt1\_promise}\rbra{\alpha, \varepsilon, P, \delta}$ required in \cref{algo:esimate-small}.
    Then, by \cref{lemma:annealing-small}, \cref{algo:esimate-small} returns an estimate $\widetilde P$ of $P_{\alpha}\rbra{\rho}$ such that
    \[
    \rbra{1 - \epsilon} P_\alpha\rbra{\rho} \leq \widetilde P \leq \rbra{1 + \epsilon} P_\alpha\rbra{\rho}
    \]
    with probability $\geq 1 - \delta$.
    By taking $\widetilde S = \frac{1}{1-\alpha} \ln \rbra{\widetilde P}$, we have that
    \[
    S_\alpha\rbra{\rho} + \frac{\ln\rbra{1 - \epsilon}}{1 - \alpha} \leq \widetilde S \leq S_\alpha\rbra{\rho} + \frac{\ln\rbra{1 + \epsilon}}{1 - \alpha}.
    \]
    By letting $\epsilon = \rbra{1-\alpha}\varepsilon/2$, we have $S_\alpha\rbra{\rho} - \varepsilon \leq \widetilde S \leq S_\alpha\rbra{\rho} + \varepsilon$, as required.

    Now we are going to analyze the complexity. 
    Let $\lambda = 1 - 1/\ln\rbra{r}$. 
    Combining \cref{lemma:annealing-small} and \cref{lemma:renyi-lt1-promise}, we have the recurrence relation: if $\lambda \leq \alpha < 1$, then
    \[
    Q\rbra*{\alpha, \varepsilon, \delta} = O\rbra*{\frac{r^{\frac{4}{\alpha}-2}}{\varepsilon^{1+\frac{4}{\alpha}}} \log^4\rbra*{\frac{r}{\varepsilon}} \log\rbra*{\frac 1 \delta} \log\rbra*{N}};
    \]
    and if $0 < \alpha < \lambda$, then
    \begin{align*}
        Q\rbra*{\alpha, \varepsilon, \delta}
        = Q\rbra*{\frac{\alpha}{\lambda}, \frac{1}{4}, \frac{\delta}{2}} + O\rbra*{\frac{r^{\frac{4}{\alpha}-2}}{\varepsilon^{1+\frac{4}{\alpha}}} \log^4\rbra*{\frac{r}{\varepsilon}} \log\rbra*{\frac 1 \delta} \log\rbra*{N}}.
    \end{align*}
    For the special case of $\varepsilon = 1/4$, we have: if $\lambda \leq \alpha < 1$, then
    \[
    Q\rbra*{\alpha, \frac{1}{4}, \delta} = O \rbra*{ r^{\frac{4}{\alpha}-2} \log^4\rbra*{r} \log\rbra*{\frac 1 \delta} \log\rbra*{N} };
    \]
    and if $0 < \alpha < \lambda$, then 
    \begin{align*}
        Q\rbra*{\alpha, \frac{1}{4}, \delta} = Q\rbra*{\frac{\alpha}{\lambda}, \frac{1}{4}, \frac{\delta}{2}} + O \rbra*{ r^{\frac{4}{\alpha}-2} \log^4\rbra*{r} \log\rbra*{\frac 1 \delta} \log\rbra*{N} },
    \end{align*}
    where both cases satisfy that 
    \begin{align*}
        Q\rbra*{\alpha, \frac{1}{4}, \delta} 
        & \leq \sum_{k=0}^{\floor{\log_\lambda\rbra{\alpha}}} O\rbra*{  r^{\frac{4}{\alpha}-2} \log^4\rbra*{r} \log\rbra*{\frac {2^k} \delta} \log\rbra*{N} } \\
        & = O\rbra*{ r^{\frac{4}{\alpha}-2} \log^6\rbra*{r} \log\rbra*{\frac 1 \delta} \log\rbra*{N} }.
    \end{align*}
    
    Finally, we have
    \[
    Q\rbra{\alpha, \varepsilon, \delta} = O\rbra*{ r^{\frac{4}{\alpha}-2} \rbra*{\log^6\rbra*{r} + \frac{1}{\varepsilon^{1+\frac{4}{\alpha}} } \log^4\rbra*{\frac{r}{\varepsilon}}} \log\rbra*{\frac 1 \delta} \log\rbra*{N} }.
    \]
\end{proof}

\section{Sample Lower Bounds for Entropy Estimation}
\label{sec:sample-lower-bound-entropy}

In this section, we give sample lower bounds for estimating $\alpha$-R\'enyi entropies for all $\alpha > 0$ (where $\alpha = 1$ means the von Neumann entropy), by reducing to the mixedness testing problem considered in \cite{OW21} and the distinguishing problem of a special distribution used in \cite{AOST17,AISW20}.
The main result is collected in \cref{thm:entropy-estimation-sample-lower-bound}. 

\begin{theorem} [Theorems \ref{thm:sample-lower-bound-von-neumann}, \ref{thm:sample-lower-bound-renyi-gt1}, and \ref{thm:sample-lower-bound-renyi-lt1} combined]
\label{thm:entropy-estimation-sample-lower-bound}
    Suppose $\alpha > 0$ is a constant. 
    Every quantum algorithm with sample access for estimating the $\alpha$-R\'enyi entropy $S_{\alpha}\rbra{\rho}$ of $N$-dimensional quantum state $\rho$ within additive error requires $\Omega\rbra{N/\varepsilon + N^{1/\alpha-1}/\varepsilon^{1/\alpha}}$ independent samples of $\rho$.
\end{theorem}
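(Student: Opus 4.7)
The plan is to establish each of the two terms in the lower bound separately by reducing entropy estimation to a distinguishing problem whose sample lower bound is already known.

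Part~1 (the $\Omega\rbra{N/\varepsilon}$ term, for all $\alpha > 0$). I will reduce from the $N$-dimensional quantum mixedness testing problem of \cite{OW21}, which asserts that distinguishing $\rho = I/N$ from any $\rho$ with $\Abs{\rho - I/N}_1 \geq t$ requires $\Omega\rbra{N/t^2}$ samples. Consider the two-state family $\rho_0 = I/N$ and $\rho_1$ whose spectrum consists of $N/2$ copies of $(1+x)/N$ and $N/2$ copies of $(1-x)/N$, so that $\Abs{\rho_1 - \rho_0}_1 = x$. Using the Taylor expansion $(1+x)^\alpha + (1-x)^\alpha = 2 + \alpha\rbra{\alpha-1}x^2 + O\rbra{x^4}$, one computes $S_\alpha\rbra{\rho_1} - S_\alpha\rbra{\rho_0} = -\alpha x^2/2 + O\rbra{x^4}$ for $\alpha \neq 1$, and separately $S\rbra{\rho_1} - S\rbra{\rho_0} = -x^2/2 + O\rbra{x^4}$ at $\alpha = 1$. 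Choosing $x = \Theta\rbra{\sqrt{\varepsilon/\alpha}}$ makes the entropy gap exceed $2\varepsilon$, so any $\varepsilon$-additive estimator distinguishes $\rho_0$ from $\rho_1$ and therefore inherits the $\Omega\rbra{N/x^2} = \Omega\rbra{N/\varepsilon}$ bound.

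Part~2 (the $\Omega\rbra{N^{1/\alpha-1}/\varepsilon^{1/\alpha}}$ term, dominant for $0 < \alpha < 1/2$). I will embed two classical probability distributions $p, q$ on $\sbra{N}$ as diagonal density matrices $\rho_p = \sum_i p_i \ket{i}\bra{i}$ and $\rho_q = \sum_i q_i \ket{i}\bra{i}$. Since these commute, measuring in the computational basis turns each quantum sample into an i.i.d.\ classical sample without loss of information; equivalently, the Holevo--Helstrom bound (\cref{thm:HH-measurement}) applied to $\rho_p^{\otimes n}$ and $\rho_q^{\otimes n}$ coincides with the total-variation bound on $p^{\otimes n}$ versus $q^{\otimes n}$. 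Hence any classical lower bound for distinguishing $p$ from $q$ transfers verbatim to the quantum setting. For $p, q$ I will use the two-distribution family underlying the R\'enyi entropy lower bound of \cite{AOST17, AISW20}: roughly, two perturbations of a uniform distribution on a support of suitably chosen size, such that $\abs{S_\alpha\rbra{p} - S_\alpha\rbra{q}} \geq 2\varepsilon$ while the $\chi^2$-divergence between $p^{\otimes n}$ and $q^{\otimes n}$ stays bounded as long as $n = o\rbra{N^{1/\alpha-1}/\varepsilon^{1/\alpha}}$. Le Cam's two-point method then rules out any $\varepsilon$-additive estimator using fewer samples.

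The main obstacle I anticipate is extracting the precise $\varepsilon$-dependence in Part~2. The original bounds in \cite{AOST17, AISW20} are stated for constant $\varepsilon$, so the construction must be reparameterised: I will track how the support size and the magnitude of the perturbation scale with $\varepsilon$, verify that the R\'enyi entropy gap remains exactly $\Theta\rbra{\varepsilon}$, and re-execute the $\chi^2$-computation to confirm that the resulting sample lower bound has the exponent $1/\alpha$ on $1/\varepsilon$. The sensitivity of $S_\alpha$ for $\alpha < 1$ to rare-event probability mass, captured by \cref{lemma:p-alpha-beta} and \cref{lemma:ineq-renyi-lt1}, will be the key quantitative tool for controlling the entropy gap under these perturbations.
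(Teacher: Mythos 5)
Your high-level strategy matches the paper's: the $\Omega(N/\varepsilon)$ term comes from mixedness testing \cite{OW21} and the $\Omega(N^{1/\alpha-1}/\varepsilon^{1/\alpha})$ term from a commuting (classical) two-point instance. But there is a real gap in how you execute Part~1, and Part~2 is only a sketch.

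\paragraph{Part 1.} You fix a single state $\rho_1$ whose spectrum is $N/2$ copies of $(1\pm x)/N$, compute its R\'enyi entropy via Taylor expansion, and then claim the two-point problem ``inherits'' the $\Omega(N/x^2)$ lower bound of \cite{OW21}. This step does not follow. If $\rho_1$ is a fixed, known diagonal matrix, distinguishing $\rho_0=I/N$ from $\rho_1$ needs only $O(1/x^2)$ samples (count the fraction of measurement outcomes falling in the known ``heavy'' set), not $\Omega(N/x^2)$. The $N$-dependence in the \cite{OW21} bound comes from the eigenbasis being unknown, i.e.\ from the \emph{entire} promise family of $t$-far states. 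To transfer that bound to entropy estimation, you must show that the entropy estimator distinguishes $I/N$ from whatever states appear in the hard instance of mixedness testing --- which means you need the entropy gap to hold over that whole family, not just for your chosen spectrum. The paper does exactly this: it proves that \emph{every} state at trace distance $\varepsilon$ from $I/N$ has $S_\alpha(\rho)\le \ln N - \Theta(\varepsilon^2)$, using Jensen's inequality together with \cref{lemma:log-eq} (for $\alpha=1$), the monotonicity $S_\alpha\le S_1$ (for $\alpha>1$), and \cref{lemma:ineq-renyi-lt1} (for $\alpha<1$). Your Taylor-expansion argument is a nice unification across $\alpha$, but to close the gap you would either have to upgrade it to a uniform bound over all far spectra (essentially reproving those lemmas) or explicitly verify that the \cite{OW21} hard instance has the two-value spectrum you chose.

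\paragraph{Part 2.} Your plan --- embed a classical hard distribution pair on the diagonal, observe that hypothesis testing on commuting states reduces to classical hypothesis testing, reparameterise for $\varepsilon$-dependence, and apply Le Cam/$\chi^2$ --- is reasonable but contains no construction or computation, so it isn't yet a proof. The paper instead takes $\rho=\diag(1-\delta,\delta/(N-1),\dots,\delta/(N-1))$ versus the pure state $\sigma=\diag(1,0,\dots,0)$ with $\delta=(2\varepsilon/(N-1)^{1-\alpha})^{1/\alpha}$, checks $S_\alpha(\sigma)=0$ and $S_\alpha(\rho)\ge \varepsilon/2$, and bounds the distinguishability of $\rho^{\otimes\mathsf{S}}$ vs $\sigma^{\otimes\mathsf{S}}$ directly via the Holevo--Helstrom inequality (\cref{thm:HH-measurement}) and fidelity, $\tfrac12\|\rho^{\otimes\mathsf{S}}-\sigma^{\otimes\mathsf{S}}\|_1\le\sqrt{1-(1-\delta)^{\mathsf{S}}}$, yielding $\mathsf{S}=\Omega(1/\delta)$. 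This is arguably simpler than the $\chi^2$ route since $\sigma$ is pure, making the fidelity easy to control; if you pursue the $\chi^2$ route you should also note that your description of the instances (``two perturbations of a uniform distribution'') does not match the construction actually used in \cite{AOST17,AISW20} and in the paper, which contrasts a pure state against a lightly mixed one.
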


\subsection{Von Neumann entropy}

\begin{theorem}
\label{thm:sample-lower-bound-von-neumann}
    Every quantum algorithm for estimating the von Neumann entropy $S\rbra{\rho}$ of $N$-dimensional quantum state $\rho$ within additive error $\varepsilon$ requires $\Omega\rbra{N/\varepsilon}$ independent samples of $\rho$. 
\end{theorem}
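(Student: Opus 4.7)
The plan is to reduce the entropy estimation problem to the mixedness testing problem analyzed by O'Donnell and Wright \cite[Theorem 1.10]{OW21}, whose sample lower bound is $\Omega\rbra{N/\varepsilon'^2}$ for distinguishing $I/N$ from any quantum state at trace distance $\geq \varepsilon'$. The key observation driving the $\varepsilon$-dependence is that entropy and trace distance scale differently near the maximally mixed state: a state $\rho$ with trace distance $\Theta\rbra{\sqrt\varepsilon}$ from $I/N$ can already have entropy deficit $\Theta\rbra{\varepsilon}$, so substituting $\varepsilon' = \sqrt\varepsilon$ into OW21's $N/\varepsilon'^2$ bound precisely yields $\Omega\rbra{N/\varepsilon}$.

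Concretely, I take the two hypotheses $\rho_0 = I/N$ and $\rho_1 = U\Lambda U^\dag$, where $\Lambda$ is the fixed diagonal matrix with $N/2$ eigenvalues equal to $(1+c)/N$ and $N/2$ eigenvalues equal to $(1-c)/N$, $U$ is a Haar-random unitary, and $c = c_0 \sqrt{\varepsilon}$ for a suitable absolute constant $c_0$. Since the entropy is a spectral invariant, for every realization of $U$,
\[
S\rbra{\rho_1} = \ln\rbra{N} - \frac{1}{2}\sbra*{(1+c)\ln(1+c) + (1-c)\ln(1-c)} = \ln\rbra{N} - \frac{c^2}{2} + O\rbra{c^4},
\]
using $(1\pm c)\ln(1\pm c) = \pm c + c^2/2 \pm c^3/6 + O\rbra{c^4}$ where the odd-order contributions cancel under the symmetrization. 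Choosing $c_0$ so that $c^2/2 - O\rbra{c^4} \geq 3\varepsilon$ guarantees $S\rbra{\rho_0} - S\rbra{\rho_1} \geq 3\varepsilon$, so any $\varepsilon$-additive entropy estimator with success probability $\geq 2/3$ distinguishes $H_0$ from $H_1$ with the same probability by thresholding its output at $\ln\rbra{N} - 3\varepsilon/2$.

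On the other hand, $\tfrac{1}{2}\Abs*{\rho_1 - \rho_0}_1 = c/2 = \Theta\rbra{\sqrt\varepsilon}$ for every $U$, and the hard instance underlying the proof of \cite[Theorem 1.10]{OW21} is precisely of this Haar-random-times-fixed-spectrum form, so distinguishing $\rho_0$ from the random $\rho_1$ requires $\Omega\rbra{N/c^2} = \Omega\rbra{N/\varepsilon}$ samples. The main point requiring care is the cancellation of the cubic Taylor terms, which allows the $\sqrt\varepsilon$-trace-distance regime to correspond to an $\varepsilon$-entropy gap rather than merely a $\sqrt\varepsilon$-entropy gap; beyond that, the reduction is the standard ``testing-reduces-to-estimation'' argument, and the $\varepsilon$-dependence in the statement comes out of the bookkeeping cleanly.
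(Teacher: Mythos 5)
Your overall strategy is the same as the paper's: reduce mixedness testing to entropy estimation, and exploit the fact that trace distance $\varepsilon'$ from the maximally mixed state corresponds to an entropy deficit $\Theta(\varepsilon'^2)$, so that OW21's $\Omega(N/\varepsilon'^2)$ testing bound becomes an $\Omega(N/\varepsilon)$ estimation bound after substituting $\varepsilon' \sim \sqrt{\varepsilon}$. The implementations differ in one key step, and the difference matters for rigor. The paper proves a \emph{universal} bound (its Lemma 6.5, via Jensen's inequality applied to the concave function $-x\ln x$): \emph{every} state at trace distance $\geq \varepsilon'$ from $I/N$ has entropy at most $\ln N - \varepsilon'^2$. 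This allows invoking OW21 purely as a black box --- an $\varepsilon^2$-accurate entropy estimator solves the full mixedness testing problem, whatever OW21's hard distribution happens to be. Your argument instead constructs a single hard distribution (Haar-random conjugate of a fixed two-level spectrum $\mathrm{diag}((1\pm c)/N)$), Taylor-expands its entropy, and then asserts that ``the hard instance underlying the proof of OW21's Theorem 1.10 is precisely of this Haar-random-times-fixed-spectrum form,'' so that the $\Omega(N/c^2)$ bound applies to distinguishing this particular pair. That assertion is plausible --- the OW spectrum-testing machinery does reduce to Young-diagram statistics and the flat $(1\pm c)/N$ spectrum is the natural worst case --- but as written it is an unverified claim about the internals of someone else's proof, whereas the theorem statement of OW21 only guarantees that \emph{some} $\varepsilon'$-far state family is hard, not necessarily yours. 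To make your proof self-contained without re-deriving OW21, you would need to replace the hard-instance claim by the universal entropy bound the paper proves; your Taylor computation (including the nice observation that the odd-order terms cancel, giving a genuine $c^2$ rather than $c$ gap) is a perfectly good verification of that bound for your specific spectrum, but it does not by itself show that all $\varepsilon'$-far states have small entropy. So: right idea, same $\varepsilon \leftrightarrow \varepsilon'^2$ scaling, slightly cleaner local computation than the paper's Jensen/monotonicity argument, but with a gap at the step that invokes OW21 for a specific distribution rather than for the problem as stated.
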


\begin{proof}
    We consider the problem of mixedness testing of quantum states: given an $N$-dimensional quantum state $\rho$, distinguish whether it is maximally mixed, or it is $\varepsilon$-away (in trace distance) from being maximally mixed. 
    Here, a quantum state is said to be maximally mixed if its eigenvalues are $1/N$. 
    It was shown in \cite[Theorem 1.10]{OW21} that the sample complexity of mixedness testing is $\Theta\rbra{N/\varepsilon^2}$.
    In the following, we will reduce the problem of estimating von Neumann entropy to mixedness testing. 

    Suppose the eigenvalues of $\rho$ are $x_1, x_2, \dots, x_N$. 
    If $\rho$ is maximally mixed, then $x_i = 1/N$ and thus $S\rbra{\rho} = \ln{N}$. 
    If $\rho$ is $\varepsilon$-away (in trace distance) from being maximally mixed, let $S = \set{ i \in \sbra{N} }{ x_i \geq 1 / N }$ and $T = \sbra{N} \setminus S$.
    We note that
    \[
    \sum_{i \in S} \rbra*{x_i - \frac 1 N} = \sum_{i \in T} \rbra*{\frac 1 N - x_i} = \varepsilon.
    \]
    Let $\varphi\rbra{x} = -x\ln\rbra{x}$, and $\varphi''\rbra{x} = -1/x < 0$ for all $x > 0$. By Jensen's inequality, we have
    \begin{align*}
        S\rbra*{\rho}
        & = \sum_{i \in \sbra{N}} \varphi\rbra*{x_i} \\
        & = \sum_{i \in S} \varphi\rbra*{x_i} + \sum_{i \in T} \varphi\rbra*{x_i}\\
        & \leq \abs*{S} \cdot \varphi\rbra*{ \frac{1}{\abs*{S}} \sum_{i \in S} x_i } + \abs*{T} \cdot \varphi\rbra*{ \frac{1}{\abs*{T}} \sum_{i \in T} x_i } \\
        & = \abs*{S} \cdot \varphi\rbra*{ \frac 1 N + \frac \varepsilon {\abs*{S}} } + \abs*{T} \cdot \varphi\rbra*{ \frac 1 N - \frac \varepsilon {\abs*{T}} } \\
        & = \ln\rbra*{N} - \rbra*{ \rbra*{z + \varepsilon} \ln\rbra*{1 + \frac \varepsilon z} + \rbra*{1-z-\varepsilon} \ln\rbra*{1 - \frac{\varepsilon}{1-z}} },
    \end{align*}
    where $z = \abs{S}/N$. 
    The valid range of $z$ is $\frac{\varepsilon}{N-1} \leq z \leq 1-\varepsilon$, and in this range we have (see \cref{lemma:log-eq}) that
    \[
    \rbra*{z + \varepsilon} \ln\rbra*{1 + \frac \varepsilon z} + \rbra*{1-z-\varepsilon} \ln\rbra*{1 - \frac{\varepsilon}{1-z}} \geq \varepsilon^2. 
    \]
    We have $S\rbra{\rho} \leq \ln\rbra{N} - \varepsilon^2$. 
    
    By letting $\epsilon = \varepsilon^2$, every quantum algorithm for estimating $S\rbra{\rho}$ within additive $\epsilon$ will lead to a quantum algorithm with the same sample complexity for mixedness testing with a promise that either $\rho$ is maximally mixed or $\rho$ is $\varepsilon$-away (in trace distance) from being maximally mixed.
    The latter problem has sample lower bound $\Omega\rbra{N/\varepsilon^2} = \Omega\rbra{N/\epsilon}$. 
\end{proof}

To complete the proof of \cref{thm:sample-lower-bound-von-neumann}, it remains to show the following technical lemma.

\begin{lemma}
\label{lemma:log-eq}
    Let $\varepsilon \in \rbra{0, 1/2}$. 
    For $0 < z \leq 1-\varepsilon$, we have
    \[
    \rbra*{z + \varepsilon} \ln\rbra*{1 + \frac \varepsilon z} + \rbra*{1-z-\varepsilon} \ln\rbra*{1 - \frac{\varepsilon}{1-z}} \geq \varepsilon^2. 
    \]
\end{lemma}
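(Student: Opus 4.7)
The plan is to recognize the left-hand side as a binary Kullback--Leibler divergence, and then deduce the bound either by invoking Pinsker's inequality or by a direct second-derivative argument. Writing $p = z + \varepsilon$ and $q = z$, the two distributions $P = (p, 1-p)$ and $Q = (q, 1-q)$ satisfy
\[
D_{\mathrm{KL}}(P\,\|\,Q) = (z+\varepsilon)\ln\frac{z+\varepsilon}{z} + (1-z-\varepsilon)\ln\frac{1-z-\varepsilon}{1-z},
\]
which is exactly the left-hand side. The total variation distance between $P$ and $Q$ is $\frac12(\varepsilon + \varepsilon) = \varepsilon$, so Pinsker's inequality $D_{\mathrm{KL}}(P\,\|\,Q) \ge 2\,\mathrm{TV}(P,Q)^2$ immediately yields the stronger bound $2\varepsilon^2 \ge \varepsilon^2$.

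To keep the proof self-contained I would, however, give a direct calculus argument rather than cite Pinsker. Define $g(\varepsilon) = (z+\varepsilon)\ln(1+\varepsilon/z) + (1-z-\varepsilon)\ln(1-\varepsilon/(1-z))$ as a function of $\varepsilon$ with $z$ fixed. A short computation gives $g(0) = 0$ and
\[
g'(\varepsilon) = \ln\frac{(z+\varepsilon)(1-z)}{z(1-z-\varepsilon)}, \qquad g'(0) = 0.
\]
Differentiating once more yields
\[
g''(\varepsilon) = \frac{1}{z+\varepsilon} + \frac{1}{1-z-\varepsilon} = \frac{1}{(z+\varepsilon)(1-z-\varepsilon)} \cdot 1 \ge 4,
\]
where the last inequality uses that $(z+\varepsilon)(1-z-\varepsilon) \le 1/4$ by AM--GM (their sum is $1$). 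Integrating twice, $g(\varepsilon) \ge \tfrac12 \cdot 4 \cdot \varepsilon^2 = 2\varepsilon^2 \ge \varepsilon^2$, which is the desired inequality.

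The only mild subtlety is checking that the calculation is valid on the full range $0 < z \le 1-\varepsilon$: we need $z+\varepsilon < 1$ and $1-z-\varepsilon \ge 0$ for the logarithms to make sense, which follows from the hypothesis (the boundary case $z = 1-\varepsilon$ gives $\ln 0$, but there the left-hand side is $(1-\varepsilon+\varepsilon)\ln((1-\varepsilon+\varepsilon)/(1-\varepsilon)) + 0 \cdot \ln 0 = -\ln(1-\varepsilon)$, which exceeds $\varepsilon^2$ by the Taylor expansion $-\ln(1-\varepsilon) \ge \varepsilon + \varepsilon^2/2$). No real obstacle is expected; the whole argument is a one-page routine calculation, and the only ``idea'' is the recognition of the expression as a binary KL divergence, which makes the AM--GM bound on $g''$ the natural move.
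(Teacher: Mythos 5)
Your direct-calculus argument is essentially the paper's proof: both fix $z$, differentiate the left-hand side twice in $\varepsilon$, obtain the second derivative $\frac{1}{z+\varepsilon}+\frac{1}{1-z-\varepsilon}$, and integrate back from the vanishing boundary data $g(0)=g'(0)=0$; the paper subtracts $\varepsilon^2$ first and shows the resulting $f''>0$, whereas you keep $g$ intact, use AM--GM to get $g''\geq 4$, and arrive at the slightly sharper $g(\varepsilon)\geq 2\varepsilon^2$. The recognition of the expression as the binary Kullback--Leibler divergence $D\bigl((z+\varepsilon,1-z-\varepsilon)\,\|\,(z,1-z)\bigr)$ and the one-line appeal to Pinsker's inequality is a nice conceptual shortcut the paper does not mention, though it buys no extra strength for the lemma as stated. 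Both you and the paper handle the endpoint $\varepsilon=1-z$ (where one factor becomes $0\ln 0$) by a separate continuity/Taylor check, so no gap remains.
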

\begin{proof}
    Suppose $0 < z < 1$. 
    Let
    \[
    f\rbra{\varepsilon} = \rbra*{z + \varepsilon} \ln\rbra*{1 + \frac \varepsilon z} + \rbra*{1-z-\varepsilon} \ln\rbra*{1 - \frac{\varepsilon}{1-z}} - \varepsilon^2. 
    \]
    We only have to show that $f\rbra{\varepsilon} \geq 0$ for every $0 < \varepsilon \leq \min\cbra{1-z,1/2}$.

    For the case of $0 < \varepsilon < \min\cbra{1-z,1/2}$,
    \[
    f'\rbra{\varepsilon} = \ln\rbra*{1+\frac{\varepsilon}{z}} - \ln\rbra*{1 - \frac{\varepsilon}{1-z}} - 2\varepsilon,
    \]
    \[
    f''\rbra{\varepsilon} = \frac{1}{z+\varepsilon} + \frac{1}{1-\rbra*{z+\varepsilon}} - 2 > 0.
    \]
    Hence, $f'\rbra{\varepsilon}$ is increasing, then $f'\rbra{\varepsilon} \geq f'\rbra{0} = 0$. 
    This further implies that $f\rbra{\varepsilon}$ is increasing, and thus $f\rbra{\varepsilon} \geq f\rbra{0} = 0$. 

    For the limiting case that $\varepsilon = 1 - z$, let
    \[
    g\rbra{z} = \lim_{\varepsilon \to \rbra{1-z}^-} f\rbra{\varepsilon} = \ln\rbra*{\frac 1 z}-\rbra{1-z}^2.
    \]
    Note that 
    \[
    g'\rbra{z} = -2z-\frac 1 z + 2 < 0,
    \]
    i.e., $g\rbra{z}$ is decreasing. Therefore, $g\rbra{z} \geq g\rbra{1} = 0$.
\end{proof}

\subsection{R\'enyi entropy for \texorpdfstring{$\alpha > 1$}{α > 1}}

\begin{theorem}
\label{thm:sample-lower-bound-renyi-gt1}
    Suppose $\alpha > 1$ is a constant. 
    Every quantum algorithm for estimating the $\alpha$-R\'enyi entropy $S_\alpha\rbra{\rho}$ of $N$-dimensional quantum state $\rho$ within additive error $\varepsilon$ requires $\Omega\rbra{N/\varepsilon}$ independent samples of $\rho$. 
\end{theorem}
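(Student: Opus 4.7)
The plan is to reduce from quantum mixedness testing, mirroring the proof of \cref{thm:sample-lower-bound-von-neumann}. By \cite[Theorem 1.10]{OW21}, distinguishing whether an $N$-dimensional quantum state $\rho$ is maximally mixed or at trace distance at least $\varepsilon_0$ from $I/N$ requires $\Omega(N/\varepsilon_0^2)$ samples. So it suffices to exhibit a constant $c_\alpha > 0$ for which every $\rho$ with $T(\rho, I/N) \geq \varepsilon_0$ satisfies $S_\alpha(\rho) \leq \ln N - c_\alpha \varepsilon_0^2$. Given such an inequality, an additive $\varepsilon$-estimator of $S_\alpha$ with $\varepsilon := c_\alpha \varepsilon_0^2/2$ decides mixedness testing at trace-distance parameter $\varepsilon_0 = \Theta(\sqrt{\varepsilon})$, and therefore the sample complexity of entropy estimation is at least $\Omega(N/\varepsilon_0^2) = \Omega(N/\varepsilon)$, matching the claim.

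For the key inequality, my plan is to pass through the R\'enyi divergence. Since $\rho$ and $I/N$ are simultaneously diagonalizable, $\ln N - S_\alpha(\rho)$ equals the \emph{classical} R\'enyi divergence $D_\alpha(p\,\|\,u)$ between the eigenvalue distribution $p$ of $\rho$ and the uniform distribution $u$ on $[N]$. Two standard facts then suffice: first, monotonicity of R\'enyi divergence in its order, which yields $D_\alpha(p\,\|\,u) \geq D_1(p\,\|\,u)$ for all $\alpha \geq 1$; and second, the classical Pinsker inequality $D_1(p\,\|\,u) \geq 2\,T(p,u)^2$. Chaining these gives
\[
\ln N - S_\alpha(\rho) \;\geq\; 2\,T(\rho, I/N)^2 \;\geq\; 2\varepsilon_0^2,
\]
so one can take $c_\alpha = 2$, uniformly in $\alpha > 1$.

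The main obstacle is essentially bookkeeping; once the two standard classical facts above are invoked, the reduction is direct. The only point that deserves real care is justifying the monotonicity $D_\alpha \geq D_1$ over the full range $\alpha > 1$ (this is classical, e.g.\ via H\"older's inequality applied to $\sum_i p_i^\alpha u_i^{1-\alpha}$). If a fully self-contained derivation is preferred, one may instead exploit convexity of $x \mapsto x^\alpha$ together with the Cauchy--Schwarz estimate $\sum_i (p_i - 1/N)^2 \geq 4\varepsilon_0^2/N$ to produce a (quantitatively similar but $\alpha$-dependent) lower bound on $P_\alpha(\rho) - N^{1-\alpha}$, and then convert back to $S_\alpha$ via the identity $\ln N - S_\alpha(\rho) = \tfrac{1}{\alpha-1}\ln(P_\alpha(\rho)/N^{1-\alpha})$; the resulting $c_\alpha$ will depend on $\alpha$, which is harmless since $\alpha$ is assumed constant.
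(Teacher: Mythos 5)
Your proposal is correct and follows essentially the same route as the paper: reduce to mixedness testing via \cite[Theorem 1.10]{OW21}, and then pass from $\alpha>1$ to the von Neumann case using the monotonicity $S_\alpha(\rho)\le S(\rho)$ (which, against the uniform reference, is precisely the R\'enyi-divergence monotonicity $D_\alpha \ge D_1$ you invoke). The only cosmetic difference is that you obtain the von Neumann gap $\ln N - S(\rho)\ge 2\varepsilon_0^2$ via Pinsker's inequality, whereas the paper derives the analogous bound directly through a Jensen-type computation in its \cref{thm:sample-lower-bound-von-neumann}.
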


\begin{proof}
    For every $\alpha > 1$, it holds that $S_\alpha\rbra{\rho} \leq S\rbra{\rho}$ for every quantum state $\rho$ (cf.\ \cite{BS93}).
    Suppose the eigenvalues of $\rho$ are $x_1, x_2, \dots, x_N$. 
    If $\rho$ is maximally mixed, then $x_i = 1/N$ and thus $S_{\alpha}\rbra{\rho} = \ln{N}$.
    Similar to the proof of \cref{thm:sample-lower-bound-von-neumann}, we have: 
    If $\rho$ is $\varepsilon$-away (in trace distance) from being maximally mixed, then $S_\alpha\rbra{\rho} \leq S\rbra{\rho} \leq \ln\rbra{N} - \varepsilon^2$. 
    Thus, we can obtain the same sample lower bound as \cref{thm:sample-lower-bound-von-neumann}.
\end{proof}

\subsection{R\'enyi entropy for \texorpdfstring{$0 < \alpha < 1$}{0 < α < 1}}

\begin{theorem}
\label{thm:sample-lower-bound-renyi-lt1}
    Suppose $0 < \alpha < 1$ is a constant. 
    Every quantum algorithm for estimating the $\alpha$-R\'enyi entropy $S_\alpha\rbra{\rho}$ of $N$-dimensional quantum state $\rho$ within additive error $\varepsilon$ requires $\Omega\rbra{N/\varepsilon + N^{1/\alpha - 1}/\varepsilon^{1/\alpha}}$ independent samples of $\rho$. 
\end{theorem}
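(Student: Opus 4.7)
The plan is to split the target bound into its two summands, $\Omega(N/\varepsilon)$ and $\Omega(N^{1/\alpha-1}/\varepsilon^{1/\alpha})$, and establish each by an independent reduction, then combine via $\max$.

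For the $\Omega(N/\varepsilon)$ summand, I will follow the template of \cref{thm:sample-lower-bound-von-neumann} but swap the Jensen-based bound for \cref{lemma:ineq-renyi-lt1}, which is tailored to the regime $0 < \alpha < 1$. If $\rho$ has eigenvalues $x_1,\dots,x_N$ with $\sum_i |x_i - 1/N| = 2\varepsilon'$ (i.e., $\rho$ is $\varepsilon'$-far from $I/N$ in trace distance), then \cref{lemma:ineq-renyi-lt1} gives $\sum_i x_i^\alpha \leq (1 - \alpha(1-\alpha)(\varepsilon')^2) N^{1-\alpha}$, and combining with $\ln(1-z) \leq -z$ one obtains $S_\alpha(\rho) \leq \ln N - \alpha (\varepsilon')^2$. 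Meanwhile $S_\alpha(I/N) = \ln N$, so any $\varepsilon$-additive estimator with $\varepsilon = \alpha (\varepsilon')^2/2$ solves mixedness testing at radius $\varepsilon'$; the $\Omega(N/(\varepsilon')^2)$ sample lower bound of \cite[Theorem 1.10]{OW21} then rescales to $\Omega(N/\varepsilon)$, paralleling the von Neumann case.

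For the $\Omega(N^{1/\alpha-1}/\varepsilon^{1/\alpha})$ summand, I will reduce from a classical distribution-distinguishing task following the construction underlying \cite{AOST17, AISW20}. I will exhibit a pair of distributions $p_0, p_1$ on $[N]$ such that $|H_\alpha(p_0) - H_\alpha(p_1)| \geq 2\varepsilon$ yet $\Abs{p_0^{\otimes T} - p_1^{\otimes T}}_1 \leq 1/2$ for every $T = o(N^{1/\alpha-1}/\varepsilon^{1/\alpha})$. Embedding each as the diagonal state $\rho_b = \sum_i p_b(i) \ket{i}\bra{i}$, I observe that $\rho_b^{\otimes T}$ remains diagonal in the product computational basis, so every POVM on it factors through the per-copy computational-basis measurement followed by classical post-processing; hence the quantum sample complexity of distinguishing $\rho_0$ from $\rho_1$ is at least the classical sample complexity of distinguishing $p_0$ from $p_1$. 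Since $S_\alpha(\rho_b) = H_\alpha(p_b)$, any $\varepsilon$-additive $S_\alpha$-estimator yields such a distinguisher, and the Helstrom bound (\cref{thm:HH-measurement}) converts the $\ell_1$ estimate on $p_0^{\otimes T}, p_1^{\otimes T}$ into the desired quantum sample lower bound.

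The main obstacle is the construction of the pair $(p_0, p_1)$ with a tight trade-off between the entropy gap and the $T$-sample indistinguishability, and this is precisely the delicate step of \cite{AOST17, AISW20}: one perturbs a near-uniform distribution on $[N]$ by small, tail-sensitive shifts calibrated so that the $\chi^2$-divergence $\chi^2(p_1 \,\|\, p_0)$ is $O(1/T)$ for the target $T$, while a second-order Taylor expansion of $x \mapsto x^\alpha$ around $1/N$ forces the $\alpha$-moment $\sum_i p_b(i)^\alpha$ to shift by $\Theta(\varepsilon \cdot (1-\alpha) N^{1-\alpha})$, yielding the required $2\varepsilon$ entropy gap. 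Tensorization of $\chi^2$ together with a Pinsker-type estimate then controls $\Abs{p_0^{\otimes T} - p_1^{\otimes T}}_1$ at the desired scale, and once these parameters are tuned, the diagonal-state reduction and Helstrom's bound finish the proof.
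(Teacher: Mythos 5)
Your first part --- the $\Omega(N/\varepsilon)$ bound via mixedness testing and \cref{lemma:ineq-renyi-lt1} --- is correct and matches the paper, including the scaling $\epsilon = \Theta(\alpha(\varepsilon')^2)$.

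Your second part has a genuine gap in the hard-instance construction. You describe perturbing a \emph{near-uniform} distribution with tail shifts calibrated via a second-order Taylor expansion of $x\mapsto x^\alpha$ around $1/N$ and a $\chi^2$-divergence bookkeeping. That template produces dimension dependence of order $N$ or $\sqrt N$, but $N^{1/\alpha-1}$ is polynomially \emph{larger} than $N$ whenever $\alpha<1/2$, so no near-uniform perturbation can yield it: with all probabilities near $1/N$, a single sample already carries $\Theta(1)$ bits of signal. The construction actually used (and what the paper takes from \cite{AOST17,AISW20}) is at the opposite extreme: $\sigma = \diag(1,0,\dots,0)$ versus $\rho = \diag\bigl(1-\delta,\delta/(N-1),\dots,\delta/(N-1)\bigr)$ with $\delta = (2\varepsilon/(N-1)^{1-\alpha})^{1/\alpha}$. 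Here $S_\alpha(\sigma)=0$ while $S_\alpha(\rho)\ge\varepsilon/2$, because for $\alpha<1$ the R\'enyi entropy is extremely sensitive to a tiny mass spread thinly over many outcomes (the term $\delta^\alpha(N-1)^{1-\alpha}=2\varepsilon$ in $\tr(\rho^\alpha)$), yet a sample reveals the perturbation only with probability $\delta$, forcing $\mathsf{S}=\Omega(1/\delta)$. The paper verifies this with a one-line fidelity computation $F(\rho,\sigma)=\sqrt{1-\delta}$ plugged into \cref{thm:HH-measurement}, so no $\chi^2$-tensorization or classical-reduction detour is needed; your remark that diagonal $\rho_b^{\otimes T}$ is optimally measured in the computational basis is true but unused. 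The delicate step you flagged as the ``main obstacle'' is precisely the step you got wrong, so the proposal as written does not establish the $\Omega(N^{1/\alpha-1}/\varepsilon^{1/\alpha})$ term.
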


To prove the lower bounds in \cref{thm:sample-lower-bound-renyi-lt1}, we need the following lower bound for quantum state discrimination.
Quantum state discrimination is a task for distinguishing between two quantum states. 
The success probability of any protocol for quantum state discrimination can be upper bounded in terms of the trace distance between the two quantum states, which is originated from \cite{Hel67,Hol73}.
We state it as follows.

\begin{theorem} [Quantum state discrimination, cf.\ {\cite[Section 9.1.4]{Wil13}}]
\label{thm:HH-measurement}
    Suppose that $\rho_0$ and $\rho_1$ are two quantum states. 
    Let $\varrho$ be a quantum state such that $\varrho = \rho_0$ or $\varrho = \rho_1$ with equal probability. 
    For any POVM $\Lambda = \cbra{\Lambda_0, \Lambda_1}$,
    the success probability of distinguishing the two cases is bounded by
    \[
    p_{\mathrm{succ}} = \frac 1 2 \tr\rbra*{\Lambda_0\rho_0} + \frac 1 2 \tr\rbra*{\Lambda_1\rho_1} \leq \frac{1}{2}\rbra*{1+\frac{1}{2}\Abs{\rho_0-\rho_1}_1}. 
    \]
\end{theorem}

We also need the following inequality of the $\alpha$-R\'enyi entropy for $0 < \alpha < 1$.

\begin{lemma} [Lemma 32 of the full version of \cite{AISW20}]
\label{lemma:ineq-renyi-lt1}
    Suppose $p_1, p_2, \dots, p_n$ is a probability distribution, i.e., $p_i \geq 0$ and $\sum_{i \in \sbra{n}} p_i = 1$, such that $\sum_{i \in \sbra{n}} \abs{p_i - 1/n} = 2\varepsilon$ for some $\varepsilon > 0$. 
    Then, for $0 < \alpha < 1$, 
    \[
    \sum_{i \in \sbra{n}} p_i^{\alpha} \leq \rbra*{ 1 - \alpha\rbra*{1-\alpha} \varepsilon^2 } n^{1-\alpha}.
    \]
\end{lemma}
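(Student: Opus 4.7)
The plan is to reduce the inequality to an analytic statement about a convex function, prove two pointwise lower bounds on it, and combine them via Cauchy--Schwarz and an AM--HM step. Setting $\delta_i := np_i - 1$ gives $\sum_i \delta_i = 0$, $\sum_i |\delta_i| = 2n\varepsilon$, and $\delta_i \geq -1$; dividing the target inequality by $n^{-\alpha}$ and using $\sum_i \delta_i = 0$ to cancel the linear term reduces the claim to
\[
\sum_{i=1}^n g(\delta_i) \;\geq\; \alpha(1-\alpha)\, n\, \varepsilon^2, \qquad g(t) := 1 + \alpha t - (1+t)^\alpha.
\]
Here $g$ is convex on $[-1, \infty)$ with $g(0) = g'(0) = 0$ and $g''(t) = \alpha(1-\alpha)(1+t)^{\alpha-2}$.

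Next, I would establish two pointwise lower bounds on $g$. For $t \in [-1, 0]$, Taylor's theorem with Lagrange remainder and the bound $(1+\xi)^{\alpha-2} \geq 1$ for $\xi \in [-1, 0]$ directly yield $g(t) \geq \tfrac{\alpha(1-\alpha)}{2}\, t^2$. For $t \geq 0$, a pure quadratic bound must fail because $g(t) = \alpha t + O(1)$ as $t \to \infty$; instead I would prove $g(t) \geq \tfrac{\alpha(1-\alpha)}{2}\cdot \tfrac{t^2}{1+t}$ by setting $h(t) := g(t) - \tfrac{\alpha(1-\alpha)}{2}\tfrac{t^2}{1+t}$, verifying $h(0) = h'(0) = 0$, and computing
\[
h''(t) \;=\; \alpha(1-\alpha)\bigl[(1+t)^{\alpha-2} - (1+t)^{-3}\bigr] \;\geq\; 0 \quad \text{for } t \geq 0,
\]
where nonnegativity uses $\alpha - 2 \geq -3$ and $1 + t \geq 1$; integrating twice from $0$ then gives $h \geq 0$.

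Finally, I would split $[n]$ into $P := \{i : \delta_i > 0\}$ and $N := \{i : \delta_i < 0\}$ (indices with $\delta_i = 0$ contribute zero), and note $\sum_{i \in P} \delta_i = \sum_{i \in N} |\delta_i| = n\varepsilon$. Cauchy--Schwarz gives $\sum_{i \in P} \tfrac{\delta_i^2}{1 + \delta_i} \geq \tfrac{(n\varepsilon)^2}{|P| + n\varepsilon}$ and $\sum_{i \in N} \delta_i^2 \geq \tfrac{(n\varepsilon)^2}{|N|}$, so combined with the pointwise bounds and the AM--HM inequality $\tfrac{1}{a} + \tfrac{1}{b} \geq \tfrac{4}{a+b}$,
\[
\sum_{i=1}^n g(\delta_i) \;\geq\; \tfrac{\alpha(1-\alpha)}{2}(n\varepsilon)^2 \Bigl(\tfrac{1}{|P| + n\varepsilon} + \tfrac{1}{|N|}\Bigr) \;\geq\; \tfrac{2\alpha(1-\alpha)(n\varepsilon)^2}{|P| + |N| + n\varepsilon} \;\geq\; \alpha(1-\alpha)\, n\, \varepsilon^2,
\]
using $|P| + |N| \leq n$ and $\varepsilon \leq 1$ (the latter since $\sum_i |p_i - 1/n| \leq 2$). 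The main obstacle is the pointwise lower bound on $g$ for $t \geq 0$: the natural quadratic bound fails at large $t$, so the $\tfrac{t^2}{1+t}$ form is essential, and its validity rests on a slightly delicate sign check for $h''(t)$. Once that is in hand, the Cauchy--Schwarz plus AM--HM step is essentially forced by the $P/N$ symmetry together with $\varepsilon \leq 1$.
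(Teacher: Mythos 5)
The paper does not prove this lemma; it is imported verbatim as Lemma~32 of the full version of \cite{AISW20}, so there is no in-paper proof to compare against. I checked your argument on its own terms and it is correct and self-contained.

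In brief: the substitution $\delta_i = np_i - 1$ and the cancellation $\sum_i \delta_i = 0$ correctly reduce the claim to $\sum_i g(\delta_i) \ge \alpha(1-\alpha)n\varepsilon^2$ with $g(t) = 1 + \alpha t - (1+t)^\alpha$; $g(0)=g'(0)=0$ and $g''(t)=\alpha(1-\alpha)(1+t)^{\alpha-2}$ as stated. The negative-side bound $g(t)\ge\frac{\alpha(1-\alpha)}{2}t^2$ on $[-1,0]$ follows since $g''(\xi)\ge\alpha(1-\alpha)$ there (and the potential blow-up of $g''$ near $t=-1$ only helps, as it is a lower bound; the endpoint $t=-1$ is handled by continuity, with $g(-1)=1-\alpha\ge\frac{\alpha(1-\alpha)}{2}$). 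Your positive-side bound is the right choice: a pure quadratic indeed fails as $t\to\infty$, whereas $h(t)=g(t)-\frac{\alpha(1-\alpha)}{2}\frac{t^2}{1+t}$ satisfies $h(0)=h'(0)=0$ and, using $\frac{t^2}{1+t}=t-1+\frac{1}{1+t}$ so that $\frac{d^2}{dt^2}\frac{t^2}{1+t}=\frac{2}{(1+t)^3}$, one gets $h''(t)=\alpha(1-\alpha)\bigl[(1+t)^{\alpha-2}-(1+t)^{-3}\bigr]\ge0$ for $t\ge0$ because $\alpha-2\ge-3$. The $P/N$ split gives $\sum_{P}\delta_i=\sum_{N}\lvert\delta_i\rvert=n\varepsilon$, Cauchy--Schwarz with weights $\sqrt{1+\delta_i}$ on $P$ and uniform weights on $N$ yields the two quoted lower bounds, and $\frac1a+\frac1b\ge\frac{4}{a+b}$ together with $\lvert P\rvert+\lvert N\rvert\le n$ and $\varepsilon\le1$ closes the gap. (The degenerate cases $P=\emptyset$ or $N=\emptyset$ force $\varepsilon=0$, where the claim is trivial, so the reciprocals are well defined.) This is a nice elementary alternative to re-deriving the result from \cite{AISW20}; if you wanted to shave off steps, one could avoid the $\frac{t^2}{1+t}$ weight by instead bounding $g$ on $P$ via convexity of $g$ together with $\sum_P(1+\delta_i)\le n+n\varepsilon$, but your route is clean as written.
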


Now we are ready to prove \cref{thm:sample-lower-bound-renyi-lt1}.

\begin{proof}[Proof of \cref{thm:sample-lower-bound-renyi-lt1}]
    The proof is split into two parts. The first part shows a sample lower bound $\Omega\rbra{N/\varepsilon}$ and the second part shows a sample lower bound $\Omega\rbra{N^{1/\alpha-1}/\varepsilon^{1/\alpha}}$; combining the both yields the proof. 

    We first show a lower bound $\Omega\rbra{N/\varepsilon}$. 
    Suppose the eigenvalues of $\rho$ are $x_1, x_2, \dots, x_N$. 
    If $\rho$ is maximally mixed, then $x_i = 1/N$ and thus $S_{\alpha}\rbra{\rho} = \ln{N}$.
    Similar to the proof of \cref{thm:sample-lower-bound-von-neumann}, we have: 
    If $\rho$ is $\varepsilon$-away (in trace distance) from being maximally mixed, then by \cref{lemma:ineq-renyi-lt1}, we have
    \[
    \sum_{i \in \sbra{N}} x_i^\alpha \leq \rbra*{ 1 - \alpha\rbra*{1-\alpha} \varepsilon^2 } N^{1-\alpha},
    \]
    and thus
    \begin{align*}
        S_\alpha\rbra{\rho}
        & = \frac{1}{1-\alpha} \ln \rbra*{ \sum_{i \in \sbra{N}} x_i^\alpha } \\
        & \leq \ln\rbra*{N} + \frac{1}{1-\alpha} \ln\rbra*{ 1 - \alpha\rbra*{1-\alpha}\varepsilon^2 } \\
        & \leq \ln\rbra*{N} - \alpha \varepsilon^2.
    \end{align*}
    Following the proof of \cref{thm:sample-lower-bound-von-neumann}, we can obtain the same sample lower bound.

    Then, we consider the problem of distinguishing two $N$-dimensional quantum states $\rho$ and $\sigma$ such that
    \begin{align*}
    \rho & = \diag\rbra*{1 - \delta, \frac{\delta}{N-1}, \dots, \frac{\delta}{N-1}}, \\
    \sigma & = \diag\rbra*{1, 0, \dots, 0},
    \end{align*}
    where $\delta = \rbra*{\frac{2\varepsilon}{\rbra*{N-1}^{1-\alpha}}}^{1/\alpha}$ and $N > 2 \varepsilon^{1-\alpha} + 1$.
    Such construction was ever used in analyzing the sample complexity of estimating classical R\'enyi entropy \cite{AOST17} and quantum R\'enyi entropy \cite{AISW20} as well as the quantum query complexity \cite{LWZ22}.
    Note that $\delta < \varepsilon$. 
    It can be seen that $S_{\alpha}\rbra{\sigma} = 0$ and 
    \begin{align*}
        S_{\alpha}\rbra{\rho}
        & = \frac{1}{1-\alpha} \ln\rbra*{ \rbra{1-\delta}^{\alpha} + \delta^{\alpha} \rbra{N-1}^{1-\alpha} } \\
        & \geq \frac{1}{1-\alpha} \ln\rbra*{ 1 - \delta + 2\varepsilon } \\
        & \geq \frac{1}{1-\alpha} \ln\rbra*{ 1 + \varepsilon } \geq \frac \varepsilon 2.
    \end{align*}
    Suppose a quantum algorithm for estimating $\alpha$-R\'enyi entropy within additive error $\varepsilon$ uses $\mathsf{S}$ samples, then it can distinguish the two quantum states $\rho$ and $\sigma$ with probability $p_{\text{succ}} \geq 2/3$. 
    On the other hand, by \cref{thm:HH-measurement}, the success probability of the quantum hypothesis testing experiment \cite{Hel67,Hol73} is upper bounded by
    \[
    p_{\text{succ}} \leq \frac{1 + \frac{1}{2}\Abs*{\rho^{\otimes \mathsf{S}} - \sigma^{\otimes \mathsf{S}}}_1}{2},
    \]
    where
    \[
    \frac{1}{2} \Abs*{\rho^{\otimes \mathsf{S}} - \sigma^{\otimes \mathsf{S}}}_1 \leq \sqrt{1 - F\rbra{\rho, \sigma}^{2\mathsf{S}}} = \sqrt{1 - \rbra{1-\delta}^{\mathsf{S}}}.
    \]
    Finally, we obtain that $\mathsf{S} = \Omega\rbra{1/\delta} = \Omega\rbra{N^{1/\alpha - 1}/\varepsilon^{1/\alpha}}$.
\end{proof}

\section*{Acknowledgment}

The authors would like to thank John Wright for valuable comments and sharing their results \cite{BMW16} on von Neumann entropy estimation, 
thank Zhengfeng Ji for pointing out the related work \cite{ARU14},
thank Masahito Hayashi for helpful discussions and sharing the related work \cite{Hay24}, 
thank Aaron B.\ Wagner for explaining the EYD algorithms proposed in \cite{AISW20}, thank Luowen Qian for pointing out the related work \cite{HL11,SGSV24}, and thank anonymous reviewers for constructive suggestions on the organization of this paper and pointing out a mistake in an earlier version of this paper.
Qisheng Wang also thanks Fran\c{c}ois Le Gall for helpful discussions.

The work of Qisheng Wang was supported in part by the Engineering and Physical Sciences Research Council under Grant \mbox{EP/X026167/1} and in part by the Ministry of Education, Culture, Sports, Science and Technology
(MEXT) Quantum Leap Flagship Program (MEXT Q-LEAP) under Grant \mbox{JPMXS0120319794}. 
The work of Zhicheng Zhang was supported by the Sydney Quantum Academy, NSW, Australia, and the Australian Research Council Discovery Project under Grant DP220102059.

\addcontentsline{toc}{section}{References}
\bibliographystyle{unsrturl}
\bibliography{main}

\appendix

\section{Estimating \texorpdfstring{$2$}{2}-R\'enyi entropy} \label{sec:2-renyi}

For completeness, we give a simple quantum algorithm for estimating $2$-R\'enyi entropy $S_2\rbra{\rho} = -\ln\rbra{\tr\rbra{\rho^2}}$ based on the well-known SWAP test \cite{BBD+97,BCWdW01}.
We note that $P_2\rbra{\rho} = \tr\rbra{\rho^2}$ is known as the purity of $\rho$, which can be estimated by the quantum circuit shown in \cref{fig:purity} where the measurement outcome is $0$ with probability $\frac{1+P_2\rbra{\rho}}{2}$ (see \cite[Proposition 9]{KMY09}). 
Thus, we can compute an estimate $\widetilde P$ of $P_2\rbra{\rho}$ (with probability $\geq 3/4$) within additive error $\epsilon$ using $O\rbra{1/\epsilon^2}$ samples of $\rho$. 

\begin{figure} [!htp]
\centering
\begin{quantikz} [row sep = {20pt, between origins}]
    \lstick{$\ket{0}$} & \gate{H} & \ctrl{2} & \gate{H} & \meter{} \\
    \lstick{$\rho$} \setwiretype{b} & \qw & \swap{1} & \qw & \\
    \lstick{$\rho$} \setwiretype{b} & \qw & \targX{} & \qw & \\
\end{quantikz}
\caption{Quantum circuit for estimating purity.}
\label{fig:purity}
\end{figure}
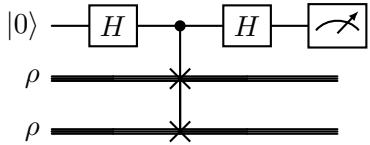

Suppose $\rho$ is of rank $r$. 
From \cref{fact:renyi}, we know that $r^{-1} \leq P_2\rbra{\rho} \leq 1$. 
By taking $\epsilon = \varepsilon/r$, we can obtain $\widetilde P$ such that $\rbra{1-\varepsilon} P_2\rbra{\rho} \leq \widetilde P \leq \rbra{1+\varepsilon} P_2\rbra{\rho}$.
Therefore, $\widetilde S = - \ln \rbra{\widetilde P}$ is an estimate of $S_2\rbra{\rho}$ within additive error $\Theta\rbra{\varepsilon}$. 
Finally, by choosing the median of $O\rbra{\log\rbra{\delta}}$ repetitions of the above procedure, we can amplify the success probability to $\geq 1-\delta$. 
We summarize the complexity of this simple algorithm as follows. 
\begin{lemma}
    There is a quantum algorithm with sample access to $N$-dimensional quantum state $\rho$ of rank $r$ that, with probability $\geq 1 - \delta$, estimates the $2$-R\'enyi entropy $S_2\rbra{\rho}$ within additive error $\varepsilon$ with sample complexity $M$ and time complexity $O\rbra{M\log{N}}$, where $M = O\rbra{r^2/\varepsilon^2 \cdot \log\rbra{1/\delta}}$.
\end{lemma}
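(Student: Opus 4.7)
The plan is to combine the SWAP test (as outlined immediately before the statement) with a standard median-of-means amplification, together with the rank-based lower bound $P_2\rbra{\rho} \geq 1/r$ that converts an absolute estimate of the purity into a multiplicative one.

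First I would formalize the basic SWAP-test subroutine: on input two fresh copies of $\rho$, the circuit in \cref{fig:purity} outputs $0$ with probability $\tfrac{1+P_2\rbra{\rho}}{2}$. Running $k$ independent repetitions, each consuming $2$ samples of $\rho$ and $O\rbra{\log N}$ one- and two-qubit gates (the controlled SWAP on two $\lceil \log N\rceil$-qubit registers), and letting $\widehat p$ be the empirical frequency of $0$, one defines $\widehat P = 2\widehat p - 1$. By Hoeffding's inequality (\cref{thm:hoeffding}), $\abs{\widehat P - P_2\rbra{\rho}} \leq \epsilon$ with probability at least $3/4$ using $k = O\rbra{1/\epsilon^2}$ repetitions.

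Next I would use the low-rank assumption to convert additive accuracy on $P_2$ into the additive accuracy on $S_2$ that we want. By \cref{fact:renyi}, $P_2\rbra{\rho} \geq 1/r$, so choosing $\epsilon = \varepsilon/\rbra{2r}$ in the previous step yields an estimate $\widehat P$ with $\abs{\widehat P - P_2\rbra{\rho}} \leq \tfrac{\varepsilon}{2} P_2\rbra{\rho}$, that is, $\rbra{1-\varepsilon/2} P_2\rbra{\rho} \leq \widehat P \leq \rbra{1+\varepsilon/2} P_2\rbra{\rho}$. Setting $\widehat S = -\ln\rbra{\widehat P}$ and using $\abs{\ln\rbra{1\pm\varepsilon/2}} \leq \varepsilon$ for $\varepsilon \in \rbra{0,1}$ gives $\abs{\widehat S - S_2\rbra{\rho}} \leq \varepsilon$ with probability $\geq 3/4$, at a cost of $O\rbra{r^2/\varepsilon^2}$ samples.

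Finally, to amplify the success probability to $1-\delta$, I would run the above $3/4$-confidence estimator $m = O\rbra{\log\rbra{1/\delta}}$ times independently and output the median $\widetilde S$. A standard Chernoff/Hoeffding argument on the indicator variables $\mathbf{1}\sbra{\abs{\widehat S_j - S_2\rbra{\rho}} > \varepsilon}$ (identical to the median-of-means step already carried out in the proof of \cref{lemma:renyi-gt1-promise}) yields $\abs{\widetilde S - S_2\rbra{\rho}} \leq \varepsilon$ with probability $\geq 1-\delta$. The total sample complexity is $M = O\rbra{r^2/\varepsilon^2 \cdot \log\rbra{1/\delta}}$ and the total gate count is $O\rbra{M \log N}$ from the controlled-SWAPs and Hadamards, as claimed. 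There is no real obstacle here: the argument is entirely classical once the SWAP test is in hand, and the only mildly non-trivial step is the conversion from additive to multiplicative accuracy via the rank bound $P_2 \geq 1/r$.
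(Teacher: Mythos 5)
Your proof is correct and follows essentially the same route as the paper's: SWAP test to estimate the purity $P_2(\rho)$ within additive error $\epsilon$, the rank bound $P_2(\rho)\geq 1/r$ from \cref{fact:renyi} to turn $\epsilon=\Theta(\varepsilon/r)$-additive accuracy into $\Theta(\varepsilon)$-multiplicative accuracy (hence $\Theta(\varepsilon)$-additive accuracy on $S_2=-\ln P_2$), and a median-of-$O(\log(1/\delta))$ amplification. The only (cosmetic) difference is that you track the constants a bit more carefully, taking $\epsilon=\varepsilon/(2r)$ and explicitly verifying $\abs{\ln(1\pm\varepsilon/2)}\leq\varepsilon$, whereas the paper just asserts the resulting additive error is $\Theta(\varepsilon)$.
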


\end{document}